\theoremstyle{plain}
\newtheorem{theorem}{Theorem}[section]
\newtheorem{lemma}[theorem]{Lemma}
\newtheorem{corollary}[theorem]{Corollary}
\theoremstyle{definition}
\newtheorem{definition}[theorem]{Definition}
\newtheorem{protocol}[theorem]{Protocol}
\newtheorem{postulate}{Postulate}
\renewcommand{\labelenumi}{(\roman{enumi})}
\newcommand{\abs}[1]{\left\lvert#1\right\rvert}
\newcommand{\rest}[2]{#1\!\!\restriction_{#2}}
\newcommand{\osg}[1]{\left[#1\right]^{\prec}}
\newcommand{\N}{\mathbb{N}}%
\newcommand{\R}{\mathbb{R}}%
\newcommand{\C}{\mathbb{C}}%
\newcommand{\X}{\{0,1\}^*}%
\newcommand{\XI}{\{0,1\}^\infty}%
\newcommand{\Bm}[2]{\lambda_{#1}\left(#2\right)}
\newcommand{\ket}[1]{| #1 \rangle}
\newcommand{\bra}[1]{\langle #1 |}
\newcommand{\braket}[2]{\langle #1 | #2 \rangle}
\newcommand{\PS}{\mathbb{P}}%
\newcommand{\charaps}[2]{C\!\left(#1,#2\right)}
\newcommand{\chara}[2]{\mathrm{C}_{#1}\left(#2\right)}
\newcommand{\cond}[2]{\mathrm{Filtered}_{#1}\left(#2\right)}
\DeclareMathOperator{\tr}{tr}
\newcommand{\ssoa}{\overline{\mathcal{H}}}
\newcommand{\ancilla}{\mathcal{S}^{\mathrm{a}}}
\DeclareMathOperator{\Prob}{Pr}
\newcommand{\noi}{\noindent}
\begin{document}


\renewcommand{\thefootnote}{\fnsymbol{footnote}}
\begin{center}
{\Large \textbf{A refinement of quantum mechanics by algorithmic randomness}}
\end{center}
\renewcommand{\thefootnote}{\arabic{footnote}}
\setcounter{footnote}{0}

\vspace{0mm}

\begin{center}
Kohtaro Tadaki
\end{center}

\vspace{-5mm}

\begin{center}
Department of Computer Science, College of Engineering, Chubu University\\
1200 Matsumoto-cho, Kasugai-shi, Aichi 487-8501, Japan\\
E-mail: \textsf{tadaki@isc.chubu.ac.jp}\\
\url{http://www2.odn.ne.jp/tadaki/}
\end{center}

\vspace{-2mm}

\begin{quotation}
\noi\textbf{Abstract.}
The notion of probability plays a crucial role in quantum mechanics.
It appears in quantum mechanics as the Born rule.
In modern mathematics which describes quantum mechanics, however,
probability theory means nothing other than measure theory,
and therefore any operational characterization of the notion of probability is
still missing in quantum mechanics.
In this paper,
based on the toolkit of \emph{algorithmic randomness},
we present
a refinement of the Born rule, as an alternative rule to it,
for specifying the property of the results of quantum measurements \emph{in an operational way}.
Algorithmic randomness is a field of mathematics which
enables us to consider the randomness of an individual infinite sequence.
We then present
an operational refinement of the Born rule for mixed states, as an alternative rule to it,
based on algorithmic randomness.
In particular, we give a precise definition for the notion of mixed state.
We then show that all of
the refined rules of the Born rule for both pure states and mixed states
can be derived
from a single postulate, called the \emph{principle of typicality}, in a unified manner.
We do this from the point of view of the \emph{many-worlds interpretation of quantum mechanics}.
Finally, we make an application of our framework to the BB84 quantum key distribution protocol
in order to demonstrate how properly our framework works in practical problems
in quantum mechanics, based on the principle of typicality.
\end{quotation}

\begin{quotation}
\noi\textit{Key words\/}:
quantum mechanics,
Born rule,
probability interpretation,
algorithmic randomness,
operational characterization,
Martin-L\"of randomness,
many-worlds interpretation,
the principle of typicality,
quantum cryptography
\end{quotation}


\newpage
\tableofcontents 
\newpage

\section{Introduction}

The notion of probability plays a crucial role in quantum mechanics.
It appears in quantum mechanics as the so-called \emph{Born rule}, i.e.,
\emph{the probability interpretation of the wave function} \cite{D58,NC00}.
In modern mathematics which describes quantum mechanics, however,
probability theory means nothing other than \emph{measure theory},
and therefore any \emph{operational characterization of the notion of probability} is still missing
in quantum mechanics.
In this sense, the current form of quantum mechanics is considered to be \emph{imperfect}
as a physical theory which must stand on operational means.

In this paper, based on the toolkit of \emph{algorithmic randomness},
we present
a refinement of the Born rule as an alternative rule to it,
for aiming at
making quantum mechanics \emph{operationally perfect}.
Algorithmic randomness, also known as \emph{algorithmic information theory},
is a field of mathematics which enables us to consider the randomness of an \emph{individual}
infinite sequence \cite{Solom64,Kol65,C66,M66,Sch73,C75,N09,DH10}.
We use the notion of
\emph{Martin-L\"of randomness with respect to Bernoulli measure} \cite{M66} to present
the refinement of the Born rule.

\subsection{Operational characterization of the notion of probability}

In a series of works \cite{T14,T15,T16arXiv},
we presented
an
operational characterization of the notion of probability,
based on the notion of Martin-L\"of randomness with respect to Bernoulli measure.

To clarify our motivation and standpoint,
and the meaning of the \emph{operational characterization},
let us consider
a familiar example of a probabilistic phenomenon
\emph{in a completely classical setting}.
We
here
consider the repeated throwings of a fair die.
In this probabilistic phenomenon,
as throwings progressed,
a specific infinite sequence such as
\begin{equation*}
  3,5,6,3,4,2,2,3,6,1,5,3,5,4,1,\dotsc\dotsc\dotsc
\end{equation*}
is being generated,
where each number is the outcome of the corresponding throwing of the die.
Then the following
naive
question
may
arise naturally. 

\begin{quote}
\textbf{Question}:
What property should this infinite sequence
satisfy as a
probabilistic
phenomenon?
\end{quote}

In the series of works \cite{T14,T15,T16arXiv},
we tried to answer this question.
We there characterized the notion of probability as
an infinite sequence of outcomes in a probabilistic phenomenon
which has
a \emph{specific mathematical property}.
We called such an infinite sequence of outcomes
the \emph{operational characterization of the notion of probability}.
As the specific mathematical property,
in the works \cite{T14,T15,T16arXiv}
we adopted the notion of \emph{Martin-L\"of randomness with respect to Bernoulli measure},
a notion in algorithmic randomness.

In the works \cite{T15,T16arXiv}
we put forward this proposal as a thesis.
We then checked the validity of the thesis based on our intuitive understanding of the notion of probability.
Furthermore, we characterized \emph{equivalently}
the basic notions in probability theory in terms of the operational characterization.
Namely, we equivalently characterized the notion of the \emph{independence} of random variables/events
in terms of the operational characterization,
and represented the notion of \emph{conditional probability}
in terms of the operational characterization in a natural way.
The existence of these equivalent characterizations confirms further the validity of the thesis.

In the works \cite{T14,T15,T16arXiv},
we then made
applications of our
framework
to information theory and cryptography
as examples of the fields for the applications,
in order to demonstrate the \emph{wide applicability} of our
framework
to the \emph{general areas of science and technology}.
See Tadaki~\cite{T16arXiv}
for the detail of
our framework \cite{T14,T15,T16arXiv}
and its applications,
as well as its historical origins.

Modern probability theory originated from the \emph{axiomatic approach} to probability theory,
introduced by Kolmogorov~\cite{K50} in 1933,
where the probability theory is precisely \emph{measure theory}.
Since then,
it
has come a long way to become one of the most
active
fields in modern mathematics.
Such
a drastic development of
modern probability theory is partially
due to
the abandonment of the answer to the question above
about the operational meaning of the notion of probability, and
the
blind
identification of probability theory with measure theory.
One of the important roles of modern probability theory is, of course, in its \emph{applications} to
the general areas of science and technology.
As we have already pointed out,
however,
an operational characterization of the notion of probability is still missing in modern probability theory.
Thus, when we apply the results of modern probability theory,
we have no choice but to
make such applications \emph{thoroughly based on our intuition without formal means}.

The aim of
our framework \cite{T14,T15,T16arXiv}
is to try to fill in this gap between modern probability theory and its applications.
We there
proposed
the operational characterization of the notion of probability as
a \emph{rigorous interface} between theory and practice,
without appealing to our intuition for filling in the gap.
Anyway,
in the works \cite{T14,T15,T16arXiv}
we \emph{keep} modern probability theory \emph{in its original form without any modifications},
and propose the operational characterization of the notion of probability
as an \emph{additional mathematical structure} to it, which provides modern probability theory with
more comprehensive and rigorous opportunities for applications.
Of course, such applications of modern probability theory via our framework
include an application to \emph{quantum mechanics}.

\subsection{Algorithmic randomness}
\label{IntroAR}

\emph{Algorithmic randomness} is a field of mathematical logic.
It originated in the groundbreaking works of Solomonoff~\cite{Solom64}, Kolmogorov~\cite{Kol65}, and Chaitin~\cite{C66} in the mid-1960s.
They independently introduced the notion of \emph{program-size complexity}, also known as \emph{Kolmogorov complexity},
in order to quantify the randomness of an individual object.
Around the same time, Martin-L\"of \cite{M66} introduced a measure theoretic approach to characterize the randomness of an individual infinite binary sequence.
His
approach, called \emph{Martin-L\"of randomness} nowadays,
is one of the major notions in algorithmic randomness,
as well as
the notion of
program-size complexity.
Later on, in the 1970s
Schnorr~\cite{Sch73} and Chaitin~\cite{C75} showed that Martin-L\"of randomness is equivalent to the randomness defined by program-size complexity
in characterizing
random infinite binary sequences.
In the 21st century, algorithmic randomness makes remarkable progress through close interaction with recursion theory.
See \cite{N09,DH10} for the recent development as well as the historical detail of algorithmic randomness.
In this paper,
we use the notion of \emph{Martin-L\"of randomness with respect to Bernoulli measure},
a generalization of
Martin-L\"of randomness,
in order
to state the refined rule of the Born rule.

\subsection{Contribution of the paper: An operational refinement of the Born rule}

In this paper,
as a major application of
our framework
\cite{T14,T15,T16arXiv}
to
\emph{basic
science},
we present
the refined rule of the Born rule
based on our operational characterization of the notion of probability,
for the purpose of making quantum mechanics \emph{operationally perfect}.
Namely,
we use the notion of Martin-L\"of randomness with respect to Bernoulli measure \cite{M66} to state
the refined rule of the Born rule,
for specifying the property of the results of
quantum measurements \emph{in an operational way}.

In this paper,
as the first step of the research of this line,
we only consider, for simplicity,
the case of finite-dimensional quantum systems and measurements over them.
Note, however, that \emph{such a case is typical in
the field of
quantum information and quantum computation}
\cite{NC00}.

The contribution of the paper starts
with a reconsideration of
the form of the postulate of quantum measurements as it ought to be.
Consider
the nature of quantum measurements
from a general point of view.
We notice that
all that the experimenter of quantum measurements can obtain through the measurements
about quantum system is a \emph{specific} infinite sequence of outcomes of the measurements
which are being generated by
infinitely repeated measurements.
Thus, from an operational point of view,
the object about which the postulate of quantum measurements makes a statement should be
the \emph{properties of a
specific infinite sequence of outcomes of the measurements}.
Suggested by this consideration,
we introduce
an operational refinement of the Born rule,
Postulate~\ref{Tadaki-rule} in Section~\ref{SecTadaki-rule},
as an alternative rule to it,
based on the notion of Martin-L\"of randomness with respect to Bernoulli measure.

We then
check the validity of
the refined rule of the Born rule, Postulate~\ref{Tadaki-rule},
in Section~\ref{Validity}.
Based on
the results of the work \cite{T14,T15,T16arXiv},
we can see that
Postulate~\ref{Tadaki-rule}
is certainly a refinement of the Born rule,
from the point of view of our intuitive understanding of the notion of probability.
In the first place,
what is ``probability''?
In particular, what is ``probability'' in quantum mechanics?
It would seem very difficult to answer this question
\emph{completely} and \emph{sufficiently}.
However, we may enumerate the \emph{necessary} conditions which the notion of probability is
considered to have to satisfy
\emph{according to
our intuitive understanding of the notion of probability}.
We show that
the refined rule,
Postulate~\ref{Tadaki-rule}, satisfies these necessary conditions.

The refined rule of the Born rule, mentioned above,
is an operational refinement of the Born rule for \emph{pure states}.
Next, we consider an operational refinement of the Born rule for \emph{mixed states}
by algorithmic randomness
in Section~\ref{Mixed_States}.
We first note that, according to
the refined rule of the Born rule for pure states, Postulate~\ref{Tadaki-rule},
the result of the quantum measurements forms an infinite sequence of pure states which
is Martin-L\"of random with respect to Bernoulli measure.
On the other hand,
in the conventional quantum mechanics this measurement result is described as a mixed state.
Suggested by
these facts,
we propose
a \emph{mathematical definition of the notion of a mixed state}
in terms of the notion of Martin-L\"of randomness with respect to Bernoulli measure.
Then, using this rigorous definition of mixed state,
we introduce
an operational refinement of the Born rule for \emph{mixed states},
Postulate~\ref{Tadaki-rule2} in Section~\ref{Mixed_States},
as an alternative rule to it
in terms of
the notion of Martin-L\"of randomness with respect to Bernoulli measure.

\subsection{Contribution of the paper: The principle of typicality as a unifying principle}

In this paper,
we then consider the validity of our new rules,
the refined rule
of the Born rule for \emph{pure states} (i.e., Postulate~\ref{Tadaki-rule}) and
the refined rule
of the Born rule for \emph{mixed states} (i.e., Postulate~\ref{Tadaki-rule2}),
from the point of view of
the \emph{many-worlds interpretation of quantum mechanics} (\emph{MWI}, for short)
introduced by Everett~\cite{E57} in 1957.
More specifically, we
\emph{refine}  the argument of MWI
by adding to it a postulate, called the \emph{principle of typicality},
and then we derive
our refined rules of the Born rule for both pure states and mixed states
in the framework of
the \emph{refinement} of MWI based on the principle of typicality.

To begin with, we review the
original
framework of MWI,
introduced by Everett~\cite{E57}.
Actually, in this paper \emph{we reformulate
the original framework of MWI
in a form of mathematical rigor from a modern point of view}.
The point of our rigorous treatment of the original framework of MWI is
the use of the notion of
\emph{probability measure representation} and its \emph{induction of probability measure},
which are
presented in Subsection~\ref{MR}.

We stress that MWI is more than just an interpretation of quantum mechanics.
It aims to \emph{derive} the Born rule from the remaining postulates of quantum mechanics, i.e.,
Postulates~\ref{state_space}, \ref{composition}, and \ref{evolution} presented in Section~\ref{QM}.
In this sense, Everett~\cite{E57} proposed MWI as a ``metatheory'' of quantum mechanics.
The point is that in MWI the measurement process is fully treated
as the interaction between a system being measured and an apparatus measuring it, based only on
Postulates~\ref{state_space}, \ref{composition}, and \ref{evolution}, without reference to the Born rule.
Then MWI tries to derive the Born rule in such a setting.

As we already pointed out, however,
there is no operational characterization of the notion of probability in the Born rule,
while it makes a statement about the ``probability'' of measurement outcomes.
Therefore, what MWI has to show for deriving the Born rule is \emph{unclear}
(although the argument in MWI itself is rather operational).
Thus, we have no adequate criterion to confirm
that we have certainly accomplished the derivation of the Born rule based on MWI,
since the Born rule is \emph{vague} from an operational point of view.
By contrast,
the \emph{replacement} of the Born rule by
our refined rule,
Postulate~\ref{Tadaki-rule}, makes this \emph{clear}
since there is no ambiguity in
Postulate~\ref{Tadaki-rule}
from an operational point of view.

In this paper we clear up \emph{other} questionable points of
the original MWI
of
the form just proposed by Everett~\cite{E57}.
All of them come from a \emph{mathematical} deficiency of the arguments in it.
The arguments and results of the original MWI~\cite{E57} are
\emph{insufficient} from a mathematical point of view.
In particular, for deriving the Born rule,
the original MWI would seem to have wanted
to assume that our world is ``typical'' or ``random''  among many coexisting worlds.
However, the proposal of the MWI by Everett was nearly a decade earlier than
the advent of algorithmic randomness.
Actually, Everett \cite{E57} proposed MWI in 1957
while the notion of Martin-L\"of randomness was introduced by Martin-L\"of \cite{M66} in 1966.
Thus, the assumption of ``typicality'' by Everett in the original MWI is not
rigorous from a mathematical point of view.

The notion of ``typicality'' or ``randomness'' is just the
\emph{research object} of algorithmic randomness.
In Section~\ref{SecPOT},
based on the notion of Martin-L\"of randomness with respect to a probability measure,
we introduce a postulate, called the \emph{principle of typicality},
in order to overcome the deficiency of the original MWI.
The principle of typicality, Postulate~\ref{POT} in Section~\ref{SecPOT}, is
\emph{naturally formed} by applying
the \emph{basic idea} of Martin-L\"of randomness
into
the framework of MWI, and is
thought
to be a \emph{refinement} and therefore a \emph{clarification}
of the obscure assumption of
``typicality''
by Everett~\cite{E57}.
In this paper we make the \emph{whole} arguments by Everett~\cite{E57} \emph{clear}
and \emph{feasible},
based on the principle of typicality.
Actually, we can derive
our refined rule of the Born rule for pure states
(i.e., Postulate~\ref{Tadaki-rule}),
our refined rule of the Born rule for mixed states
(i.e., Postulate~\ref{Tadaki-rule2}), and
other postulates of the conventional quantum mechanics
\emph{regarding mixed states and density matrices}
(i.e., Postulate~\ref{COSMS} in Section~\ref{CPFCOMS} and
Postulate~\ref{density-matrices-probability} in Section~\ref{CPFPM})
from the principle of typicality in the framework of MWI \emph{in a unified manner}.
In the derivation, we make a comprehensive use of the results of the work \cite{T14,T15,T16arXiv}.

To begin with,
in Section~\ref{POT-pure-states}
we derive Postulate~\ref{Tadaki-rule},
our refined rule of the Born rule for pure states,
from the principle of typicality.
The setting
considered
in Postulate~\ref{Tadaki-rule} is just
the setting of the original framework of MWI, which is based on
the infinite repetition of the measurements of a single observable.
Therefore,
we see that the principle of typicality, Postulate~\ref{POT}, is precisely Postulate~\ref{Tadaki-rule}.
In this way,
we make clear the argument by Everett~\cite{E57},
based on
the principle of typicality, Postulate~\ref{POT}.

Next,
we derive Postulate~\ref{Tadaki-rule2},
our refined rule of the Born rule for mixed states,
from
the principle of typicality
in several scenarios of the setting of measurements.
We can do this by considering more complicated interactions between systems and
apparatuses as measurement processes than ones used for deriving Postulate~\ref{Tadaki-rule}.
In our framework
a mixed state, on which measurements are performed in the setting of Postulate~\ref{Tadaki-rule2},
is an infinite sequence of pure states.
This
implies
that
we have to perform measurements on a mixed state while generating it.
We
have investigated several scenarios which implement this setting.
In all the scenarios which we have considered so far,
Postulate~\ref{Tadaki-rule2} can be derived from
the principle of typicality.
In Section~\ref{POT-mixed-states}
we describe the detail of the derivation of Postulate~\ref{Tadaki-rule2} from
the principle of typicality
in the simplest scenario,
where a mixed state being measured is
an
infinite sequence over \emph{mutually orthogonal} pure states.
In Section~\ref{POT-mixed-states2} we describe the detail of the same derivation
in a \emph{more general} scenario, where a mixed state being measured is
an
infinite sequence over general \emph{mutually non-orthogonal} pure states.

Furthermore,
we investigate the validity of \emph{other} postulates
of the conventional quantum mechanics
\emph{regarding mixed states and density matrices},
in terms of our framework based on the principle of typicality.
We consider two of such postulates.
One of them is Postulate~\ref{COSMS} in Section~\ref{CPFCOMS},
which is ``Postulate 4'' described in Nielsen and Chuang \cite[Section 2.4.2]{NC00}.
It
states
how the density matrix of a mixed state of a composite system is
calculated from the density matrices of the mixed states of the component systems.
The other is Postulate~\ref{density-matrices-probability} in Section~\ref{CPFPM},
which is the last part of ``Postulate 1'' described in Nielsen and Chuang \cite[Section 2.4.2]{NC00}.
It treats the ``probabilistic mixture'' of mixed states.

In Section~\ref{CPFCOMS} we
point out the failure of Postulate~\ref{COSMS} first.
We then present a \emph{necessary and sufficient condition}
for the
statement of Postulate~\ref{COSMS} to hold under a certain natural restriction on
the forms
of the mixed states of the component systems,
in terms of the framework of the works~\cite{T14,T15,T16arXiv}.
After that, we
describe
a natural and simple scenario regarding the setting of measurements
in which the
statement of Postulate~\ref{COSMS} holds,
based on
the principle of typicality.

In Section~\ref{CPFPM} we
investigate
the validity of
Postulate~\ref{density-matrices-probability}
in terms of our framework based on
the principle of typicality.
First of all,
Postulate~\ref{density-matrices-probability}
seems
\emph{very vague} in its original form.
What does the ``probabilistic mixture'' of mixed states mean?
What does the word ``probability'' mean here?
Based on the principle of typicality,
we give a certain \emph{precise meaning} to Postulate~\ref{density-matrices-probability}
by
means of
giving an appropriate scenario
in which Postulate~\ref{density-matrices-probability} clearly holds.
In other words,
we \emph{derive} Postulate~\ref{density-matrices-probability}
from the principle of typicality
in a certain natural and simple scenario regarding the setting of measurements.

In this way, we see in the framework of MWI that
the refined rule of the Born rule for pure states (i.e., Postulate~\ref{Tadaki-rule}),
the refined rule of the Born rule for mixed states (i.e., Postulate~\ref{Tadaki-rule2}), and
other postulates of the conventional quantum mechanics regarding mixed states and density matrices
(i.e., Postulates~\ref{COSMS} and \ref{density-matrices-probability})
can \emph{all} be derived
from a \emph{single} postulate, the principle of typicality, \emph{in a unified manner}.

Finally, in Section~\ref{BB84QKD}
we make an application of our framework based on the principle of typicality,
to the BB84 quantum key distribution protocol \cite{BB84}.
Thereby we demonstrate how \emph{properly} our framework
works in \emph{practical problems in quantum mechanics}, based on the principle of typicality.

\subsection{Organization of the paper}

The paper is organized as follows.
We begin in Section~\ref{preliminaries} with
some mathematical preliminaries, in particular, about measure theory and Martin-L\"of randomness.
In Section~\ref{QM} we review the central postulates of the conventional quantum mechanics
according to Nielsen and Chuang~\cite{NC00}.
In Section~\ref{SecTadaki-rule}
the notion of Martin-L\"of randomness with respect to Bernoulli measure is introduced,
and then, based on this notion,
we present
an operational refinement of the Born rule for \emph{pure} states, Postulate~\ref{Tadaki-rule}.
We then check the validity of
Postulate~\ref{Tadaki-rule}
in Section~\ref{Validity},
based on the results of Tadaki \cite{T14,T15,T16arXiv}.
In Section~\ref{Mixed_States}
we present
an operational refinement of the Born rule for \emph{mixed} states, Postulate~\ref{Tadaki-rule2},
after introducing a \emph{mathematical definition} of the notion of a mixed state.

In Section~\ref{MWI} we reformulate the original framework of MWI~\cite{E57}
in a form of mathematical rigor from a modern point of view.
We then point out the deficiency of the original MWI~\cite{E57}.
In Section~\ref{SecPOT},
based on the notion of Martin-L\"of randomness with respect to a probability measure,
we introduce
the \emph{principle of typicality}, Postulate~\ref{POT},
in order to overcome the deficiency of the original MWI.
In order to show the results in the rest of the paper,
we need
several theorems
on Martin-L\"of randomness with respect to Bernoulli measure
from
Tadaki \cite{T14,T15,T16arXiv}.
We enumerate them and their corollaries in Section~\ref{FMP}.
In Section~\ref{POT-pure-states}
we derive Postulate~\ref{Tadaki-rule},
the refined rule of the Born rule for \emph{pure states},
from the principle of typicality.
On the one hand, in Section~\ref{POT-mixed-states}
we derive Postulate~\ref{Tadaki-rule2},
the refined rule of the Born rule for \emph{mixed states},
from the principle of typicality in the simplest scenario,
where a mixed state being measured is an infinite sequence over mutually orthogonal pure states.
On the other hand, in Section~\ref{POT-mixed-states2}
we derive Postulate~\ref{Tadaki-rule2} from the principle of typicality
in a more general scenario, where a mixed state being measured is
an infinite sequence over general mutually non-orthogonal pure states.

Furthermore,
we investigate the validity of \emph{other} postulates
of the conventional quantum mechanics
\emph{regarding mixed states and density matrices},
in terms of our framework based on the principle of typicality.
We consider two of such postulates.
One of them is Postulate~\ref{COSMS} presented in Section~\ref{CPFCOMS},
which describes how the density matrix of a mixed state of a composite system is
calculated from the density matrices of the mixed states of the component systems.
In Section~\ref{CPFCOMS}
we investigate the validity of Postulate~\ref{COSMS}.
In particular, we give a natural and simple scenario regarding the setting of measurements
in which the statement of Postulate~\ref{COSMS} holds, based on the principle of typicality.
The other is Postulate~\ref{density-matrices-probability} presented in Section~\ref{CPFPM},
which treats the ``probabilistic mixture'' of mixed states.
In Section~\ref{CPFPM} we investigate the validity of Postulate~\ref{density-matrices-probability}.
Postulate~\ref{density-matrices-probability} seems very vague in its original form.
We give a certain precise meaning to Postulate~\ref{density-matrices-probability}
by means of giving an appropriate scenario
in which Postulate~\ref{density-matrices-probability} clearly holds,
based on the principle of typicality.

In Section~\ref{BB84QKD}, we make an application of our framework based on the principle of typicality,
to the BB84 quantum key distribution protocol~\cite{BB84}
in order to demonstrate how properly our framework works in practical problems
in quantum mechanics.
We conclude this paper with remarks and
a mention of the future direction of this work in Section~\ref{Concluding}.

\section{Mathematical preliminaries}
\label{preliminaries}

\subsection{Basic notation and definitions}
\label{basic notation}

We start with some notation about numbers and strings which will be used in this paper.
$\#S$ is the cardinality of $S$ for any set $S$.
$\N=\left\{0,1,2,3,\dotsc\right\}$ is the set of \emph{natural numbers},
and $\N^+$ is the set of \emph{positive integers}.
$\R$ is the set of \emph{reals}, and $\C$ is the set of \emph{complex numbers}.

An \emph{alphabet} is a non-empty finite set.
Let $\Omega$ be an arbitrary alphabet throughout the rest of this subsection.
A \emph{finite string over $\Omega$} is a finite sequence of elements from the alphabet $\Omega$.
We use $\Omega^*$ to denote the set of all finite strings over $\Omega$,
which contains the \emph{empty string} denoted by $\lambda$.
For any $\sigma\in\Omega^*$, $\abs{\sigma}$ is the \emph{length} of $\sigma$.
Therefore $\abs{\lambda}=0$.
A subset $S$ of $\Omega^*$ is called
\emph{prefix-free}
if no string in $S$ is a prefix of another string in $S$.

An \emph{infinite sequence over $\Omega$} is an infinite sequence of elements from the alphabet $\Omega$,
where the sequence is infinite to the right but finite to the left.
We use $\Omega^\infty$ to denote the set of all infinite sequences over $\Omega$.
Let $\alpha\in\Omega^\infty$.
For any $n\in\N$, we denote by $\rest{\alpha}{n}\in\Omega^*$ the first $n$ elements
in the infinite sequence $\alpha$ and by $\alpha(n)$ the $n$th element in $\alpha$.
Thus, for example, $\rest{\alpha}{4}=\alpha(1)\alpha(2)\alpha(3)\alpha(4)$, and $\rest{\alpha}{0}=\lambda$.
For any $S\subset\Omega^*$, the set
$\{\alpha\in\Omega^\infty\mid\exists\,n\in\N\;\rest{\alpha}{n}\in S\}$
is denoted by $\osg{S}$.
Note that (i) $\osg{S}\subset\osg{T}$ for every $S\subset T\subset\Omega^*$, and
(ii) for every set $S\subset\Omega^*$ there exists a prefix-free set $P\subset\Omega^*$ such that
$\osg{S}=\osg{P}$.
For any $\sigma\in\Omega^*$, we denote by $\osg{\sigma}$ the set $\osg{\{\sigma\}}$, i.e.,
the set of all infinite sequences over $\Omega$ extending $\sigma$.
Therefore $\osg{\lambda}=\Omega^\infty$.

\subsection{Measure theory}
\label{MR}

We briefly review measure theory according to Nies~\cite[Section 1.9]{N09}.
See also Billingsley~\cite{B95} for measure theory in general.

Let $\Omega$ be an arbitrary alphabet.
A real-valued function $\mu$ defined on the class of all subsets of $\Omega^\infty$ is called
an \emph{outer measure on $\Omega^\infty$} if the following conditions hold:
\begin{enumerate}
  \item $\mu\left(\emptyset\right)=0$;
  \item $\mu\left(\mathcal{C}\right)\le\mu\left(\mathcal{D}\right)$
    for every subsets $\mathcal{C}$ and $\mathcal{D}$ of $\Omega^\infty$
    with $\mathcal{C}\subset\mathcal{D}$;
  \item $\mu\left(\bigcup_{i}\mathcal{C}_i\right)\le\sum_{i}\mu\left(\mathcal{C}_i\right)$
    for every sequence $\{\mathcal{C}_i\}_{i\in\N}$ of subsets of $\Omega^\infty$.
\end{enumerate}
A \emph{probability measure representation over $\Omega$} is
a function $r\colon\Omega^*\to[0,1]$ such that
\begin{enumerate}
  \item $r(\lambda)=1$ and
  \item for every $\sigma\in\Omega^*$ it holds that
    \begin{equation}\label{pmr}
       r(\sigma)=\sum_{a\in\Omega}r(\sigma a).
    \end{equation}
\end{enumerate}
A probability measure representation $r$ over $\Omega$ \emph{induces}
an outer measure $\mu_r$ on $\Omega^\infty$ in the following manner:
A subset $\mathcal{R}$ of $\Omega^\infty$ is called \emph{open} if
$\mathcal{R}=\osg{S}$ for some $S\subset\Omega^*$.
Let $r$ be an arbitrary probability measure representation over $\Omega$. 
For each open subset $\mathcal{A}$ of $\Omega^\infty$, we define $\mu_r(\mathcal{A})$ by
$$\mu_r(\mathcal{A}):=\sum_{\sigma\in E}r(\sigma),$$
where $E$ is a prefix-free subset of $\Omega^*$ with $\osg{E}=\mathcal{A}$.
Due to the equality \eqref{pmr} the sum is independent of the choice of  the prefix-free set $E$,
and therefore the value $\mu_r(\mathcal{A})$ is well-defined.
Then, for any subset $\mathcal{C}$ of $\Omega^\infty$, we define $\mu_r(\mathcal{C})$ by
$$\mu_r(\mathcal{C}):=
\inf\{\mu_r(\mathcal{A})\mid
\mathcal{C}\subset\mathcal{A}\text{ \& $\mathcal{A}$ is an open subset of $\Omega^\infty$}\}.$$
We can then show that $\mu_r$ is an \emph{outer measure} on $\Omega^\infty$  such that
$\mu_r(\Omega^\infty)=1$.

A class $\mathcal{F}$ of subsets of $\Omega^\infty$ is called
a \emph{$\sigma$-field on $\Omega^\infty$}
if  $\mathcal{F}$ includes $\Omega^\infty$, is closed under complements,
and is closed under the formation of countable unions.
The \emph{Borel class} $\mathcal{B}_{\Omega}$ is the $\sigma$-field \emph{generated by}
all open sets on $\Omega^\infty$.
Namely, the Borel class $\mathcal{B}_{\Omega}$ is defined
as the intersection of all the $\sigma$-fields on $\Omega^\infty$ containing
all open sets on $\Omega^\infty$.
A real-valued function $\mu$ defined on the Borel class $\mathcal{B}_{\Omega}$ is called
a \emph{probability measure on $\Omega^\infty$} if the following conditions hold:
\begin{enumerate}
  \item $\mu\left(\emptyset\right)=0$ and $\mu\left(\Omega^\infty\right)=1$;
  \item $\mu\left(\bigcup_{i}\mathcal{D}_i\right)=\sum_{i}\mu\left(\mathcal{D}_i\right)$
    for every sequence $\{\mathcal{D}_i\}_{i\in\N}$ of sets in $\mathcal{B}_{\Omega}$ such that
    $\mathcal{D}_i\cap\mathcal{D}_i=\emptyset$ for all $i\neq j$.
\end{enumerate}
Then, for every probability measure representation $r$ over $\Omega$,
we can show that the restriction of the outer measure $\mu_r$ on $\Omega^\infty$
to the Borel class $\mathcal{B}_{\Omega}$ is
a probability measure on $\Omega^\infty$.
We denote the restriction of $\mu_r$ to $\mathcal{B}_{\Omega}$ by
$\mu_r$
just the same.

Then it is easy to see that
\begin{equation}\label{mr}
  \mu_r\left(\osg{\sigma}\right)=r(\sigma)
\end{equation}
for every probability measure representation $r$ over $\Omega$ and every $\sigma\in \Omega^*$.
The probability measure $\mu_r$ is called a
\emph{probability measure induced by the probability measure representation $r$}.

\subsection{Computability}
\label{Computability}

A function $f\colon\N^+\to\N^+$ is called \emph{computable}
if there exists a deterministic Turing machine $\mathcal{M}$ such that, for each $n\in\N^+$,
when executing $\mathcal{M}$ with the input $n$,
the computation of $\mathcal{M}$ eventually terminates and then $\mathcal{M}$ outputs $f(n)$.
A computable function is also called a \emph{total recursive function}.

A subset $\mathcal{C}$ of $\N^+\times\Omega^*$ is called \emph{recursively enumerable}
if there exists a deterministic Turing machine $\mathcal{M}$ such that,
for each $x\in\N^+\times\Omega^*$,
when executing $\mathcal{M}$ with the input $x$,
\begin{enumerate}
\item if $x\in\mathcal{C}$ then the computation of $\mathcal{M}$ eventually terminates;
\item if $x\notin\mathcal{C}$ then the computation of $\mathcal{M}$ does not terminate.
\end{enumerate}

See Sipser~\cite{Sip13} for the
basic
definitions and results of the theory of computation.

\subsection{Martin-L\"of randomness with respect to an arbitrary probability measure}
\label{MLRam}

In this subsection,
we introduce the notion of \emph{Martin-L\"of randomness} \cite{M66}
in a general setting.

Let $\Omega$ be an arbitrary alphabet,
and $\mu$ be an arbitrary probability measure on $\Omega^\infty$.
The basic idea of Martin-L\"of randomness
(with respect to the probability measure $\mu$)
is as follows.
\begin{quote}
\textbf{Basic idea of Martin-L\"of randomness}:
The \emph{random} infinite sequences over $\Omega$ are precisely sequences
which are not contained in any \emph{effective null set} on $\Omega^\infty$.
\end{quote}
Here, an \emph{effective null set} on $\Omega^\infty$ is
a set $\mathcal{S}\in\mathcal{B}_\Omega$ such that $\mu(\mathcal{S})=0$
and moreover $\mathcal{S}$ has some type of \emph{effective} property.
As a specific implementation of the idea of effective null set, we introduce the following notion.

\begin{definition}[Martin-L\"{o}f test with respect to a probability measure]
\label{ML-testM}
Let $\Omega$ be an alphabet, and let $\mu$ be a probability measure on $\Omega^\infty$.
A subset $\mathcal{C}$ of $\N^+\times\Omega^*$ is called a
\emph{Martin-L\"{o}f test with respect to $\mu$} if
$\mathcal{C}$ is
a recursively enumerable set,
and
\begin{equation}\label{muocn<2n}
  \mu\left(\osg{\mathcal{C}_n}\right)<2^{-n}
\end{equation}
for every $n\in\N^+$,
where $\mathcal{C}_n$ denotes the set
$\left\{\,
    \sigma\mid (n,\sigma)\in\mathcal{C}
\,\right\}$.
\qed
\end{definition}

Let $\mathcal{C}$ be a Martin-L\"{o}f test with respect to $\mu$.
Then,
it follows from \eqref{muocn<2n}
that $\mu\left(\bigcap_{n=1}^{\infty}\osg{\mathcal{C}_n}\right)=0$.
Therefore,
the set $\bigcap_{n=1}^{\infty}\osg{\mathcal{C}_n}$ serves as an effective null set.
In this manner, the notion of an effective null set is implemented
as
a Martin-L\"{o}f test with respect a probability measure
in Definition~\ref{ML-testM}.

Then,
the notion of \emph{Martin-L\"of randomness with respect to a probability measure} is
defined as follows, according to the basic idea of Martin-L\"{o}f randomness stated above.

\begin{definition}[Martin-L\"{o}f randomness with respect to a probability measure]
\label{ML-randomness-wrtm}
Let $\Omega$ be an alphabet, and let $\mu$ be a probability measure on $\Omega^\infty$.
For any $\alpha\in\Omega^\infty$, we say that $\alpha$ is
\emph{Martin-L\"{o}f random with respect to $\mu$} if
$$\alpha\notin\bigcap_{n=1}^{\infty}\osg{\mathcal{C}_n}$$
for every Martin-L\"{o}f test $\mathcal{C}$ with respect to $\mu$.\qed
\end{definition}

Since there are only countably infinitely many algorithms and
every Martin-L\"of test with respect to $\mu$ induces an effective null set,
it is easy to show the following theorem.

\begin{theorem}\label{MLmae}
Let $\Omega$ be an alphabet, and let $\mu$ be a probability measure on $\Omega^\infty$.
Let $\mathrm{ML}_\mu$ be the set of all $\alpha\in\Omega^\infty$ such that
$\alpha$ is Martin-L\"of random with respect to $\mu$.
Then $\mathrm{ML}_\mu\in\mathcal{B}_{\Omega}$ and
$\mu\left(\mathrm{ML}_\mu\right)=1$.
\qed
\end{theorem}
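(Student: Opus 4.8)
The plan is to realize the complement of $\mathrm{ML}_\mu$ as a countable union of effective null sets and then to invoke the closure properties of the Borel class $\mathcal{B}_{\Omega}$ together with the subadditivity of $\mu$. First I would rewrite Definition~\ref{ML-randomness-wrtm} in the form
\begin{equation*}
  \mathrm{ML}_\mu = \Omega^\infty \setminus \bigcup_{\mathcal{C}} \bigcap_{n=1}^{\infty}\osg{\mathcal{C}_n},
\end{equation*}
where the union ranges over all Martin-L\"of tests $\mathcal{C}$ with respect to $\mu$: indeed $\alpha$ fails to be random exactly when it lies in the effective null set $\bigcap_{n=1}^{\infty}\osg{\mathcal{C}_n}$ associated with at least one such test.

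The decisive point, as indicated just before the statement, is that this union is \emph{countable}. A Martin-L\"of test is by Definition~\ref{ML-testM} a recursively enumerable subset of $\N^+\times\Omega^*$, and each recursively enumerable set is determined by a deterministic Turing machine in the sense of Subsection~\ref{Computability}. Since every Turing machine admits a finite description, there are only countably infinitely many of them, hence only countably many recursively enumerable subsets of $\N^+\times\Omega^*$, and a fortiori only countably many Martin-L\"of tests with respect to $\mu$. I would therefore enumerate them as $\mathcal{C}^{(1)},\mathcal{C}^{(2)},\mathcal{C}^{(3)},\dotsc$ and rewrite the union above as the genuine countable union $\bigcup_{k=1}^{\infty}\bigcap_{n=1}^{\infty}\osg{\mathcal{C}^{(k)}_n}$.

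Next I would verify that each term is a Borel null set. For a fixed test $\mathcal{C}$ every $\osg{\mathcal{C}_n}$ is open and hence lies in $\mathcal{B}_{\Omega}$, so the countable intersection $\bigcap_{n=1}^{\infty}\osg{\mathcal{C}_n}$ lies in $\mathcal{B}_{\Omega}$ as well, since $\mathcal{B}_{\Omega}$ is closed under complements and countable unions and therefore under countable intersections. Moreover, by the monotonicity of $\mu$ together with the defining inequality \eqref{muocn<2n}, one has $\mu\bigl(\bigcap_{n=1}^{\infty}\osg{\mathcal{C}_n}\bigr)\le\mu\bigl(\osg{\mathcal{C}_m}\bigr)<2^{-m}$ for every $m\in\N^+$, whence $\mu\bigl(\bigcap_{n=1}^{\infty}\osg{\mathcal{C}_n}\bigr)=0$, exactly as already remarked after Definition~\ref{ML-testM}.

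Finally I would assemble the pieces. Being a countable union of sets in $\mathcal{B}_{\Omega}$, the set $\bigcup_{k=1}^{\infty}\bigcap_{n=1}^{\infty}\osg{\mathcal{C}^{(k)}_n}$ belongs to $\mathcal{B}_{\Omega}$, and so its complement $\mathrm{ML}_\mu$ belongs to $\mathcal{B}_{\Omega}$ too. Its measure is $0$ by the countable subadditivity of $\mu$, which follows from the countable additivity in the definition of a probability measure through the usual disjointification argument, since every term of the union has measure $0$. Consequently $\mu(\mathrm{ML}_\mu)=\mu(\Omega^\infty)-0=1$. The only step that requires genuine care is the countability of the tests; once the union is recognized as countable, the rest is routine measure theory, which is why the theorem can be asserted to be easy.
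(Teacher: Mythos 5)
Your proof is correct and takes exactly the route the paper indicates in the sentence preceding the theorem: since there are only countably many Turing machines, there are only countably many Martin-L\"of tests with respect to $\mu$, each inducing a Borel effective null set $\bigcap_{n=1}^{\infty}\osg{\mathcal{C}_n}$, so the complement of $\mathrm{ML}_\mu$ is a countable union of Borel null sets and the conclusion follows by countable subadditivity. The paper leaves these details to the reader, and your write-up supplies precisely the intended ones.
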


\section{Postulates of quantum mechanics}
\label{QM}

In this section, we
review
the central postulates of
(the conventional)
quantum mechanics.
For simplicity, in this paper we consider the postulates of quantum mechanics
for a \emph{finite-dimensional} quantum system, i.e.,
a quantum system whose state space is a \emph{finite-dimensional} Hilbert space.
Nielsen and Chuang
\cite{NC00}
treat thoroughly the postulates of
(the conventional)
quantum mechanics in the finite-dimensional case,
as a
textbook of
the field of
quantum computation and quantum information
in which
such a case
is typical.
Throughout this paper
we refer to
the postulates of
the \emph{conventional}
quantum mechanics in the form presented in
Nielsen and Chuang
\cite[Chapter 2]{NC00}.
Note that
the postulates reviewed in this section
are about \emph{pure states}.
We will consider the postulates of quantum mechanics about \emph{mixed states} and
their refinements later.

The first postulate of quantum mechanics is about \emph{state space} and \emph{state vector}.

\begin{postulate}[State space and state vector]\label{state_space}
Associated to any isolated physical system is a complex vector space with inner product
(that is, a Hilbert space)
known as the \emph{state space} of the system.
The system is completely described by its \emph{state vector},
which is a unit vector in the system's state space.
\qed
\end{postulate}

The second postulate of quantum mechanics is about the \emph{composition} of systems.

\begin{postulate}[Composition of systems]\label{composition}
The state space of a composite physical system is the tensor product of the state spaces of the component physical systems.
Moreover, if we have systems numbered $1$ through $n$, and system number $i$ is
prepared
in the state $\ket{\Psi_i}$,
then the joint state of the total system is
$$\ket{\Psi_1}\otimes\ket{\Psi_2}\otimes\dots\otimes\ket{\Psi_n}.$$
\qed
\end{postulate}

The third postulate of quantum mechanics is about the \emph{time-evolution} of
\emph{closed} quantum systems.

\begin{postulate}[Unitary time-evolution]\label{evolution}
The evolution of a \emph{closed} quantum system is described by a \emph{unitary transformation}.
That is,
the state $\ket{\Psi_1}$ of the system at time $t_1$ is related to the state $\ket{\Psi_2}$ of the system at time $t_2$
by a unitary operator $U$, which depends only on the times $t_1$ and $t_2$,
in such a way that
$$\ket{\Psi_2}=U\ket{\Psi_1}.$$
\qed
\end{postulate}

The forth postulate of quantum mechanics is about \emph{measurements} on quantum systems.
This is the so-called \emph{Born rule}, i.e, \emph{the probability interpretation of the wave function}.

\begin{postulate}[The Born rule]\label{Born-rule}
A
quantum measurement is described by an \emph{observable}, $M$, a Hermitian operator
on the state space of the system being measured.
The observable has a spectral decomposition,
$$M=\sum_{m} m E_m,$$
where $E_m$ is the projector onto the eigenspace of $M$ with eigenvalue $m$.
The possible outcomes of the measurement correspond to the eigenvalues, $m$, of the observable.
If the state of the quantum system is $\ket{\Psi}$ immediately before the measurement
then the \emph{probability} that result $m$ occurs is given by
$$\bra{\Psi}E_m\ket{\Psi},$$
and the state of the system after the measurement is
\begin{equation*}
  \frac{E_m\ket{\Psi}}{\sqrt{\bra{\Psi}E_m\ket{\Psi}}}.
\end{equation*}
\qed
\end{postulate}

Thus, the Born rule, Postulate~\ref{Born-rule}, uses the \emph{notion of probability}.
However, the operational characterization of the notion of probability is not given in the Born rule,
and therefore the relation of its statement to a specific infinite sequence of outcomes of quantum measurements which are being generated by an infinitely repeated measurements is \emph{unclear}.
In this paper we
will
fix this point.

Throughout this paper
\emph{we keep Postulates~\ref{state_space}, \ref{composition}, and \ref{evolution}
in their original forms without any modifications}.
We then propose Postulate~\ref{Tadaki-rule} below
as a \emph{refinement} of Postulate~\ref{Born-rule},
based on the notion of
\emph{Martin-L\"of randomness with respect to Bernoulli measure}.


\section{A refinement of the Born rule}
\label{SecTadaki-rule}

In this section
we introduce a refinement of the Born rule, Postulate~\ref{Tadaki-rule}.
We propose to replace the Born rule by it.

In Section~\ref{MLRam} we have introduced the notion of 
Martin-L\"of randomness with respect to an arbitrary probability measure.
To state Postulate~\ref{Tadaki-rule} we use this notion where
the probability measure is chosen to be Bernoulli measure.
Namely, we use the notion of \emph{Martin-L\"of randomness with respect to Bernoulli measure}.
In order to introduce this notion, we first
review the notions of \emph{finite probability space} and \emph{Bernoulli measure}.
Both of them are
notions
from \emph{measure theory}.

\begin{definition}[Finite probability space]\label{def-FPS}
Let $\Omega$ be an alphabet. A \emph{finite probability space on $\Omega$} is a function $P\colon\Omega\to [0,1]$
such that
\begin{enumerate}
  \item $P(a)\ge 0$ for every $a\in \Omega$, and
  \item $\sum_{a\in \Omega}P(a)=1$.
\end{enumerate}
The set of all finite probability spaces on $\Omega$ is denoted by $\PS(\Omega)$.

Let $P\in\PS(\Omega)$.
The set $\Omega$ is called the \emph{sample space} of $P$,
and elements
of
$\Omega$ are called \emph{sample points} or \emph{elementary events}
of $P$.
For each $A\subset\Omega$, we define $P(A)$ by
$$P(A):=\sum_{a\in A}P(a).$$
A subset of $\Omega$ is called an \emph{event} on $P$, and
$P(A)$ is called the \emph{probability} of $A$
for every event $A$
on $P$.
\qed
\end{definition}

Let $P\in\PS(\Omega)$.
For each $\sigma\in\Omega^*$, we use $P(\sigma)$ to denote
$P(\sigma_1)P(\sigma_2)\dots P(\sigma_n)$
where $\sigma=\sigma_1\sigma_2\dots\sigma_n$ with $\sigma_i\in\Omega$.
For each subset $S$ of $\Omega^*$, we use $P(S)$ to denote
$$\sum_{\sigma\in S}P(\sigma).$$

Consider a function $r\colon\Omega^*\to[0,1]$ such that $r(\sigma)=P(\sigma)$ for every $\sigma\in\Omega^*$.
It is then easy to see that the function $r$ is a probability measure representation over $\Omega$.
The probability measure $\mu_r$ induced by $r$ is
called
a \emph{Bernoulli measure on $\Omega^\infty$}, denoted
$\lambda_{P}$.
The Bernoulli measure $\lambda_{P}$ on $\Omega^\infty$
satisfies that
\begin{equation}\label{BmPosgsigma=Psigma}
  \Bm{P}{\osg{\sigma}}=P(\sigma)
\end{equation}
for every $\sigma\in \Omega^*$,
which follows from \eqref{mr}.

The notion of \emph{Martin-L\"of randomness with respect to Bernoulli measure} is defined as follows.
We call it the \emph{Martin-L\"of $P$-randomness} in this paper,
since it depends on a finite probability space $P$.
This notion was, in essence, introduced by Martin-L\"{o}f~\cite{M66},
as well as the notion of Martin-L\"of randomness
with respect to Lebesgue measure.

\begin{definition}[%
Martin-L\"of $P$-randomness,
Martin-L\"{o}f \cite{M66}]\label{ML_P-randomness}
Let $P\in\PS(\Omega)$.
    For any $\alpha\in\Omega^\infty$, we say that $\alpha$ is \emph{Martin-L\"{o}f $P$-random} if
    $\alpha$ is Martin-L\"{o}f random with respect to $\lambda_{P}$.\qed
\end{definition}

Now, let us introduce a refinement of Postulate~\ref{Born-rule}.
Let $\Omega$ be an alphabet consisting of reals.
Suppose that $\Omega$ is the set of all possible measurement outcomes in a quantum measurement.
Let us identify the form of the postulate of quantum measurements as it ought to be,
from a general point of view.
Consider an infinite sequence $\alpha$ of the outcomes of quantum measurements such as
$$\alpha=a_1 a_2 a_3 a_4 a_5 a_6 a_7 a_8 \dotsc\dotsc$$
with $a_i\in\Omega$,
which is being generated as measurements progressed.
All that the experimenter of quantum measurements can obtain through the measurements about quantum system is
such
a \emph{specific} infinite sequence of outcomes in $\Omega$ of the measurements
which are being generated by infinitely repeated measurements.
Thus, \emph{the object about which the postulate of quantum measurements makes a statement should be
the properties of
a
specific infinite sequence
$\alpha\in\Omega^\infty$
of outcomes of the measurements.}

Suggested by this consideration, we propose to replace the Born rule, Postulate~\ref{Born-rule},
by the following postulate:

\begin{postulate}[Refinement of the Born rule for pure states]\label{Tadaki-rule}
A
quantum measurement is described by an \emph{observable}, $M$,
a Hermitian operator on the state space of the system being measured.
The observable has a spectral decomposition,
$$M=\sum_{m\in\Omega} m E_m,$$
where $E_m$ is the projector onto the eigenspace of $M$ with eigenvalue $m$.
The set of possible outcomes of the measurement is the spectrum $\Omega$ of $M$.
Suppose that the measurements are repeatedly performed over identical quantum systems
whose states are all $\ket{\Psi}$, and
the infinite sequence $\alpha\in \Omega^\infty$ of measurement outcomes is being generated.
Then $\alpha$ is Martin-L\"of $P$-random, where $P$ is a finite probability space on $\Omega$
such that
$$P(m)=\bra{\Psi}E_m\ket{\Psi}$$
for every $m\in\Omega$.
For each of the measurements, the state of the system immediately after the measurement is
\begin{equation}\label{pms}
  \frac{E_m\ket{\Psi}}{\sqrt{\bra{\Psi}E_m\ket{\Psi}}},
\end{equation}
where $m$ is the corresponding measurement outcome.
\qed
\end{postulate}

Note that the function $P$ appearing in Postulate~\ref{Tadaki-rule} is certainly
a finite probability space on $\Omega$,
since $\sum_{m\in\Omega}E_m=I$ holds for
the projectors $E_m$.

\section{Verification of the validity of Postulate~\ref{Tadaki-rule}}
\label{Validity}

Let us begin to check the validity of Postulate~\ref{Tadaki-rule}.
Based on
the results of the work \cite{T16arXiv},
we can see that Postulate~\ref{Tadaki-rule} is certainly a refinement of
Postulate~\ref{Born-rule}, the Born rule,
from the point of view of our intuitive understanding of the notion of probability.

First of all, what is ``probability''?
In particular, what is ``probability'' in quantum mechanics?
It would seem very difficult to answer this question
\emph{completely} and \emph{sufficiently}.
However, we may enumerate the \emph{necessary} conditions which the notion of probability is
considered to have to satisfy
\emph{according to
our intuitive understanding of the notion of probability}.
In the subsequent three subsections,
we check that Postulate~\ref{Tadaki-rule} satisfies these necessary conditions.

\subsection{The frequency interpretation}

The first necessary condition
which the notion of probability is considered to have to satisfy
is the \emph{law of large numbers}, i.e., the \emph{frequency interpretation}.
Actually, according to Postulate~\ref{Tadaki-rule}
we can show that the law of large numbers holds for the infinite sequence $\alpha\in \Omega^\infty$
of measurement outcomes appearing
in Postulate~\ref{Tadaki-rule}.
This is confirmed by the following theorem.
See Tadaki~\cite[Theorem 11]{T16arXiv} for the proof.

\begin{theorem}[The law of large numbers]\label{FI}
Let $\Omega$ be an alphabet, and let $P\in\PS(\Omega)$.
For every $\alpha\in\Omega^\infty$, if $\alpha$ is Martin-L\"of $P$-random
then for every $a\in\Omega$ it holds that
$$\lim_{n\to\infty} \frac{N_a(\rest{\alpha}{n})}{n}=P(a),$$
where $N_a(\sigma)$ denotes the number of the occurrences of $a$ in $\sigma$ for every $a\in\Omega$ and $\sigma\in\Omega^*$.
\qed
\end{theorem}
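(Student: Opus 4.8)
The plan is to argue by contraposition through effective null sets: for each fixed symbol $a\in\Omega$ I would show that the set of sequences violating the frequency condition is contained in the null set $\bigcap_{n}\osg{\mathcal{C}_n}$ of some Martin-L\"of test $\mathcal{C}$ with respect to $\Bm{P}{\cdot}$, so that no Martin-L\"of $P$-random $\alpha$ can violate it. Since $\Omega$ is a finite alphabet there are only finitely many symbols $a$ to treat, and it suffices to handle one of them. Writing $p=P(a)$, the failure of $N_a(\rest{\alpha}{n})/n\to p$ is equivalent to the existence of a rational $\epsilon>0$ such that $\abs{N_a(\rest{\alpha}{n})/n-p}\geq\epsilon$ for infinitely many $n$. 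As $\epsilon$ ranges over the positive rationals this is a countable union, so it is enough to produce, for each such $\epsilon$, a single Martin-L\"of test capturing the corresponding ``deviates infinitely often'' set.

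The quantitative heart is a large-deviation estimate. Under $\Bm{P}{\cdot}$ the count $N_a(\rest{\alpha}{n})$ is binomially distributed with parameters $n$ and $p$ (the coordinates are independent and each equals $a$ with probability $p$), so Hoeffding's inequality gives $\Bm{P}{\{\alpha\mid\abs{N_a(\rest{\alpha}{n})/n-p}\geq\epsilon\}}\leq 2e^{-2\epsilon^2 n}$, and the right-hand side is summable in $n$. For the test at level $m$ I would take $\mathcal{C}_m$ to be the set of all $\sigma\in\Omega^*$ with $\abs{\sigma}\geq N_m$ and $\abs{N_a(\sigma)/\abs{\sigma}-p}\geq\epsilon$, where $N_m$ is chosen so that the tail $\sum_{n\geq N_m}2e^{-2\epsilon^2 n}$ is below $2^{-m}$. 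A union bound over the lengths $n\geq N_m$ then yields $\Bm{P}{\osg{\mathcal{C}_m}}\leq\sum_{n\geq N_m}2e^{-2\epsilon^2 n}<2^{-m}$, as a test requires, and any $\alpha$ that deviates by $\epsilon$ infinitely often lies in $\osg{\mathcal{C}_m}$ for every $m$, hence in $\bigcap_m\osg{\mathcal{C}_m}$. Because the tail bound is a computable function of $n$ once $\epsilon$ is a fixed rational, the cutoff $N_m$ can be found algorithmically, so choosing it causes no difficulty.

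The step I expect to be the main obstacle is verifying that $\mathcal{C}_m$ is recursively enumerable, which is precisely the defining requirement on a Martin-L\"of test. Enumerating strings and computing $N_a(\sigma)/\abs{\sigma}$ is routine, but deciding the inequality $\abs{N_a(\sigma)/\abs{\sigma}-p}\geq\epsilon$ presupposes enough access to the real $p=P(a)$ to compare it with rationals. When $P(a)$ is a computable real this is unproblematic, since one can approximate $p$ to arbitrary precision and the comparison becomes decidable up to a harmless enlargement of the set near the boundary, and the construction goes through verbatim. For a genuinely non-computable $P(a)$ the cleanest route is to push the indicator of $a$ forward to $\{0,1\}^{\infty}$, reducing $\Bm{P}{\cdot}$ to a binary Bernoulli measure, and then to separate the two tasks: first secure the mere \emph{existence} of $\lim_n N_a(\rest{\alpha}{n})/n$ by a $p$-free test built from the Cauchy criterion on the empirical frequencies---whose measure bound follows from the same Hoeffding estimate applied to $\epsilon/2$ and is uniform in $p$---and only afterwards identify the limit with $p$. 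It is this last identification that genuinely consumes the information contained in $p$, and handling it carefully is the delicate point of the argument; the full details are carried out in Tadaki~\cite{T16arXiv}.
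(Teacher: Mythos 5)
The paper itself supplies no argument for this theorem: the ``proof'' is a citation to Tadaki~\cite[Theorem 11]{T16arXiv}. So the question is whether your outline is a complete blind proof, and it is not quite. Your Hoeffding-plus-union-bound test construction is correct and complete whenever $P(a)$ is a computable real (the boundary-enlargement remark correctly disposes of the comparison issue), and you have correctly isolated the real difficulty: Definition~\ref{ML-testM} demands that a test be recursively enumerable outright, with no oracle access to $P$, while the theorem places no computability hypothesis on $P$ at all. Your Cauchy-criterion test is also sound---its measure bound is uniform in $p$, so it is a valid Martin-L\"of test with respect to every Bernoulli measure---and it does yield that $q:=\lim_{n\to\infty} N_a(\rest{\alpha}{n})/n$ exists for every Martin-L\"of $P$-random $\alpha$.

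The gap is the final identification $q=P(a)$, which you defer to the very reference the paper cites; for non-computable $P$ that step is the substance of the theorem, not a loose end. Moreover, your framing---that this step ``genuinely consumes the information contained in $p$''---points in the wrong direction, since no admissible test may consume that information. The missing idea is to separate the \emph{definition} of a test from its \emph{validity}. For each pair of rationals $r$ and $\eta>0$, let $\mathcal{D}^{r,\eta}_m$ be the set of all $\tau\in\Omega^*$ with $\abs{\tau}\ge N(m,\eta)$ and $N_a(\tau)/\abs{\tau}\ge r$, where $N(m,\eta)$ is computable and chosen so that $\sum_{n\ge N(m,\eta)}e^{-2\eta^2 n}<2^{-m}$. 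Each $\mathcal{D}^{r,\eta}$ is a decidable (hence r.e.) subset of $\N^+\times\Omega^*$ defined with no reference to $p$; the one-sided Hoeffding bound $\Bm{P}{\osg{\mathcal{D}^{r,\eta}_m}}\le\sum_{n\ge N(m,\eta)}e^{-2(r-p)^2 n}$ shows it satisfies the measure requirement with respect to $\lambda_P$ \emph{whenever} $\eta\le r-P(a)$, i.e., it is a valid test precisely for those $P$ it is ``aimed at.'' Now if $q>P(a)$, choose rationals with $P(a)<r-\eta<r<q$: then $\mathcal{D}^{r,\eta}$ is a valid Martin-L\"of test with respect to $\lambda_P$, and since $N_a(\rest{\alpha}{n})/n\to q>r$ we get $\alpha\in\bigcap_{m=1}^{\infty}\osg{\mathcal{D}^{r,\eta}_m}$, contradicting randomness; the case $q<P(a)$ is symmetric, and a union over $a\in\Omega$ finishes. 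With this step supplied (and with either a short verification that your push-forward to the binary alphabet preserves blind randomness, or---simpler---by running both tests directly over $\Omega^*$, which avoids that detour entirely), your outline becomes a complete proof.
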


From Theorem~\ref{FI} we see that
$$\forall\,m\in\Omega\;\,\lim_{n\to\infty}\frac{N_m(\rest{\alpha}{n})}{n}=\bra{\Psi}E_m\ket{\Psi}$$
holds for the infinite sequence $\alpha\in \Omega^\infty$ in Postulate~\ref{Tadaki-rule}.
Thus, the frequency interpretation, which is expected
from
Postulate~\ref{Born-rule},
holds.

\subsection{Elementary event with probability one occurs certainly}

The second necessary condition which the notion of probability is considered to have to satisfy is
the condition that \emph{an elementary event with probability one occurs certainly}.
Actually, according to Postulate~\ref{Tadaki-rule} we can show that
an elementary event with probability one always occurs
in the infinite sequence $\alpha\in \Omega^\infty$
of measurement outcomes appearing
in Postulate~\ref{Tadaki-rule}.
This fact that \emph{an elementary event with probability one occurs certainly in quantum mechanics}
is derived
in Tadaki~\cite{T16arXiv,T17SCIS}
in the context of the conventional quantum mechanics.
For completeness, we include the derivation of this fact as follows:

Recall that
there is a postulate about quantum measurements with no reference to the notion of probability.
This is given in
von Neumann \cite[Section III.3]{vN55},
and describes a spacial case of quantum measurements
where the measurement of an observable is performed upon a quantum system
in an \emph{eigenstate} of the observable being measured, i.e.,
a state represented by an eigenvector of the observable being measured.
We here refer to this postulate in the form described in Dirac \cite[Section 10]{D58}.
\begin{postulate}[Dirac \cite{D58}]\label{Dirac}
If the dynamical system is in an eigenstate of a real dynamical variable $\xi$, belonging to the eigenvalue $\xi'$,
then a measurement of $\xi$ will certainly gives as result the number $\xi'$.
\qed
\end{postulate}
Here, the ``dynamical system'' means quantum system.
Any observable is a ``real dynamical variable'' mentioned in Postulate~\ref{Dirac}.

Based on Postulates~\ref{state_space}, \ref{Born-rule}, and \ref{Dirac} above,
we can show the fact that an elementary event ``with probability one'' occurs certainly
in the conventional quantum mechanics.
As we already pointed out, an operational characterization of the notion of probability is missing
in Postulate~\ref{Born-rule}.
However, Postulate~\ref{Born-rule} surely mentions the word ``probability''.
Based on
this mention,
we can \emph{literally} derive
the fact above.

Now, for deriving this fact, let us consider a quantum system with finite-dimensional state space,
and a measurement described by an observable $M$ performed upon the quantum system.
Suppose that the probability of getting result $m_0$ is one  in the measurement performed upon the system
in a state represented by a state vector $\ket{\Psi}$.
Let
$M=\sum_{m}m E_m$
be a spectral decomposition of the observable $M$,
where $E_m$ is the projector onto the eigenspace of $M$ with eigenvalue $m$.
Then, it follows from Postulate~\ref{Born-rule} that
$$\bra{\Psi}E_{m_0}\ket{\Psi}=1.$$
This implies that $\ket{\Psi}$ is an eigenvector of $M$ belonging to the eigenvalue $m_0$,
since $\ket{\Psi}$ is a unit vector according to
the
convention for state vectors adopted in Postulate~\ref{state_space}.
Thus,
we have that immediately before the measurement, the quantum system is
in an eigenstate of the observable $M$, belonging to the eigenvalue $m_0$.
It follows from Postulate~\ref{Dirac} that
the measurement of $M$ will \emph{certainly} gives as result the number $m_0$.
Hence, it turns out that
\emph{an elementary event with probability one occurs certainly in the conventional quantum mechanics}.

Now, let us turn to
see
the validity of Postulate~\ref{Tadaki-rule}.
Theorem~\ref{one_probability} below confirms that an event with probability one always occurs
in the infinite sequence $\alpha\in \Omega^\infty$
of measurement outcomes appearing
in Postulate~\ref{Tadaki-rule}.
To see this, suppose that $\bra{\Psi}E_{m_0}\ket{\Psi}=1$ holds for a certain $m_0\in\Omega$
in the setting of Postulate~\ref{Tadaki-rule}.
Then, by Postulate~\ref{Tadaki-rule},
the infinite sequence $\alpha\in \Omega^\infty$ of measurement outcomes being generated is
Martin-L\"of $P$-random, where $P$ is a finite probability space on $\Omega$
such that $P(m_0)=1$.
It follows from Theorem~\ref{one_probability} that
the infinite sequence $\alpha$ consists only of $m_0$.
Thus, the event $m_0$ with probability one always occurs
in the infinitely repeated measurements
in the setting of Postulate~\ref{Tadaki-rule},
as desired.
This result strengthens the validity of Postulate~\ref{Tadaki-rule}.

\begin{theorem}\label{one_probability}
Let $P\in\PS(\Omega)$, and let $a\in\Omega$.
Let $\alpha\in\Omega^\infty$.
Suppose that $\alpha$ is Martin-L\"of $P$-random and $P(a)=1$.
Then $\alpha$ consists only of $a$, i.e., $\alpha=aaaaaa\dotsc\dotsc$.\qed
\end{theorem}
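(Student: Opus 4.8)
The plan is to construct a single Martin-L\"of test with respect to $\lambda_P$ whose associated effective null set is \emph{exactly} the set of sequences that differ from $aaa\dotsc$ somewhere, and then to invoke the Martin-L\"of $P$-randomness of $\alpha$ to exclude $\alpha$ from that null set. First I would record the elementary consequence of the hypothesis: since $\sum_{b\in\Omega}P(b)=1$ with every $P(b)\ge 0$, the assumption $P(a)=1$ forces $P(b)=0$ for every $b\in\Omega$ with $b\neq a$. This vanishing is the only arithmetic input needed.

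Next I would set $S:=\{\sigma\in\Omega^*\mid\sigma\text{ contains at least one symbol different from }a\}$ and define $\mathcal{C}:=\N^+\times S$, so that $\mathcal{C}_n=S$ for every $n\in\N^+$. I claim $\mathcal{C}$ is a Martin-L\"of test with respect to $\lambda_P$ in the sense of Definition~\ref{ML-testM}. For the effectiveness requirement, note that $\Omega$ is a finite alphabet and $a$ a fixed symbol of it, so the predicate ``$\sigma$ contains a symbol $\neq a$'' is decidable, whence $\mathcal{C}$ is recursively enumerable (indeed recursive). For the measure bound, I would observe that $\osg{S}=\{\alpha\in\Omega^\infty\mid\alpha\neq aaa\dotsc\}$, because $\alpha\in\osg{S}$ means some finite prefix of $\alpha$ already exhibits a symbol $\neq a$. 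This open set can be written as $\osg{E}$ for the prefix-free set $E:=\{a^{k}b\mid k\in\N,\ b\in\Omega,\ b\neq a\}$, where $a^{k}b$ records the first position at which $\alpha$ departs from $a$. Using \eqref{BmPosgsigma=Psigma} together with the factorization of $P$ over strings gives $\lambda_P(\osg{S})=\sum_{k\ge 0}\sum_{b\neq a}P(a)^{k}P(b)=0$, since each $P(b)$ with $b\neq a$ is zero. Hence $\lambda_P(\osg{\mathcal{C}_n})=0<2^{-n}$ for every $n$, as Definition~\ref{ML-testM} demands.

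Finally, because $\mathcal{C}_n=S$ for all $n$, we have $\bigcap_{n=1}^{\infty}\osg{\mathcal{C}_n}=\osg{S}=\{\alpha\in\Omega^\infty\mid\alpha\neq aaa\dotsc\}$. Since $\alpha$ is Martin-L\"of $P$-random, Definition~\ref{ML-randomness-wrtm} yields $\alpha\notin\bigcap_{n=1}^{\infty}\osg{\mathcal{C}_n}$, and therefore $\alpha=aaa\dotsc$, which is the assertion.

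I do not expect a serious obstacle: the only point that must not be glossed over is the effectiveness of the test, i.e. that membership in $S$ is decidable, and this is immediate because $\Omega$ is finite and $a$ is a distinguished element of it. It is worth noting that the law of large numbers, Theorem~\ref{FI}, does \emph{not} suffice on its own: it only gives $\lim_{n\to\infty}N_b(\rest{\alpha}{n})/n=P(b)=0$ for each $b\neq a$, i.e. density zero, which is strictly weaker than the total absence of such $b$. The explicit Martin-L\"of test above is precisely what upgrades ``density zero'' to ``never occurs''.
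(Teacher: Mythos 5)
Your proof is correct and fully self-contained, which is actually more than the paper itself provides: for Theorem~\ref{one_probability} the paper gives no argument at all, attributing the observation to Martin-L\"of and deferring the proof to Tadaki~\cite[Theorem 9]{T16arXiv}. Your test $\mathcal{C}=\N^+\times S$ works as claimed: $S$ is decidable because $\Omega$ is finite and $a$ is fixed, hence $\mathcal{C}$ is recursively enumerable; the set $E=\{a^k b\mid k\in\N,\ b\in\Omega,\ b\neq a\}$ is prefix-free (if $k<k'$, the $(k+1)$st symbol of $a^{k'}b'$ is $a\neq b$) and satisfies $\osg{E}=\osg{S}$; and the induced-measure computation gives $\Bm{P}{\osg{S}}=\sum_{k\ge 0}\sum_{b\neq a}P(a)^k P(b)=0<2^{-n}$, so Definition~\ref{ML-testM} is met and Definition~\ref{ML-randomness-wrtm} excludes $\alpha$ from $\osg{S}=\{\beta\in\Omega^\infty\mid\beta\neq aaa\dotsc\}$. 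An alternative route inside the paper's own toolkit would be to invoke Theorem~\ref{zero_probability}: since $P(b)=0$ for every $b\neq a$, that theorem says no such $b$ occurs in $\alpha$, and finiteness of $\Omega$ finishes the argument; but Theorem~\ref{zero_probability} is itself only cited, not proved, in the paper (and is even introduced there as an elaboration of Theorem~\ref{one_probability}), so your single explicit test is the more elementary and genuinely self-contained derivation. Your closing remark is also on point: Theorem~\ref{FI} alone would only yield that each $b\neq a$ occurs with asymptotic density zero, which does not preclude finitely or even infinitely many sparse occurrences, so the effective null set construction is indeed what upgrades density zero to total absence.
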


Theorem~\ref{one_probability} was, in essence, pointed out by Martin-L\"{o}f~\cite{M66}.
See Tadaki~\cite[Theorem 9]{T16arXiv} for the proof.

\subsection{Self-consistency on some level}
\label{constcs}

This subsection considers and verifies the third necessary condition which
the notion of probability is thought to have to satisfy.
This is about the \emph{self-consistency} of the notion of probability.
Namely,
we can verify the \emph{self-consistency} of Postulate~\ref{Tadaki-rule} on some level.
We do this, based on the arguments given in Tadaki~\cite[Sections 5.3 and 5.4]{T16arXiv}.
This result suggests
that Postulate~\ref{Tadaki-rule} is not too strong.
The detail of the verification of the self-consistency is given as follows.

We assume Postulate~\ref{Tadaki-rule}.
Let $\mathcal{S}$ be an arbitrary quantum system with state space
of finite dimension.
Consider a measurement over $\mathcal{S}$ described by an arbitrary observable $M$.
Then the observable $M$ has a spectral decomposition
$$M=\sum_{m\in\Omega} m E_m,$$
where $E_m$ is the projector onto the eigenspace of $M$ with eigenvalue $m$.
The set of possible outcomes of the measurement is the spectrum $\Omega$ of $M$.

Assume that an observer $A$ performs an infinite reputation of the measurement described by
the observable $M$ over an infinite copies of $\mathcal{S}$ all prepared in an identical state $\ket{\Psi}$.
Then the infinite sequence $\alpha\in \Omega^\infty$ of measurement outcomes is being generated as
$$\alpha=a_1 a_2 a_3 a_4 a_5 a_6 a_7 a_8 \dotsc\dotsc$$
with $a_i\in\Omega$.
According to Postulate~\ref{Tadaki-rule},
$\alpha$ is Martin-L\"of $P$-random, where $P$ is a finite probability space on $\Omega$
such that $P(m)=\bra{\Psi}E_m\ket{\Psi}$ for every $m\in\Omega$.
Consider another observer $B$ who wants to adopt the following subsequence $\beta$ of $\alpha$
as the outcomes of measurements:
$$\beta=a_2 a_3 a_5 a_7 a_{11} a_{13} a_{17} \dotsc\dotsc,$$
where the observer $B$ only takes into account
the $n$th measurements in the original infinite sequence $\alpha$ of measurements
such that $n$ is a prime number.
According to Postulate~\ref{Tadaki-rule}, $\beta$ has to be Martin-L\"of $P$-random, as well.
However, is this true?

Consider this problem in a general setting.
Assume as before that an observer $A$ performs an infinite reputation of the measurement described by
the observable $M$ over an infinite copies of $\mathcal{S}$ all prepared in an identical state $\ket{\Psi}$.
Then the infinite sequence $\alpha\in \Omega^\infty$ of measurement outcomes is being generated.
According to Postulate~\ref{Tadaki-rule},
$\alpha$ is Martin-L\"of $P$-random, where $P$ is a finite probability space on $\Omega$
such that $P(m)=\bra{\Psi}E_m\ket{\Psi}$ for every $m\in\Omega$.
Now, let $f\colon\N^+\to\N^+$ be an injection.
Consider another observer $B$ who wants to adopt the following sequence $\beta$
as the outcomes of the measurements:
$$\beta=\alpha(f(1))\alpha(f(2))\alpha(f(3))\alpha(f(4))\alpha(f(5))\dotsc\dotsc,$$
instead of $\alpha$.
According to Postulate~\ref{Tadaki-rule}, $\beta$ has to be Martin-L\"of $P$-random, as well.
However, is this true?

We can confirm this \emph{by restricting the ability of $B$}, that is, by assuming that
every observer can select elements from the original
infinite
sequence $\alpha$
\emph{only in an effective manner}.
This means that the function $f\colon\N^+\to\N^+$ has to be a \emph{computable} function, i.e.,
a \emph{total recursive} function.
Theorem~\ref{cpucs} below shows this result.

\begin{theorem}[Closure property under computable shuffling]\label{cpucs}
Let $\Omega$ be an alphabet, and let $P\in\PS(\Omega)$.
Let $\alpha\in\Omega^\infty$.
Suppose that $\alpha$ is Martin-L\"of $P$-random.
Then, for every injective function $f\colon\N^+\to\N^+$, if $f$ is computable then the infinite sequence
\begin{equation*}
  \alpha_f:=\alpha(f(1)) \alpha(f(2)) \alpha(f(3)) \alpha(f(4)) \dotsc\dotsc\dotsc
\end{equation*}
is Martin-L\"of $P$-random.
\end{theorem}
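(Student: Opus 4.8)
The plan is to argue by contraposition, using the ``pullback of a test'' along the selection map. Define $\Phi\colon\Omega^\infty\to\Omega^\infty$ by $\Phi(\alpha)=\alpha_f$, i.e.\ $\Phi(\alpha)(i)=\alpha(f(i))$. Suppose, for contradiction, that $\alpha_f$ is \emph{not} Martin-L\"of $P$-random. Then there is a Martin-L\"of test $\mathcal{C}$ with respect to $\Bm{P}{\cdot}$ such that $\alpha_f\in\bigcap_{n=1}^{\infty}\osg{\mathcal{C}_n}$. I would transport $\mathcal{C}$ through $\Phi$ to obtain a test $\mathcal{D}$ with respect to $\Bm{P}{\cdot}$ that catches $\alpha$, contradicting the hypothesis that $\alpha$ is Martin-L\"of $P$-random.

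Concretely, for $\tau=\tau_1\cdots\tau_k\in\Omega^*$ put $N(\tau)=\max\{f(1),\dots,f(k)\}$, which is well-defined and computable because $f$ is a total recursive injection; note also $N(\tau)\ge k$ since $f(1),\dots,f(k)$ are $k$ distinct positive integers. Let $\mathcal{D}_n$ consist of every $\sigma\in\Omega^*$ of length $N(\tau)$, for some $\tau\in\mathcal{C}_n$, with $\sigma(f(j))=\tau_j$ for all $j\le k$ and the remaining coordinates arbitrary, and set $\mathcal{D}=\{(n,\sigma)\mid\sigma\in\mathcal{D}_n\}$. One checks directly that $\osg{\mathcal{D}_n}=\Phi^{-1}(\osg{\mathcal{C}_n})$. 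First, $\mathcal{D}$ is recursively enumerable: enumerate $\mathcal{C}$, and for each pair $(n,\tau)$ produced, compute $f(1),\dots,f(k)$, form $N(\tau)$, and list the finitely many admissible $\sigma$; totality and computability of $f$ make this effective.

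The crux is the measure bound $\Bm{P}{\osg{\mathcal{D}_n}}<2^{-n}$, and this is exactly where the \emph{injectivity} of $f$ and the \emph{product structure} of the Bernoulli measure are indispensable. Since $\osg{\mathcal{D}_n}=\Phi^{-1}(\osg{\mathcal{C}_n})$, it suffices to show that $\Phi$ preserves $\Bm{P}{\cdot}$ on open sets. I would fix a prefix-free $E$ with $\osg{E}=\osg{\mathcal{C}_n}$, which exists by the remark in Subsection~\ref{basic notation}; the cylinders $\osg{\tau}$, $\tau\in E$, are pairwise disjoint, hence so are their preimages $\Phi^{-1}(\osg{\tau})=\{\alpha\mid\alpha(f(j))=\tau_j,\ 1\le j\le k\}$. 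Because $f$ is injective, the indices $f(1),\dots,f(k)$ are distinct, so exactly $k$ of the first $N(\tau)$ coordinates are constrained; summing $P(\sigma)$ over all admissible $\sigma$ of length $N(\tau)$ factors as $\prod_{j=1}^{k}P(\tau_j)\cdot\bigl(\sum_{a\in\Omega}P(a)\bigr)^{N(\tau)-k}$, which equals $P(\tau)$ since $\sum_{a\in\Omega}P(a)=1$. Thus $\Bm{P}{\Phi^{-1}(\osg{\tau})}=P(\tau)=\Bm{P}{\osg{\tau}}$, and summing over the disjoint pieces yields $\Bm{P}{\osg{\mathcal{D}_n}}=\sum_{\tau\in E}P(\tau)=\Bm{P}{\osg{\mathcal{C}_n}}<2^{-n}$. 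Injectivity is genuinely essential: if two indices coincided, the constrained block would be shorter and the preimage measure could \emph{exceed} $P(\tau)$, destroying the bound. Hence $\mathcal{D}$ is a Martin-L\"of test with respect to $\Bm{P}{\cdot}$.

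Finally, I would verify that $\alpha$ is caught. For each $n$, since $\alpha_f\in\osg{\mathcal{C}_n}$, some $\tau\in\mathcal{C}_n$ is a prefix of $\alpha_f$, so $\alpha(f(j))=\tau_j$ for $j\le|\tau|$; then $\rest{\alpha}{N(\tau)}$ is one of the admissible strings, so $\alpha\in\osg{\mathcal{D}_n}$. Therefore $\alpha\in\bigcap_{n=1}^{\infty}\osg{\mathcal{D}_n}$, contradicting the Martin-L\"of $P$-randomness of $\alpha$. This contradiction forces $\alpha_f$ to be Martin-L\"of $P$-random, as claimed. The only delicate point in the whole argument is the single-cylinder measure computation, and I expect that step—specifically the reliance on injectivity to keep the constrained coordinates distinct while the free coordinates contribute a factor of one—to be the main obstacle to get exactly right.
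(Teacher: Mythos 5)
Your proof is correct, and in fact the paper itself contains no argument to compare against: its ``proof'' of Theorem~\ref{cpucs} is only the citation to Tadaki~\cite[Theorem 15]{T16arXiv}. Your test-pullback construction is the standard and expected route, and the key step checks out: for a prefix-free $E$ with $\osg{E}=\osg{\mathcal{C}_n}$, injectivity of $f$ makes the constrained positions $f(1),\dots,f(|\tau|)$ distinct, so $\Bm{P}{\Phi^{-1}(\osg{\tau})}=P(\tau)$ by the product structure of $\lambda_P$, disjointness is inherited by preimages, and hence $\Bm{P}{\osg{\mathcal{D}_n}}=\Bm{P}{\osg{\mathcal{C}_n}}<2^{-n}$; together with the recursive enumerability of $\mathcal{D}$ and the fact that $\alpha\in\bigcap_{n=1}^{\infty}\osg{\mathcal{D}_n}$ whenever $\alpha_f\in\bigcap_{n=1}^{\infty}\osg{\mathcal{C}_n}$, this is a complete, self-contained proof.
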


\begin{proof}
See Tadaki~\cite[Theorem 15]{T16arXiv} for the proof.
\end{proof}

In other words, Theorem~\ref{cpucs} states that Martin-L\"of $P$-random sequences are
\emph{closed under computable shuffling}.

Furthermore,
we can show another type of consistency, i.e.,
the consistency based on the fact that
Martin-L\"of $P$-random sequences for an arbitrary finite probability space are
\emph{closed under the selection by a partial computable selection function}.
See Tadaki~\cite[Section 5.4]{T16arXiv} for the detail of this fact and its implication to the consistency.

\subsection{An operational characterization of the notion of probability in general}

Postulate~\ref{Tadaki-rule} is based on the notion of Martin-L\"of $P$-randomness.
In general,
we can use this notion to present \emph{an operational characterization of the notion of probability},
and we can reformulate probability theory
based on the notion of Martin-L\"of $P$-randomness \cite{T14,T15,T16arXiv}.
For example, we can represent the notion of conditional probability and the notion of
the independence of events/random variables in terms of Martin-L\"of $P$-randomness.
Thus,
\emph{Martin-L\"of $P$-randomness is thought to reflect all the properties of the notion of probability
from our intuitive understanding of the notion of probability}.
Hence, Postulate~\ref{Tadaki-rule}, which uses the notion of Martin-L\"of $P$-randomness,
is thought to be a \emph{rigorous reformulation of Postulate~\ref{Born-rule}}.
The detail of the operational characterization
of the notion of probability
by Martin-L\"of $P$-randomness
is reported in
the work \cite{T16arXiv}.

We will later show that Postulate~\ref{Tadaki-rule} can be derived from a
general
postulate,
called the \emph{principle of typicality}, together with
Postulates~\ref{state_space}, \ref{composition}, and \ref{evolution}.
This
suggests a further validity of Postulate~\ref{Tadaki-rule}.

\section{Mixed states}
\label{Mixed_States}

Postulate~\ref{Born-rule}
given in Section~\ref{QM}
is the Born rule for \emph{pure states}.
In this section we consider the Born rule for \emph{mixed states} and
its \emph{refinement} by algorithmic randomness.
We first recall that the Born rule for mixed states is given
in the following form
(see
Nielsen and Chuang
\cite[Subsection 2.4.2]{NC00}).

\begin{postulate}[The Born rule for mixed states]\label{Born-rule2}
A
quantum measurement is described by an \emph{observable}, $M$, a Hermitian operator
on the state space of the system being measured.
The observable has a spectral decomposition,
$$M=\sum_{m} m E_m,$$
where $E_m$ is the projector onto the eigenspace of $M$ with eigenvalue $m$.
The possible outcomes of the measurement correspond to the eigenvalues, $m$, of the observable.
If the state of the quantum system is represented by a density matrix $\rho$ immediately before the measurement then the \emph{probability} that result $m$ occurs is given by
$$\tr(E_m\rho),$$
and the state of the system after the measurement is
\begin{equation*}
  \frac{E_m\rho E_m}{\tr(E_m\rho)}.
\end{equation*}
\qed
\end{postulate}

We propose a refinement of Postulate~\ref{Born-rule2} by algorithmic randomness in what follows.
For that purpose, we first note that, according to Postulate~\ref{Tadaki-rule},
the result of the quantum measurements forms
a Martin-L\"of $P$-random infinite sequence of pure states,
each of which is of the form of \eqref{pms}.
On the other hand,
in the conventional quantum mechanics this measurement result is described as a mixed state.
Suggested by
these facts,
we propose
a \emph{mathematical definition of the notion of a mixed state}
in terms of Martin-L\"of $P$-randomness, as follows.

\begin{definition}[Mixed state and its density matrix]\label{def-mixed-state}
Let $\mathcal{S}$ be a quantum system with state space $\mathcal{H}$ of finite dimension,
and let $\Omega$ be a non-empty finite set of state vectors in $\mathcal{H}$.
\begin{enumerate}
\item
  An infinite sequence $\alpha$ over $\Omega$ is called a \emph{mixed state} of $\mathcal{S}$ if
  there exists a finite probability space $P$ on $\Omega$ such that
  $\alpha$ is Martin-L\"of $P$-random.
\item
  For any mixed state $\alpha$ of $\mathcal{S}$, the \emph{density matrix} $\rho$ of $\alpha$
  is defined by
  $$\rho:=\sum_{\ket{\Psi}\in\Omega} P(\ket{\Psi})\ket{\Psi}\bra{\Psi},$$
  where  $P$ is a finite probability space on $\Omega$ for which $\alpha$ is Martin-L\"of $P$-random.
  \qed
\end{enumerate}
\end{definition}

The following theorem is immediate from Theorem~\ref{FI}.
Based on this theorem,
we can see that
the notion of density matrix defined
by
(ii) of Definition~\ref{def-mixed-state} is \emph{well-defined}.

\begin{theorem}\label{uniquness}
Let $\Omega$ be an alphabet.
Let $P,Q\in\PS(\Omega)$.
If there exists $\alpha\in\Omega^\infty$ which is both
Martin-L\"of $P$-random and Martin-L\"of $Q$-random,
then $P=Q$.
\qed
\end{theorem}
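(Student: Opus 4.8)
The plan is to read off $P$ and $Q$ from the \emph{same} sequence $\alpha$ by means of the law of large numbers, Theorem~\ref{FI}, and then to exploit the uniqueness of limits in $\R$. Fix the sequence $\alpha\in\Omega^\infty$ that is hypothesized to be simultaneously Martin-L\"of $P$-random and Martin-L\"of $Q$-random, and fix an arbitrary letter $a\in\Omega$.

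First I would invoke Theorem~\ref{FI} using that $\alpha$ is Martin-L\"of $P$-random, which gives that the relative frequency of $a$ in the prefixes of $\alpha$ converges to $P(a)$, i.e.,
$$\lim_{n\to\infty}\frac{N_a(\rest{\alpha}{n})}{n}=P(a).$$
Next I would apply Theorem~\ref{FI} a second time, now using that the very same $\alpha$ is Martin-L\"of $Q$-random, to obtain
$$\lim_{n\to\infty}\frac{N_a(\rest{\alpha}{n})}{n}=Q(a).$$
Both equations describe the limit of the single real sequence $\{N_a(\rest{\alpha}{n})/n\}_{n\in\N^+}$; since a convergent sequence of reals has exactly one limit, I conclude $P(a)=Q(a)$. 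Because $a\in\Omega$ was arbitrary and $\Omega$ is the common sample space of $P$ and $Q$, this establishes $P=Q$.

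I do not expect any genuine obstacle: all the analytic content is supplied by Theorem~\ref{FI}, and the finiteness of the alphabet $\Omega$ guarantees that the frequency argument can be run for every letter simultaneously. The only step requiring the slightest care is the observation that the two limits, being limits of one and the same real sequence, must coincide—which is exactly what pins down $P$ and $Q$ to be equal.
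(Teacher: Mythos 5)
Your proof is correct and is exactly the argument the paper intends: the paper states that Theorem~\ref{uniquness} is ``immediate from Theorem~\ref{FI},'' meaning precisely the double application of the law of large numbers to the common sequence $\alpha$ followed by uniqueness of limits, which is what you wrote out.
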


Note also that
the definition of density matrix given in
Definition~\ref{def-mixed-state} is
the same form as in the conventional quantum mechanics.
Using this rigorous definition of mixed state,
we propose to replace the Born rule
for
mixed states, Postulate~\ref{Born-rule2},
by the following
postulate,
which has
an operational form based on Martin-L\"of $P$-randomness.

\begin{postulate}[Refinement of the Born rule for mixed states]\label{Tadaki-rule2}
A
quantum measurement is described by an \emph{observable}, $M$,
a Hermitian operator on the state space of the system being measured.
The observable has a spectral decomposition,
$$M=\sum_{m\in\Omega} m E_m,$$
where $E_m$ is the projector onto the eigenspace of $M$ with eigenvalue $m$.
The set of possible outcomes of the measurement is the spectrum $\Omega$ of $M$.
Suppose that the measurements are repeatedly performed over a mixed state
with
a density matrix $\rho$.
Then the infinite sequence of outcomes generated by the measurements is
a Martin-L\"of $P$-random infinite sequence over $\Omega$,
where $P$ is a finite probability space on $\Omega$ such that
$$P(m)=\tr(E_m\rho)$$
for every $m\in\Omega$.
Moreover,
the resulting sequence of pure states with outcome $m$ 
is a mixed state
with
the density matrix
\begin{equation*}
  \frac{E_m\rho E_m}{\tr(E_m\rho)}.
\end{equation*}
\qed
\end{postulate}

\section{The many-worlds interpretation of quantum mechanics}
\label{MWI}

In what follows,
we consider the validity of our new rules, Postulates~\ref{Tadaki-rule} and \ref{Tadaki-rule2},
from the point of view of
the \emph{many-worlds interpretation of quantum mechanics} (\emph{MWI}, for short)
introduced by Everett~\cite{E57} in 1957.
More specifically, we
refine  the argument of MWI
by adding to it a postulate, called the \emph{principle of typicality},
and then we derive Postulates~\ref{Tadaki-rule} and \ref{Tadaki-rule2}
in the framework of
the \emph{refinement} of MWI.

\subsection{Framework of MWI}
\label{FOMWI}

To begin with, we review the
original
framework of MWI,
introduced by Everett~\cite{E57}.
Actually, in what follows \emph{we reformulate
the original framework of MWI
in a form of mathematical rigor from a modern point of view}.
The point of our rigorous treatment of the
original
framework of MWI is
the use of the notion of
\emph{probability measure representation} and
its \emph{induction of
probability measure},
as
presented in Subsection~\ref{MR}.

We stress that MWI is more than just an interpretation of quantum mechanics.
It aims to derive Postulate~\ref{Born-rule}, the Born rule, from the remaining postulates, i.e.,
Postulates~\ref{state_space}, \ref{composition}, and \ref{evolution}.
In this sense, Everett~\cite{E57} proposed MWI as a ``metatheory'' of quantum mechanics.
The point is that in MWI the measurement process is fully treated
as the interaction between a system being measured and an apparatus measuring it, based only on
Postulates~\ref{state_space}, \ref{composition}, and \ref{evolution}.
Then MWI tries to derive Postulate~\ref{Born-rule} in such a setting.

Now,
let us investigate the
setting
of MWI \emph{in terms of our terminology
in a form of mathematical rigor}.
Let $\mathcal{S}$ be an arbitrary quantum system with state space $\mathcal{H}$ of finite dimension $K$.
Consider a measurement over $\mathcal{S}$ described by an arbitrary observable $M$.
Let $\Omega$ be the spectrum of $M$.%
\footnote{The spectrum $\Omega$ is finite, and therefore it is an alphabet.}
Let
$$M=\sum_{k=1}^K f(k)\ket{\phi_k}\bra{\phi_k}$$
be a spectral decomposition of the observable $M$,
where $\{\ket{\phi_1},\dots,\ket{\phi_K}\}$ is an orthonormal basis of $\mathcal{H}$ and
$f\colon\{1,\dots,K\}\to\Omega$ is a surjection.
Let $\mathcal{A}$ be an apparatus performing the measurement of $M$,
which is a quantum system with state space $\ssoa$.%
\footnote{The dimension of the state space $\ssoa$ of the apparatus $\mathcal{A}$ is
not necessarily finite.
Even in the case where the state space $\overline{\mathcal{H}}$ is of infinite dimension,
the mathematical subtleness
which arises from
the infinite dimensionality does not matter,
to the extent of
our
treatment of $\ssoa$ and operators
on
it in this paper.}
According to Postulates~\ref{state_space}, \ref{composition}, and \ref{evolution},
the measurement process of the observable $M$ is described by a unitary operator $U$ such that
\begin{equation}\label{single_measurement}
  U\ket{\phi_k}\otimes\ket{\Phi_{\mathrm{init}}}=\ket{\phi_k}\otimes\ket{\Phi[f(k)]}
\end{equation}
for every $k=1,\dots,K$ \cite{vN55}.
Actually,
$U$ describes the interaction between the system $\mathcal{S}$ and the apparatus $\mathcal{A}$.
The vector $\ket{\Phi_{\mathrm{init}}}\in\ssoa$ is
the initial state of the apparatus $\mathcal{A}$, and $\ket{\Phi[m]}\in\ssoa$ is
a final state of the apparatus $\mathcal{A}$ for each $m\in\Omega$,
with $\braket{\Phi[m]}{\Phi[m']}=\delta_{m,m'}$.
For every $m\in\Omega$, the state $\ket{\Phi[m]}$ indicates that
\emph{the apparatus $\mathcal{A}$ records
the value $m$ of the observable $M$ of the system $\mathcal{S}$}.
By the unitary interaction~\eqref{single_measurement} as a measurement process,
a correlation (i.e., entanglement) is generated between the system and the apparatus.
For each $m\in\Omega$,
let $E_m$ be the projector onto the eigenspace of $M$ with eigenvalue $m$.
Then, the equality~\eqref{single_measurement}
can be
rewritten as the form that
\begin{equation}\label{single_measurement2}
  U\ket{\Psi}\otimes\ket{\Phi_{\mathrm{init}}}=\sum_{m\in\Omega}(E_m\ket{\Psi})\otimes\ket{\Phi[m]}
\end{equation}
for every $\ket{\Psi}\in\mathcal{H}$.

In the framework of MWI,
we consider countably infinite copies of the system $\mathcal{S}$ prepared in an identical state,
and consider a countably infinite repetition of the measurements of the identical observable $M$
performed over each of such copies in a sequential order,
where each of the measurements is described
by the unitary time-evolution \eqref{single_measurement2}
(and equivalently by \eqref{single_measurement}).
As repetitions of the measurement progressed,
correlations between the systems and the apparatuses are being generated in sequence
in the superposition of the total system consisting of the systems and the apparatuses.
The detail
is described as follows.

For each $n\in\N^+$, let $\mathcal{S}_n$ be the $n$th copy of the system $\mathcal{S}$ and
$\mathcal{A}_n$ the $n$th copy of the apparatus $\mathcal{A}$.
All $\mathcal{S}_n$ are prepared in an identical state $\ket{\Psi}$,
and all $\mathcal{A}_n$ are prepared in an identical state $\ket{\Phi_{\mathrm{init}}}$.
The measurement of the observable $M$ is performed over each $\mathcal{S}_n$ one by one
in the increasing order of $n$,
by interacting each $\mathcal{S}_n$ with $\mathcal{A}_n$
according to the unitary time-evolution \eqref{single_measurement2}.
For each $n\in\N^+$,
let $\mathcal{H}_n$ be the state space of the total system consisting of
the first $n$ copies
$\mathcal{S}_1, \mathcal{A}_1, \mathcal{S}_2, \mathcal{A}_2,\dots,\mathcal{S}_n, \mathcal{A}_n$
of the system $\mathcal{S}$ and the apparatus $\mathcal{A}$.
These successive interactions between the copies of
the system $\mathcal{S}$ and the apparatus $\mathcal{A}$ as measurement processes
proceed in the following manner:

The
starting
state of the total system,
which consists of $\mathcal{S}_1$ and $\mathcal{A}_1$,
is $\ket{\Psi}\otimes\ket{\Phi_{\mathrm{init}}}\in\mathcal{H}_1$.
Immediately after the measurement of $M$ over $\mathcal{S}_1$,
the total system results in the state
\begin{align*}
  \sum_{m_1\in\Omega} (E_{m_1}\ket{\Psi})\otimes\ket{\Phi[m_1]}\in\mathcal{H}_1
\end{align*}
by the interaction~\eqref{single_measurement2} as a measurement process.
In general,
immediately before the measurement of $M$ over $\mathcal{S}_n$,
the state of the total system,
which consists of
$\mathcal{S}_1, \mathcal{A}_1, \mathcal{S}_2, \mathcal{A}_2,\dots,\mathcal{S}_n, \mathcal{A}_n$,
is
\begin{align*}
  \sum_{m_1,\dots,m_{n-1}\in\Omega}
  (E_{m_1}\ket{\Psi})\otimes\dots\otimes(E_{m_{n-1}}\ket{\Psi})\otimes\ket{\Psi}
  \otimes\ket{\Phi[m_1]}\otimes\dots\otimes\ket{\Phi[m_{n-1}]}\otimes\ket{\Phi_{\mathrm{init}}}
\end{align*}
in $\mathcal{H}_n$,
where
$\ket{\Psi}$ is the initial state of $\mathcal{S}_n$ and
$\ket{\Phi_{\mathrm{init}}}$ is the initial state of $\mathcal{A}_n$.
Immediately after the measurement of $M$ over $\mathcal{S}_n$,
the total system results in the state
\begin{align}
   &\sum_{m_1,\dots,m_{n}\in\Omega}
  (E_{m_1}\ket{\Psi})\otimes\dots\otimes(E_{m_{n}}\ket{\Psi})
  \otimes\ket{\Phi[m_1]}\otimes\dots\otimes\ket{\Phi[m_{n}]} \label{total_system0} \\
  =&\sum_{m_1,\dots,m_{n}\in\Omega}
  (E_{m_1}\ket{\Psi})\otimes\dots\otimes(E_{m_{n}}\ket{\Psi})
  \otimes\ket{\Phi[m_1\dots m_{n}]} \label{total_system}
\end{align}
in $\mathcal{H}_n$,
by the interaction \eqref{single_measurement2} as a measurement process between
the system $\mathcal{S}_n$ prepared in the state $\ket{\Psi}$
and the apparatus $\mathcal{A}_n$ prepared in the state $\ket{\Phi_{\mathrm{init}}}$.
The vector $\ket{\Phi[m_1\dots m_n]}$ denotes
the vector $\ket{\Phi[m_1]}\otimes\dots\otimes\ket{\Phi[m_n]}$ which represents
the state of $\mathcal{A}_1,\dots,\mathcal{A}_n$.
This state indicates that \emph{the apparatuses $\mathcal{A}_1,\dots,\mathcal{A}_n$ record
the values $m_1\dots m_n$ of the observables $M$ of $\mathcal{S}_1,\dots,\mathcal{S}_n$, respectively}.

In the superposition \eqref{total_system},
on letting $n\to\infty$,
the length of each of the records $m_1\dots m_n$ of the values of the observable $M$
in the apparatuses $\mathcal{A}_1,\dots,\mathcal{A}_n$ diverges to infinity.
The
consideration of
this
limiting case results in the
definition of a \emph{world}.
Namely, a \emph{world} is defined as
an
infinite sequence of records of the values of the observable
in the apparatuses.
Thus, in the case described
so far,
a world is an infinite sequence over $\Omega$,
and the finite records $m_1\dots m_n$ in each state $\ket{\Phi[m_1\dots m_n]}$
in the superposition~\eqref{total_system} of the total system is a \emph{prefix} of a world.

Then,
for aiming at deriving Postulate~\ref{Born-rule}, MWI
assigns
``weight'' to each of worlds.
Namely, it introduces a \emph{probability measure} on the set of all worlds in the following manner.
First, MWI introduces a probability measure representation on the set of prefixes of worlds, i.e.,
the set $\Omega^*$ in this case.
This probability measure representation is given by a function $r\colon\Omega^*\to[0,1]$ with
\begin{equation}\label{rpmwi}
  r(m_1\dotsc m_n)=\prod_{k=1}^n\bra{\Psi}E_{m_k}\ket{\Psi},
\end{equation}
which is the square of the norm of each state
$(E_{m_1}\ket{\Psi})\otimes\dots\otimes(E_{m_{n}}\ket{\Psi})\otimes\ket{\Phi[m_1\dots m_{n}]}$
in the superposition~\eqref{total_system}.
Due to
$\sum_{m\in\Omega}E_m=I$,
it is easy to check that $r$ is certainly a probability measure representation over $\Omega$.
We call the probability measure representation $r$
\emph{the
measure representation
for the prefixes of
worlds}.
Then MWI tries to derive Postulate~\ref{Born-rule}
by adopting
the probability measure \emph{induced by}
the
measure representation $r$
for the prefixes of
worlds
as \emph{the probability measure on the set of all worlds}.

For later use,
we define
the notion of \emph{world} and
the
notion of
\emph{the
measure representation
for the prefixes of
worlds}
in a general setting,
as in the following:
First, recall that the Born rule, Postulate~\ref{Born-rule},
can be generalized to the \emph{general measurement postulate},
which is based on the notion of \emph{measurement operators},
instead of the notion of projectors as in Postulate~\ref{Born-rule}
(Nielsen and Chuang \cite[Subsection 2.2.3]{NC00}).
Measurement operators are a collection $\{M_m\}_{m\in\Omega}$ of
operators, acting on the state space of the system being measured,
which satisfy
the \emph{completeness equation},
\begin{equation}\label{completeness-equation}
  \sum_{m\in\Omega}M_m^{\dagger} M_m=I.
\end{equation}
Here, $\Omega$ is an alphabet and is the set of all possible outcomes of the measurement.
Actually, Postulate~\ref{Born-rule} is shown to be equivalent to the general measurement postulate,
when augmented with an ancilla system and the ability to perform unitary transformations,
based on Postulates~\ref{composition} and \ref{evolution}
(see Nielsen and Chuang \cite[Subsection 2.2.8]{NC00} for the detail).
Similarly,
based on Postulates~\ref{composition} and \ref{evolution},
the unitary time-evolution \eqref{single_measurement2} of measurement process,
which is based on projectors,
can be generalized to the following form of unitary time-evolution of measurement process,
which is based on measurement operators
$\{M_m\}_{m\in\Omega}$:
\begin{equation}\label{single_measurement-mo}
  U\ket{\Psi}\otimes\ket{\Phi_{\mathrm{init}}}=\sum_{m\in\Omega}(M_m\ket{\Psi})\otimes\ket{\Phi[m]}
\end{equation}
for every state $\ket{\Psi}$ of the system being measured.
The vector $\ket{\Phi_{\mathrm{init}}}$ is
the initial state of the apparatus measuring the system, and
$\ket{\Phi[m]}$ is a final state of the apparatus for each $m\in\Omega$,
with $\braket{\Phi[m]}{\Phi[m']}=\delta_{m,m'}$.
For every $m\in\Omega$, the state $\ket{\Phi[m]}$ indicates that
\emph{the apparatus records the value $m$
as the measurement outcome}.

To see this, let $\{M_m\}_{m\in\Omega}$ be arbitrary measurement operators
acting on the state space $\mathcal{H}$ of the system $\mathcal{S}$ being measured.
We introduce an ancilla system $\ancilla$ with state space $\mathcal{H}^{\mathrm{a}}$ of
finite dimension
$\#\Omega$,
in addition to the original system $\mathcal{S}$.
Let $\{\ket{\Phi_m^{\mathrm{a}}}\}_{m\in\Omega}$ be an orthonormal basis of $\mathcal{H}^{\mathrm{a}}$.
We consider a unitary operator $U_1$ acting on $\mathcal{H}\otimes\mathcal{H}^{\mathrm{a}}$ such that
\begin{equation}\label{single_measurement-umo}
  U_1\ket{\Psi}\otimes\ket{\Phi_{\mathrm{init}}^{\mathrm{a}}}=
  \sum_{m\in\Omega}(M_m\ket{\Psi})\otimes\ket{\Phi_m^{\mathrm{a}}}
\end{equation}
for every $\ket{\Psi}\in\mathcal{H}$,
where $\ket{\Phi_{\mathrm{init}}^{\mathrm{a}}}$ is a specific state of $\ancilla$.
It is easy to see that such a unitary operator $U_1$ exists, due to \eqref{completeness-equation}.
Let $\{E_m\}_{m\in\Omega}$ be a collection of projectors such that
$E_m=\ket{\Phi_m^{\mathrm{a}}}\bra{\Phi_m^{\mathrm{a}}}$ for every $m\in\Omega$.
We then consider an apparatus $\mathcal{A}$
whose interaction with the ancilla system $\ancilla$ as a measurement process
is described by the following unitary time-evolution:
\begin{equation}\label{single_measurement2mo}
  U_2\ket{\Phi^{\mathrm{a}}}\otimes\ket{\Theta_{\mathrm{init}}}=
  \sum_{m\in\Omega}(E_m\ket{\Phi^{\mathrm{a}}})\otimes\ket{\Theta[m]}
\end{equation}
for every $\ket{\Phi^{\mathrm{a}}}\in\mathcal{H}^{\mathrm{a}}$,
where the vector $\ket{\Theta_{\mathrm{init}}}$ is
the initial state of the apparatus $\mathcal{A}$, and $\ket{\Theta[m]}$ is
a final state of the apparatus $\mathcal{A}$ for each $m\in\Omega$,
with $\braket{\Theta[m]}{\Theta[m']}=\delta_{m,m'}$.
Note that the unitary time-evolution \eqref{single_measurement2mo} is
a spacial case of
the unitary time-evolution \eqref{single_measurement2} where each $E_m$ is of rank $1$.
Then, we can see that
the sequential application of $U_1$ and $U_2$ to the composite system
consisting of the system $\mathcal{S}$, the ancilla system $\ancilla$, and the apparatus $\mathcal{A}$
results in the single unitary time-evolution $U$ given by \eqref{single_measurement-mo}.
Here we newly regard the composite system consisting of $\ancilla$ and $\mathcal{A}$ as an apparatus,
the state $\ket{\Phi_{\mathrm{init}}}:=\ket{\Phi_{\mathrm{init}}^{\mathrm{a}}}\otimes\ket{\Theta_{\mathrm{init}}}$ as an initial state of the
renewed
apparatus,
and
the state $\ket{\Phi[m]}:=\ket{\Phi_m^{\mathrm{a}}}\otimes\ket{\Theta[m]}$ is
a final state of the
renewed
apparatus for each $m\in\Omega$.
In this manner, based on Postulates~\ref{composition} and \ref{evolution},
the unitary time-evolution \eqref{single_measurement2} which describes the measurement process
based on an observable
can be generalized over the general measurement scheme
which is described by measurement operators.

Based on the above argument,
the general definitions of
the notion of \emph{world} and
the notion of
\emph{the
measure representation
for the prefixes of
worlds}
are
given as follows.

\begin{definition}[World and the measure representation for the prefixes of worlds]\label{pmrpwst}
Consider an arbitrary
finite-dimensional
quantum system $\mathcal{S}$ and
a
measurement over $\mathcal{S}$
described by arbitrary \emph{measurement operators $\{M_m\}_{m\in\Omega}$},
where the measurement process is described by \eqref{single_measurement-mo} as an interaction
of the system $\mathcal{S}$ with an apparatus $\mathcal{A}$.
We suppose the following situation:
\begin{enumerate}
\item There are countably infinite copies $\mathcal{S}_1, \mathcal{S}_2, \mathcal{S}_3 \dotsc$
  of the system $\mathcal{S}$ and countably infinite copies
  $\mathcal{A}_1, \mathcal{A}_2, \mathcal{A}_3, \dotsc$ of the apparatus $\mathcal{A}$.
\item For each $n\in\N^+$, the system $\mathcal{S}_n$ is prepared in a state $\ket{\Psi_n}$,%
  \footnote{In Definition~\ref{pmrpwst}, all $\ket{\Psi_n}$ are not required to be an identical state.}
  while the apparatus $\mathcal{A}_n$ is prepared in a state $\ket{\Phi_{\mathrm{init}}}$,
  and then the measurement described by $\{M_m\}_{m\in\Omega}$ is performed over $\mathcal{S}_n$
  by interacting it with the apparatus $\mathcal{A}_n$ according to
  the unitary time-evolution
  \eqref{single_measurement-mo}.
\item Starting the measurement described by $\{M_m\}_{m\in\Omega}$ over $\mathcal{S}_1$,
  the measurement described by $\{M_m\}_{m\in\Omega}$ over each $\mathcal{S}_n$ is
  performed in the increasing order of $n$.
\end{enumerate}
We then note that,
for each $n\in\N^+$,
immediately after the measurement described by $\{M_m\}_{m\in\Omega}$ over $\mathcal{S}_n$,
the state of the total system consisting of
$\mathcal{S}_1, \mathcal{A}_1, \mathcal{S}_2, \mathcal{A}_2,\dots,\mathcal{S}_n, \mathcal{A}_n$ is
$$\ket{\Theta_n}:=\sum_{m_1,\dots,m_n\in\Omega}\ket{\Theta(m_1,\dots,m_n)},$$
where
$\ket{\Theta(m_1,\dots,m_n)}
:=(M_{m_1}\ket{\Psi_1})\otimes\dots\otimes(M_{m_{n}}\ket{\Psi_n})
\otimes\ket{\Phi[m_1]}\otimes\dots\otimes\ket{\Phi[m_n]}$.%
\footnote{The state $\ket{\Theta(m_1,\dots,m_n)}$ corresponds to the state \eqref{total_system0}
in the spacial case where the measurements of an observable are treated.}
The vectors $M_{m_1}\ket{\Psi_1},\dots,M_{m_{n}}\ket{\Psi_n}$ are states of
$\mathcal{S}_1,\dots,\mathcal{S}_n$, respectively, and
the vectors $\ket{\Phi[m_1]},\dots,\ket{\Phi[m_n]}$ are
states of $\mathcal{A}_1,\dots,\mathcal{A}_n$, respectively.
The state
vector
$\ket{\Theta_n}$ of the total system is normalized while
each of the vectors $\{\ket{\Theta(m_1,\dots,m_n)}\}_{m_1,\dots,m_n\in\Omega}$
is not necessarily normalized.
Then,
\emph{the
measure representation for the prefixes of
worlds}
is defined
as
a
function $p\colon\Omega^*\to[0,1]$ such that
\begin{equation}\label{p=bTmkT}
  p(m_1\dotsc m_n)=\braket{\Theta(m_1,\dots,m_n)}{\Theta(m_1,\dots,m_n)}.
\end{equation}
Moreover,
an infinite sequence over $\Omega$,
i.e., an infinite sequence of
possible
outcomes of the measurement described by $\{M_m\}_{m\in\Omega}$,
is called a \emph{world}.
\qed
\end{definition}

In Definition~\ref{pmrpwst},
it is easy to check that the function $p$ defined by \eqref{p=bTmkT} is
certainly a probability measure representation over $\Omega$.

Now, let us return to the specific situation considered above,
where the infinite repetition of the measurements of the observable $M$ is
treated.
Each of these measurements is described by the unitary time-evolution \eqref{single_measurement2}.
It is then easy to see that the projectors $\{E_m\}_{m\in\Omega}$ in \eqref{single_measurement2} satisfy
the completeness equation,
\begin{equation*}%
  \sum_{m\in\Omega}E_m^{\dagger} E_m=I.
\end{equation*}
Thus, $\{E_m\}_{m\in\Omega}$ are measurement operators
which describe the measurement of $M$ via
the unitary time-evolution \eqref{single_measurement2} as a measurement process.
Therefore, Definition~\ref{pmrpwst} can be applied to this situation.
Thus, according to Definition~\ref{pmrpwst},
we see that a world is certainly an infinite sequence over $\Omega$ in this situation, and
moreover
we
see that the function $r$ given by \eqref{rpmwi} is certainly
the
measure representation for the prefixes of
worlds
in this situation.
Furthermore, it follows from \eqref{mr}, \eqref{BmPosgsigma=Psigma}, and \eqref{rpmwi} that
the probability measure induced by
the
measure representation $r$ for the prefixes of
worlds
is just the Bernoulli measure $\lambda_P$ on $\Omega^\infty$,
where $P$ is a finite probability space on $\Omega$ such that
\begin{equation}\label{Pm=ip}
  P(m)=\bra{\Psi}E_m\ket{\Psi}
\end{equation}
for every $m\in\Omega$.

\subsection{Failure of the original MWI}

We continue to investigate
the original framework of MWI based on
the infinite repetition of the measurements of the observable $M$,
whose setting is developed in the previous subsection.
Let $R\subset\Omega^\infty$ be a ``typical'' property with respect to the Bernoulli measure $\lambda_P$.
Namely, let $R$ be
a set $R\in\mathcal{B}_{\Omega}$
such that
$$\Bm{P}{R}=1.$$
For example,
we can consider as $R$ the set of all worlds for which the \emph{frequency interpretation} holds,
i.e., as the set of all $\alpha\in\Omega^\infty$ such that
$$\lim_{n\to\infty}\frac{N_m(\rest{\alpha}{n})}{n}=\bra{\Psi}E_m\ket{\Psi}$$
holds for every $m\in\Omega$ (see Theorem~\ref{FI} for the notation).
Actually, using Theorems~\ref{MLmae} and \ref{FI} we can show that $\Bm{P}{R}=1$
holds for this specific
$R$.
Then,
by definition,
the property $R$ holds in ``almost all'' worlds.
Based on
this
fact,
MWI insists that Postulate~\ref{Born-rule} has been derived from
Postulates~\ref{state_space}, \ref{composition}, and \ref{evolution}.
In this argument
of MWI,
however, what is typical is just a set $R$ of worlds and not an individual world.
For the purpose of deriving  Postulate~\ref{Born-rule},
we should consider the notion of typicality applied to an individual world and not to a set of worlds. 
However, Everett~\cite{E57} does not consider the notion of typicality for an individual world
rigorously or seriously.

Apart from the definition of typicality for an individual world,
the point is \emph{whether our world $\alpha$ is in the set $R$, or not}.
However, Everett~\cite{E57} does not seem to mention this point explicitly.
Can we deduce that  our world $\alpha$ is in $R$ only due to the fact that $\Bm{P}{R}=1$?
Obviously, we cannot do so.
The reason is as follows:
Say
it
was true.
We then have to deduce also that $\alpha\in R\setminus\{\alpha\}$
since $R\setminus\{\alpha\}\in\mathcal{B}_{\Omega}$ and $\Bm{P}{R\setminus\{\alpha\}}=1$.
However, this leads to an apparent contradiction.%
\footnote{If we accept the \emph{thesis} that our world $\alpha$ is in $A$
for every $A\in\mathcal{B}_{\Omega}$ with $\Bm{P}{A}=1$,
then we have that $\alpha\in\Omega^\infty\setminus\{\alpha\}$
since $\Omega^\infty\setminus\{\alpha\}\in\mathcal{B}_{\Omega}$ and
$\Bm{P}{\Omega^\infty\setminus\{\alpha\}}=1$,
which leads to the apparent contradiction that $\alpha\neq\alpha$.
Thus, we cannot accept this thesis anyhow.}
Hence,
the argument
by Everett
is unclear in this regard.

Moreover,  as we already pointed out,
there is no operational characterization of the notion of probability in Postulate~\ref{Born-rule},
the Born rule,
while it makes a statement about the ``probability'' of measurement outcomes.
Therefore, what MWI has to show for deriving Postulate~\ref{Born-rule} is unclear
although the argument in MWI
itself
is rather operational.
We have no adequate criterion to confirm
that we have certainly accomplished the derivation of Postulate~\ref{Born-rule} based on MWI,
since Postulate~\ref{Born-rule} is vague from an operational point of view.
By contrast,
the replacement of Postulate~\ref{Born-rule} by Postulate~\ref{Tadaki-rule} makes this clear
since there is no ambiguity in Postulate~\ref{Tadaki-rule} from an operational point of view.

In the next section
we make the argument by Everett clear
by introducing
\emph{the principle of typicality}, i.e., Postulate~\ref{POT} below.
Then, in the subsequent sections we derive Postulate~\ref{Tadaki-rule} and Postulate~\ref{Tadaki-rule2}
from the principle of typicality \emph{in a unified manner}.

\section{The principle of typicality}
\label{SecPOT}

As we saw in the
preceding
section,
for deriving Postulate~\ref{Born-rule}, i.e., the Born rule,
the original
MWI~\cite{E57}
would seem
to have wanted
to assume
that our world is ``typical'' or ``random'' among many coexisting worlds.
However,
the proposal of
the
MWI by Everett was nearly a decade earlier than the advent of algorithmic randomness.
Actually,
Everett \cite{E57} proposed MWI in 1957
while the notion of Martin-L\"of randomness was
introduced by Martin-L\"of \cite{M66} in 1966.
Thus, the assumption of ``typicality'' by Everett in MWI was not
rigorous from a mathematical point of view.

The notion of ``typicality'' or ``randomness'' is just
the
research object of algorithmic randomness.
Based on
the notion of
Martin-L\"of randomness with respect to a probability measure,
we introduce a postulate, called
the \emph{principle of typicality} as follows.
The principle of typicality is considered to be
a \emph{refinement} and therefore a \emph{clarification}
of the obscure assumption of
``typicality''
by Everett \cite{E57}.

\begin{postulate}[The principle of typicality]\label{POT}
Our world is \emph{typical}.
Namely, our world is Martin-L\"of random with respect to the probability measure on the set of all worlds,
induced by
the
measure representation
for the prefixes of
worlds,
in the superposition of the total system
which consists of
systems being measured and apparatuses measuring them.
\qed
\end{postulate}

Let $\Omega$ be an arbitrary alphabet,
and $\mu$ be an arbitrary probability measure on $\Omega^\infty$.
Recall the basic idea of Martin-L\"of randomness, presented in Section~\ref{MLRam}, that
\emph{we think of an infinite sequence over $\Omega$ as
random with respect to the probability measure $\mu$ if it is in no effective null set for $\mu$}.
In the idea, we identify $\Omega$ with the set of all possible measurement outcomes and
infinite sequences over $\Omega$ with worlds.
Moreover, we identify the probability measure $\mu$
with the probability measure
induced by
the
measure representation
for the prefixes of
worlds.
Then,
the specific implementation of the basic idea based on this identification
naturally
results
in Postulate~\ref{POT}, the principle of typicality.
This is the \emph{underlying idea} of the principle of typicality.

In the case of the setting of the original framework of MWI based on
the infinite repetition of the measurements of the observable $M$,
which we
developed
in Subsection~\ref{FOMWI},
the
measure representation for the prefixes of
worlds
is $r$ given by \eqref{rpmwi},
and the probability measure induced by $r$ is $\lambda_P$,
where $P$ is given by \eqref{Pm=ip}.
Hence, Postulate~\ref{POT} is precisely Postulate~\ref{Tadaki-rule} in this case.
In Section~\ref{POT-pure-states},
we
describe the detail of the derivation of Postulate~\ref{Tadaki-rule} from Postulate~\ref{POT}
(together with Postulates~\ref{state_space}, \ref{composition}, and \ref{evolution})
in this case.
In this way,
we have made clear the argument by Everett~\cite{E57},
based on
Postulate~\ref{POT}, the principle of typicality.

Furthermore,
we can derive Postulate~\ref{Tadaki-rule2} from Postulate~\ref{POT}
together with Postulates~\ref{state_space}, \ref{composition}, and \ref{evolution}
in several scenarios of the setting of measurements.
We can do this by considering more complicated interactions between systems and
apparatuses as measurement processes than one used for deriving Postulate~\ref{Tadaki-rule}.
Recall that in our framework
a mixed state, on which measurements are performed in the setting of Postulate~\ref{Tadaki-rule2},
is an infinite sequence of pure states,
as defined in (i) of Definition~\ref{def-mixed-state}.
This
implies
that
we have to perform measurements on a mixed state while generating it.
We
have investigated several scenarios which implement this setting.
In all the scenarios which we have considered so far,
Postulate~\ref{Tadaki-rule2} can be derived from Postulate~\ref{POT}
together with Postulates~\ref{state_space}, \ref{composition}, and \ref{evolution}.
In Section~\ref{POT-mixed-states} below,
we describe the detail of the derivation of Postulate~\ref{Tadaki-rule2} from
Postulate~\ref{POT}
(together with Postulates~\ref{state_space}, \ref{composition}, and \ref{evolution})
in the simplest scenario,
where a mixed state being measured is
a Martin-L\"of $P$-random infinite sequence over \emph{mutually orthogonal} pure states.
In Section~\ref{POT-mixed-states2} we describe the detail of the same derivation
\emph{in a more general scenario}, where a mixed state being measured is
a Martin-L\"of $P$-random infinite sequence over general \emph{mutually non-orthogonal} pure states.

\section{Further mathematical preliminaries}
\label{FMP}

In order to show the results in the subsequent sections, we need
several theorems
on Martin-L\"of $P$-randomness
from
Tadaki \cite{T14,T15,T16arXiv}.
These theorems played
a key part
in developing
an \emph{operational characterization of the notion of probability}
based on Martin-L\"of $P$-randomness in
Tadaki \cite{T14,T15,T16arXiv}.
We enumerate them and their corollaries in this section.
Let $\Omega$ be an arbitrary alphabet throughout this section.

\subsection{Avoiding events with probability zero}

First, as an elaboration of Theorem~\ref{one_probability}, we can show the following theorem.

\begin{theorem}\label{zero_probability}
Let $P\in\PS(\Omega)$, and let $a\in\Omega$.
Suppose that $\alpha$ is a Martin-L\"of $P$-random infinite sequence over $\Omega$ and $P(a)=0$.
Then $\alpha$ does not contain $a$.
\end{theorem}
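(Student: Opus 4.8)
The plan is to exhibit a single Martin-L\"of test with respect to $\lambda_P$ that captures precisely the infinite sequences over $\Omega$ containing the letter $a$, and then to invoke the $P$-randomness of $\alpha$ to conclude that $\alpha$ escapes this test, i.e., never displays $a$. The argument rests on two observations: that the set of sequences containing $a$ is a $\lambda_P$-null set whenever $P(a)=0$, and that this null set admits a \emph{decidable}, hence recursively enumerable, covering, so that it is in fact an effective null set.

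First I would fix the set
\[
  \mathcal{A}:=\left\{\beta\in\Omega^\infty\mid \exists\,i\in\N^+\ \beta(i)=a\right\}
\]
of all sequences that contain $a$ and compute its measure. Writing $\mathcal{A}=\bigcup_{i=1}^\infty\mathcal{A}_i$ with $\mathcal{A}_i:=\{\beta\in\Omega^\infty\mid\beta(i)=a\}$, each $\mathcal{A}_i$ is the disjoint union $\bigcup_{\tau\in\Omega^{i-1}}\osg{\tau a}$ of cylinders, so by \eqref{BmPosgsigma=Psigma} together with the finite additivity of $\lambda_P$,
\[
  \Bm{P}{\mathcal{A}_i}=\sum_{\tau\in\Omega^{i-1}}P(\tau a)=P(a)\sum_{\tau\in\Omega^{i-1}}P(\tau)=P(a)=0,
\]
using $\sum_{\tau\in\Omega^{i-1}}P(\tau)=1$ and the hypothesis $P(a)=0$. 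By countable subadditivity of $\lambda_P$ it then follows that $\Bm{P}{\mathcal{A}}=0$. Carrying out the computation as a countable union of such sets, rather than through a single prefix-free cover of $\mathcal{A}$, is what keeps it free of the spurious indeterminate form $0\cdot\infty$.

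Next I would build the test. Set
\[
  \mathcal{C}:=\left\{(n,\sigma)\in\N^+\times\Omega^*\mid \sigma\text{ contains }a\right\},
\]
so that $\mathcal{C}_n=\{\sigma\in\Omega^*\mid\sigma\text{ contains }a\}$ is the same set for every $n$, whence $\osg{\mathcal{C}_n}=\mathcal{A}$ and $\Bm{P}{\osg{\mathcal{C}_n}}=0<2^{-n}$ by the previous step. Since $\Omega$ is a fixed finite alphabet and $a$ a fixed element of it, the predicate ``$\sigma$ contains $a$'' is decidable, so $\mathcal{C}$ is recursively enumerable; thus $\mathcal{C}$ is a Martin-L\"of test with respect to $\lambda_P$ in the sense of Definition~\ref{ML-testM}. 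As $\alpha$ is Martin-L\"of $P$-random, Definition~\ref{ML-randomness-wrtm} gives $\alpha\notin\bigcap_{n=1}^\infty\osg{\mathcal{C}_n}=\mathcal{A}$, which is exactly the assertion that $\alpha$ does not contain $a$.

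The proof is short, and the only points needing care are the two just flagged: keeping the measure computation free of indeterminate forms by treating $\mathcal{A}$ as a countable union of cylinder sets, and confirming that the covering is genuinely effective. Neither is a serious obstacle, since $P(a)=0$ forces each $\mathcal{A}_i$ to be null while the finiteness of $\Omega$ makes the covering decidable; in particular the construction needs no computability assumption on $P$ itself, in accordance with Definition~\ref{ML-testM}, which demands only that the test be recursively enumerable.
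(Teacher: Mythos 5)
Your proof is correct. The paper gives no in-text proof of this theorem---it defers to Tadaki~\cite[Theorem 10]{T16arXiv}---but your argument is the natural one directly from the paper's definitions: the single decidable (hence recursively enumerable) test $\mathcal{C}$, all of whose sections $\mathcal{C}_n$ equal the set of strings containing $a$, satisfies $\Bm{P}{\osg{\mathcal{C}_n}}=0<2^{-n}$, and Martin-L\"of $P$-randomness of $\alpha$ then excludes $\alpha$ from $\bigcap_n\osg{\mathcal{C}_n}=\mathcal{A}$. One minor remark: your caution about the indeterminate form $0\cdot\infty$ is not really needed, since the direct prefix-free cover $\{\tau a\mid \tau\in(\Omega\setminus\{a\})^*\}$ also yields measure zero---a countable sum all of whose terms are zero is zero, and the danger arises only if one factors $P(a)$ out of a divergent sum---but this presentational choice does not affect correctness.
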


\begin{proof}
See Tadaki~\cite[Theorem 10]{T16arXiv} for the proof.
\end{proof}

Combining this with the law of large numbers, we can derive a stronger result, as follows.

\begin{theorem}\label{nonzero-prob-appears}
Let $P\in\PS(\Omega)$.
If $\alpha$ is a Martin-L\"of $P$-random infinite sequence over $\Omega$, then
$$\{\alpha(n)\mid n\in\N^+\}=\{a\in\Omega\mid P(a)>0\}.$$
\end{theorem}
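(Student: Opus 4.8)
The plan is to prove the stated set equality by establishing the two inclusions separately, drawing on Theorem~\ref{zero_probability} for one direction and the law of large numbers (Theorem~\ref{FI}) for the other. Both ingredients are already in hand, so the argument should be short.

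First I would prove $\{\alpha(n)\mid n\in\N^+\}\subseteq\{a\in\Omega\mid P(a)>0\}$ by contraposition. Suppose $a\in\Omega$ satisfies $P(a)=0$. Since $\alpha$ is Martin-L\"of $P$-random by hypothesis, Theorem~\ref{zero_probability} applies directly and tells us that $\alpha$ does not contain $a$; that is, $a\notin\{\alpha(n)\mid n\in\N^+\}$. Contrapositively, any symbol actually occurring in $\alpha$ must have positive probability, which is exactly the desired inclusion.

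Next I would establish the reverse inclusion $\{a\in\Omega\mid P(a)>0\}\subseteq\{\alpha(n)\mid n\in\N^+\}$ via the frequency interpretation. Let $a\in\Omega$ with $P(a)>0$. By Theorem~\ref{FI}, the relative frequency $N_a(\rest{\alpha}{n})/n$ converges to $P(a)>0$ as $n\to\infty$. Because the limit is strictly positive, there must exist some $n$ with $N_a(\rest{\alpha}{n})>0$ (indeed $N_a(\rest{\alpha}{n})$ diverges to infinity, so $a$ occurs infinitely often in $\alpha$). Hence $a$ occurs in $\alpha$, i.e., $a\in\{\alpha(n)\mid n\in\N^+\}$. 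Combining the two inclusions yields the claimed equality.

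Since both constituent results are already available from Tadaki~\cite{T16arXiv}, I expect no genuine obstacle. The only point requiring any care is the elementary observation that a sequence of nonnegative integers whose normalized count tends to a strictly positive limit cannot be identically zero; this is precisely what converts the quantitative statement of Theorem~\ref{FI} into the qualitative occurrence statement needed here. One could streamline the write-up by noting that the membership of a symbol in the displayed right-hand set is governed entirely by the sign of $P(a)$, so the theorem says that Martin-L\"of $P$-randomness forces the \emph{support} of the empirical symbol set to coincide exactly with the support of $P$.
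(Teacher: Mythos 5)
Your proof is correct and follows essentially the same route as the paper: the paper's proof simply states that the result follows from Theorems~\ref{zero_probability} and \ref{FI}, which is exactly the two-inclusion argument you spell out (Theorem~\ref{zero_probability} for excluding zero-probability symbols, Theorem~\ref{FI} for forcing positive-probability symbols to appear). Your write-up just makes explicit the elementary step that a strictly positive limiting frequency guarantees occurrence, which the paper leaves implicit.
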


\begin{proof}
The result follows from Theorems~\ref{zero_probability} and \ref{FI}.
\end{proof}

\subsection{Marginal probability}

The following two theorems are frequently used
in the rest of this paper.

\begin{theorem}\label{contraction}
Let $P\in\PS(\Omega)$.
Let $\alpha$ be a Martin-L\"of $P$-random infinite sequence over $\Omega$, and
let $a$ and $b$ be distinct elements
of
$\Omega$.
Suppose that $\beta$ is an infinite sequence
over $\Omega\setminus\{b\}$
obtained by replacing all occurrences of $b$ by $a$ in $\alpha$.
Then $\beta$ is Martin-L\"of $Q$-random,
where $Q\in\PS(\Omega\setminus\{b\})$ such that $Q(x):=P(a)+P(b)$ if $x=a$ and $Q(x):=P(x)$ otherwise.
\end{theorem}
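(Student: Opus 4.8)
The plan is to argue by contraposition, pulling a Martin-L\"of test against $\beta$ back to one against $\alpha$ along the letter-to-letter relabeling sending $b$ to $a$. Put $\Omega':=\Omega\setminus\{b\}$ and let $\Phi\colon\Omega^*\to(\Omega')^*$ be the length-preserving map replacing every occurrence of $b$ by $a$ and fixing every other letter, with $\Phi_\infty\colon\Omega^\infty\to(\Omega')^\infty$ its coordinatewise extension. This $\Phi$ is computable, satisfies $\Phi(\rest{\gamma}{n})=\rest{\Phi_\infty(\gamma)}{n}$ for all $\gamma\in\Omega^\infty$ and $n\in\N^+$, and by hypothesis $\beta=\Phi_\infty(\alpha)$. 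One first records the routine fact that $Q\in\PS(\Omega')$, since merging the masses $P(a)$ and $P(b)$ onto $a$ leaves the total mass equal to $1$. Now suppose, toward a contradiction, that $\beta$ is \emph{not} Martin-L\"of $Q$-random, so there is a Martin-L\"of test $\mathcal{D}$ with respect to $\lambda_Q$ with $\beta\in\bigcap_{n=1}^\infty\osg{\mathcal{D}_n}$.

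The crux is to show that $\Phi_\infty$ transports $\lambda_P$ to $\lambda_Q$, i.e.\ $\Bm{P}{\Phi_\infty^{-1}(\mathcal{E})}=\Bm{Q}{\mathcal{E}}$ for every Borel set $\mathcal{E}\subset(\Omega')^\infty$. I would verify this first on cylinders and then extend. For $\tau\in(\Omega')^*$ the length-$\abs{\tau}$ strings mapping to $\tau$ are exactly those obtained by independently turning some subset of the $a$-positions of $\tau$ into $b$; hence $\Phi_\infty^{-1}(\osg{\tau})=\bigcup_{\sigma\in\Phi^{-1}(\tau)}\osg{\sigma}$ as a disjoint union over these finitely many $\sigma$. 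Since $P(\sigma)$ factors over positions, the sum over this product set factors too, and position by position it contributes $P(a)+P(b)=Q(a)$ at an $a$-position of $\tau$ and $P(x)=Q(x)$ at a position carrying any other letter $x$; therefore $\sum_{\sigma\in\Phi^{-1}(\tau)}P(\sigma)=Q(\tau)$, giving $\Bm{P}{\Phi_\infty^{-1}(\osg{\tau})}=\Bm{Q}{\osg{\tau}}$. Writing a general open set as $\osg{E}$ with $E\subset(\Omega')^*$ prefix-free and noting that preimages of disjoint sets are disjoint, countable additivity upgrades the identity to all open sets and hence to the whole Borel class. I expect this transport-of-measure identity to be the only substantive step; everything else is bookkeeping.

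Finally I would set $\mathcal{C}:=\{(n,\sigma)\mid(n,\Phi(\sigma))\in\mathcal{D}\}$, so that $\mathcal{C}_n=\Phi^{-1}(\mathcal{D}_n)$. As $\Phi$ is computable and $\mathcal{D}$ is recursively enumerable, $\mathcal{C}$ is recursively enumerable. From $\Phi(\rest{\gamma}{n})=\rest{\Phi_\infty(\gamma)}{n}$ one gets $\osg{\mathcal{C}_n}=\Phi_\infty^{-1}(\osg{\mathcal{D}_n})$, so the transport identity yields $\Bm{P}{\osg{\mathcal{C}_n}}=\Bm{Q}{\osg{\mathcal{D}_n}}<2^{-n}$ for every $n$; thus $\mathcal{C}$ is a Martin-L\"of test with respect to $\lambda_P$. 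Moreover $\beta=\Phi_\infty(\alpha)\in\bigcap_{n}\osg{\mathcal{D}_n}$ forces $\alpha\in\bigcap_{n}\Phi_\infty^{-1}(\osg{\mathcal{D}_n})=\bigcap_{n}\osg{\mathcal{C}_n}$, contradicting the Martin-L\"of $P$-randomness of $\alpha$. This contradiction shows $\beta$ is Martin-L\"of $Q$-random, as claimed.
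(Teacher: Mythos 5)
Your proof is correct and complete. Note that the paper itself does not contain a proof of this theorem: it only cites Tadaki~\cite[Theorem 13]{T16arXiv}, so there is no in-paper argument to compare against. Your route --- pulling a Martin-L\"of test for $\lambda_Q$ back along the computable, length-preserving relabeling $\Phi$, after establishing the transport identity $\Bm{P}{\Phi_\infty^{-1}(\mathcal{E})}=\Bm{Q}{\mathcal{E}}$ on cylinders and hence on open sets --- is the standard mechanism for proving closure of Martin-L\"of randomness under computable coordinatewise maps, and all the steps check out: $Q\in\PS(\Omega\setminus\{b\})$, the factorization $\sum_{\sigma\in\Phi^{-1}(\tau)}P(\sigma)=Q(\tau)$, the identity $\osg{\mathcal{C}_n}=\Phi_\infty^{-1}(\osg{\mathcal{D}_n})$ via $\Phi(\rest{\gamma}{n})=\rest{\Phi_\infty(\gamma)}{n}$, the recursive enumerability of $\mathcal{C}$, and the capture of $\alpha$ yielding the contradiction. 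One small economy worth noting: the full extension of the transport identity to the whole Borel class is not needed, since the test condition only ever involves the open sets $\osg{\mathcal{D}_n}$, for which your cylinder computation plus countable additivity over a prefix-free generating set already suffices.
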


\begin{proof}
See Tadaki~\cite[Theorem 13]{T16arXiv} for the proof.
\end{proof}

\begin{theorem}\label{contraction2}
Let $\Omega$ and $\Theta$ be alphabets.
Let $P\in\PS(\Omega\times\Theta)$, and let $\alpha\in(\Omega\times\Theta)^\infty$.
Suppose that $\beta$ is an infinite sequence over $\Omega$ obtained from $\alpha$
by replacing each element $(m,l)$ in $\alpha$ by $m$.
If $\alpha$ is Martin-L\"of $P$-random then $\beta$ is Martin-L\"of $Q$-random,
where $Q\in\PS(\Omega)$ such that
$$Q(m):=\sum_{l\in\Theta}P(m,l)$$
for every $m\in\Omega$.
\end{theorem}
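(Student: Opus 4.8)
The plan is to argue by contraposition, pulling a Martin-L\"of test for $\beta$ back to one for $\alpha$ through the coordinate projection. Let $\pi\colon(\Omega\times\Theta)^*\to\Omega^*$ be the length-preserving map replacing each symbol $(m,l)$ by $m$, extended coordinatewise to $\pi\colon(\Omega\times\Theta)^\infty\to\Omega^\infty$, so that $\beta=\pi(\alpha)$. First I would record the two easy facts that $Q\in\PS(\Omega)$ (nonnegativity is clear, and $\sum_{m\in\Omega}Q(m)=\sum_{(m,l)\in\Omega\times\Theta}P(m,l)=1$) and that $\pi$ is a computable function.

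Suppose, for contradiction, that $\beta$ is not Martin-L\"of $Q$-random. Then there is a Martin-L\"of test $\mathcal{C}$ with respect to $\Bm{Q}{\cdot}$ with $\beta\in\bigcap_{n=1}^\infty\osg{\mathcal{C}_n}$. Define $\mathcal{D}\subset\N^+\times(\Omega\times\Theta)^*$ by $\mathcal{D}:=\{(n,\sigma)\mid(n,\pi(\sigma))\in\mathcal{C}\}$, so that $\mathcal{D}_n=\pi^{-1}(\mathcal{C}_n)$. Since $\mathcal{C}$ is recursively enumerable and $\pi$ is computable, a machine that on input $(n,\sigma)$ computes $\pi(\sigma)$ and then semidecides $(n,\pi(\sigma))\in\mathcal{C}$ witnesses that $\mathcal{D}$ is recursively enumerable. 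Because $\pi$ acts coordinatewise and preserves length, $\rest{\pi(\gamma)}{k}=\pi(\rest{\gamma}{k})$ for every $\gamma$ and $k$, whence $\osg{\mathcal{D}_n}=\pi^{-1}(\osg{\mathcal{C}_n})$ as subsets of $(\Omega\times\Theta)^\infty$; in particular $\alpha\in\osg{\mathcal{D}_n}$ for every $n$, since $\beta=\pi(\alpha)\in\osg{\mathcal{C}_n}$.

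The heart of the argument is the bound $\Bm{P}{\osg{\mathcal{D}_n}}<2^{-n}$, which I would obtain from the marginalization identity asserting that $\pi$ pushes $\Bm{P}{\cdot}$ forward to $\Bm{Q}{\cdot}$. For a single cylinder, if $\tau=\tau_1\dotsc\tau_n\in\Omega^*$ then $\pi^{-1}(\osg{\tau})$ is the finite disjoint union of the cylinders $\osg{\sigma}$ over all $\sigma=(\tau_1,l_1)\dotsc(\tau_n,l_n)$ with $l_1,\dots,l_n\in\Theta$ (these $\sigma$ share the length $n$ and are pairwise incomparable), so that
\begin{align*}
\Bm{P}{\pi^{-1}(\osg{\tau})}
&=\sum_{l_1,\dots,l_n\in\Theta}\prod_{i=1}^{n}P(\tau_i,l_i)\\
&=\prod_{i=1}^{n}\sum_{l\in\Theta}P(\tau_i,l)
=\prod_{i=1}^{n}Q(\tau_i)=Q(\tau)=\Bm{Q}{\osg{\tau}}.
\end{align*}
Now pick a prefix-free set $E\subset\Omega^*$ with $\osg{E}=\osg{\mathcal{C}_n}$, so $\osg{E}=\bigsqcup_{\tau\in E}\osg{\tau}$. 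Then $\pi^{-1}(\osg{\mathcal{C}_n})$ is the disjoint union of the Borel sets $\pi^{-1}(\osg{\tau})$ over $\tau\in E$, and countable additivity of $\Bm{P}{\cdot}$ together with the cylinder identity gives $\Bm{P}{\osg{\mathcal{D}_n}}=\sum_{\tau\in E}Q(\tau)=\Bm{Q}{\osg{\mathcal{C}_n}}<2^{-n}$. Hence $\mathcal{D}$ is a Martin-L\"of test with respect to $\Bm{P}{\cdot}$, and $\alpha\in\bigcap_{n=1}^\infty\osg{\mathcal{D}_n}$ contradicts the Martin-L\"of $P$-randomness of $\alpha$. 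Therefore $\beta$ is Martin-L\"of $Q$-random.

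I expect the only delicate point to be the bookkeeping in the measure computation: one must check that $\pi^{-1}$ carries the disjoint prefix-free decomposition of $\osg{\mathcal{C}_n}$ to a disjoint decomposition of $\osg{\mathcal{D}_n}$ (so countable additivity applies cleanly, allowing for $E$ to be infinite), and that the product $P(\sigma)=\prod_i P(\tau_i,l_i)$ factors over the independent choices of the $l_i$, which is exactly what converts the $P$-mass of the fiber into the $Q$-mass $Q(\tau)$. Everything else, including the computability of $\pi$ and the recursive enumerability of $\mathcal{D}$, is routine.
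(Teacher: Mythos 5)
Your proof is correct, but it takes a genuinely different route from the paper's. The paper proves Theorem~\ref{contraction2} in one line, by applying Theorem~\ref{contraction} repeatedly: within each fiber $\{(m,l)\mid l\in\Theta\}$ the symbols are merged two at a time, so that after finitely many contractions the fiber collapses to a single symbol of mass $Q(m)$, and $\beta$ is then read off via a bijective relabelling of the collapsed alphabet. All the work there is delegated to Theorem~\ref{contraction}, which is proved in \cite{T16arXiv}, plus the tacit fact that Martin-L\"of randomness survives a computable bijective renaming of symbols. You instead argue directly from Definitions~\ref{ML-testM} and~\ref{ML-randomness-wrtm}, pulling a test for $\beta$ back through the computable projection $\pi$. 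Your bookkeeping is sound: $\pi(\rest{\gamma}{k})=\rest{\pi(\gamma)}{k}$ gives $\osg{\mathcal{D}_n}=\pi^{-1}\left(\osg{\mathcal{C}_n}\right)$; each fiber $\pi^{-1}\left(\osg{\tau}\right)$ is a finite disjoint union of cylinders whose $\lambda_P$-mass factors, via \eqref{BmPosgsigma=Psigma}, into $Q(\tau)$; and the preimages of the cylinders $\osg{\tau}$, $\tau\in E$, remain pairwise disjoint, so countable additivity yields $\Bm{P}{\osg{\mathcal{D}_n}}=\Bm{Q}{\osg{\mathcal{C}_n}}<2^{-n}$ even when $E$ is infinite. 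What your approach buys is self-containedness and generality: it nowhere uses Theorem~\ref{contraction}, and the identical pullback establishes preservation of Martin-L\"of randomness under any symbol-wise computable surjection of alphabets with the pushforward measure, of which Theorem~\ref{contraction}, Theorem~\ref{contraction2}, and the ``theorem symmetrical to Theorem~\ref{contraction2}'' invoked later in the paper are all special cases. What the paper's route buys is brevity and modularity: the closure properties of Section~\ref{FMP} are uniformly imported from \cite{T16arXiv} rather than re-proved.
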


\begin{proof}
The result
is obtained by applying Theorem~\ref{contraction} repeatedly.
\end{proof} 

\subsection{Conditional probability}

The notion of \emph{conditional probability} in a finite probability space can be represented by
the notion of Martin-L\"of $P$-randomness in a natural manner as follows.

First, we recall
the notion of conditional probability in a finite probability space.
Let $P\in\PS(\Omega)$, and let $B\subset\Omega$ be an event on the finite probability space $P$.
Suppose that $P(B)>0$.
Then, for each event $A\subset\Omega$,
the \emph{conditional probability of A given B}, denoted
$P(A|B)$, is defined as $P(A\cap B)/P(B)$.
This notion defines a finite probability space $P_B\in\PS(B)$ such that
$P_B(a)=P(\{a\}|B)$ for every $a\in B$.

When an infinite sequence $\alpha\in\Omega^\infty$ contains infinitely many elements from $B$,
$\cond{B}{\alpha}$ is defined as an infinite sequence in $B^\infty$ obtained from $\alpha$
by eliminating all elements
of
$\Omega\setminus B$ occurring in $\alpha$.
If $\alpha$ is Martin-L\"of $P$-random for the finite probability space $P$ and $P(B)>0$,
then $\alpha$ contains infinitely many elements from $B$
due to Theorem~\ref{appearance-infinitely-many} below.
Therefore, $\cond{B}{\alpha}$ is properly defined in this case.
Note that the notion of $\cond{B}{\alpha}$ in our
theory
is introduced by
Tadaki \cite{T14,T16arXiv},
suggested by
the notion of \emph{partition}
in the theory of \emph{collectives}
introduced
by von Mises \cite{vM64} (see Tadaki \cite{T16arXiv} for the detail).

We can then show Theorem~\ref{conditional_probability} below, which states that
Martin-L\"of $P$-random sequences are \emph{closed under conditioning}.

\begin{theorem}\label{appearance-infinitely-many}
Let $P\in\PS(\Omega)$, and let $\alpha\in\Omega^*$.
Suppose that $\alpha$ is Martin-L\"of $P$-random.
For every $m\in\Omega$, if $m$ appears in $\alpha$ then
$m$ appears in $\alpha$ infinitely many times.
\end{theorem}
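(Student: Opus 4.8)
The plan is to derive the statement directly from two results already available in the excerpt: the law of large numbers, Theorem~\ref{FI}, and the fact that probability-zero symbols never occur, Theorem~\ref{zero_probability}. The guiding intuition is that any symbol occurring in a Martin-L\"of $P$-random sequence must have positive probability, and a symbol of positive probability has a positive \emph{limiting frequency}, which is incompatible with only finitely many occurrences. (Throughout I read the hypothesis as $\alpha\in\Omega^\infty$, since Martin-L\"of $P$-randomness and the phrase ``infinitely many times'' only make sense for an infinite sequence.)

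First I would fix $m\in\Omega$ and assume that $m$ appears in $\alpha$. By the contrapositive of Theorem~\ref{zero_probability}, the occurrence of $m$ in the Martin-L\"of $P$-random sequence $\alpha$ forces $P(m)>0$; equivalently, this can be read off directly from Theorem~\ref{nonzero-prob-appears}, which identifies $\{\alpha(n)\mid n\in\N^+\}$ with $\{a\in\Omega\mid P(a)>0\}$. Establishing $P(m)>0$ at this stage is the one point that genuinely requires care, since the remainder of the argument leans on it.

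Next, since $\alpha$ is Martin-L\"of $P$-random, Theorem~\ref{FI} gives
$$\lim_{n\to\infty}\frac{N_m(\rest{\alpha}{n})}{n}=P(m)>0.$$
From here the conclusion is forced: if $m$ occurred only finitely many times, then $N_m(\rest{\alpha}{n})$ would be eventually constant, so the ratio $N_m(\rest{\alpha}{n})/n$ would tend to $0$, contradicting a strictly positive limit. Hence $N_m(\rest{\alpha}{n})\to\infty$, which is precisely the assertion that $m$ appears in $\alpha$ infinitely many times. I do not expect a substantive obstacle beyond correctly sequencing these two invocations; an alternative, self-contained route would be to build an explicit Martin-L\"of test with respect to $\lambda_P$ capturing the sequences in which $m$ appears but only finitely often and to verify that its associated null set has measure zero, but this is strictly more work than citing Theorems~\ref{FI} and \ref{zero_probability}.
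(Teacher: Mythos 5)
Your proposal is correct and follows essentially the same route as the paper, whose proof is the one-line remark that the result follows from Theorems~\ref{nonzero-prob-appears} and \ref{FI}: you use Theorem~\ref{nonzero-prob-appears} (equivalently, the contrapositive of Theorem~\ref{zero_probability}) to get $P(m)>0$, and then the law of large numbers to rule out finitely many occurrences, which is precisely the intended argument with the details spelled out. Your reading of the hypothesis as $\alpha\in\Omega^\infty$ (correcting the typo $\alpha\in\Omega^*$) is also the right one.
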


\begin{proof}
The result follows from Theorems~\ref{nonzero-prob-appears} and \ref{FI}.
\end{proof}

\begin{theorem}[Closure property under conditioning, Tadaki~\cite{T14}]\label{conditional_probability}
Let $P\in\PS(\Omega)$,
and let $B\subset\Omega$ be an event on the finite probability space $P$ with $P(B)>0$.
For every $\alpha\in\Omega^\infty$,
if $\alpha$ is Martin-L\"of $P$-random
then $\cond{B}{\alpha}$ is Martin-L\"of $P_B$-random for the finite probability space $P_B$.
\end{theorem}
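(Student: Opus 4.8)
The plan is to argue by contradiction: I would take any Martin-L\"of test that witnesses the non-randomness of $\cond{B}{\alpha}$ with respect to $\lambda_{P_B}$ and \emph{pull it back} to a Martin-L\"of test witnessing the non-randomness of $\alpha$ itself with respect to $\lambda_P$, contradicting the hypothesis. Write $C:=\Omega\setminus B$. First I would record that $\cond{B}{\alpha}$ is a genuine infinite sequence: since $P(B)>0$, some $b\in B$ has $P(b)>0$, and by Theorem~\ref{appearance-infinitely-many} such $b$ occurs infinitely often in the Martin-L\"of $P$-random sequence $\alpha$, so deleting the letters of $C$ leaves an infinite sequence $\cond{B}{\alpha}\in B^\infty$. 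The set of $\alpha$ meeting $B$ only finitely often is $\lambda_P$-null and will be harmless below.

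The heart of the argument is a measure identity expressing that filtering pushes $\lambda_P$ forward onto $\lambda_{P_B}$. For a string $\tau=b_1\dotsc b_k\in B^*$, let $\mathcal{W}_\tau$ denote the set of $\alpha\in\Omega^\infty$ whose first $k$ letters lying in $B$ are exactly $b_1,\dotsc,b_k$. Decomposing each such $\alpha$ as $g_0\,b_1\,g_1\,b_2\dotsm g_{k-1}\,b_k$ followed by an arbitrary tail, with each gap $g_i\in C^*$, and summing the geometric series $\sum_{g\in C^*}P(g)=\sum_{n\ge0}P(C)^n=1/P(B)$ over each of the $k$ gaps, I would obtain
$$\Bm{P}{\mathcal{W}_\tau}=\frac{1}{P(B)^k}\prod_{i=1}^{k}P(b_i)=\prod_{i=1}^{k}\frac{P(b_i)}{P(B)}=\prod_{i=1}^{k}P_B(b_i)=\Bm{P_B}{\osg{\tau}}.$$

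Next I would set up the pullback. For $S\subset B^*$ define $\tilde S:=\{\sigma\in\Omega^*\mid \cond{B}{\sigma}\in S\}$, where $\cond{B}{\sigma}$ deletes the letters of $C$ from the finite string $\sigma$. A short check (placing $\sigma$ at the position of the $\abs{\tau}$-th $B$-letter) shows $\osg{\tilde S}=\{\alpha\mid \cond{B}{\alpha}\text{ has a prefix in }S\}$. Choosing a prefix-free $E\subset B^*$ with $\osg{E}=\osg{S}$ in $B^\infty$, the sets $\{\mathcal{W}_\tau\}_{\tau\in E}$ are pairwise disjoint and their union is $\osg{\tilde S}$ up to a $\lambda_P$-null set, so the identity above yields $\Bm{P}{\osg{\tilde S}}=\sum_{\tau\in E}\Bm{P_B}{\osg{\tau}}=\Bm{P_B}{\osg{S}}$. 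Now suppose, for contradiction, that $\cond{B}{\alpha}$ is not Martin-L\"of $P_B$-random, and fix a Martin-L\"of test $\mathcal C\subset\N^+\times B^*$ with respect to $\lambda_{P_B}$ with $\cond{B}{\alpha}\in\bigcap_n\osg{\mathcal C_n}$. Put $\mathcal D:=\{(n,\sigma)\in\N^+\times\Omega^*\mid \cond{B}{\sigma}\in\mathcal C_n\}$, so $\mathcal D_n=\widetilde{\mathcal C_n}$. Since $\sigma\mapsto\cond{B}{\sigma}$ is computable and $\mathcal C$ is recursively enumerable, dovetailing over the enumeration of $\mathcal C$ and over $\Omega^*$ enumerates $\mathcal D$ (each $\tau\in\mathcal C_n$ has infinitely many preimages $\sigma$, but this is no obstacle for an r.e. set); and by the displayed pullback identity $\Bm{P}{\osg{\mathcal D_n}}=\Bm{P_B}{\osg{\mathcal C_n}}<2^{-n}$, so $\mathcal D$ is a Martin-L\"of test with respect to $\lambda_P$. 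Finally $\cond{B}{\alpha}\in\osg{\mathcal C_n}$ forces $\alpha\in\osg{\mathcal D_n}$ for every $n$, hence $\alpha\in\bigcap_n\osg{\mathcal D_n}$, contradicting the Martin-L\"of $P$-randomness of $\alpha$; therefore $\cond{B}{\alpha}$ is Martin-L\"of $P_B$-random.

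The step I expect to be the main obstacle is the measure identity together with the bookkeeping around the null set of sequences that meet $B$ only finitely often: one must be sure that variable-length gaps do not distort the pushforward and that passing to a prefix-free representative $E$ preserves both $\osg{\cdot}$ and the measure exactly. Once that identity is in hand, the effectiveness of the pullback (computability of filtering, plus the dovetailing needed because each string has infinitely many preimages) and the final contradiction are routine.
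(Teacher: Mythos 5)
Your proof is correct, and it is complete as written: the disjoint-cylinder decomposition of $\mathcal{W}_\tau$ with the geometric-series identity $\sum_{g\in C^*}P(g)=1/P(B)$ gives the pushforward equality exactly, the null set of sequences meeting $B$ only finitely often is handled properly (and is where Theorem~\ref{appearance-infinitely-many}, via Theorem~\ref{nonzero-prob-appears}, guarantees $\cond{B}{\alpha}$ is genuinely infinite for the random $\alpha$ at hand), the pulled-back test $\mathcal{D}$ is recursively enumerable because filtering a finite string through the finite set $B$ is computable, and the membership transfer $\cond{B}{\alpha}\in\osg{\mathcal{C}_n}\Rightarrow\alpha\in\osg{\mathcal{D}_n}$ is sound. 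One small remark: for the test condition you only need the inequality $\Bm{P}{\osg{\mathcal{D}_n}}\le\Bm{P_B}{\osg{\mathcal{C}_n}}<2^{-n}$, so subadditivity over $E$ would already suffice; the exact equality you prove is stronger than necessary. I cannot compare your route against the paper's own argument, because the paper gives none: it defers entirely to Tadaki~\cite[Theorem 18]{T16arXiv}. Your test-pullback construction is the natural effective implementation of the statement and is almost certainly the same in spirit as the cited proof, so treat this as a correct, self-contained replacement for the external reference.
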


\begin{proof}
See Tadaki~\cite[Theorem 18]{T16arXiv} for the proof.
\end{proof}

\subsection{Occurrence of a specific event}

Let $P\in\PS(\Omega)$, and let $A\subset\Omega$ be an event on the finite probability space $P$.
For each $\alpha\in\Omega^\infty$,
we use $\chara{A}{\alpha}$ to denote the infinite binary sequence such that,
for every $n\in\N^+$,
its $n$th element $(\chara{A}{\alpha})(n)$ is $1$ if $\alpha(n)\in A$ and $0$ otherwise.
The pair $(P,A)$ induces a finite probability space $\charaps{P}{A}\in\PS(\{0,1\})$ such that
$(\charaps{P}{A})(1)=P(A)$ and $(\charaps{P}{A})(0)=1-P(A)$.
Note that the notions of $\chara{A}{\alpha}$ and $\charaps{P}{A}$ in our theory together correspond to
the notion of \emph{mixing} in the theory of collectives by von Mises \cite{vM64}.

We can then show the following theorem.

\begin{theorem}\label{charaA}
Let $P\in\PS(\Omega)$, and
let $A\subset\Omega$.
For every $\alpha\in\Omega^\infty$, if $\alpha$ is Martin-L\"of $P$-random
then $\chara{A}{\alpha}$ is Martin-L\"of $\charaps{P}{A}$-random
for the finite probability space $\charaps{P}{A}$.
\end{theorem}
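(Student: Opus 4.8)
The plan is to reduce the statement to a repeated application of the merging operation of Theorem~\ref{contraction}, followed by a single harmless relabeling of the underlying alphabet to $\{0,1\}$. The observation driving this is that $\chara{A}{\alpha}$ arises from $\alpha$ by a two-stage \emph{coarsening}: every symbol lying in $A$ is collapsed to one common value and every symbol lying in $\Omega\setminus A$ is collapsed to another. This is exactly the symbol-identification that Theorem~\ref{contraction} performs one symbol at a time, and it mirrors the way Theorem~\ref{contraction2} was deduced from Theorem~\ref{contraction}.

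Concretely, I would first treat the generic case in which both $A$ and $\Omega\setminus A$ are non-empty, and fix representatives $a_*\in A$ and $b_*\in\Omega\setminus A$. Applying Theorem~\ref{contraction} once for each element of $A\setminus\{a_*\}$, absorbing it into $a_*$, and once for each element of $(\Omega\setminus A)\setminus\{b_*\}$, absorbing it into $b_*$, yields after finitely many steps an infinite sequence $\gamma$ over the two-letter alphabet $\{a_*,b_*\}$. Each step preserves Martin-L\"of randomness and adds the probability of the absorbed symbol to that of its representative, so the telescoping sums show that $\gamma$ is Martin-L\"of $R$-random, where $R\in\PS(\{a_*,b_*\})$ satisfies $R(a_*)=\sum_{a\in A}P(a)=P(A)$ and $R(b_*)=1-P(A)$. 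By construction the $n$th symbol of $\gamma$ is $a_*$ precisely when $\alpha(n)\in A$ and is $b_*$ otherwise.

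It then remains to pass from $\gamma$ over $\{a_*,b_*\}$ to $\chara{A}{\alpha}$ over $\{0,1\}$, which differ only in the names of their two symbols. The bijection $h$ sending $a_*\mapsto 1$ and $b_*\mapsto 0$ carries $\gamma$ to $\chara{A}{\alpha}$ and carries $R$ to $\charaps{P}{A}$, since $R(a_*)=P(A)=(\charaps{P}{A})(1)$ and $R(b_*)=(\charaps{P}{A})(0)$. Being a computable bijection of alphabets with computable inverse, $h$ induces a measure isomorphism between $(\{a_*,b_*\}^\infty,\lambda_R)$ and $(\{0,1\}^\infty,\lambda_{\charaps{P}{A}})$ under which Martin-L\"of tests correspond to Martin-L\"of tests; hence Martin-L\"of $R$-randomness of $\gamma$ is equivalent to Martin-L\"of $\charaps{P}{A}$-randomness of $\chara{A}{\alpha}$, as required. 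The one step that genuinely needs care --- and which I expect to be the main obstacle --- is precisely this invariance of Martin-L\"of randomness under an alphabet relabeling; everything preceding it is routine bookkeeping. I would establish it directly from Definitions~\ref{ML-testM} and \ref{ML-randomness-wrtm}, checking that $h$ transports the cylinders $\osg{\sigma}$ in a recursive, measure-preserving fashion.

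Finally, the degenerate cases must be disposed of separately. If $A=\emptyset$ then $\chara{A}{\alpha}$ is the constant sequence $000\dotsc$ while $(\charaps{P}{A})(0)=1$, so $\lambda_{\charaps{P}{A}}$ is concentrated on this single point; since Theorem~\ref{MLmae} gives $\lambda_{\charaps{P}{A}}(\mathrm{ML}_{\lambda_{\charaps{P}{A}}})=1$, the point $000\dotsc$ is itself Martin-L\"of $\charaps{P}{A}$-random. The case $A=\Omega$ is symmetric, and these two observations complete the argument.
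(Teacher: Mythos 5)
Your proof is correct. A caveat on the comparison: the paper gives no in-text proof of Theorem~\ref{charaA} at all --- it simply defers to Tadaki~\cite[Theorem 17]{T16arXiv} --- so the relevant benchmark is the natural direct argument, namely pulling back an arbitrary Martin-L\"of test over $\{0,1\}$ through the letter-wise coarsening map $\Omega^*\to\{0,1\}^*$ (each symbol of $A$ sent to $1$, each symbol of $\Omega\setminus A$ to $0$) and checking that the pushforward of $\lambda_P$ under this map is exactly $\lambda_{\charaps{P}{A}}$, so that the pulled-back test satisfies the same measure bounds of Definition~\ref{ML-testM}. Your route is genuinely different and stays inside the paper's stated toolkit: repeated application of Theorem~\ref{contraction} (exactly mirroring how the paper derives Theorem~\ref{contraction2}), a relabeling step, and Theorem~\ref{MLmae} for the degenerate cases $A=\emptyset$ and $A=\Omega$, where your observation that a measure-one singleton must meet the measure-one set of random sequences is a clean way to see that the constant sequence is random for the degenerate Bernoulli measure. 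What your approach buys is a derivation that a reader of this paper can verify without consulting the reference, modulo the one lemma you isolate. What it does not buy is a saving of work: the relabeling-invariance step you flag as the main obstacle is proved by precisely the same test-transport computation as the direct proof sketched above --- defining the pullback of a test and verifying the cylinder-measure identity is no easier for the bijection $\{a_*,b_*\}\to\{0,1\}$ than for the many-to-one coarsening $\Omega\to\{0,1\}$, since in both cases the identity reduces to $\sum_{x:\,h(x)=i}P(x)=(\charaps{P}{A})(i)$ factored over positions. So the detour through Theorem~\ref{contraction} repackages, rather than removes, the essential content; but as a proof it is complete and sound.
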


\begin{proof}
See Tadaki~\cite[Theorem 17]{T16arXiv} for the proof.
\end{proof}

\section{Derivation of Postulate~\ref{Tadaki-rule} from the principle of typicality}
\label{POT-pure-states}

In this section, we derive Postulate~\ref{Tadaki-rule} from Postulate~\ref{POT}, the principle of typicality,
together with Postulates~\ref{state_space}, \ref{composition}, and \ref{evolution}.

For deriving Postulate~\ref{Tadaki-rule},
consider a quantum measurement described by an observable $M$,
as in Postulate~\ref{Tadaki-rule}.
Here,
we call a Hermitian operator on the state space of the system being measured an \emph{observable}.
The observable $M$ has a spectral decomposition,
$$M=\sum_{m\in\Omega} m E_m,$$
where $E_m$ is the projector onto the eigenspace of $M$ with eigenvalue $m$.
Then, according to Postulates~\ref{state_space}, \ref{composition}, and \ref{evolution},
the measurement process of the observable $M$ is described by a unitary operator $U$
satisfying \eqref{single_measurement2}.
Based on the equation~\eqref{single_measurement2} we see, in particular, that
\emph{the set of possible outcomes of the measurement of $M$ is the spectrum $\Omega$ of $M$},
as stated in Postulate~\ref{Tadaki-rule} as one of its conclusions.

Now, let us assume
that \emph{the measurements of the observable $M$ are
repeatedly performed over identical quantum systems whose states are all $\ket{\Psi}$, and
the infinite sequence $\alpha\in \Omega^\infty$ of measurement outcomes is being generated},
as assumed in Postulate~\ref{Tadaki-rule}.
This is just the situation that we considered in
Subsection~\ref{FOMWI}.
In other words,
under the assumption above, we are considering
the total system consisting of the copies of
the system $\mathcal{S}$ and the apparatus $\mathcal{A}$
in
the setting of the original framework of MWI based on
the infinite repetition of the measurements of the observable $M$, which we
have developed
in Subsection~\ref{FOMWI}
based on Postulates~\ref{state_space}, \ref{composition}, and \ref{evolution}.
In this situation, according to Definition~\ref{pmrpwst},
a \emph{world} is an infinite sequence over $\Omega$ and
the
\emph{measure representation for the prefixes of
worlds}
is the function $r\colon\Omega^*\to[0,1]$ given by \eqref{rpmwi},
as we saw in Subsection~\ref{FOMWI}.
Furthermore, as we also saw in Subsection~\ref{FOMWI},
the probability measure induced by $r$ is the Bernoulli measure $\lambda_P$ on $\Omega^\infty$,
where $P$ is given by \eqref{Pm=ip}.
The infinite sequence $\alpha$ in the assumption
is \emph{our world}
under
the infinite repetition of the measurements of $M$
in this setting.

Then, it follows from Postulate~\ref{POT}, the principle of typicality,
that \emph{$\alpha$ is Martin-L\"of random with respect to
the probability measure $\lambda_P$ on $\Omega^\infty$}.
Hence, according to Definition~\ref{ML_P-randomness},
we see that
\emph{$\alpha$ is Martin-L\"of $P$-random
with
the finite probability space $P$ on $\Omega$
satisfying \eqref{Pm=ip}}, as stated in Postulate~\ref{Tadaki-rule} as one of its conclusions.
Using Theorem~\ref{zero_probability}  we
note
that $P(\alpha(n))>0$ for every $n\in\N^+$.
Thus, it follows from \eqref{Pm=ip} that
\begin{equation}\label{non-zero-inner-product}
  \bra{\Psi}E_{\alpha(n)}\ket{\Psi}>0
\end{equation}
for every $n\in\N^+$.

Let $n$ be an arbitrary positive integer.
In the superposition~\eqref{total_system0} of the total system consisting of
$\mathcal{S}_1, \mathcal{A}_1, \mathcal{S}_2, \mathcal{A}_2,\dots,\mathcal{S}_n, \mathcal{A}_n$,
consider the specific state
\begin{equation}\label{specific_state}
  (E_{m_1}\ket{\Psi})\otimes\dots\otimes(E_{m_{n}}\ket{\Psi})
  \otimes\ket{\Phi[m_1]}\otimes\dots\otimes\ket{\Phi[m_{n}]}
\end{equation}
such that $m_{k}=\alpha(k)$ for every $k=1,\dots,n$.
Note from \eqref{non-zero-inner-product} that
the vector \eqref{specific_state} is non-zero and therefore can be a state vector certainly,
when it is normalized.
Since $\alpha$ is our world,
the state \eqref{specific_state} is
\emph{the state of the total system consisting of
$\mathcal{S}_1, \mathcal{A}_1, \mathcal{S}_2, \mathcal{A}_2,\dots,\mathcal{S}_n, \mathcal{A}_n$
that we have perceived immediately after the
measurement of the observable $M$
over the system $\mathcal{S}_n$}.
Thus, since $m_{n}=\alpha(n)$,
the state of
the system $\mathcal{S}_n$
immediately after the measurement of $M$ over it \emph{in our world}
is given by
\begin{equation*}
  \frac{E_{\alpha(n)}\ket{\Psi}}{\sqrt{\bra{\Psi}E_{\alpha(n)}\ket{\Psi}}},
\end{equation*}
where the vector is normalized since a state vector has to be a unit vector
according to a convention adopted in Postulate~\ref{state_space}.
Since the positive integer $n$ is arbitrary,
this
means
that
in our world,
\emph{for each of the measurements of $M$,
the state of the system $\mathcal{S}$ immediately after the measurement is
\begin{equation*}
  \frac{E_m\ket{\Psi}}{\sqrt{\bra{\Psi}E_m\ket{\Psi}}},
\end{equation*}
where $m$ is the corresponding measurement outcome},
as stated in the last sentence of Postulate~\ref{Tadaki-rule} as one of its conclusions.

Hence, Postulate~\ref{Tadaki-rule} is derived from Postulate~\ref{POT}
together with Postulates~\ref{state_space}, \ref{composition}, and \ref{evolution}.

\section{Derivation I of Postulate~\ref{Tadaki-rule2} from the principle of typicality}
\label{POT-mixed-states}

In this section and the next section,
we derive Postulate~\ref{Tadaki-rule2} from Postulate~\ref{POT}, the principle of typicality,
together with Postulates~\ref{state_space}, \ref{composition}, and \ref{evolution}.
In this section, in particular,
we do this in a specific scenario of the setting of measurements
which is considered to be the simplest among many other scenarios.
In order to derive Postulate~\ref{Tadaki-rule2},
we have to consider more complicated interaction between systems and
apparatuses than one used for deriving Postulate~\ref{Tadaki-rule} in the preceding section.
Recall that a mixed state on which the measurements are performed is
a Martin-L\"of $P$-random infinite sequence of pure states,
as defined in Definition~\ref{def-mixed-state}.
Hence, we have to perform measurements on the mixed state while generating it by other measurements.
In the simplest scenario, a mixed state being measured is
a Martin-L\"of $P$-random infinite sequence over \emph{mutually orthogonal} pure states.

Thus, for deriving Postulate~\ref{Tadaki-rule2},
we consider an infinite repetition of two successive measurements of
observables $A$ and $B$ over identical systems prepared
initially in identical
states $\ket{\Psi}$.
In this setting,
while generating a mixed state by the measurements of $A$,
we are performing the measurements of $B$ over the mixed state.

\subsection{Repeated once of measurements}
\label{ROM}

First, the repeated once of the infinite repetition of measurements, i.e.,
the two successive measurements of observables $A$ and $B$, is described as follows.

Let $\mathcal{S}$ be an arbitrary quantum system with state space $\mathcal{H}$ of finite dimension $K$.
Consider arbitrary two measurements over $\mathcal{S}$ described by observables $A$ and $B$.
Let $\Omega$ and $\Theta$ be the spectrums of $A$ and $B$, respectively.
Let
\begin{equation*}
  A=\sum_{k=1}^K f(k)\ket{\psi_k}\bra{\psi_k}
\end{equation*}
be a spectral decomposition of the observable $A$,
where $\{\ket{\psi_1},\dots,\ket{\psi_K}\}$ is an orthonormal basis of $\mathcal{H}$ and
$f\colon\{1,\dots,K\}\to\Omega$ is a surjection,
and let
\begin{equation}\label{spectral-decompositionB}
  B=\sum_{k=1}^K g(k)\ket{\phi_k}\bra{\phi_k}
\end{equation}
be a spectral decomposition of the observable $B$,
where $\{\ket{\phi_1},\dots,\ket{\phi_K}\}$ is an orthonormal basis of $\mathcal{H}$ and
$g\colon\{1,\dots,K\}\to\Theta$ is a surjection.
According to Postulates~\ref{state_space}, \ref{composition}, and \ref{evolution},
the measurement processes of the observables $A$ and $B$ are described
by the following unitary operators $U_A$ and $U_B$, respectively:
\begin{align}
  &U_A\ket{\psi_k}\otimes\ket{\Phi_A^{\mathrm{init}}}=\ket{\psi_k}\otimes\ket{\Phi_A[f(k)]},
  \label{single_measurementA} \\
  &U_B\ket{\phi_k}\otimes\ket{\Phi_B^{\mathrm{init}}}=\ket{\phi_k}\otimes\ket{\Phi_B[g(k)]}
  \label{single_measurementB}
\end{align}
for every $k=1,\dots,K$.
The vectors $\ket{\Phi_A^{\mathrm{init}}}$ and $\ket{\Phi_B^{\mathrm{init}}}$ are
the initial states of the apparatuses measuring $A$ and $B$, respectively, and
$\ket{\Phi_A[m]}$ and $\ket{\Phi_B[l]}$ are
the final states of
ones
for each $m\in\Omega$ and $l\in\Theta$,
with $\braket{\Phi_A[m]}{\Phi_A[m']}=\delta_{m,m'}$ and $\braket{\Phi_B[l]}{\Phi_B[l']}=\delta_{l,l'}$.
For every $m\in\Omega$, the state $\ket{\Phi_A[m]}$ indicates that
\emph{the apparatus measuring the observable $A$ of the system $\mathcal{S}$
records the value $m$ of $A$}, and
for every $l\in\Theta$, the state $\ket{\Phi_B[l]}$ indicates that
\emph{the apparatus measuring the observable $B$ of the system $\mathcal{S}$
records the value $l$ of $B$}.

For each $m\in\Omega$, let $E_m$ be the projector onto the eigenspace of $A$ with eigenvalue $m$,
and for each $l\in\Theta$, let $F_l$ be the projector onto the eigenspace of $B$ with eigenvalue $l$.
Then, the equalities~\eqref{single_measurementA} and \eqref{single_measurementB} can be rewritten,
respectively, as the forms that
\begin{align}
  U_A\ket{\Psi}\otimes\ket{\Phi_A^{\mathrm{init}}}&=\sum_{m\in\Omega}(E_m\ket{\Psi})\otimes\ket{\Phi_A[m]},
  \label{single_measurementA2} \\
  U_B\ket{\Psi}\otimes\ket{\Phi_B^{\mathrm{init}}}&=\sum_{l\in\Theta}(F_l\ket{\Psi})\otimes\ket{\Phi_B[l]}
  \label{single_measurementB2}
\end{align}
for every $\ket{\Psi}\in\mathcal{H}$.

\subsection{\boldmath Infinite repetition of the measurements of the observables $A$ and $B$}

As in
Subsection~\ref{FOMWI}
and Section~\ref{POT-pure-states},
we consider countably infinite copies of the system $\mathcal{S}$.
We prepare each of the copies in an identical state $\ket{\Psi}$,
and then perform the successive measurements of the observables $A$ and $B$ over
each of the copies of the system $\mathcal{S}$ \emph{one by one},
by interacting each of the copies of the system $\mathcal{S}$ with
each of the copies of the apparatuses measuring $A$ and $B$
according to
the unitary time-evolution \eqref{single_measurementA2} and then
the unitary time-evolution \eqref{single_measurementB2}.
For each $n\in\N^+$,
let $\mathcal{H}_n$ be the state space of the total system consisting of
the first $n$ copies of the system $\mathcal{S}$ and the two apparatuses measuring $A$ and $B$.
These successive interactions
as measurement processes
between the copies of the system $\mathcal{S}$ and the copies of the two apparatuses
proceed in the following manner.

The
starting
state of the total system,
which consists of the first copy of the system $\mathcal{S}$ and
the two apparatuses measuring $A$ and $B$,
is $\ket{\Psi}\otimes\ket{\Phi_A^{\mathrm{init}}}\otimes\ket{\Phi_B^{\mathrm{init}}}\in\mathcal{H}_1$.
In this state of the total system,
the measurement of $A$ is performed over the first copy of the system $\mathcal{S}$.
Immediately after
that,
the total system results in the state
\begin{align*}
  \sum_{m_1\in\Omega} (E_{m_1}\ket{\Psi})\otimes\ket{\Phi_A[m_1]}\otimes\ket{\Phi_B^{\mathrm{init}}}
  \in\mathcal{H}_1
\end{align*}
by the interaction~\eqref{single_measurementA2} as the measurement process of $A$.
Then, the measurement of $B$ is performed over the first copy of the system $\mathcal{S}$.
Immediately after that, the total system results in the state
\begin{align*}
  \sum_{m_1\in\Omega}\sum_{l_1\in\Theta}
  (F_{l_1}E_{m_1}\ket{\Psi})\otimes\ket{\Phi_A[m_1]}\otimes\ket{\Phi_B[l_1]}
  \in\mathcal{H}_1
\end{align*}
by the interaction~\eqref{single_measurementB2} as the measurement process of $B$.

In general,
immediately before the measurement of $A$ over the $n$th copy of the system $\mathcal{S}$,
the state of the total system,
which consists of the first $n$ copies of the system $\mathcal{S}$ and
the two apparatuses measuring $A$ and $B$, is
\begin{align*}
  \sum_{m_1,\dots,m_{n-1}\in\Omega}&\sum_{l_1,\dots,l_{n-1}\in\Theta}
  (F_{l_1}E_{m_1}\ket{\Psi})\otimes\dots\otimes(F_{l_{n-1}}E_{m_{n-1}}\ket{\Psi})\otimes\ket{\Psi} \\
  &\otimes\ket{\Phi_A[m_1]}\otimes\ket{\Phi_B[l_1]}\otimes\dots
  \otimes\ket{\Phi_A[m_{n-1}]}\otimes\ket{\Phi_B[l_{n-1}]}
  \otimes\ket{\Phi_A^{\mathrm{init}}}\otimes\ket{\Phi_B^{\mathrm{init}}}
\end{align*}
in $\mathcal{H}_n$.
Then, immediately after the measurement of $A$ over the $n$th copy of the system $\mathcal{S}$,
the total system results in the state
\begin{equation}\label{total_systemA}
\begin{split}
  \sum_{m_1,\dots,m_{n}\in\Omega}&\sum_{l_1,\dots,l_{n-1}\in\Theta}
  (F_{l_1}E_{m_1}\ket{\Psi})\otimes\dots\otimes(F_{l_{n-1}}E_{m_{n-1}}\ket{\Psi})
  \otimes(E_{m_n}\ket{\Psi}) \\
  &\otimes\ket{\Phi_A[m_1]}\otimes\ket{\Phi_B[l_1]}\otimes\dots
  \otimes\ket{\Phi_A[m_{n-1}]}\otimes\ket{\Phi_B[l_{n-1}]}
  \otimes\ket{\Phi_A[m_{n}]}\otimes\ket{\Phi_B^{\mathrm{init}}}
\end{split}
\end{equation}
in $\mathcal{H}_n$ by the interaction \eqref{single_measurementA2},
as the measurement process of $A$,
between the $n$th copy of the system $\mathcal{S}$ in the state $\ket{\Psi}$ and
the $n$th copy of the apparatus measuring $A$ in the state $\ket{\Phi_A^{\mathrm{init}}}$.
Then,
immediately after the measurement of $B$ over the $n$th copy of the system $\mathcal{S}$,
the total system results in the state
\begin{align}
  &\sum_{m_1,\dots,m_{n}\in\Omega}\sum_{l_1,\dots,l_{n}\in\Theta}
  (F_{l_1}E_{m_1}\ket{\Psi})\otimes\dots\otimes(F_{l_{n}}E_{m_{n}}\ket{\Psi}) \nonumber \\
  &\hspace*{32mm}\otimes\ket{\Phi_A[m_1]}\otimes\ket{\Phi_B[l_1]}\otimes\dots
  \otimes\ket{\Phi_A[m_{n}]}\otimes\ket{\Phi_B[l_{n}]} \label{total_systemAB0} \\
  & \nonumber \\
  &=\sum_{m_1,\dots,m_{n}\in\Omega}\sum_{l_1,\dots,l_{n}\in\Theta}
  (F_{l_1}E_{m_1}\ket{\Psi})\otimes\dots\otimes(F_{l_{n}}E_{m_{n}}\ket{\Psi})
  \otimes\ket{\Phi[(m_1,l_1)\dots (m_{n},l_{n})]} \label{total_systemAB}
\end{align}
in $\mathcal{H}_n$ by the interaction \eqref{single_measurementB2},
as the measurement process of $B$,
between the $n$th copy of the system $\mathcal{S}$ in the state $\ket{\Psi}$
and the $n$th copy of the apparatus measuring $B$ in the state $\ket{\Phi_B^{\mathrm{init}}}$,
where the state $\ket{\Phi[(m_1,l_1)\dots (m_n,l_n)]}$ denotes
$\ket{\Phi_A[m_1]}\otimes\ket{\Phi_B[l_1]}\otimes\dots
\otimes\ket{\Phi_A[m_{n}]}\otimes\ket{\Phi_B[l_{n}]}$, and indicates that
\emph{the first $n$ copies of the two apparatuses measuring $A$ and $B$ record
the values $(m_1,l_1)\dots (m_n,l_n)$ of the observables $A$ and $B$, in this order,
of the first $n$ copies of the system $\mathcal{S}$}.

Then, for applying Postulate~\ref{POT}, the principle of typicality,
we have to identify worlds and
the
measure representation for the prefixes of
worlds
in this situation.
Note, from \eqref{single_measurementA2} and \eqref{single_measurementB2},
that the sequential application of $U_A$ and $U_B$ to the composite system
consisting of the system $\mathcal{S}$
and the two apparatuses measuring $A$ and $B$ respectively over the initial state
$\ket{\Psi}\otimes\ket{\Phi_A^{\mathrm{init}}}\otimes\ket{\Phi_B^{\mathrm{init}}}$
results in the following single unitary time-evolution $U$:
\begin{align*}
  U\ket{\Psi}\otimes\ket{\Phi_A^{\mathrm{init}}}\otimes\ket{\Phi_B^{\mathrm{init}}}
  &=\sum_{m\in\Omega}\sum_{l\in\Theta}
    (F_{l}E_{m}\ket{\Psi})\otimes\ket{\Phi_A[m]}\otimes\ket{\Phi_B[l]} \\
  &=\sum_{m\in\Omega}\sum_{l\in\Theta}
    (F_{l}E_{m}\ket{\Psi})\otimes\ket{\Phi[(m,l)]},
\end{align*}
where $U=U_BU_A$.
It is then easy to check that
a collection $\{F_lE_m\}_{(m,l)\in\Omega\times\Theta}$ forms measurement operators.%
\footnote{%
We can
confirm that the collection $\{F_lE_m\}_{(m,l)\in\Omega\times\Theta}$ forms measurement operators,
by checking
directly
that the collection satisfies the completeness equation. 
Note, however, that
the unitarity of $U$ \emph{already} guarantees this fact.}
Thus, the successive measurements of $A$ and $B$ in this order can be regarded
as the \emph{single measurement} which is described by the measurement operators
$\{F_lE_m\}_{(m,l)\in\Omega\times\Theta}$.
Hence, we can apply Definition~\ref{pmrpwst},
where \emph{only} the infinite repetition of a \emph{single measurement} is treated,
to this scenario of the setting of measurements.
Therefore, according to Definition~\ref{pmrpwst},
we see that a \emph{world} is an infinite sequence over $\Omega\times\Theta$ and
the
\emph{measure representation for the prefixes of
worlds}
is given by a function $r\colon(\Omega\times\Theta)^*\to[0,1]$ with
$$r((m_1,l_1)\dotsc (m_n,l_n)):=\prod_{k=1}^n\bra{\Psi}E_{m_k}F_{l_k}E_{m_k}\ket{\Psi},$$
which is the square of the norm of each state
$$(F_{l_1}E_{m_1}\ket{\Psi})\otimes\dots\otimes(F_{l_{n}}E_{m_{n}}\ket{\Psi})
\otimes\ket{\Phi[(m_1,l_1)\dots (m_{n},l_{n})]}$$
in the superposition~\eqref{total_systemAB}.
It follows from \eqref{mr} that
the probability measure induced by the probability measure representation $r$ is
a Bernoulli measure $\lambda_P$ on $(\Omega\times\Theta)^\infty$,
where $P$ is a finite probability space on $\Omega\times\Theta$ such that
\begin{equation}\label{Pm=ipMS}
  P(m,l)=\bra{\Psi}E_mF_lE_m\ket{\Psi}
\end{equation}
for every $m\in\Omega$ and $l\in\Theta$.
The finite records $(m_1,l_1)\dots (m_n,l_n)\in(\Omega\times\Theta)^*$
in each state $\ket{\Phi[(m_1,l_1)\dots (m_n,l_n)]}$
in the superposition~\eqref{total_systemAB} of the total system is a \emph{prefix} of a world.

\subsection{Application of the principle of typicality}
\label{APOT}

Now, let us
apply Postulate~\ref{POT} to the setting developed above.

Let $\omega$ be our world in the infinite repetition of the measurements of $A$ and $B$
in the above setting.
Then $\omega$ is an infinite sequence over $\Omega\times\Theta$.
Since the Bernoulli measure $\lambda_P$ on $(\Omega\times\Theta)^\infty$ is
the probability measure induced by the
measure representation
$r$
for the prefixes of
worlds
in the above setting,
it follows from Postulate~\ref{POT} that \emph{$\omega$ is Martin-L\"of random
with respect to the measure $\lambda_P$ on $(\Omega\times\Theta)^\infty$}.
Therefore, \emph{$\omega$ is Martin-L\"of $P$-random,
where $P$ is the finite probability space on $\Omega\times\Theta$ satisfying \eqref{Pm=ipMS}}.

Let $\alpha$ and $\beta$ be infinite sequences over $\Omega$ and $\Theta$, respectively,
such that $(\alpha(n),\beta(n))=\omega(n)$ for every $n\in\N^+$.
The sequence $\alpha$ is \emph{the infinite sequence of records of the values of the observable $A$
in the corresponding apparatuses measuring $A$ in our world}.
A prefix of $\alpha$ is the finite records $m_1\dots m_n$ in a specific state
$\ket{\Phi_A[m_1]}\otimes\dots\otimes\ket{\Phi_A[m_n]}$ of the
apparatuses,
each of which has measured the observable $A$ of the corresponding copy of $\mathcal{S}$,
in the superposition~\eqref{total_systemAB} of the total system.
Put briefly, the infinite sequence $\alpha$ is \emph{the infinite sequence
of outcomes of the infinitely repeated measurements of the observable $A$
over the infinite copies of $\mathcal{S}$ in our world}.
On the other hand, the sequence $\beta$ is
\emph{the infinite sequence of records of the values of the observable $B$
in the corresponding apparatuses measuring $B$ in our world}.
A prefix of $\beta$ is the finite records $l_1\dots l_n$ in a specific state
$\ket{\Phi_B[l_1]}\otimes\dots\otimes\ket{\Phi_B[l_n]}$ of the
apparatuses,
each of which has measured the observable $B$ of the corresponding copy of $\mathcal{S}$,
in the superposition~\eqref{total_systemAB} of the total system.
Put briefly, the infinite sequence $\beta$ is  \emph{the infinite sequence
of outcomes of the infinitely repeated measurements of the observable $B$
over the infinite copies of $\mathcal{S}$ in our world}.

Using Theorem~\ref{nonzero-prob-appears}, we
note that $P(\omega(n))>0$ for every $n\in\N^+$.
Thus, it follows from \eqref{Pm=ipMS} that
\begin{equation}\label{non-zero-inner-productMS}
  \bra{\Psi}E_{\alpha(n)}F_{\beta(n)}E_{\alpha(n)}\ket{\Psi}>0
\end{equation}
for every $n\in\N^+$.

By Theorem~\ref{contraction2} we have that $\alpha$ is Martin-L\"of $Q$-random, where
$Q$ is a finite probability space on $\Omega$
such that $Q(m):=\sum_{l\in\Theta}P(m,l)$ for every $m\in\Omega$.
Since
$\sum_{l\in\Theta} F_l=I$,
it follows from \eqref{Pm=ipMS} that
\begin{equation}\label{Pm=ipMSQ}
  Q(m)=\bra{\Psi}E_m\ket{\Psi}
\end{equation}
for every $m\in\Omega$, as expected from the points of view of
both the conventional quantum mechanics and Postulate~\ref{Tadaki-rule}.
Note from Theorem~\ref{nonzero-prob-appears} that $Q(\alpha(n))>0$ for every $n\in\N^+$.
Thus, it follows from \eqref{Pm=ipMSQ} that
\begin{equation}\label{non-zero-inner-productMSQ}
  \bra{\Psi}E_{\alpha(n)}\ket{\Psi}>0
\end{equation}
for every $n\in\N^+$.%
\footnote{The inequality \eqref{non-zero-inner-productMSQ} can be directly derived
from the inequality \eqref{non-zero-inner-productMS},
since the equality $\bra{\Psi}E_{\alpha(n)}\ket{\Psi}=0$ implies the equality $E_{\alpha(n)}\ket{\Psi}=0$.
However, we can prove this inequality certainly using Theorem~\ref{nonzero-prob-appears}.
}

\subsection{\boldmath Mixed state resulting from the measurements of the observable $A$}
\label{MSRFMOA}

We calculate the mixed state resulting from the measurements of the observable $A$.

Let $n$ be an arbitrary positive integer.
In the superposition~\eqref{total_systemA} of the total system consisting of
the first $n$ copies of the system $\mathcal{S}$ and the two apparatuses measuring $A$ and $B$
immediately after the measurement of $A$ over the $n$th copy of the system $\mathcal{S}$,
consider the specific state
\begin{equation}\label{specific_stateMS}
\begin{split}
  &(F_{l_1}E_{m_1}\ket{\Psi})\otimes\dots\otimes(F_{l_{n-1}}E_{m_{n-1}}\ket{\Psi})
  \otimes(E_{m_n}\ket{\Psi}) \\
  &\otimes\ket{\Phi_A[m_1]}\otimes\ket{\Phi_B[l_1]}\otimes\dots
  \otimes\ket{\Phi_A[m_{n-1}]}\otimes\ket{\Phi_B[l_{n-1}]}
  \otimes\ket{\Phi_A[m_{n}]}\otimes\ket{\Phi_B^{\mathrm{init}}}
\end{split}
\end{equation}
such that $(m_{k},l_{k})=\omega(k)$ for every $k=1,\dots,n-1$ and $m_n=\alpha(n)$.
Due to \eqref{non-zero-inner-productMS} and \eqref{non-zero-inner-productMSQ},
we note here that the vector \eqref{specific_stateMS} is non-zero and therefore can be
a state vector certainly, when it is normalized.
Since $\omega$ is our world, the state \eqref{specific_stateMS} is
\emph{the state of the total system, consisting of
the first $n$ copies of the system $\mathcal{S}$ and the two apparatuses measuring $A$ and $B$,
that we have perceived
after the $n$th measurement of $A$}.
Therefore, \emph{we perceive}
$$\frac{E_{m_n}\ket{\Psi}}{\sqrt{\bra{\Psi}E_{m_n}\ket{\Psi}}}$$
as the state of the $n$th copy of the system $\mathcal{S}$ at this time. 
Note that $m_n=\alpha(n)$ and $n$ is arbitrary.
Thus, for every positive integer $n$, the state of the $n$th copy of the system $\mathcal{S}$
immediately after the measurement of $A$ over it \emph{in our world} is given by
\begin{equation}\label{def-u-a}
  \ket{\Upsilon^A_n}:=\frac{E_{\alpha(n)}\ket{\Psi}}{\sqrt{\bra{\Psi}E_{\alpha(n)}\ket{\Psi}}}.
\end{equation}
Since $E_mE_{m'}=\delta_{m,m'}E_m$,
note that
\begin{equation}\label{ipdMSQ}
  \braket{\Upsilon^A_n}{\Upsilon^A_{n'}}=\delta_{\alpha(n),\alpha(n')}.
\end{equation}

Let $\Gamma^A:=\{\ket{\Upsilon^A_n}\mid n\in\N^+\}$,
and let $\gamma_A$ be an infinite sequence over $\Gamma^A$ such that
$\gamma_A(n):=\ket{\Upsilon^A_n}$ for every $n\in\N^+$.
Note that $\Gamma^A$ is
an alphabet.
Since $\alpha$ is Martin-L\"of $Q$-random,
it follows from \eqref{ipdMSQ} and Theorem~\ref{nonzero-prob-appears} that
the infinite sequence $\gamma_A$ is Martin-L\"of $Q'$-random,
where $Q'$ is a finite probability space on $\Gamma^A$ such that
\begin{equation}\label{Q'U=Qa}
  Q'(\ket{\Upsilon^A_n})=Q(\alpha(n))
\end{equation}
for every $n\in\N^+$.
Thus, according to (i) of Definition~\ref{def-mixed-state},
the infinite sequence $\gamma_A$ is a mixed state of $\mathcal{S}$.
The mixed state $\gamma_A$ can be
interpreted
as
\emph{the infinite sequence of states of $\mathcal{S}$
resulting from the infinitely repeated measurements of the observable $A$
over the infinite copies of $\mathcal{S}$}.
Note here that the mixed state $\gamma_A$ is
an infinite sequence over \emph{mutually orthogonal} pure states due to \eqref{ipdMSQ}.

Then, according to (ii) of Definition~\ref{def-mixed-state},
the density matrix $\rho_A$ of $\gamma_A$ is given by
\begin{equation*}
  \rho_A:=\sum_{\ket{\psi}\in \Gamma^A}Q'(\ket{\psi})\ket{\psi}\bra{\psi}.
\end{equation*}
Note from \eqref{Q'U=Qa} and \eqref{Pm=ipMSQ} that
$Q'(\ket{\Upsilon^A_n})=\bra{\Psi}E_{\alpha(n)}\ket{\Psi}$ for every $n\in\N^+$.
Thus, it follows from Theorem~\ref{nonzero-prob-appears}, \eqref{def-u-a}, and \eqref{ipdMSQ} that
$$\rho_A=\sum_{m:Q(m)>0}E_m\ket{\Psi}\bra{\Psi}E_m,$$
where the sum is over all $m\in\Omega$ such that $Q(m)>0$.
Hence, using \eqref{Pm=ipMSQ} we have that
\begin{equation}\label{rhoA}
  \rho_A=\sum_{m\in\Omega}E_m\ket{\Psi}\bra{\Psi}E_m,
\end{equation}
as expected from the point of view of the conventional quantum mechanics.

\subsection{\boldmath The infinite sequence $\beta$ of measurement outcomes of the observable $B$}
\label{ISBMOOB}

We determine the property of
the infinite sequence $\beta$ of records of the values of the observable $B$
in the
corresponding
apparatuses measuring $B$ in our world, in terms of the observable $B$ and
the mixed state $\gamma_A$ resulting from the measurements of the observable $A$ in our world.

Using a theorem symmetrical to Theorem~\ref{contraction2}
we have that $\beta$ is Martin-L\"of $R$-random,
where $R$ is a finite probability space on $\Theta$ such that
$R(l):=\sum_{m\in\Omega}P(m,l)$ for every $l\in\Theta$.
It follows from \eqref{Pm=ipMS} and \eqref{rhoA} that
\begin{equation}\label{Pm=ipMSR}
  R(l)=\sum_{m\in\Omega}\bra{\Psi}E_mF_lE_m\ket{\Psi}=\tr(F_l\rho_A)
\end{equation}
for every $l\in\Theta$, as expected from the point of view of the conventional quantum mechanics,
i.e., Postulate~\ref{Born-rule2},
since $\beta$ is
the infinite sequence of records of the values of the observable $B$ in the corresponding apparatuses
measuring $B$ in our world.
Note from Theorem~\ref{nonzero-prob-appears} that $R(\beta(n))>0$ for every $n\in\N^+$.
Therefore, we have
that
\begin{equation}\label{non-zero-inner-productMSR}
  \tr(F_{\beta(n)}\rho_A)>0
\end{equation}
for every $n\in\N^+$.

\subsection{\boldmath Mixed state resulting from the measurements of the observable $B$}

It is worthwhile to calculate the mixed state resulting from the measurements of the observable $B$
although it is not necessary for deriving Postulate~\ref{POT}.

Let $n$ be an arbitrary positive integer.
In the superposition~\eqref{total_systemAB0} of the total system consisting of
the first $n$ copies of the system $\mathcal{S}$ and the two apparatuses measuring $A$ and $B$
immediately after the measurement of $B$ over the $n$th copy of the system $\mathcal{S}$,
consider the specific state
\begin{equation}\label{specific_stateMSAB}
  (F_{l_1}E_{m_1}\ket{\Psi})\otimes\dots\otimes(F_{l_{n}}E_{m_{n}}\ket{\Psi})
  \otimes\ket{\Phi_A[m_1]}\otimes\ket{\Phi_B[l_1]}\otimes\dots
  \otimes\ket{\Phi_A[m_{n}]}\otimes\ket{\Phi_B[l_{n}]}
\end{equation}
such that $(m_{k},l_{k})=\omega(k)$ for every $k=1,\dots,n$.
Due to \eqref{non-zero-inner-productMS},
we note here that the vector \eqref{specific_stateMSAB} is non-zero and therefore can be
a state vector certainly, when it is normalized.
Since $\omega$ is our world, the state \eqref{specific_stateMSAB} is
\emph{the state of the total system, consisting of
the first $n$ copies of the system $\mathcal{S}$ and the two apparatuses measuring $A$ and $B$,
that we have perceived after the $n$th measurement of $B$}.
Therefore, \emph{we perceive
$$\frac{F_{l_n}E_{m_n}\ket{\Psi}}{\sqrt{\bra{\Psi}E_{m_n}F_{l_n}E_{m_n}\ket{\Psi}}}$$
as the state of the $n$th copy of the system $\mathcal{S}$ at this time}. 
Note that $m_n=\alpha(n)$ and $l_n=\beta(n)$, and $n$ is arbitrary.
Thus, for every positive integer $n$, the state of the $n$th copy of the system $\mathcal{S}$
immediately after the measurement of $B$ over it \emph{in our world} is given by
\begin{equation}\label{mrMSB}
  \ket{\Upsilon^B_n}
  :=\frac{F_{\beta(n)}E_{\alpha(n)}\ket{\Psi}}{\sqrt{\bra{\Psi}E_{\alpha(n)}F_{\beta(n)}E_{\alpha(n)}\ket{\Psi}}}.
\end{equation}

Let $\Gamma^B:=\{\ket{\Upsilon^B_n}\mid n\in\N^+\}$,
and let $\gamma_B$ be an infinite sequence over $\Gamma^B$ such that
$\gamma_B(n):=\ket{\Upsilon^B_n}$ for every $n\in\N^+$.
Note that $\Gamma^B$ is an alphabet.
For each $\ket{\psi}\in\Gamma^B$,
we define $S(\ket{\psi})$ as the set
$$\{(\alpha(n),\beta(n))\mid\ket{\psi}=\ket{\Upsilon^B_n}\}.$$
Since $\omega$ is Martin-L\"of $P$-random,
it follows from
Theorem~\ref{nonzero-prob-appears} and
Theorem~\ref{contraction} that
the infinite sequence $\gamma_B$ is Martin-L\"of $P'$-random,
where $P'$ is a finite probability space on $\Gamma^B$ such that
\begin{equation}\label{P'U=Pa}
  P'(\ket{\psi})=\sum_{x\in S(\ket{\psi})}P(x)
\end{equation}
for every $\ket{\psi}\in\Gamma^B$.
Thus, according to (i) of Definition~\ref{def-mixed-state},
the infinite sequence $\gamma_B$ is a mixed state of $\mathcal{S}$.
The mixed state $\gamma_B$ can be
interpreted
as
\emph{the infinite sequence of states of $\mathcal{S}$
resulting from the infinitely repeated measurements of the observable $B$
over the infinite copies of $\mathcal{S}$}.

Then, according to (ii) of Definition~\ref{def-mixed-state},
the density matrix $\rho_B$ of $\gamma_B$ is given by
\begin{equation*}
  \rho_B:=\sum_{\ket{\psi}\in\Gamma^B}P'(\ket{\psi})\ket{\psi}\bra{\psi}.
\end{equation*}
Note from \eqref{P'U=Pa} and \eqref{Pm=ipMS} that
$$P'(\ket{\psi})=\sum_{(m,l)\in S(\ket{\psi})}\bra{\Psi}E_mF_lE_m\ket{\Psi}$$
for every $\ket{\psi}\in\Gamma^B$.
Thus,
since $\omega$ is Martin-L\"of $P$-random,
using
Theorem~\ref{nonzero-prob-appears}
and \eqref{mrMSB}
we have that
\begin{equation*}
  \rho_B
  =\sum_{(m,l):P(m,l)>0}F_lE_m\ket{\Psi}\bra{\Psi}E_mF_l,
\end{equation*}
where the sum is over all $(m,l)\in\Omega\times\Theta$ such that $P(m,l)>0$.
It follows from \eqref{Pm=ipMS} that
\begin{equation*}
  \rho_B=\sum_{(m,l)\in\Omega\times\Theta}F_lE_m\ket{\Psi}\bra{\Psi}E_mF_l.
\end{equation*}
Hence, using \eqref{rhoA} we have that
\begin{equation*}
  \rho_B=\sum_{l\in\Theta} F_l\rho_A F_l,
\end{equation*}
as expected from the point of view of the conventional quantum mechanics.

\subsection{\boldmath Mixed state conditioned by a specific outcome of the measurements of $B$}
\label{MSCSOMB}

We calculate the mixed state resulting from the measurements of the observable $B$,
conditioned by a specific measurement outcome of $B$.
This mixed state is the post-measurement state
mentioned in the last sentence of Postulate~\ref{Tadaki-rule2}.

Let $\Lambda:=\{\beta(n)\mid n\in\N^+\}$,
which is the set of all
measurement outcomes of $B$ appearing in our world.
Let $l_0$ be an arbitrary element of $\Lambda$, and let $C:=\{(m,l_0)\mid m\in\Omega\}$.
Since $\omega$ is Martin-L\"of $P$-random,
it follows from Theorem~\ref{nonzero-prob-appears} that $P(C)>0$.
Let $\delta:=\cond{C}{\omega}$.
Recall
that $\cond{C}{\omega}$ is defined as an infinite sequence in $C^\infty$ obtained from
$\omega$ by eliminating all elements of $(\Omega\times\Theta)\setminus C$ occurring in $\omega$.
In other words, $\delta$ is the subsequence of $\omega$ such that
the outcomes of measurements of $B$ equal to $l_0$.
We assume that for each $k\in\N^+$,
the $k$th element of $\delta$ is originally the $n_k$th element of $\omega$ before the elimination.
It follows that $\delta(k)=(\alpha(n_k),l_0)$ for every $k\in\N^+$.
Since $\omega$ is Martin-L\"of $P$-random,
using Theorem~\ref{conditional_probability} we have that
$\delta$ is Martin-L\"of $P_C$-random for the finite probability space $P_C$ on $C$.

Let $\zeta$ be an infinite sequence over $\Omega$ such that $\zeta(k):=\alpha(n_k)$ for every $k\in\N$.
The
sequence $\zeta$ can be phrased as \emph{the infinite sequence of outcomes of the
measurements of the observable $A$,
resulting from the infinitely repeated measurements of the observable $B$
following the measurement of the observable $A$,
over the infinite copies of $\mathcal{S}$ in our world,
conditioned by the specific measurement outcome $l_0$ of $B$}.
Since $\delta$ is Martin-L\"of $P_C$-random,
it is easy to show that $\zeta$ is Martin-L\"of $P_{l_0}$-random,
where $P_{l_0}$ is a finite probability space on $\Omega$ such that
$P_{l_0}(m)=P_C(m,l_0)$ for every $m\in\Omega$.
We see that
\begin{equation}\label{Pl0m=tr}
  P_{l_0}(m)=P_C(m,l_0)=\frac{P(m,l_0)}{\sum_{m'\in\Omega}P(m',l_0)}
  =\frac{\bra{\Psi}E_mF_{l_0}E_m\ket{\Psi}}{\tr(F_{l_0}\rho_A)}
\end{equation}
for each $m\in\Omega$, where the last equality follows from \eqref{Pm=ipMS} and \eqref{Pm=ipMSR}.
Note that the denominator $\tr(F_{l_0}\rho_A)$ on the most right-hand side of \eqref{Pl0m=tr} is certainly
non-zero due to \eqref{non-zero-inner-productMSR}.

Let $\mu$ be a subsequence sequence of $\gamma_B$ such that
$\mu(k):=\gamma_B(n_k)$ for every $k\in\N^+$.
Then, it follows from \eqref{mrMSB} that
\begin{equation}\label{mu=k=FEP}
  \mu(k)=\ket{\Upsilon^B_{n_k}}
  =\frac{F_{l_0}E_{\zeta(k)}\ket{\Psi}}{\sqrt{\bra{\Psi}E_{\zeta(k)}F_{l_0}E_{\zeta(k)}\ket{\Psi}}}
\end{equation}
for every $k\in\N^+$.
Let $\Xi^B:=\{\mu(k)\mid k\in\N^+\}$.
Note that $\Xi^B$ is an alphabet.
For each $\ket{\psi}\in\Xi^B$,
we define $T(\ket{\psi})$ as the set
$\{\zeta(k)\mid\ket{\psi}=\mu(k)\}$.
Since $\zeta$ is Martin-L\"of $P_{l_0}$-random,
it follows from Theorem~\ref{nonzero-prob-appears} and Theorem~\ref{contraction}
that the infinite sequence $\mu$ is Martin-L\"of $P_{l_0}'$-random,
where $P_{l_0}'$ is a finite probability space on $\Xi^B$ such that
\begin{equation}\label{Pl0'mk=Pl0zk2fb}
  P_{l_0}'(\ket{\psi})=\sum_{m\in T(\ket{\psi})}P_{l_0}(m)
\end{equation}
for every $\ket{\psi}\in\Xi^B$.
Thus, according to (i) of Definition~\ref{def-mixed-state},
the infinite sequence $\mu$ is a mixed state of $\mathcal{S}$.
The mixed state $\mu$
can be
phrased
as
\emph{the infinite sequence of states of $\mathcal{S}$
resulting from the infinitely repeated measurements of the observable $B$
over the infinite copies of $\mathcal{S}$, conditioned by the specific measurement outcome $l_0$ of $B$}.

Then, according to (ii) of Definition~\ref{def-mixed-state},
the density matrix $\rho_{l_0}$ of $\mu$ is given by
\begin{equation*}
  \rho_{l_0}:=\sum_{\ket{\psi}\in\Xi^B}P_{l_0}'(\ket{\psi})\ket{\psi}\bra{\psi}.
\end{equation*}
Thus, since $\zeta$ is Martin-L\"of $P_{l_0}$-random,
using Theorem~\ref{nonzero-prob-appears}, \eqref{mu=k=FEP}, and \eqref{Pl0'mk=Pl0zk2fb}
we have that
\begin{equation*}
  \rho_{l_0}
  =\sum_{m:P_{l_0}(m)>0}
  P_{l_0}(m)\frac{F_{l_0}E_m\ket{\Psi}\bra{\Psi}E_mF_{l_0}}{\bra{\Psi}E_{m}F_{l_0}E_{m}\ket{\Psi}},
\end{equation*}
where the sum is over all $m\in\Omega$ such that $P_{l_0}(m)>0$.
It follows from \eqref{Pl0m=tr} that
\begin{equation*}
  \rho_{l_0}=\sum_{m:P_{l_0}(m)>0}\frac{F_{l_0}E_m\ket{\Psi}\bra{\Psi}E_mF_{l_0}}{\tr(F_{l_0}\rho_A)}.
\end{equation*}
Note from \eqref{Pl0m=tr} that, for every $m\in\Omega$,
$P_{l_0}(m)=0$ if and only if $F_{l_0}E_m\ket{\Psi}=0$.
Thus, using \eqref{rhoA} we finally have that
\begin{equation*}
  \rho_{l_0}
  =\sum_{m\in\Omega}\frac{F_{l_0}E_m\ket{\Psi}\bra{\Psi}E_mF_{l_0}}{\tr(F_{l_0}\rho_A)}
  =\frac{F_{l_0}\rho_A F_{l_0}}{\tr(F_{l_0}\rho_A)}.
\end{equation*}

Recall
that $l_0$ is an arbitrary element of $\Lambda$.
Hence, in summary, we see that for every $l\in\Lambda$,
\begin{equation}\label{post-measuremnt-stateD1}
  \rho_{l}:=\frac{F_{l}\rho_A F_{l}}{\tr(F_{l}\rho_A)}
\end{equation}
is the density matrix of
\emph{the mixed state resulting from the infinitely repeated measurements of the observable $B$
following the measurement of the observable $A$,
over the infinite copies of $\mathcal{S}$ in our world,
conditioned by the specific measurement outcome $l$ of $B$}.
The result is just as expected from the aspect of the conventional quantum mechanics, i.e.,
Postulate~\ref{Born-rule2}.

\subsection{Derivation of Postulate~\ref{Tadaki-rule2}}

At last,
we show that Postulate~\ref{Tadaki-rule2} can be derived from Postulate~\ref{POT}
together with Postulates~\ref{state_space}, \ref{composition}, and \ref{evolution}
in the setting developed
in the preceding subsections.

For deriving Postulate~\ref{Tadaki-rule2},
we consider the quantum measurements described by the observable $B$
introduced in Subsection~\ref{ROM}
in
the above setting,
as the quantum measurements described by the observable $M$ in Postulate~\ref{Tadaki-rule2}.
Recall that the spectrum of $B$ is $\Theta$.
Then the spectrum decomposition \eqref{spectral-decompositionB} of $B$ is rewritten as
$$B=\sum_{l\in\Theta} l F_l,$$
in the same form as
stated
for $M$ in Postulate~\ref{Tadaki-rule2}.

Suppose that the measurements of the observable $B$ are
repeatedly performed over a mixed state with a density matrix $\rho$,
as assumed in Postulate~\ref{Tadaki-rule2}.
In the setting developed in the preceding subsections,
this mixed state is the infinite sequence $\gamma_A$ over $\Gamma^A$
and its density matrix $\rho$ is the density matrix $\rho_A$ of the mixed state $\gamma_A$,
according to the arguments in Subsection~\ref{MSRFMOA}.
The infinite sequence of outcomes generated by the measurements,
which is mentioned in Postulate~\ref{Tadaki-rule2},
is the infinite sequence $\beta$ over $\Theta$ in our world $\omega$,
according to the arguments in Subsection~\ref{APOT}.
Based on this identification, we can see the following:

First, since $\beta$ is an infinite sequence over $\Theta$,
the set of possible outcomes of the measurement is the spectrum $\Theta$ of $B$,
as stated in Postulate~\ref{Tadaki-rule2} as one of its conclusions.

Secondly, recall that $\beta$ is Martin-L\"of $R$-random,
where $R$ is the finite probability space on $\Theta$ satisfying \eqref{Pm=ipMSR},
according to the arguments in Subsection~\ref{ISBMOOB}.
Thus, since $\rho_A=\rho$, we see that \emph{in our world},
the infinite sequence of outcomes generated by the measurements,
which is mentioned in Postulate~\ref{Tadaki-rule2},
is a Martin-L\"of $V$-random infinite sequence over $\Theta$,
where $V$ is a finite probability space on $\Theta$ such that
$$V(l)=\tr(F_l\rho)$$
for every $l\in\Theta$,
as stated in Postulate~\ref{Tadaki-rule2} as one of its conclusions.

Thirdly, according to the arguments in Subsection~\ref{MSCSOMB},
the set of all outcomes of the measurements of $B$ in our world is $\Lambda$,
and for every $l\in\Lambda$,
the density matrix of the mixed state
resulting from the infinitely repeated measurements of $B$
in our world,
conditioned by the specific measurement outcome $l$ of $B$,
is $\rho_l$ given by \eqref{post-measuremnt-stateD1}.
Since $\rho_A=\rho$,
this means that
\emph{in our world},
the resulting
sequence of pure states with outcome $l$ is
a mixed state with the density matrix
\begin{equation*}
  \frac{F_l\rho F_l}{\tr(F_l\rho)},
\end{equation*}
as stated in the last sentence of Postulate~\ref{Tadaki-rule2} as one of its conclusions.

Hence, we have derived Postulate~\ref{Tadaki-rule2} from Postulate~\ref{POT},
the principle of typicality,
together with Postulates~\ref{state_space}, \ref{composition}, and \ref{evolution}
in the setting developed so far in the preceding subsections,
i.e., in \emph{the simplest scenario}
where a mixed state being measured is
an
infinite sequence over \emph{mutually orthogonal} pure states.

\section{Derivation II of Postulate~\ref{Tadaki-rule2} from the principle of typicality}
\label{POT-mixed-states2}

According to (i) of Definition~\ref{def-mixed-state},
a mixed state is commonly an infinite sequence over mutually \emph{non}-orthogonal pure states.
Thus, the setting investigated in Section~\ref{POT-mixed-states} deals with a special case regarding
the form of a mixed state being measured.
This setting is the simplest one.
We can make the mixed state being measured,
which is denoted by $\gamma_A$ in Section~\ref{POT-mixed-states},
into a \emph{general form}
by introducing an ancilla system $\mathcal{S}_a$ in addition to the original system $\mathcal{S}$,
preparing the composite system of $\mathcal{S}$ and $\ancilla$ in an appropriate entangled state,
and then performing the measurement of $A$ over $\ancilla$ instead of $\mathcal{S}$.
We can derive Postulate~\ref{POT} in this generalized setting,
as well.
In this section, we do this.

First, we explain this idea \emph{in the terminology of the conventional quantum mechanics}.
The argument here is based on Nielsen and Chuang \cite[Exercise 2.82]{NC00}.
Let $\ket{\Psi_1},\dots,\ket{\Psi_n}$ be arbitrary states of a system $\mathcal{S}$,
and let $p_1,\dots,p_n$ be
non-negative
reals with $p_1+\dots +p_n=1$.
We introduce
an ancilla
system $\ancilla$ whose orthonormal basis is $\ket{b_1},\dots,\ket{b_n}$.
Then, consider a state
$$\ket{\Psi_{\mathrm{comp}}}:=\sum_{i=1}^n\sqrt{p_i}\ket{b_i}\otimes\ket{\Psi_i}$$
of the composite system consisting of $\ancilla$ and $\mathcal{S}$.
If we perform a measurement in the basis $\{\ket{b_i}\}$ over the system $\ancilla$,
in the state $\ket{\Psi_{\mathrm{comp}}}$ of the composite system,
then the resulting state of the system $\mathcal{S}$ is a ``mixed state''
where each state $\ket{\Psi_i}$ appears with ``probability'' $p_i$,
and its density matrix $\rho$ is given by
$$\rho:=\sum_{i=1}^n p_i\ket{\Psi_i}\bra{\Psi_i}.$$
Thus, in this manner, we can ``prepare'' a ``mixed state''
in a general form.

In a similar manner, we can prepare an arbitrary mixed state which is an infinite sequence
over \emph{not necessarily mutually orthogonal} pure states,
in our rigorous setting based on Definition~\ref{def-mixed-state}.
Thus, in this section we derive Postulate~\ref{Tadaki-rule2} from Postulate~\ref{POT}
together with Postulates~\ref{state_space}, \ref{composition}, and \ref{evolution}
in this generalized scenario where the mixed state being measured is an infinite sequence over
not necessarily mutually orthogonal pure states.
For that purpose, we consider more complicated interaction between systems and
apparatuses than one used for deriving Postulate~\ref{Tadaki-rule2} in the preceding section.
In the setting, we also have to perform measurements on the mixed state
while generating it by other measurements,
in a similar manner to the preceding section.

Thus, for deriving Postulate~\ref{Tadaki-rule2},
we consider an infinite repetition of two successive measurements of observables $A$ and $B$
where the measurements of $A$ are performed over each of countably infinite copies of
an ancilla system $\ancilla$ and
the measurements of $B$ are performed over each of countably infinite copies of
a system $\mathcal{S}$.
Each of the countably infinite copies of the composite system consisting of $\mathcal{S}$ and $\ancilla$
is prepared in an identical (entangled) initial state $\ket{\Psi_\mathrm{comp}}$.
In the setting, while generating a mixed state by the measurements of $A$,
we are performing the measurements of $B$ over the mixed state,
as we did in the
setting
of the preceding section.

\subsection{Repeated once of measurements}
\label{d2ROM}

First, the repeated once of the infinite repetition of measurements, i.e.,
the two successive measurements of observables $A$ and $B$,
is described as follows.

Let $\mathcal{S}$ be an arbitrary quantum system with state space $\mathcal{H}$ of finite dimension $L$,
and let $\ancilla$ be another arbitrary quantum system with state space $\mathcal{H}^{\mathrm{a}}$
of finite dimension $K$.
Consider arbitrary two measurements over $\ancilla$ and $\mathcal{S}$
described by observables $A$ and $B$, respectively.
Suppose that the observable $A$ is \emph{non-degenerate}, i.e.,
every eigenspace of $A$ has a dimension of one.
Let $\Omega$ and $\Theta$ be the spectrums of $A$ and $B$, respectively.
Let
\begin{equation*}
  A=\sum_{k=1}^K f(k)\ket{\psi_k}\bra{\psi_k}
\end{equation*}
be a spectral decomposition of the observable $A$,
where $\{\ket{\psi_1},\dots,\ket{\psi_K}\}$ is an orthonormal basis of $\mathcal{H}^{\mathrm{a}}$ and
$f\colon\{1,\dots,K\}\to\Omega$ is a \emph{bijection},
and let
\begin{equation}\label{d2spectral-decompositionB}
  B=\sum_{\ell=1}^L g(\ell)\ket{\phi_{\ell}}\bra{\phi_{\ell}}
\end{equation}
be a spectral decomposition of the observable $B$,
where $\{\ket{\phi_1},\dots,\ket{\phi_L}\}$ is an orthonormal basis of $\mathcal{H}$ and
$g\colon\{1,\dots,L\}\to\Theta$ is a surjection.
According to Postulates~\ref{state_space}, \ref{composition}, and \ref{evolution},
the measurement processes of the observables $A$ and $B$ are described
by the following unitary operators $U_A$ and $U_B$, respectively:
\begin{equation} \label{d2single_measurementA}
  U_A\ket{\psi_k}\otimes\ket{\Phi_A^{\mathrm{init}}}=\ket{\psi_k}\otimes\ket{\Phi_A[f(k)]}
\end{equation}
for every $k=1,\dots,K$,
and
\begin{equation}  \label{d2single_measurementB}
  U_B\ket{\phi_{\ell}}\otimes\ket{\Phi_B^{\mathrm{init}}}=\ket{\phi_{\ell}}\otimes\ket{\Phi_B[g(\ell)]}
\end{equation}
for every $\ell=1,\dots,L$.
The vectors $\ket{\Phi_A^{\mathrm{init}}}$ and $\ket{\Phi_B^{\mathrm{init}}}$ are
the initial states of the apparatuses measuring $A$ and $B$, respectively, and
$\ket{\Phi_A[m]}$ and $\ket{\Phi_B[l]}$ are
the final states of
ones
for each $m\in\Omega$ and $l\in\Theta$,
with $\braket{\Phi_A[m]}{\Phi_A[m']}=\delta_{m,m'}$ and $\braket{\Phi_B[l]}{\Phi_B[l']}=\delta_{l,l'}$.
For every $m\in\Omega$, the state $\ket{\Phi_A[m]}$ indicates that
\emph{the apparatus measuring the observable $A$ of the system $\ancilla$
records the value $m$ of $A$}, and
for every $l\in\Theta$, the state $\ket{\Phi_B[l]}$ indicates that
\emph{the apparatus measuring the observable $B$ of the system $\mathcal{S}$
records the value $l$ of $B$}.

For each $m\in\Omega$, let $E_m$ be the projector onto the eigenspace of $A$ with eigenvalue $m$,
and for each $l\in\Theta$, let $F_l$ be the projector onto the eigenspace of $B$ with eigenvalue $l$.
Then, the equalities~\eqref{d2single_measurementA} and \eqref{d2single_measurementB} are rewritten,
respectively, as the forms that
\begin{equation}\label{d2single_measurementA2}
  U_A\ket{\Psi'}\otimes\ket{\Phi_A^{\mathrm{init}}}
  =\sum_{m\in\Omega}(E_m\ket{\Psi'})\otimes\ket{\Phi_A[m]}
\end{equation}
for every $\ket{\Psi'}\in\mathcal{H}^{\mathrm{a}}$, and
\begin{equation}\label{d2single_measurementB2}
  U_B\ket{\Psi}\otimes\ket{\Phi_B^{\mathrm{init}}}
  =\sum_{l\in\Theta}(F_l\ket{\Psi})\otimes\ket{\Phi_B[l]}
\end{equation}
for every $\ket{\Psi}\in\mathcal{H}$.

\subsection{\boldmath Infinite repetition of the measurements of the observables $A$ and $B$}

As in the preceding section,
we consider countably infinite copies of the system $\mathcal{S}$.
Moreover, we consider countably infinite copies of the ancilla system $\ancilla$.
We prepare each of
the
copies of the composite system of $\ancilla$ and $\mathcal{S}$
in an identical state,
and then perform the successive measurements of the observables $A$ and $B$ over
each of the copies of the composite system of $\ancilla$ and $\mathcal{S}$ \emph{one by one},
by interacting each of the copies of the composite system with
each of the copies of the apparatuses measuring $A$ and $B$
according to
the unitary time-evolution \eqref{d2single_measurementA2} and then
the unitary time-evolution \eqref{d2single_measurementB2}.

Let $\{\ket{\Psi_m}\}_{m\in\Omega}$ be
a collection of $K$ \emph{arbitrary} states of the system $\mathcal{S}$,
and let $\{p_m\}_{m\in\Omega}$ be a collection of $K$ \emph{arbitrary}
non-negative
reals such that
$$\sum_{m\in\Omega}p_m=1.$$
Note that $\#\Omega=K$, since the function $f\colon\{1,\dots,K\}\to\Omega$ is a bijection.
In what follows, based on $\{\ket{\Psi_m}\}_{m\in\Omega}$ and $\{p_m\}_{m\in\Omega}$
we consider a mixed state with the density matrix
$$\sum_{m\in\Omega} p_m\ket{\Psi_m}\bra{\Psi_m}$$
as an \emph{arbitrary mixed state of the system $\mathcal{S}$ being measured}.
Thus, without loss of generality, we can assume that
the collection $\{\ket{\Psi_m}\}_{m\in\Omega}$ is \emph{pair-wise linearly independent}, i.e.,
it satisfies that
$\ket{\Psi_m}$ and $\ket{\Psi_{m'}}$ are linearly independent
for every $m,m'\in\Omega$ with $m\neq m'$.
Furthermore, we choose a \emph{specific} collection $\{\ket{\theta_m}\}_{m\in\Omega}$ of $K$ states of
the ancilla system $\ancilla$ such that
each vector $\ket{\theta_m}$ is a unit vector in the eigenspace of $A$ with eigenvalue $m$.%
\footnote{%
Equivalently,
the collection  $\{\ket{\theta_m}\}_{m\in\Omega}$
has to
satisfy the condition that
for every $m\in\Omega$ there exists $\delta\in\R$ such that
$\ket{\theta_m}=e^{i\delta}\ket{\psi_{f^{-1}(m)}}$.}
We then consider a state $\ket{\Psi_{\mathrm{comp}}}$ of
the composite system of $\ancilla$ and $\mathcal{S}$ defined by
\begin{equation*}
  \ket{\Psi_{\mathrm{comp}}}:=\sum_{m\in\Omega}\sqrt{p_m}\ket{\theta_m}\otimes\ket{\Psi_m}.
\end{equation*}
Thus, we prepare each of the infinite copies of the composite system of $\ancilla$ and $\mathcal{S}$
in the identical state $\ket{\Psi_{\mathrm{comp}}}$
immediately before the measurement of the observable $A$ of the system $\ancilla$.

For each $n\in\N^+$,
let $\mathcal{H}_n$ be the state space of the total system consisting of
the first $n$ copies of the two systems $\ancilla$ and $\mathcal{S}$ and
the two apparatuses measuring $A$ and $B$.
The
successive interactions
between the copies of the two systems $\ancilla$ and $\mathcal{S}$
and the copies of the two apparatuses measuring $A$ and $B$,
as measurement processes,
proceed in the following manner.

The
starting
state of the total system,
which consists of the first copy of the two systems $\ancilla$ and $\mathcal{S}$ and
the two apparatuses measuring $A$ and $B$,
is $\ket{\Psi_{\mathrm{comp}}}\otimes\ket{\Phi_A^{\mathrm{init}}}\otimes\ket{\Phi_B^{\mathrm{init}}}\in\mathcal{H}_1$.
In this state of the total system,
the measurement of $A$ is performed over the first copy of the ancilla system $\ancilla$.
Immediately after
that,
the total system results in the state
\begin{align*}
  \sum_{m_1\in\Omega} \sqrt{p_{m_1}}\ket{\theta_{m_1}}\otimes\ket{\Psi_{m_1}}\otimes
  \ket{\Phi_A[m_1]}\otimes\ket{\Phi_B^{\mathrm{init}}} \in\mathcal{H}_1
\end{align*}
by the interaction~\eqref{d2single_measurementA2} as the measurement process of $A$.
Then, the measurement of $B$ is performed over the first copy of the system $\mathcal{S}$.
Immediately after that,
the total system results in the state
\begin{align*}
  \sum_{m_1\in\Omega}\sum_{l_1\in\Theta}
  \sqrt{p_{m_1}}\ket{\theta_{m_1}}\otimes(F_{l_1}\ket{\Psi_{m_1}})\otimes
  \ket{\Phi_A[m_1]}\otimes\ket{\Phi_B[l_1]} \in\mathcal{H}_1
\end{align*}
by the interaction~\eqref{d2single_measurementB2} as the measurement process of $B$.

In general,
immediately before the measurement of $A$ over the $n$th copy of
the ancilla system $\ancilla$,
the state of the total system,
which consists of the first $n$ copies of the two systems $\ancilla$ and $\mathcal{S}$ and
the two apparatuses measuring $A$ and $B$, is
\begin{align*}
  \sum_{m_1,\dots,m_{n-1}\in\Omega}&\sum_{l_1,\dots,l_{n-1}\in\Theta}
  \sqrt{p_{m_1}\dots p_{m_{n-1}}}
  \ket{\theta_{m_1}}\otimes(F_{l_1}\ket{\Psi_{m_1}})\otimes\dots\otimes
  \ket{\theta_{m_{n-1}}}\otimes(F_{l_{n-1}}\ket{\Psi_{m_{n-1}}}) \\
  &\otimes\ket{\Psi_{\mathrm{comp}}} \\
  &\otimes\ket{\Phi_A[m_1]}\otimes\ket{\Phi_B[l_1]}\otimes\dots
  \otimes\ket{\Phi_A[m_{n-1}]}\otimes\ket{\Phi_B[l_{n-1}]}
  \otimes\ket{\Phi_A^{\mathrm{init}}}\otimes\ket{\Phi_B^{\mathrm{init}}}
\end{align*}
in $\mathcal{H}_n$.
Then, immediately after the measurement of $A$ over the $n$th copy of the system $\ancilla$,
the total system results in the state
\begin{equation}\label{d2total_systemA}
\begin{split}
  \sum_{m_1,\dots,m_{n}\in\Omega}&\sum_{l_1,\dots,l_{n-1}\in\Theta}
  \sqrt{p_{m_1}\dots p_{m_n}}
  \ket{\theta_{m_1}}\otimes(F_{l_1}\ket{\Psi_{m_1}})\otimes\dots\otimes
  \ket{\theta_{m_{n-1}}}\otimes(F_{l_{n-1}}\ket{\Psi_{m_{n-1}}}) \\
  &\otimes\ket{\theta_{m_n}}\otimes\ket{\Psi_{m_n}} \\
  &\otimes\ket{\Phi_A[m_1]}\otimes\ket{\Phi_B[l_1]}\otimes\dots
  \otimes\ket{\Phi_A[m_{n-1}]}\otimes\ket{\Phi_B[l_{n-1}]}
  \otimes\ket{\Phi_A[m_{n}]}\otimes\ket{\Phi_B^{\mathrm{init}}}
\end{split}
\end{equation}
in $\mathcal{H}_n$ by the interaction \eqref{d2single_measurementA2},
as the measurement process of $A$,
between
the $n$th copy of the system $\ancilla$
and the $n$th copy of the apparatus measuring $A$ in the state $\ket{\Phi_A^{\mathrm{init}}}$.
Then,
immediately after the measurement of $B$ over the $n$th copy of the system $\mathcal{S}$,
the total system results in the state
\begin{align}
  &\sum_{m_1,\dots,m_{n}\in\Omega}\sum_{l_1,\dots,l_{n}\in\Theta}
  \sqrt{p_{m_1}\dots p_{m_n}}
  \ket{\theta_{m_1}}\otimes(F_{l_1}\ket{\Psi_{m_1}})\otimes\dots\otimes
  \ket{\theta_{m_n}}\otimes(F_{l_n}\ket{\Psi_{m_n}}) \nonumber \\
  &\hspace*{32mm}\otimes\ket{\Phi_A[m_1]}\otimes\ket{\Phi_B[l_1]}\otimes\dots
  \otimes\ket{\Phi_A[m_{n}]}\otimes\ket{\Phi_B[l_{n}]} \label{d2total_systemAB0} \\
  & \nonumber \\
  &=\sum_{m_1,\dots,m_{n}\in\Omega}\sum_{l_1,\dots,l_{n}\in\Theta}
  \sqrt{p_{m_1}\dots p_{m_n}}
  \ket{\theta_{m_1}}\otimes(F_{l_1}\ket{\Psi_{m_1}})\otimes\dots\otimes
  \ket{\theta_{m_n}}\otimes(F_{l_n}\ket{\Psi_{m_n}}) \nonumber \\
  &\hspace*{32mm}\otimes\ket{\Phi[(m_1,l_1)\dots (m_{n},l_{n})]} \label{2dtotal_systemAB}
\end{align}
in $\mathcal{H}_n$ by the interaction \eqref{d2single_measurementB2},
as the measurement process of $B$,
between the $n$th copy of the system $\mathcal{S}$
and the $n$th copy of the apparatus measuring $B$ in the state $\ket{\Phi_B^{\mathrm{init}}}$.
Here
the state $\ket{\Phi[(m_1,l_1)\dots (m_n,l_n)]}$ denotes
$\ket{\Phi_A[m_1]}\otimes\ket{\Phi_B[l_1]}\otimes\dots
\otimes\ket{\Phi_A[m_{n}]}\otimes\ket{\Phi_B[l_{n}]}$, and indicates that
\emph{the first $n$ copies of the two apparatuses measuring $A$ and $B$ record
the values $(m_1,l_1)\dots (m_n,l_n)$ of the observables $A$ and $B$, in this order,
of the first $n$ copies of the two systems $\ancilla$ and $\mathcal{S}$}.

Then, for applying Postulate~\ref{POT}, the principle of typicality,
we have to identify worlds and 
the
measure representation for the prefixes of
worlds
in this situation.
Note, from \eqref{d2single_measurementA2} and \eqref{d2single_measurementB2},
that the sequential application of $U_A$ and $U_B$ to the composite system
consisting of the two systems $\ancilla$ and $\mathcal{S}$
and the two apparatuses measuring $A$ and $B$ respectively over the initial state
$\ket{\Psi_{\mathrm{comp}}}\otimes\ket{\Phi_A^{\mathrm{init}}}\otimes\ket{\Phi_B^{\mathrm{init}}}$
results in the following single unitary time-evolution $U$:
\begin{align*}
  U\ket{\Psi_{\mathrm{comp}}}\otimes\ket{\Phi_A^{\mathrm{init}}}\otimes\ket{\Phi_B^{\mathrm{init}}}
  &=\sum_{m\in\Omega}\sum_{l\in\Theta}
  (E_{m}\otimes F_{l})\ket{\Psi_{\mathrm{comp}}}\otimes\ket{\Phi_A[m]}\otimes\ket{\Phi_B[l]} \\
  &=\sum_{m\in\Omega}\sum_{l\in\Theta}
  (E_{m}\otimes F_{l})\ket{\Psi_{\mathrm{comp}}}\otimes\ket{\Phi[(m,l)]},
\end{align*}
where $U=U_A\otimes U_B$.
It is then easy to check that
a collection $\{E_m\otimes F_l\}_{(m,l)\in\Omega\times\Theta}$ forms measurement operators.
Thus, the successive measurements of $A$ and $B$ in this order can be regarded
as the \emph{single measurement} which is described by the measurement operators
$\{E_m\otimes F_l\}_{(m,l)\in\Omega\times\Theta}$.
Hence, we can apply Definition~\ref{pmrpwst}
to this scenario of the setting of measurements.
Therefore, according to Definition~\ref{pmrpwst},
we see that a \emph{world} is an infinite sequence over $\Omega\times\Theta$ and
the
\emph{measure representation for the prefixes of
worlds}
is given by a function $r\colon(\Omega\times\Theta)^*\to[0,1]$ with
$$r((m_1,l_1)\dotsc (m_n,l_n)):=\prod_{k=1}^n p_{m_k}\bra{\Psi_{m_k}}F_{l_k}\ket{\Psi_{m_k}},$$
which is the square of the norm of each state
$$\sqrt{p_{m_1}\dots p_{m_n}}
\ket{\theta_{m_1}}\otimes(F_{l_1}\ket{\Psi_{m_1}})\otimes\dots\otimes
\ket{\theta_{m_n}}\otimes(F_{l_n}\ket{\Psi_{m_n}})\otimes\ket{\Phi[(m_1,l_1)\dots (m_{n},l_{n})]}$$
in the superposition~\eqref{2dtotal_systemAB}.
It follows from \eqref{mr} that
the probability measure induced by the probability measure representation $r$ is
a Bernoulli measure $\lambda_P$ on $(\Omega\times\Theta)^\infty$,
where $P$ is a finite probability space on $\Omega\times\Theta$ such that
\begin{equation}\label{2dPm=ipMS}
  P(m,l)=p_m\bra{\Psi_m}F_l\ket{\Psi_m}
\end{equation}
for every $m\in\Omega$ and $l\in\Theta$.
The finite records $(m_1,l_1)\dots (m_n,l_n)\in(\Omega\times\Theta)^*$
in each state $\ket{\Phi[(m_1,l_1)\dots (m_n,l_n)]}$
in the superposition~\eqref{2dtotal_systemAB} of the total system is a \emph{prefix} of a world.

\subsection{Application of the principle of typicality}
\label{d2APOT}

Now, let us
apply Postulate~\ref{POT} to the setting developed above.

Let $\omega$ be our world in the infinite repetition of the measurements of $A$ and $B$
in the above setting.
Then $\omega$ is an infinite sequence over $\Omega\times\Theta$.
Since the Bernoulli measure $\lambda_P$ on $(\Omega\times\Theta)^\infty$ is
the probability measure induced by
the
measure representation $r$
for the prefixes of
worlds
in the above setting,
it follows from Postulate~\ref{POT} that \emph{$\omega$ is Martin-L\"of random
with respect to the measure $\lambda_P$ on $(\Omega\times\Theta)^\infty$}.
Therefore, \emph{$\omega$ is Martin-L\"of $P$-random,
where $P$ is the finite probability space on $\Omega\times\Theta$ satisfying \eqref{2dPm=ipMS}}.

Let $\alpha$ and $\beta$ be infinite sequences over $\Omega$ and $\Theta$, respectively,
such that $(\alpha(n),\beta(n))=\omega(n)$ for every $n\in\N^+$.
The sequence $\alpha$ is
\emph{the infinite sequence of records of the values of the observable $A$
in the corresponding apparatuses measuring $A$ in our world}.
A prefix of $\alpha$ is the finite records $m_1\dots m_n$ in a specific state
$\ket{\Phi_A[m_1]}\otimes\dots\otimes\ket{\Phi_A[m_n]}$ of the
apparatuses,
each of which has measured the observable $A$ of the corresponding copy of $\ancilla$,
in the superposition~\eqref{2dtotal_systemAB} of the total system.
Put briefly, the infinite sequence $\alpha$ is \emph{the infinite sequence
of outcomes of the infinitely repeated measurements of the observable $A$
over the infinite copies of $\ancilla$ in our world}.
On the other hand, the sequence $\beta$ is
\emph{the infinite sequence of records of the values of the observable $B$
in the corresponding apparatuses measuring $B$ in our world}.
A prefix of $\beta$ is the finite records $l_1\dots l_n$ in a specific state
$\ket{\Phi_B[l_1]}\otimes\dots\otimes\ket{\Phi_B[l_n]}$ of the
apparatuses,
each of which has measured the observable $B$ of the corresponding copy of $\mathcal{S}$,
in the superposition~\eqref{2dtotal_systemAB} of the total system.
Put briefly, the infinite sequence $\beta$ is \emph{the infinite sequence
of outcomes of the infinitely repeated measurements of the observable $B$
over the infinite copies of $\mathcal{S}$ in our world}.

Using Theorem~\ref{nonzero-prob-appears}, we note that $P(\omega(n))>0$ for every $n\in\N^+$.
Thus, it follows from \eqref{2dPm=ipMS} that
\begin{equation}\label{2dnon-zero-inner-productMS}
  p_{\alpha(n)}\bra{\Psi_{\alpha(n)}}F_{\beta(n)}\ket{\Psi_{\alpha(n)}}>0
\end{equation}
for every $n\in\N^+$.

By Theorem~\ref{contraction2} we have that $\alpha$ is Martin-L\"of $Q$-random, where
$Q$ is a finite probability space on $\Omega$
such that $Q(m):=\sum_{l\in\Theta}P(m,l)$ for every $m\in\Omega$.
Since
$\sum_{l\in\Theta} F_l=I$ and $\ket{\Psi_m}$ is a unit vector,
it follows from \eqref{2dPm=ipMS} that
\begin{equation}\label{2dPm=ipMSQ}
  Q(m)=p_m
\end{equation}
for every $m\in\Omega$,
as expected from the point of view of the conventional quantum mechanics
presented at the beginning of this section.
Note from Theorem~\ref{nonzero-prob-appears} that $Q(\alpha(n))>0$ for every $n\in\N^+$.
Thus, it follows from \eqref{2dPm=ipMSQ} that
\begin{equation}\label{2dnon-zero-inner-productMSQ}
  p_{\alpha(n)}>0
\end{equation}
for every $n\in\N^+$.

\subsection{\boldmath Mixed state resulting from the measurements of the observable $A$}
\label{d2MSRFMOA}

We calculate the mixed state resulting from the measurements of the observable $A$.

Let $n$ be an arbitrary positive integer.
In the superposition~\eqref{d2total_systemA} of the total system consisting of
the first $n$ copies of the two systems $\ancilla$ and $\mathcal{S}$ and
the two apparatuses measuring $A$ and $B$
immediately after the measurement of $A$ over the $n$th copy of the system $\ancilla$,
consider the specific state
\begin{equation}\label{2dspecific_stateMS}
\begin{split}
  &\sqrt{p_{m_1}\dots p_{m_n}}
  \ket{\theta_{m_1}}\otimes(F_{l_1}\ket{\Psi_{m_1}})\otimes\dots\otimes
  \ket{\theta_{m_{n-1}}}\otimes(F_{l_{n-1}}\ket{\Psi_{m_{n-1}}}) \\
  &\otimes\ket{\theta_{m_n}}\otimes\ket{\Psi_{m_n}} \\
  &\otimes\ket{\Phi_A[m_1]}\otimes\ket{\Phi_B[l_1]}\otimes\dots
  \otimes\ket{\Phi_A[m_{n-1}]}\otimes\ket{\Phi_B[l_{n-1}]}
  \otimes\ket{\Phi_A[m_{n}]}\otimes\ket{\Phi_B^{\mathrm{init}}}
\end{split}
\end{equation}
such that $(m_{k},l_{k})=\omega(k)$ for every $k=1,\dots,n-1$ and $m_n=\alpha(n)$.
Due to \eqref{2dnon-zero-inner-productMS} and \eqref{2dnon-zero-inner-productMSQ},
we note here that the vector \eqref{2dspecific_stateMS} is non-zero and therefore can be
a state vector certainly, when it is normalized.
Since $\omega$ is our world, the state \eqref{2dspecific_stateMS} is
\emph{the state of the total system,
consisting of the first $n$ copies of the two systems $\ancilla$ and $\mathcal{S}$ and
the two apparatuses measuring $A$ and $B$,
that we have perceived after the $n$th measurement of $A$}.
Therefore, \emph{we perceive} $\ket{\Psi_{m_n}}$
as the state of the $n$th copy of the system $\mathcal{S}$ at this time. 
Note that $m_n=\alpha(n)$ and $n$ is arbitrary.
Thus, for every positive integer $n$, the state of the $n$th copy of the system $\mathcal{S}$
immediately after the measurement of $A$ over it \emph{in our world} is given by
\begin{equation}\label{2ddef-u-a}
  \ket{\Psi_{\alpha(n)}}.
\end{equation}

Let $\Gamma^A:=\{\ket{\Psi_{\alpha(n)}}\mid n\in\N^+\}$,
and let $\gamma_A$ be an infinite sequence over $\Gamma^A$ such that
$\gamma_A(n):=\ket{\Psi_{\alpha(n)}}$ for every $n\in\N^+$.
Note that $\Gamma^A$ is
an alphabet.
Recall that the collection $\{\ket{\Psi_m}\}_{m\in\Omega}$ is pair-wise linearly independent.
Therefore, the collection $\{\ket{\Psi_m}\}_{m\in\Omega}$ is \emph{pair-wise distinct}, i.e.,
$\ket{\Psi_m}\neq\ket{\Psi_{m'}}$ for every $m,m'\in\Omega$ with $m\neq m'$. 
Thus, since $\alpha$ is Martin-L\"of $Q$-random,
it follows from Theorem~\ref{nonzero-prob-appears}
that the infinite sequence $\gamma_A$ is Martin-L\"of $Q'$-random,
where $Q'$ is a finite probability space on $\Gamma^A$ such that
\begin{equation}\label{2dQ'U=Qa}
  Q'(\ket{\Psi_{\alpha(n)}})=Q(\alpha(n))
\end{equation}
for every $n\in\N^+$.
Thus, according to (i) of Definition~\ref{def-mixed-state},
the infinite sequence $\gamma_A$ is a mixed state of $\mathcal{S}$.
The mixed state $\gamma_A$ can be interpreted as
\emph{the infinite sequence of states of $\mathcal{S}$
resulting from the infinitely repeated measurements of the observable $A$
over the infinite copies of $\ancilla$}.
Note that the mixed state $\gamma_A$ is
an infinite sequence over pure states which are \emph{not necessarily mutually orthogonal}.

Then, according to (ii) of Definition~\ref{def-mixed-state},
the density matrix $\rho_A$ of $\gamma_A$ is given by
\begin{equation*}
  \rho_A:=\sum_{\ket{\psi}\in \Gamma^A}Q'(\ket{\psi})\ket{\psi}\bra{\psi}.
\end{equation*}
Note from \eqref{2dQ'U=Qa} and \eqref{2dPm=ipMSQ} that
$Q'(\ket{\Psi_{\alpha(n)}})=p_{\alpha(n)}$ for every $n\in\N^+$.
Thus, since the collection $\{\ket{\Psi_m}\}_{m\in\Omega}$ is pair-wise distinct,
it follows from Theorem~\ref{nonzero-prob-appears} that
$$\rho_A=\sum_{m:Q(m)>0}p_m\ket{\Psi_m}\bra{\Psi_m},$$
where the sum is over all $m\in\Omega$ such that $Q(m)>0$.
Hence, using \eqref{2dPm=ipMSQ} we have that
\begin{equation}\label{2drhoA}
  \rho_A=\sum_{m\in\Omega}p_m\ket{\Psi_m}\bra{\Psi_m},
\end{equation}
as expected from the point of view of the conventional quantum mechanics
presented at the beginning of this section.

\subsection{\boldmath The infinite sequence $\beta$ of measurement outcomes of the observable $B$}
\label{d2ISBMOOB}

We determine the property of
the infinite sequence $\beta$ of records of the values of the observable $B$
in the
corresponding
apparatuses measuring $B$ in our world, in terms of the observable $B$ and
the mixed state $\gamma_A$ resulting from the measurements of the observable $A$ in our world.

Using a theorem symmetrical to Theorem~\ref{contraction2}
we have that $\beta$ is Martin-L\"of $R$-random,
where $R$ is a finite probability space on $\Theta$ such that
$R(l):=\sum_{m\in\Omega}P(m,l)$ for every $l\in\Theta$.
It follows from \eqref{2dPm=ipMS} and \eqref{2drhoA} that
\begin{equation}\label{2dPm=ipMSR}
  R(l)=\sum_{m\in\Omega}p_m\bra{\Psi_m}F_l\ket{\Psi_m}=\tr(F_l\rho_A)
\end{equation}
for every $l\in\Theta$, as expected from the point of view of the conventional quantum mechanics,
i.e., Postulate~\ref{Born-rule2}.
Note from Theorem~\ref{nonzero-prob-appears} that $R(\beta(n))>0$ for every $n\in\N^+$.
Therefore, we have
that
\begin{equation}\label{2dnon-zero-inner-productMSR}
  \tr(F_{\beta(n)}\rho_A)>0
\end{equation}
for every $n\in\N^+$.

\subsection{\boldmath Mixed state resulting from the measurements of the observable $B$}

It is worthwhile to calculate the mixed state resulting from the measurements of the observable $B$.

Let $n$ be an arbitrary positive integer.
In the superposition~\eqref{d2total_systemAB0} of the total system consisting of
the first $n$ copies of the two systems $\ancilla$ and $\mathcal{S}$ and
the two apparatuses measuring $A$ and $B$
immediately after the measurement of $B$ over the $n$th copy of the system $\mathcal{S}$,
consider the specific state
\begin{equation}\label{2dspecific_stateMSAB}
\begin{split}
  &\sqrt{p_{m_1}\dots p_{m_n}}
  \ket{\theta_{m_1}}\otimes(F_{l_1}\ket{\Psi_{m_1}})\otimes\dots\otimes
  \ket{\theta_{m_n}}\otimes(F_{l_n}\ket{\Psi_{m_n}}) \\
  &\hspace*{16mm}\otimes\ket{\Phi_A[m_1]}\otimes\ket{\Phi_B[l_1]}\otimes\dots
  \otimes\ket{\Phi_A[m_{n}]}\otimes\ket{\Phi_B[l_{n}]}
\end{split}
\end{equation}
such that $(m_{k},l_{k})=\omega(k)$ for every $k=1,\dots,n$.
Due to \eqref{2dnon-zero-inner-productMS},
we note here that the vector \eqref{2dspecific_stateMSAB} is non-zero and therefore can be
a state vector certainly, when it is normalized.
Since $\omega$ is our world, the state \eqref{2dspecific_stateMSAB} is
\emph{the state of the total system, consisting of
the first $n$ copies of the two systems $\ancilla$ and $\mathcal{S}$
and the two apparatuses measuring $A$ and $B$,
that we have perceived after
the $n$th measurement of $B$}.
Therefore, \emph{we perceive
$$\frac{F_{l_n}\ket{\Psi_{m_n}}}{\sqrt{\bra{\Psi_{m_n}}F_{l_n}\ket{\Psi_{m_n}}}}$$
as the state of the $n$th copy of the system $\mathcal{S}$ at this time}. 
Note that $m_n=\alpha(n)$ and $l_n=\beta(n)$, and $n$ is arbitrary.
Thus, for every positive integer $n$, the state of the $n$th copy of the system $\mathcal{S}$
immediately after the measurement of $B$ over it \emph{in our world} is given by
\begin{equation}\label{2dmrMSB}
  \ket{\Upsilon^B_n}
  :=\frac{F_{\beta(n)}\ket{\Psi_{\alpha(n)}}}{\sqrt{\bra{\Psi_{\alpha(n)}}F_{\beta(n)}\ket{\Psi_{\alpha(n)}}}}.
\end{equation}

Let $\Gamma^B:=\{\ket{\Upsilon^B_n}\mid n\in\N^+\}$,
and let $\gamma_B$ be an infinite sequence over $\Gamma^B$ such that
$\gamma_B(n):=\ket{\Upsilon^B_n}$ for every $n\in\N^+$.
Note that $\Gamma^B$ is an alphabet.
For each $\ket{\psi}\in\Gamma^B$, we define $S(\ket{\psi})$ as the set
$$\{(\alpha(n),\beta(n))\mid\ket{\psi}=\ket{\Upsilon^B_n}\}.$$
Since $\omega$ is Martin-L\"of $P$-random,
it follows from Theorem~\ref{nonzero-prob-appears} and Theorem~\ref{contraction}
that the infinite sequence $\gamma_B$ is Martin-L\"of $P'$-random,
where $P'$ is a finite probability space on $\Gamma^B$ such that
\begin{equation}\label{2dP'U=Pa}
  P'(\ket{\psi})=\sum_{x\in S(\ket{\psi})} P(x)
\end{equation}
for every $\ket{\psi}\in\Gamma^B$.
Thus, according to (i) of Definition~\ref{def-mixed-state},
the infinite sequence $\gamma_B$ is a mixed state of $\mathcal{S}$.
The mixed state $\gamma_B$ can be
interpreted
as
\emph{the infinite sequence of states of $\mathcal{S}$
resulting from the infinitely repeated measurements of the observable $B$
over the infinite copies of $\mathcal{S}$}.

Then, according to (ii) of Definition~\ref{def-mixed-state},
the density matrix $\rho_B$ of $\gamma_B$ is given by
\begin{equation*}
  \rho_B:=\sum_{\ket{\psi}\in\Gamma^B}P'(\ket{\psi})\ket{\psi}\bra{\psi}.
\end{equation*}
Note from \eqref{2dP'U=Pa} and \eqref{2dPm=ipMS} that
$$P'(\ket{\psi})=\sum_{(m,l)\in S(\ket{\psi})}p_m\bra{\Psi_m}F_l\ket{\Psi_m}$$
for every $\ket{\psi}\in\Gamma^B$.
Thus, since $\omega$ is Martin-L\"of $P$-random,
using Theorem~\ref{nonzero-prob-appears} and \eqref{2dmrMSB}
we have that
\begin{equation*}
  \rho_B=\sum_{(m,l):P(m,l)>0}p_mF_l\ket{\Psi_m}\bra{\Psi_m}F_l,
\end{equation*}
where the sum is over all $(m,l)\in\Omega\times\Theta$ such that $P(m,l)>0$.
It follows from \eqref{2dPm=ipMS} that
\begin{equation*}
  \rho_B=\sum_{(m,l)\in\Omega\times\Theta}p_mF_l\ket{\Psi_m}\bra{\Psi_m}F_l.
\end{equation*}
Hence, using \eqref{2drhoA} we have that
\begin{equation*}
  \rho_B=\sum_{l\in\Theta} F_l\rho_A F_l,
\end{equation*}
as expected from the point of view of the conventional quantum mechanics.

\subsection{\boldmath Mixed state conditioned by a specific outcome of the measurements of $B$}
\label{d2MSCSOMB}

We calculate the mixed state resulting from the measurements of the observable $B$,
conditioned by a specific measurement outcome of $B$.
This mixed state is the post-measurement state
mentioned in the last sentence of Postulate~\ref{Tadaki-rule2}.

Let $\Lambda:=\{\beta(n)\mid n\in\N^+\}$,
which is the set of all possible measurement outcomes of $B$ in our world.
Let $l_0$ be an arbitrary element of $\Lambda$, and let $C:=\{(m,l_0)\mid m\in\Omega\}$.
Since $\omega$ is Martin-L\"of $P$-random,
it follows from Theorem~\ref{nonzero-prob-appears} that $P(C)>0$.
Let $\delta:=\cond{C}{\omega}$.
Recall that $\cond{C}{\omega}$ is defined as an infinite sequence in $C^\infty$ obtained from
$\omega$ by eliminating all elements of $(\Omega\times\Theta)\setminus C$ occurring in $\omega$.
In other words, $\delta$ is the subsequence of $\omega$ such that
the outcomes of measurements of $B$ equal to $l_0$.
We
assume that for each $k\in\N^+$,
the $k$th element of $\delta$ is originally the $n_k$th element of $\omega$ before the elimination.
It follows that $\delta(k)=(\alpha(n_k),l_0)$ for every $k\in\N^+$.
Since $\omega$ is Martin-L\"of $P$-random,
using Theorem~\ref{conditional_probability} we have that
$\delta$ is Martin-L\"of $P_C$-random for the finite probability space $P_C$ on $C$.

Let $\zeta$ be an infinite sequence over $\Omega$ such that $\zeta(k):=\alpha(n_k)$
for every $k\in\N^+$.
The
sequence $\zeta$ can be phrased as \emph{the infinite sequence of outcomes of the
measurements of the observable $A$,
resulting from the infinitely repeated measurements of the observable $B$
following the measurement of the observable $A$
over the infinite copies of the two systems $\ancilla$ and $\mathcal{S}$ in our world,
conditioned by the specific measurement outcome $l_0$ of $B$}.
Since $\delta$ is Martin-L\"of $P_C$-random,
it is easy to show that $\zeta$ is Martin-L\"of $P_{l_0}$-random,
where $P_{l_0}$ is a finite probability space on $\Omega$ such that
$P_{l_0}(m)=P_C(m,l_0)$ for every $m\in\Omega$.
We
see that
\begin{equation}\label{d2Pl0m=tr}
  P_{l_0}(m)=P_C(m,l_0)=\frac{P(m,l_0)}{\sum_{m'\in\Omega}P(m',l_0)}
  =\frac{p_m\bra{\Psi_m}F_{l_0}\ket{\Psi_m}}{\tr(F_{l_0}\rho_A)}
\end{equation}
for each $m\in\Omega$, where the last equality follows from \eqref{2dPm=ipMS} and \eqref{2dPm=ipMSR}.
Note that the denominator $\tr(F_{l_0}\rho_A)$ on the most right-hand side of \eqref{d2Pl0m=tr} is certainly
non-zero due to \eqref{2dnon-zero-inner-productMSR}.

Let $\mu$ is a subsequence sequence of $\gamma_B$ such that
$\mu(k):=\gamma_B(n_k)$ for every $k\in\N^+$.
Then, it follows from \eqref{2dmrMSB} that
\begin{equation}\label{mu=k=FEP2}
  \mu(k)=\ket{\Upsilon^B_{n_k}}
  =\frac{F_{l_0}\ket{\Psi_{\zeta(k)}}}{\sqrt{\bra{\Psi_{\zeta(k)}}F_{l_0}\ket{\Psi_{\zeta(k)}}}}
\end{equation}
for every $k\in\N^+$.
Let $\Xi^B:=\{\mu(k)\mid k\in\N^+\}$. Note that $\Xi^B$ is an alphabet.
For each $\ket{\psi}\in\Xi^B$, we define $T(\ket{\psi})$ as the set $\{\zeta(k)\mid\ket{\psi}=\mu(k)\}$.
Since $\zeta$ is Martin-L\"of $P_{l_0}$-random,
it follows from Theorem~\ref{nonzero-prob-appears} and Theorem~\ref{contraction}
that the infinite sequence $\mu$ is Martin-L\"of $P_{l_0}'$-random,
where $P_{l_0}'$ is a finite probability space on $\Xi^B$ such that
\begin{equation}\label{Pl0'mk=Pl0zk2}
  P_{l_0}'(\ket{\psi})=\sum_{m\in T(\ket{\psi})}P_{l_0}(m)
\end{equation}
for every $\ket{\psi}\in\Xi^B$.
Thus, according to (i) of Definition~\ref{def-mixed-state},
the infinite sequence $\mu$ is a mixed state of $\mathcal{S}$.
The mixed state $\mu$ can be
phrased
as
\emph{the infinite sequence of states of the system $\mathcal{S}$,
resulting from the infinitely repeated measurements of the observable $B$
following the measurement of the observable $A$
over the infinite copies of the two systems $\ancilla$ and $\mathcal{S}$ in our world,
conditioned by the specific measurement outcome $l_0$ of $B$}.

Then, according to (ii) of Definition~\ref{def-mixed-state},
the density matrix $\rho_{l_0}$ of $\mu$ is given by
\begin{equation*}
  \rho_{l_0}:=\sum_{\ket{\psi}\in\Xi^B}P_{l_0}'(\ket{\psi})\ket{\psi}\bra{\psi}.
\end{equation*}
Thus, since $\zeta$ is Martin-L\"of $P_{l_0}$-random,
using Theorem~\ref{nonzero-prob-appears}, \eqref{mu=k=FEP2}, and \eqref{Pl0'mk=Pl0zk2}
we have that
\begin{equation*}
  \rho_{l_0}
  =\sum_{m:P_{l_0}(m)>0}
  P_{l_0}(m)\frac{F_{l_0}\ket{\Psi_m}\bra{\Psi_m}F_{l_0}}{\bra{\Psi_m}F_{l_0}\ket{\Psi_m}},
\end{equation*}
where the sum is over all $m\in\Omega$ such that $P_{l_0}(m)>0$.
It follows from \eqref{d2Pl0m=tr} that
\begin{equation*}
  \rho_{l_0}=\sum_{m:P_{l_0}(m)>0}\frac{p_mF_{l_0}\ket{\Psi_m}\bra{\Psi_m}F_{l_0}}{\tr(F_{l_0}\rho_A)}.
\end{equation*}
Note from \eqref{d2Pl0m=tr} that, for every $m\in\Omega$,
$P_{l_0}(m)=0$ if and only if $p_m=0$ or $F_{l_0}\ket{\Psi_m}=0$.
Thus, using \eqref{2drhoA} we finally have that
\begin{equation*}
  \rho_{l_0}
  =\sum_{m\in\Omega}\frac{p_mF_{l_0}\ket{\Psi_m}\bra{\Psi_m}F_{l_0}}{\tr(F_{l_0}\rho_A)}
  =\frac{F_{l_0}\rho_A F_{l_0}}{\tr(F_{l_0}\rho_A)}.
\end{equation*}

Recall that $l_0$ is an arbitrary element of $\Lambda$.
Hence, in summary, we see that for every $l\in\Lambda$,
\begin{equation}\label{post-measuremnt-state}
  \rho_{l}:=\frac{F_{l}\rho_A F_{l}}{\tr(F_{l}\rho_A)}
\end{equation}
is the density matrix of \emph{the mixed state
resulting from the infinitely repeated measurements of the observable $B$
following the measurement of the observable $A$
over the infinite copies of the two systems $\ancilla$ and $\mathcal{S}$ in our world,
conditioned by the specific measurement outcome $l$ of $B$}.
The result is just as expected from the aspect of the conventional quantum mechanics, i.e.,
Postulate~\ref{Born-rule2}.

\subsection{Derivation of Postulate~\ref{Tadaki-rule2}}

At last,
we show that Postulate~\ref{Tadaki-rule2} can be derived from Postulate~\ref{POT}
together with Postulates~\ref{state_space}, \ref{composition}, and \ref{evolution}
in the setting developed
in the preceding subsections.

For deriving Postulate~\ref{Tadaki-rule2},
we consider the quantum measurements described by the observable $B$
introduced in Subsection~\ref{d2ROM}
in the above setting,
as the quantum measurements described by the observable $M$ in Postulate~\ref{Tadaki-rule2}.
Recall that the spectrum of $B$ is $\Theta$.
Then the spectrum decomposition \eqref{d2spectral-decompositionB} of $B$ is rewritten as
$$B=\sum_{l\in\Theta} l F_l,$$
in the same form as
stated
for $M$ in Postulate~\ref{Tadaki-rule2}.

Suppose that the measurements of the observable $B$ are
repeatedly performed over a mixed state with a density matrix $\rho$,
as assumed in Postulate~\ref{Tadaki-rule2}.
In the setting developed in the preceding subsections,
this mixed state is the infinite sequence $\gamma_A$ over $\Gamma^A$,
and its density matrix $\rho$ is the density matrix $\rho_A$ of the mixed state $\gamma_A$,
according to the arguments in Subsection~\ref{d2MSRFMOA}.
The infinite sequence of outcomes generated by the measurements,
which is mentioned in Postulate~\ref{Tadaki-rule2},
is the infinite sequence $\beta$ over $\Theta$ in our world $\omega$,
according to the arguments in Subsection~\ref{d2APOT}.
Based on this identification, we can see the following:

First, since $\beta$ is an infinite sequence over $\Theta$,
the set of possible outcomes of the measurement is the spectrum $\Theta$ of $B$,
as stated in Postulate~\ref{Tadaki-rule2} as one of its conclusions.

Secondly, recall that $\beta$ is Martin-L\"of $R$-random,
where $R$ is the finite probability space on $\Theta$ satisfying \eqref{2dPm=ipMSR},
according to the arguments in Subsection~\ref{d2ISBMOOB}.
Thus, since $\rho_A=\rho$, we see that
\emph{in our world},
the infinite sequence of outcomes generated by the measurements,
which is mentioned in Postulate~\ref{Tadaki-rule2},
is a Martin-L\"of $V$-random infinite sequence over $\Theta$,
where $V$ is a finite probability space on $\Theta$ such that
$$V(l)=\tr(F_l\rho)$$
for every $l\in\Theta$,
as stated in Postulate~\ref{Tadaki-rule2} as one of its conclusions.

Thirdly, according to the arguments in Subsection~\ref{d2MSCSOMB},
the set of all outcomes of the measurements of $B$ in our world is $\Lambda$,
and for every $l\in\Lambda$,
the density matrix of the mixed state
resulting from the infinitely repeated measurements of $B$
in our world,
conditioned by the specific measurement outcome $l$ of $B$,
is $\rho_l$ given by \eqref{post-measuremnt-state}.
Since $\rho_A=\rho$,
this means that
\emph{in our world},
the resulting sequence of pure states with outcome $l$ is
a mixed state with the density matrix
\begin{equation*}
  \frac{F_l\rho F_l}{\tr(F_l\rho)},
\end{equation*}
as stated in the last sentence of Postulate~\ref{Tadaki-rule2} as one of its conclusions.

Hence, we have derived Postulate~\ref{Tadaki-rule2} from Postulate~\ref{POT},
the principle of typicality,
together with Postulates~\ref{state_space}, \ref{composition}, and \ref{evolution}
in the setting developed so far in the preceding subsections,
i.e., in a \emph{more general scenario} than that of Section~\ref{POT-mixed-states},
where a mixed state being measured is
an
infinite sequence over
\emph{general mutually non-orthogonal} pure states.

\section{Clarification of the validity of the postulate for composition of mixed states}
\label{CPFCOMS}

In this section and the next section we investigate the validity of \emph{other} postulates
of the conventional quantum mechanics
\emph{regarding mixed states and density matrices},
in terms of our framework based on Postulate~\ref{POT}, the principle of typicality,
together with Postulates~\ref{state_space}, \ref{composition}, and \ref{evolution}.
We consider two of such postulates.
One of them is Postulate~\ref{COSMS} below,
which is Postulate 4 described in Nielsen and Chuang \cite[Section 2.4.2]{NC00}.
It describes how the density matrix of a mixed state of a composite system is
calculated from the density matrices of the mixed states of the component systems.
The other is Postulate~\ref{density-matrices-probability} in the next section,
which is a part of Postulate 1 described in Nielsen and Chuang \cite[Section 2.4.2]{NC00}.
It treats the ``probabilistic mixture'' of mixed states.

\begin{postulate}[Composition of systems]\label{COSMS}
The state space of a composite physical system is the tensor product of the state spaces of the component physical systems.
Moreover, if we have systems numbered $1$ through $K$, and system number $k$ is prepared in the
state
$\rho_k$,
then the joint state of the total system is
$\rho_1\otimes\rho_2\otimes\dots\otimes\rho_K$.
\qed
\end{postulate}

In this section, we first point out the
failure
of Postulate~\ref{COSMS}.
Postulate~\ref{COSMS} consists of two statements.
The first statement of Postulate~\ref{COSMS} is about the composition of state spaces
by the tensor product operation. It is, of course, guaranteed by Postulate~\ref{composition},
and therefore we accept it as an unconditionally true fact.
The problem is the second statement of Postulate~\ref{COSMS} about the composition of
the density matrices of mixed states of the component systems by the tensor product operation.
We point out the failure of the second statement of Postulate~\ref{COSMS} first in what follows.
We then present a \emph{necessary and sufficient condition}
for the second statement of Postulate~\ref{COSMS} to hold under a certain natural restriction on
the forms
of the mixed states of the component systems,
in terms of our framework based on Definition~\ref{def-mixed-state}.
After that, we
describe
a natural and simple scenario regarding the setting of measurements
in which the second statement of Postulate~\ref{COSMS} holds,
based on Postulate~\ref{POT}
together with Postulates~\ref{state_space}, \ref{composition}, and \ref{evolution}.
In the scenario, we
consider an infinite repetition of independent measurements of
observables over the infinite copies of an identical composite system prepared
in an identical state.

\subsection{Failure of Postulate~\ref{COSMS}}
\label{FoPCOSMS}

First, we see the failure of Postulate~\ref{COSMS} in the contexts of \emph{both}
the conventional quantum mechanics and
our framework based on
Postulate~\ref{POT} and Definition~\ref{def-mixed-state}.
Let $\mathcal{Q}_1$ and $\mathcal{Q}_2$ be a qubit system with
a state space $\mathcal{H}$ of dimension two.

The second statement of Postulate~\ref{COSMS} is obviously invalid
even in the conventional quantum mechanics.
To see this,
consider a composite system consisting of $\mathcal{Q}_1$ and $\mathcal{Q}_2$, and consider
the \emph{Bell state}
$$\ket{\Psi}:=\frac{\ket{00}+\ket{11}}{\sqrt{2}}$$
of the composite system.
Then,
the density matrix of this
state of the composite system is given by
\begin{equation}\label{DMoBell-state}
  \ket{\Psi}\bra{\Psi}
\end{equation}
in the conventional quantum mechanics.
Recall that,
in the conventional quantum mechanics,
the density matrix of the state of the subsystem is given by the \emph{reduced density matrix} of
the density matrix of a state of a composite system
via the \emph{partial trace operation}
(see Nielsen and Chuang \cite[Section 2.4.3]{NC00}).
Thus,
according to
the conventional quantum mechanics,
the density matrix $\rho_1$ of the corresponding state of the subsystem $\mathcal{Q}_1$ is
given by
\begin{equation}\label{rho1tr2kb=fI1}
  \rho_1=\tr_2(\ket{\Psi}\bra{\Psi})=\frac{I}{2},
\end{equation}
where $\tr_2$ is the partial trace operation over the system $\mathcal{Q}_2$,
and $I$ is the identity operator on $\mathcal{H}$.
Similarly, the density matrix $\rho_2$ of the corresponding state of the subsystem $\mathcal{Q}_2$ is
given by
\begin{equation}\label{rho2tr1kb=fI2}
  \rho_2=\tr_1(\ket{\Psi}\bra{\Psi})=\frac{I}{2},
\end{equation}
where $\tr_1$ is the partial trace operation over the system $\mathcal{Q}_1$.

Now, let us apply Postulate~\ref{COSMS} with $K=2$ to the above situation.
Then,
by \eqref{rho1tr2kb=fI1} and \eqref{rho2tr1kb=fI2} we see that
the density matrix of the state of the composite system is given by
$$\rho_1\otimes\rho_2=\frac{I\otimes I}{4}.$$
However, this density matrix
is obviously different from the density matrix \eqref{DMoBell-state}
of the composite system.
Thus, we have a contradiction.
In this way, we see that Postulate~\ref{COSMS} is not valid
in the conventional quantum mechanics.
The second statement of Postulate~\ref{COSMS} does \emph{not} hold \emph{unlimitedly}
even in the conventional quantum mechanics.

The second statement of Postulate~\ref{COSMS} is also invalid in our framework based on
Postulate~\ref{POT} and Definition~\ref{def-mixed-state}.
To see this, consider countably infinite copies
$\mathcal{Q}_1^1, \mathcal{Q}_1^2, \mathcal{Q}_1^3, \dotsc$ of the system $\mathcal{Q}_1$
and countably infinite copies
$\mathcal{Q}_2^1, \mathcal{Q}_2^2, \mathcal{Q}_2^3, \dotsc$ of the system $\mathcal{Q}_2$.
We
choose a specific real $p$ with $0<p<1$, and
consider a finite probability space $P$ on
the alphabet $\{\ket{0},\ket{1}\}\subset \mathcal{H}$
such that $P(\ket{0})=p$ and $P(\ket{1})=1-p$.
We then prepare each of the systems $\mathcal{Q}_1^i$ and $\mathcal{Q}_2^i$
in an identical state $\alpha(i)$
for
every
$i\in\N^+$,
where $\alpha$ is a Martin-L\"of $P$-random infinite sequence over
the alphabet $\{\ket{0},\ket{1}\}$.
This preparation can be possible by the following procedure:
First, we prepare all of the systems $\{\mathcal{Q}_1^i\}$ in the identical state
$\sqrt{p}\ket{0}+\sqrt{1-p}\ket{1}$, and
then perform the measurements described by the observable $\ket{1}\bra{1}$
over all these systems.
Then, according to Postulate~\ref{Tadaki-rule},
which follows from Postulate~\ref{POT} as we saw in Section~\ref{POT-pure-states},
an infinite sequence $\omega$ of measurement outcomes being generated is
a Martin-L\"of $Q$-random infinite sequence over the alphabet $\{0,1\}$,
where $Q$ is a finite probability space on $\{0,1\}$ such that $Q(0)=p$ and $Q(1)=1-p$.
Moreover, an infinite sequence $\alpha$ of the resulting states
from
the measurements is
a Martin-L\"of $P$-random infinite sequence over the alphabet $\{\ket{0},\ket{1}\}$,
where the state of the system $\mathcal{Q}_1^i$ is $\alpha(i)$ for every $i\in\N^+$.
We then
prepare the system $\mathcal{Q}_2^i$ in the state $\alpha(i)$ for each $i\in\N^+$.
This preparation is possible since $\alpha(i)=\ket{\omega(i)}$ for all $i\in\N^+$ and therefore
we know each $\alpha(i)$ from the measurement outcomes $\omega$.
As a result,
both the systems $\mathcal{Q}_1^i$ and $\mathcal{Q}_2^i$ are prepared
in the identical state $\alpha(i)$ for every $i\in\N^+$.

Then, according to (i) of Definition~\ref{def-mixed-state}, the infinite sequence $\alpha$ is
a mixed state of each of the systems $\mathcal{Q}_1$ and $\mathcal{Q}_2$.
Thus, according to (ii) of Definition~\ref{def-mixed-state},
the density matrix $\rho_1$ of the infinite sequence $\alpha$
as a mixed state of the system $\mathcal{Q}_1$ is given by
\begin{equation}\label{rho1=spk0b01-pk1b1}
  \rho_1
  =\sum_{\ket{\Psi}\in\{\ket{0},\ket{1}\}} P(\ket{\Psi})\ket{\Psi}\bra{\Psi}
  =p\ket{0}\bra{0}+(1-p)\ket{1}\bra{1},
\end{equation}
since $\alpha$ is Martin-L\"of $P$-random.
On the other hand,
the density matrix $\rho_2$ of the infinite sequence $\alpha$
as a mixed state of the system $\mathcal{Q}_2$ is given by
\begin{equation}\label{rho2=spk0b01-pk1b1}
  \rho_2
  =p\ket{0}\bra{0}+(1-p)\ket{1}\bra{1},
\end{equation}
due to the same reason.
For each $i\in\N^+$,
since both the systems $\mathcal{Q}_1^i$ and $\mathcal{Q}_2^i$ are in the state $\alpha(i)$,
it follows from Postulate~\ref{composition} that the composite system consisting of
$\mathcal{Q}_1^i$ and $\mathcal{Q}_2^i$ is in the state $\alpha(i)\otimes\alpha(i)$.
Thus, according to (i) of Definition~\ref{def-mixed-state},
$\alpha\otimes\alpha$ is a mixed state of the composite system consisting of
the systems $\mathcal{Q}_1$ and $\mathcal{Q}_2$, where $\alpha\otimes\alpha$ is
an infinite sequence over the alphabet $\{\ket{0}\otimes\ket{0},\ket{1}\otimes\ket{1}\}$
defined by the condition that $(\alpha\otimes\alpha)(i)=\alpha(i)\otimes\alpha(i)$ for all $i\in\N^+$.
Obviously, $\alpha\otimes\alpha$ is Martin-L\"of $R$-random,
where $R$ is a finite probability space on $\{\ket{0}\otimes\ket{0},\ket{1}\otimes\ket{1}\}$ such that
$R(\ket{0}\otimes\ket{0})=p$ and $R(\ket{1}\otimes\ket{1})=1-p$.
Thus, according to (ii) of Definition~\ref{def-mixed-state},
the density matrix $\rho_{1,2}$ of the mixed state $\alpha\otimes\alpha$ of
the composite system consisting of
$\mathcal{Q}_1$ and $\mathcal{Q}_2$ is given by
\begin{equation}\label{rho12=spk0k0b0b01-pk1k1b1b1}
  \rho_{1,2}
  =\sum_{\ket{\Psi}\in\{\ket{0}\otimes\ket{0},\ket{1}\otimes\ket{1}\}} R(\ket{\Psi})\ket{\Psi}\bra{\Psi}
  =p\ket{0}\otimes\ket{0}\bra{0}\otimes\bra{0}+(1-p)\ket{1}\otimes\ket{1}\bra{1}\otimes\bra{1}.
\end{equation}

Now, let us apply Postulate~\ref{COSMS} with $K=2$ to the present situation.
Then, by \eqref{rho1=spk0b01-pk1b1} and \eqref{rho2=spk0b01-pk1b1} we see that
the density matrix of the mixed state $\alpha\otimes\alpha$ of
the composite system consisting of
$\mathcal{Q}_1$ and $\mathcal{Q}_2$ is given by
\[
  \rho_1\otimes\rho_2
  =\Bigl[p\ket{0}\bra{0}+(1-p)\ket{1}\bra{1}\Bigr]\otimes\Bigl[p\ket{0}\bra{0}+(1-p)\ket{1}\bra{1}\Bigr].
\]
However,
this density matrix cannot equal to the density matrix \eqref{rho12=spk0k0b0b01-pk1k1b1b1}
no matter how the real $p$ is chosen under the
constraint
that $0<p<1$.
Thus, we have a contradiction.
In this way, we see that Postulate~\ref{COSMS} is not valid
\emph{also} in our framework based on Postulate~\ref{POT} and Definition~\ref{def-mixed-state}.
The second statement of Postulate~\ref{COSMS} does not hold unlimitedly \emph{also} in our framework.

A
natural question that may arise
based on the above observations
is
what scope the second statement of Postulate~\ref{COSMS} holds
in our framework based on
Definition~\ref{def-mixed-state}.
In the next subsection, we answer this question by presenting a necessary and sufficient condition
for the second statement of Postulate~\ref{COSMS} to hold under a certain natural restriction on
the forms of the mixed states of the component systems.

\subsection{The scope within which the statement of Postulate~\ref{COSMS}
is valid}
\label{RoPCOSMS}

In this subsection, we rephrase the second statement of Postulate~\ref{COSMS} equivalently
in terms of the notion of the \emph{independence} of Martin-L\"of $P$-random infinite sequences.

First, we introduce this notion of \emph{independence} in the following manner:
Let $\Omega_1,\dots,\Omega_K$ be alphabets.
For any $\alpha_1\in\Omega_1^\infty,\dots,\alpha_K\in\Omega_K^\infty$,
we use $\alpha_1\times\dots\times\alpha_K$ to denote an infinite sequence
$\alpha$ over $\Omega_1\times\dots\times\Omega_K$
such that $\alpha(n)=(\alpha_1(n),\dots,\alpha_K(n))$ for every $n\in\N^+$.
On the other hand, for any $P_1\in\PS(\Omega_1),\dots,P_K\in\PS(\Omega_K)$,
we use $P_1\times\dots\times P_K$ to denote a finite probability space
$Q\in\PS(\Omega_1\times\dots\times\Omega_K)$ such that
$Q(a_1,\dots,a_K)=P_1(a_1)\dotsm P_K(a_K)$ for every $a_1\in\Omega_1,\dotsc,a_K\in\Omega_K$.

\begin{definition}[Independence of Martin-L\"of $P$-random infinite sequences]\label{independency-of-ensembles}
Let $\Omega_1,\dotsc,\Omega_K$ be alphabets, and let
$P_1\in\PS(\Omega_1),\dots,P_K\in\PS(\Omega_K)$.
For each $k=1,\dots,K$,
let $\alpha_k$ be a Martin-L\"of $P_k$-random infinite sequence over $\Omega_k$.
We say that $\alpha_1,\dots,\alpha_K$ are
\emph{independent}
if $\alpha_1\times\dots\times\alpha_K$ is Martin-L\"of $P_1\times\dots\times P_K$-random.
\qed
\end{definition}

Note that the notion of the independence of Martin-L\"of $P$-random infinite sequences
in our framework
is introduced by
Tadaki \cite{T15,T16arXiv},
suggested by
the notion of \emph{independence} of \emph{collectives}
in the theory of collectives
introduced
by von Mises \cite{vM64} (see Tadaki \cite{T16arXiv} for the detail).
Theorem~\ref{COSMS-independence} below
leads to
a necessary and sufficient condition
for the second statement of Postulate~\ref{COSMS} to hold
in terms of the notion of the independence of Martin-L\"of $P$-random infinite sequences,
under a certain \emph{natural} restriction on the forms of the mixed states of the component systems,
i.e., the restriction that the state vectors which constitute the mixed state of
each of the component systems
are linearly independent.
To prove Theorem~\ref{COSMS-independence},
in particular, to prove the result (ii) of Theorem~\ref{COSMS-independence},
we need the following two lemmas.
Both of them are immediate results of linear algebra.

\begin{lemma}\label{linear-ind-linear}
Let $\mathcal{H}_1$ and $\mathcal{H}_2$ be complex Hilbert spaces of finite dimension.
Let $\{\ket{\Psi^1_m}\mid 1\le m\le M\}$ and $\{\ket{\Psi^2_n}\mid 1\le n\le N\}$ be
two sets of
linearly independent vectors in $\mathcal{H}_1$ and $\mathcal{H}_2$, respectively.
Then $\{\ket{\Psi^1_m}\otimes\ket{\Psi^2_n}\mid 1\le m\le M\;\&\;1\le n\le N\}$ is
a set of
linearly independent vectors in $\mathcal{H}_1\otimes\mathcal{H}_2$.
\qed
\end{lemma}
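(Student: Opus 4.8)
The plan is to prove linear independence directly by showing that the only vanishing linear combination of the proposed tensor vectors is the trivial one. Suppose
$$\sum_{m=1}^M\sum_{n=1}^N c_{mn}\,\ket{\Psi^1_m}\otimes\ket{\Psi^2_n}=0$$
for scalars $c_{mn}\in\C$; the goal is to deduce that every $c_{mn}$ vanishes.

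First I would regroup the double sum by its first tensor factor, setting $\ket{\chi_m}:=\sum_{n=1}^N c_{mn}\ket{\Psi^2_n}\in\mathcal{H}_2$ for each $m$, so that the hypothesis reads $\sum_{m=1}^M\ket{\Psi^1_m}\otimes\ket{\chi_m}=0$. The key device is then to isolate the first factor. Since $\mathcal{H}_1$ is finite-dimensional and the vectors $\ket{\Psi^1_1},\dots,\ket{\Psi^1_M}$ are linearly independent, they extend to a basis of $\mathcal{H}_1$; let $\bra{\tilde\Psi^1_{m'}}$ denote the associated dual covectors, characterized by $\braket{\tilde\Psi^1_{m'}}{\Psi^1_m}=\delta_{m,m'}$. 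For each fixed $m'$, applying the linear map $\bra{\tilde\Psi^1_{m'}}\otimes I_{\mathcal{H}_2}$, defined on simple tensors by $\ket{u}\otimes\ket{v}\mapsto\braket{\tilde\Psi^1_{m'}}{u}\ket{v}$ and extended linearly, to the regrouped identity collapses it to $\ket{\chi_{m'}}=0$.

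Thus $\sum_{n=1}^N c_{m'n}\ket{\Psi^2_n}=0$ for every $m'$, and the linear independence of $\ket{\Psi^2_1},\dots,\ket{\Psi^2_N}$ in $\mathcal{H}_2$ forces $c_{m'n}=0$ for all $n$. Since $m'$ ranged over $1,\dots,M$, every coefficient $c_{mn}$ vanishes, which is exactly the asserted linear independence of $\{\ket{\Psi^1_m}\otimes\ket{\Psi^2_n}\}$ in $\mathcal{H}_1\otimes\mathcal{H}_2$. There is no genuine obstacle here, as this is a standard fact of multilinear algebra; the only point deserving a word of care is the well-definedness of the partial contraction $\bra{\tilde\Psi^1_{m'}}\otimes I_{\mathcal{H}_2}$, which follows from the universal property of the tensor product. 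An alternative route, avoiding an explicit dual basis, would be to invoke the invertibility of the Gram matrix of the $\ket{\Psi^1_m}$, but the dual-functional argument is the most transparent.
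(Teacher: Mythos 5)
Your proof is correct. The paper itself supplies no proof of this lemma: it is recorded, together with Lemma~\ref{linear-ind-ketbra}, as an ``immediate result of linear algebra,'' with the statement simply closed by a tombstone. Your argument is the standard way to make that immediacy precise: regrouping the vanishing combination as $\sum_{m}\ket{\Psi^1_m}\otimes\ket{\chi_m}=0$ with $\ket{\chi_m}:=\sum_{n}c_{mn}\ket{\Psi^2_n}$, extending the linearly independent family $\{\ket{\Psi^1_m}\}$ to a basis of the finite-dimensional space $\mathcal{H}_1$, and contracting with the dual functionals via $\bra{\tilde\Psi^1_{m'}}\otimes I_{\mathcal{H}_2}$ to force each $\ket{\chi_{m'}}=0$, whence all $c_{m'n}=0$ by the independence of $\{\ket{\Psi^2_n}\}$. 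Your side remark on the well-definedness of the partial contraction through the universal property of the tensor product is exactly the right point to flag, and the Gram-matrix alternative you mention would work equally well in the Hilbert-space setting; there is no gap.
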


\begin{lemma}\label{linear-ind-ketbra}
Let $\mathcal{H}$ be a complex Hilbert space of finite dimension.
Let $\{\ket{\Psi_n}\mid 1\le n\le N\}$ and $\{\ket{\Phi_n}\mid 1\le n\le N\}$ be
two sets of linearly independent vectors in $\mathcal{H}$.
Then $\{\ket{\Psi_n}\bra{\Phi_n}\mid 1\le n\le N\}$ is linearly independent over $\C$.
\qed
\end{lemma}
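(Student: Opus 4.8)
The plan is to argue directly from the definition of linear independence in the operator space $\mathcal{L}(\mathcal{H})$. Suppose we are given complex scalars $c_1,\dots,c_N$ with
\[
  \sum_{n=1}^N c_n\ket{\Psi_n}\bra{\Phi_n}=0,
\]
where $0$ denotes the zero operator on $\mathcal{H}$; the goal is to deduce that every $c_n$ vanishes. The key idea is to convert this operator identity into a family of vector identities by evaluating it on test vectors, and then to peel off the two hypotheses one at a time.

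First I would apply the operator equation to an arbitrary $\ket{v}\in\mathcal{H}$. This yields
\[
  \sum_{n=1}^N c_n\braket{\Phi_n}{v}\ket{\Psi_n}=0,
\]
which exhibits the zero vector as a linear combination of $\ket{\Psi_1},\dots,\ket{\Psi_N}$ with coefficients $c_n\braket{\Phi_n}{v}$. Since $\{\ket{\Psi_n}\}$ is linearly independent by hypothesis, each of these coefficients must be zero; that is, $c_n\braket{\Phi_n}{v}=0$ for every $n$ and every $\ket{v}$.

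Next I would fix an index $n$ and exploit the freedom in $\ket{v}$. Choosing $\ket{v}=\ket{\Phi_n}$ gives $c_n\braket{\Phi_n}{\Phi_n}=0$. Because $\{\ket{\Phi_n}\}$ is linearly independent, in particular no $\ket{\Phi_n}$ is the zero vector, so positive-definiteness of the inner product gives $\braket{\Phi_n}{\Phi_n}>0$; hence $c_n=0$. As $n$ was arbitrary, all coefficients vanish, which establishes the linear independence of $\{\ket{\Psi_n}\bra{\Phi_n}\}$ over $\C$.

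I do not anticipate any real obstacle here: the statement is an elementary fact of linear algebra, and notably the argument uses the linear independence of $\{\ket{\Phi_n}\}$ only through the nonvanishing of each $\ket{\Phi_n}$, while the linear independence of $\{\ket{\Psi_n}\}$ is used in full. The one point worth recording explicitly is that the evaluation-on-vectors step loses no information, since an operator on $\mathcal{H}$ is the zero operator if and only if it annihilates every vector; thus passing from the operator identity to the vector identities is an equivalence rather than a one-way implication.
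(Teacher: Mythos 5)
Your proof is correct. The paper itself offers no argument for this lemma---it is stated with an immediate \qed and the surrounding text simply calls it an ``immediate result of linear algebra''---so there is no official proof to compare against; your evaluation-on-test-vectors argument is the natural way to discharge it, and each step (peeling off the coefficients via independence of the $\ket{\Psi_n}$, then setting $\ket{v}=\ket{\Phi_n}$ and invoking positive-definiteness) is sound. Your closing observation is also accurate and slightly sharpens the lemma: the argument needs only that each $\ket{\Phi_n}$ is nonzero, not the full linear independence of $\{\ket{\Phi_n}\}$; dually, one could instead use a dual family $\ket{\chi_m}$ with $\braket{\Phi_n}{\chi_m}=\delta_{n,m}$ (which exists by independence of the $\ket{\Phi_n}$) to conclude $c_m\ket{\Psi_m}=0$, in which case only nonvanishing of the $\ket{\Psi_n}$ is needed---either weakening suffices, though the symmetric statement as given is what the paper uses.
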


Let $\mathcal{S}^1,\dots,\mathcal{S}^K$ be arbitrary quantum systems
with state spaces $\mathcal{H}_1,\dots,\mathcal{H}_K$ of finite dimension, respectively.
For each $k=1,\dots,K$, let $\Omega_k$ be a non-empty finite set of state vectors in $\mathcal{H}_k$,
and $\gamma_k$ an infinite sequence over $\Omega_k$.
Then, we use $\Omega_1\otimes\dots\otimes\Omega_K$ to denote the set of all
$\ket{\Psi_1}\otimes\dots\otimes\ket{\Psi_K}$
such that $\ket{\Psi_k}\in\Omega_k$ for every $k=1,\dots,K$.%
\footnote{Note that $\Omega_1\otimes\dots\otimes\Omega_K$ is not a Hilbert space
but just an alphabet consisting of
a finite number of
state vectors.}
We then also use $\gamma_1\otimes\dots\otimes\gamma_K$ to denote an infinite sequence
$\gamma$ over $\Omega_1\otimes\dots\otimes\Omega_K$
such that $\gamma(n)=\gamma_1(n)\otimes\dots\otimes\gamma_K(n)$ for every $n\in\N^+$.

\begin{theorem}\label{COSMS-independence}
Let $\mathcal{S}^1,\dots,\mathcal{S}^K$ be arbitrary quantum systems
with state spaces $\mathcal{H}_1,\dots,\mathcal{H}_K$ of finite dimension, respectively.
For each $k=1,\dots,K$, let $\Omega_k$ be a non-empty finite set of state vectors in $\mathcal{H}_k$,
and $\gamma_k$ an infinite sequence over $\Omega_k$.
Suppose that
$\gamma_1,\dots,\gamma_K$ are mixed states of the systems $\mathcal{S}^1,\dots,\mathcal{S}^K$
with the density matrices $\rho_1,\dots,\rho_K$, respectively.
Then the following hold.
\begin{enumerate}
  \item If $\gamma_1,\dots,\gamma_K$ are independent, then
    $\gamma_1\otimes\dots\otimes\gamma_K$ is a mixed state
    of the composite system consisting of the systems $\mathcal{S}^1,\dots,\mathcal{S}^K$ and
    its density matrix is $\rho_1\otimes\dots\otimes\rho_K$.
  \item Suppose that each of $\Omega_1,\dots,\Omega_K$ is
    a set of linearly independent vectors.
    If $\gamma_1\otimes\dots\otimes\gamma_n$ is a mixed state
    of the composite system consisting of the systems $\mathcal{S}^1,\dots,\mathcal{S}^K$ and
    its density matrix is $\rho_1\otimes\dots\otimes\rho_K$, then
    $\gamma_1,\dots,\gamma_K$ are independent.
\end{enumerate}
\end{theorem}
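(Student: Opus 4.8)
The plan is to organize both directions around the natural surjection $\Phi\colon\Omega_1\times\dots\times\Omega_K\to\Omega_1\otimes\dots\otimes\Omega_K$ sending a tuple $(\ket{\Psi_1},\dots,\ket{\Psi_K})$ to $\ket{\Psi_1}\otimes\dots\otimes\ket{\Psi_K}$. Applying $\Phi$ coordinate-wise to $\gamma_1\times\dots\times\gamma_K$ produces exactly $\gamma_1\otimes\dots\otimes\gamma_K$, since $\Phi(\gamma_1(n),\dots,\gamma_K(n))=\gamma_1(n)\otimes\dots\otimes\gamma_K(n)$ for every $n$. The whole argument hinges on how Martin-L\"of $P$-randomness behaves under this map, and the point separating (i) from (ii) is that $\Phi$ need not be injective in general but \emph{is} injective once the tensor products are distinct, which by Lemma~\ref{linear-ind-linear} is guaranteed as soon as each $\Omega_k$ is linearly independent. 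I will also use the elementary fact, immediate from Definition~\ref{ML-randomness-wrtm}, that applying a computable bijection of alphabets coordinate-wise carries a Martin-L\"of $P$-random sequence to a Martin-L\"of random sequence for the relabeled (pushforward) measure, and that applying an arbitrary, possibly non-injective, computable map amounts to repeated use of Theorem~\ref{contraction} (the projection case being Theorem~\ref{contraction2}).

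For (i), assume $\gamma_1,\dots,\gamma_K$ are independent, so by Definition~\ref{independency-of-ensembles} the sequence $\gamma_1\times\dots\times\gamma_K$ is Martin-L\"of $(P_1\times\dots\times P_K)$-random, where $P_k$ is the finite probability space for which $\gamma_k$ is random. Merging, via repeated application of Theorem~\ref{contraction}, all tuples sharing the same tensor product and identifying each merged class with that tensor product, I obtain that $\gamma_1\otimes\dots\otimes\gamma_K$ is Martin-L\"of $R$-random, where $R(\ket{\chi})=\sum_{\Phi(\ket{\Psi_1},\dots,\ket{\Psi_K})=\ket{\chi}}P_1(\ket{\Psi_1})\dotsm P_K(\ket{\Psi_K})$. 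Hence it is a mixed state by (i) of Definition~\ref{def-mixed-state}, and by Theorem~\ref{uniquness} its density matrix is computed from this $R$. Reorganizing the defining double sum of Definition~\ref{def-mixed-state}(ii) and using $\ket{\chi}\bra{\chi}=\ket{\Psi_1}\bra{\Psi_1}\otimes\dots\otimes\ket{\Psi_K}\bra{\Psi_K}$ on each fiber, I get
$$\sum_{\ket{\chi}}R(\ket{\chi})\ket{\chi}\bra{\chi}=\sum_{\ket{\Psi_1}\in\Omega_1}\dotsm\sum_{\ket{\Psi_K}\in\Omega_K}P_1(\ket{\Psi_1})\dotsm P_K(\ket{\Psi_K})\,\ket{\Psi_1}\bra{\Psi_1}\otimes\dots\otimes\ket{\Psi_K}\bra{\Psi_K},$$
which factors as $\rho_1\otimes\dots\otimes\rho_K$, as required.

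For (ii), the linear independence of each $\Omega_k$ makes $\Phi$ injective (Lemma~\ref{linear-ind-linear}), so $\gamma_1\times\dots\times\gamma_K$ and $\gamma_1\otimes\dots\otimes\gamma_K$ are coordinate-wise relabelings of one another by the bijection $\Phi$. Since $\gamma_1\otimes\dots\otimes\gamma_K$ is a mixed state, it is Martin-L\"of $S$-random for a unique $S$ (Theorem~\ref{uniquness}), and transferring randomness back along $\Phi^{-1}$ shows $\gamma_1\times\dots\times\gamma_K$ is Martin-L\"of $S'$-random with $S'(\ket{\Psi_1},\dots,\ket{\Psi_K})=S(\ket{\Psi_1}\otimes\dots\otimes\ket{\Psi_K})$. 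It then remains to show $S'=P_1\times\dots\times P_K$. The density-matrix hypothesis gives $\sum_{\ket{\chi}}S(\ket{\chi})\ket{\chi}\bra{\chi}=\rho_1\otimes\dots\otimes\rho_K$; expanding both sides over tuples and subtracting yields a vanishing linear combination of the rank-one operators $\ket{\Psi_1}\bra{\Psi_1}\otimes\dots\otimes\ket{\Psi_K}\bra{\Psi_K}$ with coefficients $S'(\ket{\Psi_1},\dots,\ket{\Psi_K})-P_1(\ket{\Psi_1})\dotsm P_K(\ket{\Psi_K})$. By Lemma~\ref{linear-ind-ketbra} each family $\{\ket{\Psi_k}\bra{\Psi_k}\}_{\ket{\Psi_k}\in\Omega_k}$ is linearly independent in $\mathcal{L}(\mathcal{H}_k)$, and an iterated application of Lemma~\ref{linear-ind-linear} in the operator Hilbert spaces shows their tensor products are linearly independent in $\mathcal{L}(\mathcal{H}_1\otimes\dots\otimes\mathcal{H}_K)$. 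Therefore all coefficients vanish, $S'=P_1\times\dots\times P_K$, and $\gamma_1,\dots,\gamma_K$ are independent.

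I expect the main obstacle to be the linear-algebraic step in (ii): converting the operator identity into an equality of probabilities requires the rank-one tensor operators to be linearly independent, which is exactly where the hypothesis of linear independence of the $\Omega_k$ is indispensable and where Lemmas~\ref{linear-ind-linear} and \ref{linear-ind-ketbra} do the real work. The only other point demanding care is the bookkeeping for the non-injectivity of $\Phi$ in (i): one must verify that repeated contraction (Theorem~\ref{contraction}) legitimately realizes the pushforward measure $R$ and that the fiber-wise identity $\ket{\chi}\bra{\chi}=\ket{\Psi_1}\bra{\Psi_1}\otimes\dots\otimes\ket{\Psi_K}\bra{\Psi_K}$ holds for every tuple in the fiber, so that the double sum collapses cleanly into $\rho_1\otimes\dots\otimes\rho_K$.
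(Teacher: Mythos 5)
Your proof is correct and takes essentially the same approach as the paper's: part (i) pushes the product randomness forward along the tensor map via repeated use of Theorem~\ref{contraction} and reorganizes the double sum into $\rho_1\otimes\dots\otimes\rho_K$, and part (ii) uses Lemma~\ref{linear-ind-linear} to make the tensor map a bijection, transfers randomness back along it, and matches coefficients via linear independence of the rank-one tensor operators, exactly as the paper does. The only cosmetic difference is the order of the two lemmas in (ii)---you apply Lemma~\ref{linear-ind-ketbra} factor-wise and then Lemma~\ref{linear-ind-linear} in the operator (Hilbert--Schmidt) spaces, whereas the paper applies Lemma~\ref{linear-ind-linear} to the vectors first and then Lemma~\ref{linear-ind-ketbra} to the resulting tensor-product vectors---but both orderings deliver the same linear-independence fact.
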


\begin{proof}
Suppose that $\gamma_1,\dots,\gamma_K$ are mixed states of the systems $\mathcal{S}^1,\dots,\mathcal{S}^K$ with the density matrices $\rho_1,\dots,\rho_K$, respectively.
Then, according to Definition~\ref{def-mixed-state}, for each $k=1,\dots,K$
there exists $P_k\in\PS(\Omega_k)$ such that $\gamma_k$ is Martin-L\"of $P_k$-random and
\begin{equation}\label{thm-ind-dmk}
  \rho_k=\sum_{\ket{\Psi_k}\in\Omega_k} P_k(\ket{\Psi_k})\ket{\Psi_k}\bra{\Psi_k},
\end{equation}
It follows that
\begin{align}
  &\rho_1\otimes\dots\otimes\rho_K \nonumber \\
  &=\sum_{\ket{\Psi_1}\in\Omega_1}\dots\sum_{\ket{\Psi_K}\in\Omega_K}
  P_k(\ket{\Psi_1})\dotsm P_k(\ket{\Psi_K})\ket{\Psi_1}\bra{\Psi_1}\otimes\dots\otimes\ket{\Psi_K}\bra{\Psi_K} \nonumber \\
  &=\sum_{\ket{\Psi_1}\in\Omega_1}\dots\sum_{\ket{\Psi_K}\in\Omega_K}
  \left(P_1\times\dots\times P_K\right)(\ket{\Psi_1},\dots,\ket{\Psi_K})
  \ket{\Psi_1}\otimes\dots\otimes\ket{\Psi_K}\bra{\Psi_1}\otimes\dots\otimes\bra{\Psi_K}.
  \label{rho1rhok=sp1pk}
\end{align}

First, we prove (i) of Theorem~\ref{COSMS-independence}.
For that purpose, suppose that $\gamma_1,\dots,\gamma_K$ are independent.
Then, by Definition~\ref{independency-of-ensembles}, we have that
$\gamma_1\times\dots\times\gamma_K$ is Martin-L\"of $P_1\times\dots\times P_K$-random
over $\Omega_1\times\dots\times\Omega_K$.
We
consider a finite probability space $Q\in\PS(\Omega_1\otimes\dots\otimes\Omega_K)$ such that
\begin{equation}\label{QPs=sP1dPKkP1dkPK}
  Q(\ket{\Psi}):=\sum \left(P_1\times\dots\times P_K\right)(\ket{\Psi_1},\dots,\ket{\Psi_K}),
\end{equation}
for every $\ket{\Psi}\in\Omega_1\otimes\dots\otimes\Omega_K$,
where the sum is over all $(\ket{\Psi_1},\dots,\ket{\Psi_K})\in\Omega_1\times\dots\times\Omega_K$
such that $\ket{\Psi}=\ket{\Psi_1}\otimes\dots\otimes\ket{\Psi_K}$.
It is then easy to see that
$\gamma_1\otimes\dots\otimes\gamma_K$ is Martin-L\"of $Q$-random
over $\Omega_1\otimes\dots\otimes\Omega_K$,
where we use Theorem~\ref{contraction}, if necessary.
Thus, according to Definition~\ref{def-mixed-state},
$\gamma_1\otimes\dots\otimes\gamma_n$ is a mixed state
of the composite system consisting of the systems $\mathcal{S}^1,\dots,\mathcal{S}^K$ and
its density matrix $\rho$ is given by
\[
  \rho:=\sum_{\ket{\Psi}\in\Omega_1\otimes\dots\otimes\Omega_K} Q(\ket{\Psi})\ket{\Psi}\bra{\Psi}.
\]
Hence, using \eqref{rho1rhok=sp1pk} and \eqref{QPs=sP1dPKkP1dkPK}
we have that $\rho_1\otimes\dots\otimes\rho_K=\rho$, as desired.

Next, we prove (ii) of Theorem~\ref{COSMS-independence}.
For that purpose,
consider
a function $f\colon\Omega_1\times\dots\times\Omega_K\to\Omega_1\otimes\dots\otimes\Omega_K$
such that $f(\ket{\Psi_1},\dots,\ket{\Psi_K})=\ket{\Psi_1}\otimes\dots\otimes\ket{\Psi_K}$.
Suppose that each of $\Omega_1,\dots,\Omega_K$ is a set of linearly independent vectors.
Then, it follows from
Lemma~\ref{linear-ind-linear}
that $f$ is a bijection.
We suppose, moreover, that $\gamma_1\otimes\dots\otimes\gamma_n$ is a mixed state
of the composite system consisting of the systems $\mathcal{S}^1,\dots,\mathcal{S}^K$ with
the density matrix $\rho_1\otimes\dots\otimes\rho_K$.
Then, according to Definition~\ref{def-mixed-state},
there exists a finite probability space $R\in\PS(\Omega_1\otimes\dots\otimes\Omega_K)$
such that $\gamma_1\otimes\dots\otimes\gamma_K$ is Martin-L\"of $R$-random
over $\Omega_1\otimes\dots\otimes\Omega_K$ and
\[
  \rho_1\otimes\dots\otimes\rho_K
  =\sum_{\ket{\Psi}\in\Omega_1\otimes\dots\otimes\Omega_K} R(\ket{\Psi})\ket{\Psi}\bra{\Psi}.
\]
Since $f$ is a bijection,
it is easy to see that $\gamma_1\times\dots\times\gamma_K$ is Martin-L\"of $R\circ f$-random over
$\Omega_1\times\dots\times\Omega_K$ and
\[
  \rho_1\otimes\dots\otimes\rho_K
  =\sum_{\ket{\Psi_1}\in\Omega_1}\dots\sum_{\ket{\Psi_K}\in\Omega_K}
  (R\circ f)(\ket{\Psi_1},\dots,\ket{\Psi_K})
  \ket{\Psi_1}\otimes\dots\otimes\ket{\Psi_K}\bra{\Psi_1}\otimes\dots\otimes\bra{\Psi_K}.
\]
Combining this with \eqref{rho1rhok=sp1pk}, it follows from
Lemmas~\ref{linear-ind-linear} and \ref{linear-ind-ketbra}
that $R\circ f=P_1\times\dots\times P_K$.
Thus, 
$\gamma_1\times\dots\times\gamma_K$ is Martin-L\"of $P_1\times\dots\times P_K$-random.
Therefore, according to Definition~\ref{independency-of-ensembles},
we see that $\gamma_1,\dots,\gamma_K$ are independent, as desired.
\end{proof}

Corollary~\ref{cor-COSMS-independence} below is immediate from Theorem~\ref{COSMS-independence}.
It is a precise form of the necessary and sufficient condition
for the second statement of Postulate~\ref{COSMS} to hold
in terms of the notion of the independence of Martin-L\"of $P$-random infinite sequences,
under the
natural
restriction on the forms of the mixed states of the component systems.

\begin{corollary}\label{cor-COSMS-independence}
Let $\mathcal{S}^1,\dots,\mathcal{S}^K$ be arbitrary quantum systems
with state spaces $\mathcal{H}_1,\dots,\mathcal{H}_K$ of finite dimension, respectively.
For each $k=1,\dots,K$, let $\Omega_k$ be a non-empty finite set of state vectors in $\mathcal{H}_k$,
and $\gamma_k$ an infinite sequence over $\Omega_k$.
Suppose that
$\gamma_1,\dots,\gamma_K$ are mixed states of the systems $\mathcal{S}^1,\dots,\mathcal{S}^K$
with the density matrices $\rho_1,\dots,\rho_K$, respectively.
Suppose, moreover, that each of $\Omega_1,\dots,\Omega_K$ is a set of linearly independent vectors.
Then the following conditions are equivalent:
\begin{enumerate}
  \item The infinite sequence $\gamma_1\otimes\dots\otimes\gamma_n$
    over $\Omega_1\otimes\dots\otimes\Omega_K$ is a mixed state of
    the composite system consisting of the systems $\mathcal{S}^1,\dots,\mathcal{S}^K$ and
    its density matrix is $\rho_1\otimes\dots\otimes\rho_K$.
  \item The mixed states $\gamma_1,\dots,\gamma_K$ are independent.\qed
\end{enumerate}
\end{corollary}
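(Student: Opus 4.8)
The plan is to obtain the equivalence by reading off the two implications directly from the two parts of Theorem~\ref{COSMS-independence}, under the standing hypothesis of the corollary that $\gamma_1,\dots,\gamma_K$ are mixed states of $\mathcal{S}^1,\dots,\mathcal{S}^K$ with density matrices $\rho_1,\dots,\rho_K$. These hypotheses coincide verbatim with the standing hypotheses of the theorem, so both of its parts are available for use.

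For the implication (ii)$\Rightarrow$(i), I would simply invoke part (i) of Theorem~\ref{COSMS-independence}: assuming $\gamma_1,\dots,\gamma_K$ are independent, that part already yields that $\gamma_1\otimes\dots\otimes\gamma_K$ is a mixed state of the composite system with density matrix $\rho_1\otimes\dots\otimes\rho_K$, which is exactly condition (i). Note that this direction does \emph{not} use the additional hypothesis of the corollary that each $\Omega_k$ is a set of linearly independent vectors; that hypothesis is simply unused here.

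For the converse implication (i)$\Rightarrow$(ii), I would invoke part (ii) of Theorem~\ref{COSMS-independence}. This is where the corollary's extra hypothesis enters: part (ii) requires precisely that each of $\Omega_1,\dots,\Omega_K$ be a set of linearly independent vectors, which is assumed in the corollary. Granting this, part (ii) takes the statement that $\gamma_1\otimes\dots\otimes\gamma_K$ is a mixed state with density matrix $\rho_1\otimes\dots\otimes\rho_K$ --- that is, condition (i) --- and concludes that $\gamma_1,\dots,\gamma_K$ are independent, which is condition (ii). Combining the two implications delivers the asserted equivalence.

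I expect no genuine obstacle at the level of the corollary itself: it is a direct repackaging of the theorem, and the only thing to verify is that the hypotheses align, which they do. The real work lives one level down, in the proof of part (ii) of Theorem~\ref{COSMS-independence}, where the bijectivity of the map $f(\ket{\Psi_1},\dots,\ket{\Psi_K})=\ket{\Psi_1}\otimes\dots\otimes\ket{\Psi_K}$ and the linear independence of the associated rank-one operators are needed --- via Lemmas~\ref{linear-ind-linear} and \ref{linear-ind-ketbra} --- to pull the Martin-L\"of randomness of $\gamma_1\otimes\dots\otimes\gamma_K$ back to the product finite probability space $P_1\times\dots\times P_K$. For the corollary, I would only cite that result rather than reprove it.
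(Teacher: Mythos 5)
Your proposal is correct and matches the paper exactly: the paper states that Corollary~\ref{cor-COSMS-independence} is immediate from Theorem~\ref{COSMS-independence}, with the two implications read off from parts (i) and (ii) of that theorem just as you describe, the linear-independence hypothesis being needed only for the direction (i)$\Rightarrow$(ii). Your observation about which hypothesis enters where is a faithful (and slightly more explicit) rendering of the paper's one-line argument.
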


\subsection{Natural and simple scenario realizing the statement of Postulate~\ref{COSMS}}
\label{NaSscenarioRSPosCOSMS}

In the rest of this section we
provide
a natural and simple scenario, regarding the setting of measurements,
in which the second statement of Postulate~\ref{COSMS} holds.
We do this
based on Postulate~\ref{POT}, the principle of typicality,
together with Postulates~\ref{state_space}, \ref{composition}, and \ref{evolution},
as we did in Sections~\ref{POT-pure-states}--\ref{POT-mixed-states2}.

Let $\mathcal{S}^1,\dots,\mathcal{S}^K$ be arbitrary quantum systems
with state spaces $\mathcal{H}_1,\dots,\mathcal{H}_K$ of finite dimension, respectively.
Let $A_1,\dots,A_K$ be arbitrary observables of $\mathcal{S}^1,\dots,\mathcal{S}^K$, respectively,
and let $\ket{\Psi_1},\dots,\ket{\Psi_K}$ be arbitrary states of $\mathcal{S}^1,\dots,\mathcal{S}^K$,
respectively.
In the scenario,
we consider an infinite repetition of \emph{independent} measurements of
the observables $A_1,\dots,A_K$,
over infinite copies of the composite system consisting of the systems
$\mathcal{S}^1,\dots,\mathcal{S}^K$
prepared
in the identical state $\ket{\Psi_1}\otimes\dots\otimes\ket{\Psi_K}$,
where the measurement of $A_k$ is performed in the state $\ket{\Psi_k}$ for every $k=1,\dots,K$. 
Then, the application of the principle of typicality to this setting leads to
the satisfaction of the second statement of Postulate~\ref{COSMS}
by
mixed states of the systems $\mathcal{S}^1,\dots,\mathcal{S}^K$ and
the composite system consisting of them in our world.
Actually,
we confirm the second statement of Postulate~\ref{COSMS} in this scenario
by demonstrating that
the condition (ii) of Corollary~\ref{cor-COSMS-independence} is satisfied in the scenario,
based on the principle of typicality.
The detail is
described
in what follows.

\subsection{Repeated once of measurements}
\label{dmtROM}

First, the repeated once of the infinite repetition of
the set of measurements consisting of
the independent measurements of the observables $A_1,\dots,A_K$
is described in detail as follows.

Let $k$ be an arbitrary integer with $1\le k\le K$.
We use $\Omega_k$ to denote the spectrum of the observable $A_k$, and
for each $m\in\Omega_k$
we
use
$E_{k,m}$
to denote the projector onto the eigenspace of $A_k$ with eigenvalue $m$.
Then, according to Postulates~\ref{state_space}, \ref{composition}, and \ref{evolution},
the measurement process of the observable $A_k$ is described
by the following unitary operator $U_{A_k}$:
\begin{equation}\label{dmt-single_measurementAk}
  U_{A_k}\ket{\Psi}\otimes\ket{\Phi_{A_k}^{\mathrm{init}}}
  =\sum_{m\in\Omega_k}(E_{k,m}\ket{\Psi})\otimes\ket{\Phi_{A_k}[m]}
\end{equation}
for every $\ket{\Psi}\in\mathcal{H}_k$.
The vector $\ket{\Phi_{A_k}^{\mathrm{init}}}$ is
the initial state of the apparatus measuring $A_k$, and
$\ket{\Phi_{A_k}[m]}$ is the final state of
the apparatus
measuring $A_k$ for each $m\in\Omega_k$,
with $\braket{\Phi_{A_k}[m]}{\Phi_{A_k}[m']}=\delta_{m,m'}$.
For every $m\in\Omega_k$, the state $\ket{\Phi_{A_k}[m]}$ indicates that
\emph{the apparatus measuring the observable $A_k$ of the system $\mathcal{S}_k$
records the value $m$ of $A_k$}.

Let us
investigate
the whole measurement process of
the set of
measurements of the observables $A_1,\dots,A_K$.
We use $\Omega$ to denote the alphabet $\Omega_1\times\dots\times\Omega_K$, and
use $E_{(m_1,\dots,m_K)}$ to denote the projector $E_{1,m_1}\otimes\dots\otimes E_{K,m_K}$
for each $(m_1,\dots,m_K)\in\Omega$.
Moreover, we use $\ket{\Phi_A^{\mathrm{init}}}$ to denote the state
$\ket{\Phi_{A_1}^{\mathrm{init}}}\otimes\dots\otimes\ket{\Phi_{A_K}^{\mathrm{init}}}$, and
use $\ket{\Phi_A[(m_1,\dots,m_K)]}$ to denote the state
$\ket{\Phi_{A_1}[m_1]}\otimes\dots\otimes\ket{\Phi_{A_K}[m_K]}$
for each $(m_1,\dots,m_K)\in\Omega$.
Then, the sequential application of $U_{A_1},\dots,U_{A_K}$ in \emph{any} order to
the composite system consisting of the systems $\mathcal{S}_1,\dots,\mathcal{S}_K$
and the $K$ apparatuses measuring $A_1,\dots,A_K$ respectively over an initial state
$\ket{\Psi}\otimes\ket{\Phi_A^{\mathrm{init}}}$ with
$\ket{\Psi}\in\mathcal{H}_1\otimes\dots\otimes\mathcal{H}_K$
results in the following single unitary time-evolution $U$:
\begin{equation}\label{dmt-single_measurementAkEm}
  U\ket{\Psi}\otimes\ket{\Phi_A^{\mathrm{init}}}
  =\sum_{m\in\Omega}(E_{m}\ket{\Psi})\otimes\ket{\Phi_A[m]},
\end{equation}
where $U=U_{A_1}\otimes\dots\otimes U_{A_K}$.
It is then easy to check that the collection $\{E_m\}_{m\in\Omega}$ forms
a \emph{complete set of projectors},
i.e., the collection satisfies that $E_mE_{m'}=\delta_{m,m'} E_m$ and $\sum_{m\in\Omega}E_m=I$.
Hence, in particular, the collection $\{E_m\}_{m\in\Omega}$ forms measurement operators.
In this way,
the successive measurements of $A_1,\dots,A_k$ in any order can be regarded
as
a
\emph{single measurement} which is described by
the complete set $\{E_m\}_{m\in\Omega}$ of projectors.

\subsection{\boldmath Infinite repetition of the independent measurements of $A_1,\dots,A_k$}

Similarly in the preceding sections,
we consider countably infinite copies of
each of
the systems $\mathcal{S}_1,\dots,\mathcal{S}_K$
and
each of
the $K$ apparatuses measuring the observable $A_1,\dots,A_K$
of $\mathcal{S}_1,\dots,\mathcal{S}_K$, respectively.
We prepare each of the copies of the composite system consisting of
$\mathcal{S}_1,\dots,\mathcal{S}_K$ and the $K$ apparatuses
in the identical state $\ket{\Psi_1}\otimes\dots\otimes\ket{\Psi_K}\otimes\ket{\Phi_A^{\mathrm{init}}}$,
and then perform \emph{one by one}
the independent measurements of the observables $A_1,\dots,A_K$
over each of the copies of the systems $\mathcal{S}_1,\dots,\mathcal{S}_K$, respectively,
by interacting each of the copies of the systems $\mathcal{S}_1,\dots,\mathcal{S}_K$
with the corresponding copy of the apparatus
according to the unitary time-evolution \eqref{dmt-single_measurementAk}.
Obviously,
this setting can be reduced to the setting which we
developed
in
Subsection~\ref{FOMWI}
for an infinite repetition of the measurement of a single observable over a single system.
The detail is as follows.

Let $\mathcal{S}$ be the composite system consisting of the systems
$\mathcal{S}_1,\dots,\mathcal{S}_K$.
As we saw in the preceding subsection,
the independent measurements of the observables $A_1,\dots,A_K$ over the system
$\mathcal{S}_1,\dots,\mathcal{S}_K$, respectively, can be regarded a \emph{single measurement} over
the system $\mathcal{S}$ according to
the unitary time-evolution \eqref{dmt-single_measurementAkEm} as a measurement process.
In this identification, we
can
regard the $K$ apparatuses measuring the observable $A_1,\dots,A_K$
of $\mathcal{S}_1,\dots,\mathcal{S}_K$, respectively, as a \emph{single apparatus $\mathcal{A}$},
where the vector $\ket{\Phi_{A}^{\mathrm{init}}}$ is the initial state of the apparatus $\mathcal{A}$, and
$\ket{\Phi_{A}[m]}$ is the final state of the apparatus $\mathcal{A}$ for each $m\in\Omega$,
with $\braket{\Phi_{A}[m]}{\Phi_{A}[m']}=\delta_{m,m'}$.
Thus, the infinite repetition of the independent measurements of the observables $A_1,\dots,A_K$,
which we described above,
can be regarded as an infinite repetition of the
single
measurement
described by
the unitary time-evolution
\eqref{dmt-single_measurementAkEm}.
The latter infinite repetition
of measurements
is
exactly the one that we have investigated
over the course of
Subsection~\ref{FOMWI} and Section~\ref{POT-pure-states}.
Thus, we can apply the results of Subsection~\ref{FOMWI} and Section~\ref{POT-pure-states} to
the setting of measurements of this subsection.
Therefore, according to Definition~\ref{pmrpwst},
we see that
in the setting of this subsection
a \emph{world} is an infinite sequence over $\Omega$ and
the \emph{measure representation for the prefixes of worlds}
is given by a function $r\colon\Omega^*\to[0,1]$ with
\begin{equation*}
  r(m_1\dotsc m_n)=\prod_{k=1}^n\bra{\Psi}E_{m_k}\ket{\Psi},
\end{equation*}
just as in Subsection~\ref{FOMWI},
where $\ket{\Psi}$ denotes the state $\ket{\Psi_1}\otimes\dots\otimes\ket{\Psi_K}$.
Then, the probability measure induced by the probability measure representation $r$ is
a Bernoulli measure $\lambda_P$ on $\Omega^\infty$,
where $P$ is a finite probability space on $\Omega$ such that
\begin{equation}\label{dmt-Pm=ipMS}
  P(m)=\bra{\Psi}E_m\ket{\Psi}
\end{equation}
for every $m\in\Omega$,
also just as in Subsection~\ref{FOMWI}.

\subsection{Application of the principle of typicality resulting in
Postulate~\ref{COSMS}}
\label{dmt-APOT}

Now, let us apply Postulate~\ref{POT}, the principle of typicality, to the setting developed
above.

Let $\omega$ be our world in the infinite repetition of the independent measurements of
$A_1,\dots,A_K$
in the setting above.
Then $\omega$ is an infinite sequence over $\Omega$
due to the
arguments
of the preceding subsection. 
Since we can apply the results of
Section~\ref{POT-pure-states} to
this setting of measurements,
we have the following:
According to Postulate~\ref{POT} we have that
\emph{$\omega$ is Martin-L\"of $P$-random,
where $P$ is the finite probability space on $\Omega$ satisfying \eqref{dmt-Pm=ipMS}}.
Moreover,
for every $n\in\N^+$,
the state of the $n$th copy of
the composite system consisting of the systems $\mathcal{S}_1,\dots,\mathcal{S}_K$
immediately after the measurements of $A_1,\dots,A_K$ over them \emph{in our world} is given by
\begin{equation}\label{dmt-def-u-a}
  \frac{E_{\omega(n)}\ket{\Psi}}{\sqrt{\bra{\Psi}E_{\omega(n)}\ket{\Psi}}}
\end{equation}
with
$\bra{\Psi}E_{\omega(n)}\ket{\Psi}>0$.

We use $\omega_1,\dots,\omega_K$ to denote the infinite sequences over $\Omega_1,\dots,\Omega_K$,
respectively, such that $$(\omega_1(n),\dots,\omega_K(n))=\omega(n)$$ for every $n\in\N^+$.
Then we have
$\omega=\omega_1\times\dots\times\omega_K$,
obviously.
Let $k$ be an arbitrary integer with $1\le k\le K$ in the rest of this paragraph.
The sequence $\omega_k$ is \emph{the infinite sequence of records of the values of the observable $A_k$
in the corresponding apparatuses measuring $A_k$ in our world}.
Put briefly, the infinite sequence $\omega_k$ is \emph{the infinite sequence
of outcomes of the infinitely repeated measurements of the observable $A_k$
over the infinite copies of the system $\mathcal{S}_k$
prepared in
the state $\ket{\Psi_k}$ in our world}.
Moreover, since
$$E_{\omega(n)}\ket{\Psi}
=E_{1,\omega_1(n)}\ket{\Psi_1}\otimes\dots\otimes E_{K,\omega_K(n)}\ket{\Psi_K}$$
for every $n\in\N^+$,
it follows from \eqref{dmt-def-u-a} that,
for every $n\in\N^+$,
the state of the $n$th copy of
the system $\mathcal{S}_k$
immediately after the measurement of $A_k$ over it \emph{in our world} is given by
\begin{equation}\label{dmt-def-u-ak}
  \frac{E_{k,\omega_k(n)}\ket{\Psi_k}}{\sqrt{\bra{\Psi_k}E_{k,\omega_k(n)}\ket{\Psi_k}}}
\end{equation}
with $\bra{\Psi_k}E_{k,\omega_k(n)}\ket{\Psi_k}>0$.
On the other hand, since $\omega$ is Martin-L\"of $P$-random,
using a theorem which
is obtained by generalizing Theorem~\ref{contraction2} slightly
we have that $\omega_k$ is Martin-L\"of $P_k$-random,
where $P_k$ is a finite probability space on $\Omega_k$ such that
\[
  P_k(m_k):=
  \sum_{m_1\in\Omega_1}\dots\sum_{m_{k-1}\in\Omega_{k-1}}
  \sum_{m_{k+1}\in\Omega_{k+1}}\dots\sum_{m_K\in\Omega_K}
  P(m_1,\dots,m_K)
\]
for every $m_k\in\Omega_k$.
Since $\sum_{m\in\Omega_{k'}}E_{k',m}=I$ for every $k'$,
it follows from \eqref{dmt-Pm=ipMS} that
\begin{equation}\label{dmt-Pm=ipMSR}
  P_k(m)=\bra{\Psi_k}E_{k,m}\ket{\Psi_k}
\end{equation}
for every $m\in\Omega_k$.
Let $\overline{\Omega}_k$ be
an
alphabet $\{\omega_k(n)\mid n\in\N^+\}$.
Since $\omega_k$ is Martin-L\"of $P_k$-random, it follows from Theorem~\ref{nonzero-prob-appears}
that $\omega_k$ is a Martin-L\"of $\overline{P}_k$-random infinite sequence over $\overline{\Omega}_k$,
where $\overline{P}_k$ is a finite probability space on $\overline{\Omega}_k$ such that
\begin{equation}\label{ovlPkm=Pkm}
  \overline{P}_k(m)=P_k(m)
\end{equation}
for every $m\in\overline{\Omega}_k$.

Let $\overline{\Omega}$ be
an
alphabet $\{\omega(n)\mid n\in\N^+\}$.
Since $\omega$ is Martin-L\"of $P$-random, it follows from Theorem~\ref{nonzero-prob-appears}
that $\omega$ is a Martin-L\"of $\overline{P}$-random infinite sequence over $\overline{\Omega}$,
where $\overline{P}$ is a finite probability space on $\overline{\Omega}$ such that
\begin{equation}\label{ovlPm=Pm}
  \overline{P}(m)=P(m)
\end{equation}
for every $m\in\overline{\Omega}$.
On the other hand, using \eqref{dmt-Pm=ipMS} and \eqref{dmt-Pm=ipMSR} we see that
\begin{equation}\label{dmt-Pm=P1-PK}
  P(m_1,\dots,m_K)=P_1(m_1)\dots P_K(m_K)
\end{equation}
for every $(m_1,\dots,m_K)\in\Omega$.
Thus, it follows from Theorem~\ref{nonzero-prob-appears}
that
\begin{equation}\label{ovlO=ovlO1tdtovlOK}
  \overline{\Omega}=\overline{\Omega}_1\times\dots\times\overline{\Omega}_K.
\end{equation}
Therefore, using \eqref{ovlPkm=Pkm}, \eqref{ovlPm=Pm} and \eqref{dmt-Pm=P1-PK} we have
\begin{equation}\label{ovlP=ovlP1tdtovlPK}
  \overline{P}=\overline{P}_1\times\dots\times\overline{P}_K.
\end{equation}

Let $k$ be an arbitrary integer with $1\le k\le K$ in the rest of this paragraph.
For each $m\in\overline{\Omega}_k$, we use $\ket{\Theta_{k,m}}$ to denote the state
$$\frac{E_{k,m}\ket{\Psi_k}}{\sqrt{\bra{\Psi_k}E_{k,m}\ket{\Psi_k}}},$$
where the denominator is positive
due to Theorem~\ref{nonzero-prob-appears} and \eqref{dmt-Pm=ipMSR}.
Then we use $\Gamma_k$ to denote a set $\{\ket{\Theta_{k,\omega_k(n)}}\mid n\in\N^+\}$,
and use $\gamma_k$ to denote an infinite sequence over $\Gamma_k$ such that
$\gamma_k(n):=\ket{\Theta_{k,\omega_k(n)}}$ for every $n\in\N^+$.
Note that $\Gamma_k$ is an alphabet.
Thus, since $\omega_k$ is Martin-L\"of $\overline{P}_k$-random
and
the mapping $\overline{\Omega}_k\ni m\mapsto\ket{\Theta_{k,m}}$ is an injection,
it is easy to see that $\gamma_k$ is a Martin-L\"of $Q_k$-random infinite sequence over $\Gamma_k$,
where $Q_k$ is a finite probability space on $\Gamma_k$ such that
\begin{equation}\label{Qkgkn=ovlPkokn}
  Q_k(\ket{\Theta_{k,m}})=\overline{P}_k(m)
\end{equation}
for every $m\in\overline{\Omega}_k$.
Thus, according to (i) of Definition~\ref{def-mixed-state},
the infinite sequence $\gamma_k$ is a mixed state of the system $\mathcal{S}_k$.
Due to \eqref{dmt-def-u-ak},
the mixed state $\gamma_k$ can be
phrased
as
\emph{the infinite sequence of states of the system $\mathcal{S}_k$,
resulting from an infinite repetition of independent measurements of the observables $A_1,\dots,A_k$
over infinite copies of the composite system consisting of the systems $\mathcal{S}_1,\dots,\mathcal{S}_K$
prepared in the identical state $\ket{\Psi_1}\otimes\dots\otimes\ket{\Psi_K}$}.

Always keep in mind the equation \eqref{ovlO=ovlO1tdtovlOK}
in what follows.
Recall that $\omega$ is Martin-L\"of $\overline{P}$-random and
$\omega=\omega_1\times\dots\times\omega_K$.
Thus, since the mapping $\overline{\Omega}_k\ni m\mapsto\ket{\Theta_{k,m}}$ is an injection,
it is easy to see that
$\gamma_1\times\dots\times\gamma_K$ is Martin-L\"of $Q$-random,
where $Q$ is a finite probability space on $\Gamma_1\times\dots\times\Gamma_K$ such that
$$Q(\ket{\Theta_{1,m_1}},\dots,\ket{\Theta_{K,m_K}})=\overline{P}(m_1,\dots,m_K)$$
for every $(m_1,\dots,m_K)\in
\overline{\Omega}$.
It follows from
\eqref{ovlP=ovlP1tdtovlPK} and \eqref{Qkgkn=ovlPkokn} that
$$Q(\ket{\Theta_{1,m_1}},\dots,\ket{\Theta_{K,m_K}})
=(Q_1\times\dots\times Q_K)(\ket{\Theta_{1,m_1}},\dots,\ket{\Theta_{K,m_K}})$$
for every
$(m_1,\dots,m_K)\in
\overline{\Omega}_1\times\dots\times\overline{\Omega}_K$.
Thus, we have
$Q=Q_1\times\dots\times Q_K$, and therefore
$\gamma_1\times\dots\times\gamma_K$ is Martin-L\"of $Q_1\times\dots\times Q_K$-random.
Hence, since $\gamma_k$ is Martin-L\"of $Q_k$-random for every $k=1,\dots,K$,
according to Definition~\ref{independency-of-ensembles}
we have
that
$\gamma_1,\dots,\gamma_K$ are independent.

Now, let us apply Corollary~\ref{cor-COSMS-independence}.
It is easy to check that each of $\Gamma_1,\dots,\Gamma_K$ is a set of linearly independent vectors.
Thus, since the mixed states $\gamma_1,\dots,\gamma_K$ are independent,
it follows from the implication (ii) $\Rightarrow$ (i) of Corollary~\ref{cor-COSMS-independence} that
\emph{the infinite sequence $\gamma_1\otimes\dots\otimes\gamma_n$
over $\Gamma_1\otimes\dots\otimes\Gamma_K$ is a mixed state of
the composite system consisting of the systems $\mathcal{S}^1,\dots,\mathcal{S}^K$ and
its density matrix is $\rho_1\otimes\dots\otimes\rho_K$,
where $\rho_1,\dots,\rho_K$ are the density matrices of $\gamma_1,\dots,\gamma_K$, respectively}.
Hence,
the second statement of Postulate~\ref{COSMS} is confirmed
in this natural and simple scenario,
based on Postulate~\ref{POT}, the principle of typicality,
together with Postulates~\ref{state_space}, \ref{composition}, and \ref{evolution}.

Note that $\braket{\Theta_{k,m}}{\Theta_{k,m'}}=\delta_{m,m'}$
for every $k=1,\dots,K$ and every $m,m'\in\overline{\Omega}_k$.
Thus, each of the mixed states $\gamma_1,\dots,\gamma_K$ in the above scenario
is
an infinite sequence over \emph{mutually orthogonal} pure states.
According to (i) of Definition~\ref{def-mixed-state}, however,
a mixed state is commonly an infinite sequence over mutually \emph{non-orthogonal} pure states.
In a similar manner as we did in Section~\ref{POT-mixed-states2},
by introducing ancilla systems $\mathcal{S}^{\mathrm{a}}_1,\dots,\mathcal{S}^{\mathrm{a}}_K$
in addition to the original systems $\mathcal{S}_1,\dots,\mathcal{S}_K$,
and
performing independent measurements of the observables $A_1,\dots,A_k$
of $\mathcal{S}^{\mathrm{a}}_1,\dots,\mathcal{S}^{\mathrm{a}}_K$,
instead of $\mathcal{S}_1,\dots,\mathcal{S}_K$,
over infinite copies of the composite system consisting of
$\mathcal{S}^{\mathrm{a}}_1,\dots,\mathcal{S}^{\mathrm{a}}_K$ and
$\mathcal{S}_1,\dots,\mathcal{S}_K$,
where for every $k=1,\dots,K$
we prepare each copy of the composite system consisting of
$\mathcal{S}_k$ and $\mathcal{S}^{\mathrm{a}}_k$ in an appropriate entangled state,
we make each of the mixed state $\gamma_1,\dots,\gamma_K$ into a general form.
We can then show that the second statement of Postulate~\ref{COSMS} is confirmed
in this generalized setting, as well,
based on Postulate~\ref{POT}
together with Postulates~\ref{state_space}, \ref{composition}, and \ref{evolution}.

\section{Clarification of the postulate for the ``probabilistic mixture'' of mixed states}
\label{CPFPM}

In this section we
investigate
the validity of
Postulate~\ref{density-matrices-probability} below
of the conventional quantum mechanics,
in terms of our framework based on Postulate~\ref{POT}, the principle of typicality, and
Definition~\ref{def-mixed-state}.
Postulate~\ref{density-matrices-probability}
is
the last
part of Postulate 1 described in Nielsen and Chuang \cite[Section 2.4.2]{NC00},
and treats the ``probabilistic mixture'' of mixed states.

\begin{postulate}\label{density-matrices-probability}
If a quantum system is in the state $\rho_k$ with probability $p_k$,
then the density operator for the system is
\begin{equation}\label{sumkpkrhok}
  \sum_{k} p_k\rho_k.
\end{equation}
\qed
\end{postulate}

First of all,
Postulate~\ref{density-matrices-probability} seems very vague in its original form.
What does it mean that \emph{a quantum system is in the state $\rho_k$ with probability $p_k$}?
What does the word ``probability'' mean
here?
What does the ``probabilistic mixture'' of mixed states mean?
In what follows,
we give a certain precise meaning to Postulate~\ref{density-matrices-probability}
by
means of
giving an appropriate scenario in which Postulate~\ref{density-matrices-probability} clearly holds,
based on the principle of typicality.
Namely, we derive Postulate~\ref{density-matrices-probability}
from the principle of typicality
together with Postulates~\ref{state_space}, \ref{composition}, and \ref{evolution}
in a certain natural and simple scenario regarding the setting of measurements.

For deriving Postulate~\ref{density-matrices-probability} based on the principle of typicality,
we consider an infinite repetition of the following two successive measurements,
in a similar manner as we did in the preceding sections:
The first measurement is described by observable $A$
and is performed
over an identical system $\mathcal{S}_A$ prepared in an identical initial state $\ket{\Psi_A}$.
Then,
the
second
measurement is described by one of observables $B_1,\dots,B_K$
and is performed
over an identical system $\mathcal{S}_B$,
different from $\mathcal{S}_A$,
prepared in an identical initial state $\ket{\Psi_B}$.
In the second measurement, an observable $B_k$ is chosen among the observables $B_1,\dots,B_K$
depending on the outcome $k$ of the first measurement of $A$,
and then the measurement of $B_k$ is performed over $\mathcal{S}_B$.
Put briefly, in order to realize the
statement
of Postulate~\ref{density-matrices-probability} based on the principle of typicality,
\emph{the first measurement is performed to get a value $k$ ``with probability'' $p_k$, and
according to the value $k$ an observable $B_k$
is chosen 
and its measurement is performed to generate a mixed state with the density matrix $\rho_k$}.
As a whole, \emph{we
generate
a mixed state with the density matrix $\sum_{k=1}^K p_k\rho_k$},
as in Postulate~\ref{density-matrices-probability}.
In this scenario, for simplicity,
the mixed state with each density matrix $\rho_k$
is
an infinite sequence over \emph{mutually orthogonal} pure states.%
\footnote{In a similar manner as we did in Section~\ref{POT-mixed-states2},
based on the introduction of an ancilla system into the original systems
$\mathcal{S}_A$ and $\mathcal{S}_B$,
we can make all mixed states with the density matrices $\rho_1,\dots,\rho_K$
an arbitrary mixed state which is
an
infinite sequence over \emph{not necessarily mutually orthogonal} pure states.}
The precise description of the setting of measurements is given
in what follows.

\subsection{Repeated once of measurements}
\label{dmpROM}

First, the repeated once of the infinite repetition of
measurements consists of
the measurement of the observable $A$ and
its subsequent measurement of one of the observables $B_1,\dots,B_K$,
and is described in detail as follows.

Let $\mathcal{S}_A$ be an arbitrary quantum system with state space $\mathcal{H}_A$
of finite dimension $K$,
and let $\mathcal{S}_B$ be another arbitrary quantum system with state space $\mathcal{H}_B$
of finite dimension $L$.
Consider arbitrary measurement over $\mathcal{S}_A$ described by observable $A$
and arbitrary $K$ measurements over $\mathcal{S}_B$ described by observables $B_1,\dots,B_K$,
respectively.
Let $\Omega$ be the spectrum of $A$.
For simplicity, we assume that $\Omega=\{1,\dots,K\}$.
Then, the observable $A$ is non-degenerate, and has a spectral decomposition of the form 
\begin{equation*}
  A=\sum_{k=1}^K k\ket{\psi_k}\bra{\psi_k},
\end{equation*}
where $\{\ket{\psi_1},\dots,\ket{\psi_K}\}$ is an orthonormal basis of
the state space $\mathcal{H}_A$ of $\mathcal{S}_A$.
On the other hand, for each
$k\in\Omega$,
let $\Theta_k$ be the spectrum of $B_k$ and let
\begin{equation}\label{dmpspectral-decompositionB}
  B_k=\sum_{\ell=1}^L g_k(\ell)\ket{\phi_{k,\ell}}\bra{\phi_{k,\ell}}
\end{equation}
be a spectral decomposition of the observable $B_k$,
where $\{\ket{\phi_{k,1}},\dots,\ket{\phi_{k,L}}\}$ is an orthonormal basis of
the state space $\mathcal{H}_B$ of $\mathcal{S}_B$ and
$g_k\colon\{1,\dots,L\}\to\Theta_k$ is a surjection.
Let
$$\Theta:=\bigcup_{k=1}^K\Theta_k.$$
According to Postulates~\ref{state_space}, \ref{composition}, and \ref{evolution},
the measurement processes of the observables $A$ and $B_1,\dots,B_K$ are described
by the following unitary operators $U_A$ and $U_{B_1},\dots,U_{B_K}$, respectively:
\begin{equation} \label{dmpsingle_measurementA}
  U_A\ket{\psi_k}\otimes\ket{\Phi_A^{\mathrm{init}}}=\ket{\psi_k}\otimes\ket{\Phi_A[k]},
\end{equation}
for every
$k\in\Omega$,
and
\begin{equation}  \label{dmpsingle_measurementB}
  U_{B_k}\ket{\phi_{k,\ell}}\otimes\ket{\Phi_B^{\mathrm{init}}}=\ket{\phi_{k,\ell}}\otimes\ket{\Phi_B[g_k(\ell)]}
\end{equation}
for every
$k\in\Omega$
and $\ell=1,\dots,L$.
The vector $\ket{\Phi_A^{\mathrm{init}}}$ is the initial state of the apparatus measuring $A$,
and $\ket{\Phi_A[k]}$ is a final state of
one
for each $k\in\Omega$, with $\braket{\Phi_A[k]}{\Phi_A[k']}=\delta_{k,k'}$.
On the other hand,
the vector $\ket{\Phi_B^{\mathrm{init}}}$ is the initial state of the apparatus measuring one
of $B_1,\dots,B_K$, depending on $k$,
and $\ket{\Phi_B[l]}$ is a final state of
one
for each $l\in\Theta$, with $\braket{\Phi_B[l]}{\Phi_B[l']}=\delta_{l,l'}$.
For every $k\in\Omega$, the state $\ket{\Phi_A[k]}$ indicates that
\emph{the apparatus measuring the observable $A$ of the system $\mathcal{S}_A$
records the value $k$ of $A$}, and
for every $l\in\Theta$, the state $\ket{\Phi_B[l]}$ indicates that
\emph{the apparatus measuring one of the observables $B_1,\dots,B_K$ of the system $\mathcal{S}_B$,
depending on $k$,
records the value $l$}.

For each $k\in\Omega$, let $E_k$ be the projector onto the eigenspace of $A$ with eigenvalue $k$,
and for each
$k\in\Omega$
and $l\in\Theta_k$, let $F_{k,l}$ be the projector onto the eigenspace of $B_k$ with eigenvalue $l$.
Then, the equalities~\eqref{dmpsingle_measurementA} and \eqref{dmpsingle_measurementB} are rewritten,
respectively, as the forms that
\begin{equation*}%
  U_A\ket{\Psi'}\otimes\ket{\Phi_A^{\mathrm{init}}}
  =\sum_{k\in\Omega}(E_k\ket{\Psi'})\otimes\ket{\Phi_A[k]}
\end{equation*}
for every $\ket{\Psi'}\in\mathcal{H}_A$, and
\begin{equation*}%
  U_{B_k}\ket{\Psi}\otimes\ket{\Phi_B^{\mathrm{init}}}
  =\sum_{l\in\Theta_k}(F_{k,l}\ket{\Psi})\otimes\ket{\Phi_B[l]}
\end{equation*}
for every $k\in\Omega$ and $\ket{\Psi}\in\mathcal{H}_B$.

Moreover,
depending on the outcome $k$ of the measurement of $A$,
the specific observable $B_k$
is chosen
among
the observables $B_1,\dots,B_K$ of the system $\mathcal{S}_B$
and then its measurement is performed over $\mathcal{S}_B$ in the second measurement.
This choice is realized by the following unitary time-evolution $U_c$:
\begin{equation*}
  U_c\ket{\Psi}\otimes\ket{\Phi_A[k]}=(U_{B_k}\ket{\Psi})\otimes\ket{\Phi_A[k]}
\end{equation*}
for every $k\in\Omega$ and
every state $\ket{\Psi}$ of the composite system consisting of the system $\mathcal{S}_B$ and
the apparatus measuring one of the observables $B_1,\dots,B_K$.
Note that the unitarity of $U_c$ is confirmed by
the following theorem.%
\footnote{%
As we mentioned previously,
the state space of an apparatus \emph{commonly} has infinite dimension.
\emph{For simplicity}, however, we here prove Theorem~\ref{unitarity}
by assuming that both the dimension of $\mathcal{H}_1$, which corresponds to the state space of
the composite system consisting of the system $\mathcal{S}_B$ and
the apparatus measuring one of the observables $B_1,\dots,B_K$,
and the dimension of $\mathcal{H}_2$,
which corresponds to the state space of the apparatus measuring $A$,
are finite.
We can
prove
a similar theorem to Theorem~\ref{unitarity},
which confirms the unitarity of $U_c$,
in the case where
both $\mathcal{H}_1$ and $\mathcal{H}_2$ have infinite dimension.
\label{footnote-unitarity}}

\begin{theorem}\label{unitarity}
Let $\mathcal{H}_1$ and $\mathcal{H}_2$ be
complex Hilbert spaces of finite dimension.
Let $\{\ket{\Psi_1},\dots,\ket{\Psi_N}\}$ and $\{\ket{\Phi_1},\dots,\ket{\Phi_N}\}$
be arbitrary two orthonormal bases of $\mathcal{H}_2$,
and let $U_1,\dots,U_N$ be arbitrary $N$ unitary operators on $\mathcal{H}_1$.
Then $U:=U_1\otimes\ket{\Psi_1}\bra{\Phi_1} + \dots + U_N\otimes\ket{\Psi_N}\bra{\Phi_N}$ is
a unitary operator on $\mathcal{H}_1\otimes \mathcal{H}_2$,
and $U(\ket{\Theta}\otimes\ket{\Phi_k})=(U_k\ket{\Theta})\otimes\ket{\Psi_k}$
for every $\ket{\Theta}\in\mathcal{H}_1$ and every $k=1,\dots,N$.
\end{theorem}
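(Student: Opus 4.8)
The plan is to verify the two assertions in turn, both by direct computation; there is no genuine obstacle here beyond careful bookkeeping of which of the two orthonormal bases $\{\ket{\Psi_j}\}$ and $\{\ket{\Phi_j}\}$ supplies each Kronecker delta at each step.

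First I would establish the action formula. Applying $U$ to a product vector $\ket{\Theta}\otimes\ket{\Phi_k}$ and using bilinearity of the tensor product gives
\[
  U(\ket{\Theta}\otimes\ket{\Phi_k})
  =\sum_{j=1}^N (U_j\ket{\Theta})\otimes\ket{\Psi_j}\braket{\Phi_j}{\Phi_k}.
\]
Since $\{\ket{\Phi_1},\dots,\ket{\Phi_N}\}$ is orthonormal, $\braket{\Phi_j}{\Phi_k}=\delta_{j,k}$, so only the $j=k$ term survives and the right-hand side collapses to $(U_k\ket{\Theta})\otimes\ket{\Psi_k}$, which is exactly the claimed identity. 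Note that this already covers every $\ket{\Theta}\in\mathcal{H}_1$ and every $k=1,\dots,N$, so the second assertion of the theorem is settled.

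For the unitarity I would write down the adjoint explicitly. Taking adjoints termwise, and using that $(\ket{\Psi_j}\bra{\Phi_j})^{\dagger}=\ket{\Phi_j}\bra{\Psi_j}$, yields
\[
  U^{\dagger}=\sum_{j=1}^N U_j^{\dagger}\otimes\ket{\Phi_j}\bra{\Psi_j}.
\]
Composing and again using bilinearity, the product is
\[
  U^{\dagger}U=\sum_{j=1}^N\sum_{k=1}^N
  (U_j^{\dagger}U_k)\otimes\ket{\Phi_j}\braket{\Psi_j}{\Psi_k}\bra{\Phi_k}.
\]
Here three facts combine: orthonormality of $\{\ket{\Psi_1},\dots,\ket{\Psi_N}\}$ collapses the double sum to its diagonal through $\braket{\Psi_j}{\Psi_k}=\delta_{j,k}$; unitarity of each $U_j$ replaces $U_j^{\dagger}U_j$ by the identity operator on $\mathcal{H}_1$; and completeness of the orthonormal basis $\{\ket{\Phi_j}\}$ of $\mathcal{H}_2$ gives $\sum_{j=1}^N\ket{\Phi_j}\bra{\Phi_j}=I$. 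Hence $U^{\dagger}U=I\otimes I$ is the identity on $\mathcal{H}_1\otimes\mathcal{H}_2$.

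Finally, since $\mathcal{H}_1\otimes\mathcal{H}_2$ is finite-dimensional, the single relation $U^{\dagger}U=I$ already forces $U$ to be invertible with $U^{-1}=U^{\dagger}$, and therefore unitary. (Equivalently, one could bypass the adjoint computation altogether: taking any orthonormal basis $\{\ket{e_i}\}$ of $\mathcal{H}_1$, the vectors $\ket{e_i}\otimes\ket{\Phi_k}$ form an orthonormal basis of $\mathcal{H}_1\otimes\mathcal{H}_2$, and by the action formula their images $(U_k\ket{e_i})\otimes\ket{\Psi_k}$ are again orthonormal — the factor $\braket{\Psi_k}{\Psi_{k'}}=\delta_{k,k'}$ handles distinct second slots and the unitarity of $U_k$ handles $\braket{U_k e_i}{U_k e_{i'}}=\delta_{i,i'}$ — so $U$ carries an orthonormal basis to an orthonormal basis and is thus unitary.) The step requiring the most care is simply not conflating the roles of the two bases: it is orthonormality of the $\ket{\Psi}$'s, not the $\ket{\Phi}$'s, that diagonalizes $U^{\dagger}U$, whereas it is completeness of the $\ket{\Phi}$'s that produces the identity on $\mathcal{H}_2$.
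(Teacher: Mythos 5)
Your proof is correct and follows essentially the same route as the paper: the paper's proof is exactly the computation $U^{\dagger}U=\sum_{k,l}U_k^{\dagger}U_l\otimes\ket{\Phi_k}\braket{\Psi_k}{\Psi_l}\bra{\Phi_l}=I_1\otimes I_2$, using orthonormality of the $\ket{\Psi_k}$, unitarity of each $U_k$, and completeness of the $\ket{\Phi_k}$, just as you do. The only differences are cosmetic: the paper disposes of the other half by saying ``similarly, $UU^{\dagger}=I$,'' whereas you invoke finite-dimensionality (equally valid, and arguably cleaner), and you spell out the action formula $U(\ket{\Theta}\otimes\ket{\Phi_k})=(U_k\ket{\Theta})\otimes\ket{\Psi_k}$, which the paper treats as immediate.
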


\begin{proof}
First, we see that
\begin{align*}
U^\dag U
&=\left(\sum_{k=1}^N U_k^\dag\otimes\ket{\Phi_k}\bra{\Psi_k}\right)
\left(\sum_{l=1}^N U_l\otimes\ket{\Psi_l}\bra{\Phi_l}\right)
=\sum_{k=1}^N \sum_{l=1}^N U_k^\dag U_l\otimes\ket{\Phi_k}\braket{\Psi_k}{\Psi_l}\bra{\Phi_l} \\
&=\sum_{k=1}^N U_k^\dag U_k\otimes\ket{\Phi_k}\bra{\Phi_k}
=I_1\otimes\left(\sum_{k=1}^N \ket{\Phi_k}\bra{\Phi_k}\right)=I_1\otimes I_2 \\
&=I,
\end{align*}
where $I_1$, $I_2$, and $I$ are the identity operators
on
$\mathcal{H}_1$, $\mathcal{H}_2$, and
$\mathcal{H}_1\otimes\mathcal{H}_2$, respectively.
Similarly, we can show that $UU^\dag =I$.
This completes the proof.
\end{proof}

Then, the repeated once of the infinite repetition of
measurements
is realized by
the sequential application of $U_A$ and $U_c$ to the composite system
consisting of the system $\mathcal{S}_A$ in a state $\ket{\Psi_A}$,
the system $\mathcal{S}_B$ in a state $\ket{\Psi_B}$,
and the two apparatuses in the states $\ket{\Phi_A^{\mathrm{init}}}$ and $\ket{\Phi_B^{\mathrm{init}}}$
respectively.
This sequential application results in the following single unitary time-evolution $U$:
\begin{equation}\label{dmproirm}
  U\ket{\Psi_A}\otimes\ket{\Psi_B}\otimes\ket{\Phi_A^{\mathrm{init}}}\otimes\ket{\Phi_B^{\mathrm{init}}}
  =\sum_{k\in\Omega}\sum_{l\in\Theta_k}
  (E_k\ket{\Psi_A})\otimes(F_{k,l}\ket{\Psi_B})\otimes\ket{\Phi_A[k]}\otimes\ket{\Phi_B[l]}
\end{equation}

\subsection{\boldmath Infinite repetition of the two measurements of $A$ and one of $B_1,\dots,B_K$}

Similarly in the preceding sections,
we consider countably infinite copies of each of the systems $\mathcal{S}_A$ and $\mathcal{S}_B$,
and each of the two apparatuses measuring the observable $A$ and
one of the observables $B_1,\dots,B_K$.
Let $\ket{\Psi_A}$ and $\ket{\Psi_B}$ are arbitrary states of the systems
$\mathcal{S}_A$ and $\mathcal{S}_B$, respectively.
We prepare each of copies of the composite system consisting of
the systems
$\mathcal{S}_A$ and $\mathcal{S}_B$
and the two apparatuses
in the identical state
$\ket{\Psi_A}\otimes\ket{\Psi_B}\otimes\ket{\Phi_A^{\mathrm{init}}}\otimes\ket{\Phi_B^{\mathrm{init}}}$,
and then perform \emph{one by one}
the two successive measurements of the observable $A$ and one of the observables $B_1,\dots,B_K$
over each of the copies of the systems $\mathcal{S}_A$ and $\mathcal{S}_B$, respectively,
by interacting each of the copies of the systems $\mathcal{S}_A$ and $\mathcal{S}_B$ with
the corresponding copy of the two apparatuses
according to the unitary time-evolution \eqref{dmproirm}.
For each $n\in\N^+$,
let $\mathcal{H}_n$ be the state space of the total system consisting of
the first $n$ copies of the two systems $\mathcal{S}_A$ and $\mathcal{S}_B$ and
the two apparatuses measuring $A$ and one of $B_1,\dots,B_K$.
These
successive interactions
between the copies of the two systems $\mathcal{S}_A$ and $\mathcal{S}_B$
and the copies of the two apparatuses measuring $A$ and one of $B_1,\dots,B_K$,
as measurement processes,
proceed in the following manner.

Let $n$ be an arbitrary positive integer.
We focus the $n$th measurement of the observable $A$ and its subsequent measurement of one of
the observables $B_1,\dots,B_K$.
Immediately before the measurement of $A$ over the $n$th copy of the system $\mathcal{S}_A$,
the state of the total system,
which consists of the first $n$ copies of the two systems $\mathcal{S}_A$ and $\mathcal{S}_B$ and
the two apparatuses measuring $A$ and one of $B_1,\dots,B_K$, is
\begin{align*}
  \sum_{k_1,\dots,k_{n-1}\in\Omega}&\sum_{l_1\in\Theta_{k_1}}\dots\sum_{l_{n-1}\in\Theta_{k_{n-1}}}
  (E_{k_1}\ket{\Psi_A})\otimes(F_{k_1,l_1}\ket{\Psi_B})\otimes\dots
  (E_{k_{n-1}}\ket{\Psi_A})\otimes(F_{k_{n-1},l_{n-1}}\ket{\Psi_B}) \\
  &\otimes\ket{\Psi_A}\otimes\ket{\Psi_B} \\
  &\otimes\ket{\Phi_A[k_1]}\otimes\ket{\Phi_B[l_1]}\otimes\dots
  \otimes\ket{\Phi_A[k_{n-1}]}\otimes\ket{\Phi_B[l_{n-1}]}
  \otimes\ket{\Phi_A^{\mathrm{init}}}\otimes\ket{\Phi_B^{\mathrm{init}}}
\end{align*}
in $\mathcal{H}_n$.
Then, immediately after the measurement of one of $B_1,\dots,B_K$
over the $n$th copy of the system $\mathcal{S}_B$, the total system results in the state
\begin{align}
  \sum_{k_1,\dots,k_{n}\in\Omega}\sum_{l_1\in\Theta_{k_1}}\dots\sum_{l_{n}\in\Theta_{k_{n}}}&
  (E_{k_1}\ket{\Psi_A})\otimes(F_{k_1,l_1}\ket{\Psi_B})\otimes\dots\otimes
  (E_{k_{n}}\ket{\Psi_A})\otimes(F_{k_{n},l_{n}}\ket{\Psi_B}) \nonumber \\
  &\otimes\ket{\Phi_A[k_1]}\otimes\ket{\Phi_B[l_1]}\otimes\dots
  \otimes\ket{\Phi_A[k_{n}]}\otimes\ket{\Phi_B[l_{n}]} \label{dmptotal_systemAB} \\
  & \nonumber \\
  =\sum_{k_1,\dots,k_{n}\in\Omega}\sum_{l_1\in\Theta_{k_1}}\dots\sum_{l_{n}\in\Theta_{k_{n}}}&
  (E_{k_1}\ket{\Psi_A})\otimes(F_{k_1,l_1}\ket{\Psi_B})\otimes\dots\otimes
  (E_{k_{n}}\ket{\Psi_A})\otimes(F_{k_{n},l_{n}}\ket{\Psi_B}) \nonumber \\
  &\otimes\ket{\Phi[(k_1,l_1)\dots(k_{n},l_{n})]} \label{dmptotal_systemAB2}
\end{align}
in $\mathcal{H}_n$, according to \eqref{dmproirm}.
Here
the state $\ket{\Phi[(k_1,l_1)\dots (k_n,l_n)]}$ denotes
$\ket{\Phi_A[k_1]}\otimes\ket{\Phi_B[l_1]}\otimes\dots
\otimes\ket{\Phi_A[k_{n}]}\otimes\ket{\Phi_B[l_{n}]}$, and indicates that
\emph{the first $n$ copies of the two apparatuses measuring $A$ and one of $B_1,\dots,B_K$ record
the values $(k_1,l_1)\dots (k_n,l_n)$ as measurement
results}.

Then, for applying Postulate~\ref{POT}, the principle of typicality,
we have to identify worlds and 
the
measure representation for the prefixes of
worlds
in this situation.
Note that the unitary time-evolution \eqref{dmproirm}
resulting from the sequential application of $U_A$ and $U_c$ is rewritten as the form that: 
\begin{equation*}%
  U\ket{\Psi}\otimes\ket{\Phi_A^{\mathrm{init}}}\otimes\ket{\Phi_B^{\mathrm{init}}}
  =\sum_{k\in\Omega}\sum_{l\in\Theta_k}
  (E_k\otimes F_{k,l})\ket{\Psi}\otimes\ket{\Phi[(k,l)]}
\end{equation*}
for every $\ket{\Psi}\in\mathcal{H}_1\otimes\mathcal{H}_2$.
It is then easy to check that a collection
$\{E_k\otimes F_{k,l}\}_{(k,l)\in\Lambda}$
forms measurement operators,
where $\Lambda:=\{(k,l)\mid k\in\Omega\;\&\;l\in\Theta_k\}$.
Thus, the successive measurements of $A$ and one of $B_1,\dots,B_k$,
which is chosen depending on the outcome of the measurement of $A$,
can be regarded
as the \emph{single measurement} which is described by the measurement operators
$\{E_k\otimes F_{k,l}\}_{(k,l)\in\Lambda}$.
Hence, we can apply Definition~\ref{pmrpwst}
to this scenario of the setting of measurements.
Therefore, according to Definition~\ref{pmrpwst},
we see that a \emph{world} is an infinite sequence over $\Lambda$ and
the
\emph{measure representation for the prefixes of
worlds}
is given by a function $r\colon\Lambda^*\to[0,1]$ with
$$r((k_1,l_1)\dotsc (k_n,l_n))
:=\prod_{i=1}^n \bra{\Psi_A}E_{k_i}\ket{\Psi_A}\bra{\Psi_B}F_{k_i,l_i}\ket{\Psi_B},$$
which is the square of the norm of each state
$$(E_{k_1}\ket{\Psi_A})\otimes(F_{k_1,l_1}\ket{\Psi_B})\otimes\dots\otimes
(E_{k_{n}}\ket{\Psi_A})\otimes(F_{k_{n},l_{n}}\ket{\Psi_B})\otimes\ket{\Phi[(k_1,l_1)\dots(k_{n},l_{n})]}$$
in the superposition~\eqref{dmptotal_systemAB2}.
It follows from \eqref{mr} that
the probability measure induced by the probability measure representation $r$ is
a Bernoulli measure $\lambda_P$ on $\Lambda^\infty$,
where $P$ is a finite probability space on $\Lambda$ such that
\begin{equation}\label{dmpPm=ipMS}
  P(k,l)=\bra{\Psi_A}E_{k}\ket{\Psi_A}\bra{\Psi_B}F_{k,l}\ket{\Psi_B}
\end{equation}
for every $(k,l)\in\Lambda$.

\subsection{Application of the principle of typicality}
\label{dmpAPOT}

Now, we
apply Postulate~\ref{POT} to the setting developed above.

Let $\omega$ be our world in the infinite repetition of the measurements of
$A$ and one of $B_1,\dots,B_K$
in the above setting.
Then $\omega$ is an infinite sequence over $\Lambda$.
Since the Bernoulli measure $\lambda_P$ on $\Lambda^\infty$ is
the probability measure induced by the
measure representation $r$
for the prefixes of
worlds
in the above setting,
it follows from Postulate~\ref{POT} that \emph{$\omega$ is Martin-L\"of random
with respect to the measure $\lambda_P$ on $\Lambda^\infty$}.
Therefore, \emph{$\omega$ is Martin-L\"of $P$-random,
where $P$ is the finite probability space on $\Lambda$ satisfying \eqref{dmpPm=ipMS}}.

Let $\alpha$ and $\beta$ be infinite sequences over $\Omega$ and $\Theta$, respectively,
such that $(\alpha(n),\beta(n))=\omega(n)$ for every $n\in\N^+$.
The infinite sequence $\alpha$ is
\emph{the infinite sequence of records of the values of the observable $A$
in the corresponding apparatuses measuring $A$ in our world}.
Put briefly, the infinite sequence $\alpha$ is \emph{the infinite sequence
of outcomes of the infinitely repeated measurements of the observable $A$
over the infinite copies of $\mathcal{S}_A$ in our world}.
On the other hand, the infinite sequence $\beta$ is
\emph{the infinite sequence of records of the values of one of the observables $B_1,\dots,B_K$
in the corresponding apparatuses measuring it in our world}.
Put briefly, the infinite sequence $\beta$ is \emph{the infinite sequence
of outcomes of the infinitely repeated measurements of one of the observables $B_1,\dots,B_K$
over the infinite copies of $\mathcal{S}_B$ in our world}.

Using Theorem~\ref{nonzero-prob-appears}, we note that $P(\omega(n))>0$ for every $n\in\N^+$.
Thus, it follows from \eqref{dmpPm=ipMS} that
\begin{equation}\label{dmpnon-zero-inner-productMS}
  \bra{\Psi_A}E_{\alpha(n)}\ket{\Psi_A}\bra{\Psi_B}F_{\alpha(n),\beta(n)}\ket{\Psi_B}>0
\end{equation}
for every $n\in\N^+$.

On the other hand,
using Theorem~\ref{contraction} repeatedly,
we have that $\alpha$ is Martin-L\"of $Q$-random, where
$Q$ is a finite probability space on $\Omega$
such that $$Q(k):=\sum_{l\in\Theta_k}P(k,l)$$ for every $k\in\Omega$.
Since
$\sum_{l\in\Theta_k} F_{k,l}=I$ for every $k\in\Omega$ and $\ket{\Psi_B}$ is a unit vector,
it follows from \eqref{dmpPm=ipMS} that
\begin{equation}\label{dmpPm=ipMSQ}
  Q(k)=\bra{\Psi_A}E_{k}\ket{\Psi_A}
\end{equation}
for every $k\in\Omega$,
as expected from the point of view of the conventional quantum mechanics.



\subsection{\boldmath Mixed state resulting from the measurements of one of $B_1,\dots,B_K$}
\label{Msrfmooob1dbk}

We calculate the mixed state resulting from the measurements of one of the observables $B_1,\dots,B_K$.
This mixed state is a mixed state
generating the density matrix \eqref{sumkpkrhok} in Postulate~\ref{density-matrices-probability}.

Let $n$ be an arbitrary positive integer.
In the superposition~\eqref{dmptotal_systemAB} of the total system consisting of
the first $n$ copies of the two systems $\mathcal{S}_A$ and $\mathcal{S}_B$ and
the two apparatuses measuring $A$ and one of $B_1,\dots,B_K$
immediately after the measurement of one of $B_1,\dots,B_K$
over the $n$th copy of the system $\mathcal{S}_B$, consider the specific state
\begin{equation}\label{dmpspecific_stateMSAB}
\begin{split}
  &(E_{k_1}\ket{\Psi_A})\otimes(F_{k_1,l_1}\ket{\Psi_B})\otimes\dots
  (E_{k_{n}}\ket{\Psi_A})\otimes(F_{k_{n},l_{n}}\ket{\Psi_B}) \\
  &\otimes\ket{\Phi_A[k_1]}\otimes\ket{\Phi_B[l_1]}\otimes\dots
  \otimes\ket{\Phi_A[k_{n}]}\otimes\ket{\Phi_B[l_{n}]}
\end{split}
\end{equation}
such that $(k_{i},l_{i})=\omega(i)$ for every $i=1,\dots,n$.
Due to \eqref{dmpnon-zero-inner-productMS},
we note here that the vector \eqref{dmpspecific_stateMSAB} is non-zero and therefore can be
a state vector certainly, when it is normalized.
Since $\omega$ is our world, the state \eqref{dmpspecific_stateMSAB} is
\emph{the state of the total system,
consisting of the first $n$ copies of the two systems $\mathcal{S}_A$ and $\mathcal{S}_B$
and the two apparatuses measuring $A$ and one of $B_1,\dots,B_K$,
that we have perceived after the $n$th measurement of one of $B_1,\dots,B_K$}.
Therefore, \emph{we perceive}
$$\frac{F_{k_n,l_n}\ket{\Psi_B}}{\sqrt{\bra{\Psi_B}F_{k_n,l_n}\ket{\Psi_B}}}$$
as the state of the $n$th copy of the system $\mathcal{S}_B$ at this time. 
Note that $m_n=\alpha(n)$ and $l_n=\beta(n)$, and $n$ is arbitrary.
Thus, for every positive integer $n$, the state of the $n$th copy of the system $\mathcal{S}_B$
immediately after the measurement of one of $B_1,\dots,B_K$ over it \emph{in our world} is given by
\begin{equation}\label{dmpmrMSB}
  \ket{\Upsilon^B_n}
  :=\frac{F_{\alpha(n),\beta(n)}\ket{\Psi_B}}{\sqrt{\bra{\Psi_B}F_{\alpha(n),\beta(n)}\ket{\Psi_B}}}.
\end{equation}

Let $\Gamma^B:=\{\ket{\Upsilon^B_n}\mid n\in\N^+\}$,
and let $\gamma_B$ be an infinite sequence over $\Gamma^B$ such that
$\gamma_B(n):=\ket{\Upsilon^B_n}$ for every $n\in\N^+$.
Note that $\Gamma^B$ is an alphabet.
For each $\ket{\psi}\in\Gamma^B$, we define $S(\ket{\psi})$ as the set
$$\{(\alpha(n),\beta(n))\mid\ket{\psi}=\ket{\Upsilon^B_n}\}.$$
Since $\omega$ is Martin-L\"of $P$-random,
it follows from Theorem~\ref{nonzero-prob-appears} and Theorem~\ref{contraction}
that the infinite sequence $\gamma_B$ is Martin-L\"of $P'$-random,
where $P'$ is a finite probability space on $\Gamma^B$ such that
\begin{equation}\label{dmpP'U=Pa}
  P'(\ket{\psi})=\sum_{x\in S(\ket{\psi})} P(x)
\end{equation}
for every $\ket{\psi}\in\Gamma^B$.
Thus, according to (i) of Definition~\ref{def-mixed-state},
the infinite sequence $\gamma_B$ is a mixed state of $\mathcal{S}_B$.
The mixed state $\gamma_B$ can be
interpreted
as
\emph{the infinite sequence of states of $\mathcal{S}_B$
resulting from the infinitely repeated measurements of one of the observables $B_1,\dots,B_K$
over the infinite copies of $\mathcal{S}_B$}.

Then, according to (ii) of Definition~\ref{def-mixed-state},
the density matrix $\rho_B$ of $\gamma_B$ is given by
\begin{equation*}
  \rho_B:=\sum_{\ket{\psi}\in\Gamma^B}P'(\ket{\psi})\ket{\psi}\bra{\psi}.
\end{equation*}
Note from \eqref{dmpP'U=Pa} and \eqref{dmpPm=ipMS} that
$$P'(\ket{\psi})=\sum_{(k,l)\in S(\ket{\psi})}\bra{\Psi_A}E_{k}\ket{\Psi_A}\bra{\Psi_B}F_{k,l}\ket{\Psi_B}$$
for every $\ket{\psi}\in\Gamma^B$.
Thus, since $\omega$ is Martin-L\"of $P$-random,
using Theorem~\ref{nonzero-prob-appears} and \eqref{dmpmrMSB} we have that
\begin{equation*}
  \rho_B=\sum_{(k,l):P(k,l)>0}\bra{\Psi_A}E_{k}\ket{\Psi_A}F_{k,l}\ket{\Psi_B}\bra{\Psi_B}F_{k,l},
\end{equation*}
where the sum is over all $(k,l)\in\Lambda$ such that $P(k,l)>0$.
It follows from \eqref{dmpPm=ipMS} that
\begin{equation*}
  \rho_B=\sum_{(k,l)\in\Lambda}\bra{\Psi_A}E_{k}\ket{\Psi_A}F_{k,l}\ket{\Psi_B}\bra{\Psi_B}F_{k,l}.
\end{equation*}
Hence, using \eqref{dmpPm=ipMSQ} we have that
\begin{equation}\label{dmprB=s=ss}
  \rho_B=\sum_{k\in\Omega}Q(k)\sum_{l\in\Theta_k}F_{k,l}\ket{\Psi_B}\bra{\Psi_B}F_{k,l}.
\end{equation}

\subsection{\boldmath Mixed state conditioned by a specific outcome of the measurements of $A$}
\label{dmpMSCSOMB}

We calculate
the
mixed state
of the system $\mathcal{S}_B$
resulting from the measurements of one of the observables $B_1,\dots,B_K$,
conditioned by a specific outcome $k$ of the measurement of $A$, i.e.,
the
mixed state
of the system $\mathcal{S}_B$
resulting only from the measurements of a specific observable $B_k$.
This mixed state is
a mixed state
which generates
each density matrix $\rho_k$ in Postulate~\ref{density-matrices-probability}.

Let $\overline{\Omega}:=\{\alpha(n)\mid n\in\N^+\}$,
which is the set of all possible measurement outcomes of $A$ in our world.
Let $k_0$ be an arbitrary element of
$\overline{\Omega}$,
and let $C:=\{(k_0,l)\mid l\in\Theta_{k_0}\}$.
Since $\omega$ is Martin-L\"of $P$-random,
it follows from Theorem~\ref{nonzero-prob-appears} that $P(C)>0$.
Let $\delta:=\cond{C}{\omega}$.
Recall that $\cond{C}{\omega}$ is defined as an infinite sequence in $C^\infty$ obtained from
$\omega$ by eliminating all elements of $\Lambda\setminus C$ occurring in $\omega$.
In other words, $\delta$ is the subsequence of $\omega$
such that the outcomes of measurements of $A$ equal to $k_0$.
We
assume that for each $\tau\in\N^+$,
the $\tau$th element of $\delta$ is originally the $n_{\tau}$th element of $\omega$ before the elimination.
It follows that $\delta(\tau)=(k_0,\beta(n_{\tau}))$ for every $\tau\in\N^+$.
Since $\omega$ is Martin-L\"of $P$-random again,
using Theorem~\ref{conditional_probability} we have that
$\delta$ is Martin-L\"of $P_C$-random for the finite probability space $P_C$ on $C$.

Let $\zeta$ be an infinite sequence over $\Theta_{k_0}$ such that
$\zeta(\tau)=\beta(n_{\tau})$ for every $\tau\in\N^+$.
The
sequence $\zeta$ can be phrased as \emph{the infinite sequence of outcomes of the
measurements of the observable $B_{k_0}$
over the infinite copies of
the system
$\mathcal{S}_B$ in our world,
where any outcomes of the measurements of  all $B_1,\dots,B_K$ other than $B_{k_0}$ are ignored}.
Since $\delta$ is Martin-L\"of $P_C$-random,
it is easy to show that $\zeta$ is Martin-L\"of $P_{k_0}$-random,
where $P_{k_0}$ is a finite probability space on $\Theta_{k_0}$ such that
$P_{k_0}(l)=P_C(k_0,l)$ for every $l\in\Theta_{k_0}$.
We see that
\begin{equation}\label{dmpPl0m=tr}
  P_{k_0}(l)=P_C(k_0,l)=\frac{P(k_0,l)}{\sum_{l'\in\Theta_{k_0}}P(k_0,l')}
  =\bra{\Psi_B}F_{k_0,l}\ket{\Psi_B}
\end{equation}
for each $l\in\Theta_{k_0}$, where the last equality follows from \eqref{dmpPm=ipMS} and
$\sum_{l'\in\Theta_{k_0}}F_{k_0,l'}=I$.

Let $\mu_{k_0}$ is a subsequence sequence  $\gamma_B$ such that
$\mu_{k_0}(\tau)=\gamma_B(n_{\tau})$ for every $\tau\in\N^+$.
Then, it follows from \eqref{dmpmrMSB} that
\begin{equation}\label{dmpmu=k=FEP}
  \mu_{k_0}(\tau)=\ket{\Upsilon^B_{n_{\tau}}}
  =\frac{F_{k_0,\zeta(\tau)}\ket{\Psi_B}}{\sqrt{\bra{\Psi_B}F_{k_0,\zeta(\tau)}\ket{\Psi_B}}}
\end{equation}
for every $\tau\in\N^+$.
Let $\Xi^B:=\{\mu_{k_0}(\tau)\mid \tau\in\N^+\}$. Note that $\Xi^B$ is an alphabet.
For each $\ket{\psi}\in\Xi^B$, we define $T(\ket{\psi})$ as the set
$\{\zeta(\tau)\mid\ket{\psi}=\mu_{k_0}(\tau)\}$.
Since $\zeta$ is Martin-L\"of $P_{k_0}$-random,
it follows from
Theorem~\ref{nonzero-prob-appears} and Theorem~\ref{contraction}
that the infinite sequence $\mu_{k_0}$ is Martin-L\"of $P_{k_0}'$-random,
where $P_{k_0}'$ is a finite probability space on $\Xi^B$ such that
\begin{equation}\label{dmpPl0'mk=Pl0zk}
  P_{k_0}'(\ket{\psi})=\sum_{l\in T(\ket{\psi})}P_{k_0}(l)
\end{equation}
for every $\ket{\psi}\in\Xi^B$.
Thus, according to (i) of Definition~\ref{def-mixed-state},
the infinite sequence $\mu_{k_0}$ is a mixed state of $\mathcal{S}_B$.
The mixed state $\mu_{k_0}$
can be phrased as
\emph{the infinite sequence of states of the system $\mathcal{S}_B$,
resulting from the infinitely repeated measurements of the observable $B_{k_0}$
over the infinite copies of the system $\mathcal{S}_B$ in our world,
where any resulting states from the measurements of  all $B_1,\dots,B_K$ other than $B_{k_0}$
are ignored}.

Then, according to (ii) of Definition~\ref{def-mixed-state},
the density matrix $\rho_{k_0}$ of $\mu_{k_0}$ is given by
\begin{equation*}
  \rho_{k_0}:=\sum_{\ket{\psi}\in\Xi^B}P_{k_0}'(\ket{\psi})\ket{\psi}\bra{\psi}.
\end{equation*}
Thus, since $\zeta$ is Martin-L\"of $P_{k_0}$-random,
using Theorem~\ref{nonzero-prob-appears}, \eqref{dmpmu=k=FEP}, and \eqref{dmpPl0'mk=Pl0zk}
we have that
\begin{equation*}
  \rho_{k_0}
  =\sum_{l:P_{k_0}(l)>0}
  P_{k_0}(l)\frac{F_{k_0,l}\ket{\Psi_B}\bra{\Psi_B}F_{k_0,l}}{\bra{\Psi_B}F_{l_0,k}\ket{\Psi_B}},
\end{equation*}
where the sum is over all $l\in\Theta_{k_0}$ such that $P_{k_0}(l)>0$.
Note from \eqref{dmpPl0m=tr} that, for every $l\in\Theta_{k_0}$,
$P_{k_0}(l)=0$ if and only if $F_{k_0,l}\ket{\Psi_B}=0$.
Hence, using \eqref{dmpPl0m=tr} we finally have that
\begin{equation*}
  \rho_{k_0}=\sum_{l\in\Theta_{k_0}}F_{k_0,l}\ket{\Psi_B}\bra{\Psi_B}F_{k_0,l}.
\end{equation*}

Recall that $k_0$ is an arbitrary element of $\Omega$.
Hence, in summary, we see that for every $k\in\Omega$,
\begin{equation}\label{dmppost-measuremnt-state}
  \rho_{k}:=\sum_{l\in\Theta_{k}}F_{k,l}\ket{\Psi_B}\bra{\Psi_B}F_{k,l}
\end{equation}
is the density matrix of
\emph{the mixed state
of the system $\mathcal{S}_B$,
resulting from the infinitely repeated measurements of the observable $B_{k}$
over the infinite copies of the system $\mathcal{S}_B$
in our world,
where any resulting states from the measurements of  all $B_1,\dots,B_K$ other than $B_{k_0}$
are ignored}.
The result is just as expected from the aspect of the conventional quantum mechanics.

\subsection{Derivation of Postulate~\ref{density-matrices-probability}}

For deriving Postulate~\ref{density-matrices-probability},
we have developed so far a natural and simple scenario regarding the setting of measurements
through the preceding subsections.
At last, from \eqref{dmprB=s=ss} and \eqref{dmppost-measuremnt-state} we certainly have the equation
\begin{equation}\label{dmpMain}
  \rho_B=\sum_{k\in\Omega}Q(k)\rho_{k},
\end{equation}
which has the same form as the equation \eqref{sumkpkrhok} in
Postulate~\ref{density-matrices-probability}.
Hence, we have derived Postulate~\ref{density-matrices-probability} under our scenario, as desired.
The precise meaning of the quantities $\rho_B$, $Q$, and $\rho_k$
appearing in the equation \eqref{dmpMain} are described as
above.

In this manner, we have given a
precise meaning to Postulate~\ref{density-matrices-probability}
based on
the principle of typicality
together with Postulates~\ref{state_space}, \ref{composition}, and \ref{evolution},
through
a
natural and simple scenario regarding the setting of measurements.

\section{Application to the BB84 quantum key distribution protocol}
\label{BB84QKD}

In this section, we make an application of our framework based on the principle of typicality,
to the BB84 quantum key distribution protocol \cite{BB84}
\emph{in order to demonstrate how properly our framework works in practical problems
in quantum mechanics}.

The BB84 quantum key distribution (QKD) \cite{BB84} is a protocol
by which ``random'' Private classical bits can be shared between two parties via
quantum and classical communications over a public channel and
which has an ability to detect the presence of eavesdropping.
The original BB84 QKD protocol \cite{BB84} starts by preparing and then sending
a \emph{block} of qubits from Alice to Bob,
and the subsequent processes between Alice and Bob are done
on a block basis of qubits.
For the simplicity of the analysis,
we consider the following slight modification of the original BB84 protocol \cite{BB84},
Protocol~\ref{BB84wsm} below,
where Alice sends
single qubits
to Bob one by one.
See Nielsen and Chuang \cite[Section 2.6]{NC00}
for the detail of the analysis of the original BB84 protocol.

Let $\ket{0}$ and $\ket{1}$ be an orthonormal basis of the state space of a single qubit system.
Based on
them
we define four states $\ket{\Psi_{00}},\ket{\Psi_{10}},\ket{\Psi_{01}}$,
and $\ket{\Psi_{11}}$
of a single qubit
system
as follows.
\begin{align*}
  \ket{\Psi_{00}}:&=\ket{0}, \\
  \ket{\Psi_{10}}:&=\ket{1}, \\
  \ket{\Psi_{01}}:&=(\ket{0}+\ket{1})/\sqrt{2}, \\
  \ket{\Psi_{11}}:&=(\ket{0}-\ket{1})/\sqrt{2}.
\end{align*}

\begin{protocol}[The BB84 QKD protocol with slight modifications]\label{BB84wsm}
Initially, set
$\mathrm{flag}:=0$.
Repeat the following procedure
\emph{forever}.
{
\renewcommand{\labelenumi}{Step \arabic{enumi}:}
\setlength{\leftmargini}{40pt} 
\begin{enumerate}
\item Alice tosses two fair coins $A$ and $B$ to get outcomes $a$ and $b$ in $\{0,1\}$, respectively.
\item Alice prepares $\ket{\Psi_{ab}}$ and sends it to Bob.
\item Bob tosses a fair coin $C$ to get outcome $c\in\{0,1\}$.
\item Bob performs the measurement of the observable $\ket{\Psi_{1c}}\bra{\Psi_{1c}}$
  over the state $\ket{\Psi_{ab}}$ sent from Alice, to obtain outcome $m\in\{0,1\}$.
\item Bob tosses a biased coin $D$ to get outcome $d\in\{0,1\}$, where $\Prob\{D=1\}=p$.
\item If $\mathrm{flag}=1$, Alice and Bob discard all the bits obtained so far, and then go to Step 1.
\item Alice and Bob announce $b$ and $c$, respectively.
\item If $b\neq c$, Alice and Bob discard $a$ and $m$ and then go to Step~1.
\item If $d=0$, Alice and Bob keep $a$ and $m$, respectively, \emph{as a shared random bit},
  and then go to Step~1.
\item Alice and Bob announce $a$ and $m$, respectively.
\item If $a\neq m$,  Alice and Bob set $\mathrm{flag}:=1$.
\item Alice and Bob discard $a$ and $m$.\qed
\end{enumerate}
\renewcommand{\labelenumi}{(\roman{enumi})}
}
\end{protocol}

First, the
variable
$\mathrm{flag}$ in Protocol~\ref{BB84wsm} is
a ``global variable'' which  both Alice and Bob can access and
which cannot be altered by an eavesdropper.
On the one hand, only Steps 1--5 involve quantum mechanics in Protocol~\ref{BB84wsm}.
We repeat this part of the protocol infinitely often in any case,
regardless of the presence of eavesdropping.
The real $p$ with $0<p<1$ in Step 5 is a \emph{security parameter}
which
determines
the ability to detect the presence of eavesdropping.
As $p$ becomes larger, the ability of the detection of eavesdropping increases
but the efficiency for sharing
random
classical bits between Alice and Bob decreases.
On the other hand, Steps 6--12 of Protocol~\ref{BB84wsm} are just a classical procedure:
Steps 7--9 of Protocol~\ref{BB84wsm} are a
procedure
to complete
sharing a random bit between Alice and Bob.
Steps 10--12 of Protocol~\ref{BB84wsm} are needed to detect the eavesdropping when it is present.
If the presence of the eavesdropping is detected during executing these steps, in particular, at Step 11,
then the
global variable
$\mathrm{flag}$
is set to $1$ and
fixed on that value forever afterward.
Once the variable $\mathrm{flag}$ is set to $1$,
due to
Step 6 the subsequent sharing of random bits is aborted,
in addition to the rejection of the random bits shared so far between Alice and Bob.

In what follows, we analyze Protocol~\ref{BB84wsm}
in terms of our framework based on Postulate~\ref{POT}, the principle of typicality,
together with Postulates~\ref{state_space}, \ref{composition}, and \ref{evolution}.
To complete this,
\emph{we have to implement everything} in Steps 1--5 of Protocol~\ref{BB84wsm}
\emph{by unitary time-evolution}.

\subsection{Analysis of the protocol without eavesdropping}

First, we investigate Protocol~\ref{BB84wsm}
in the case where there is no eavesdropping during the execution of the protocol.
We analyze Steps 1--5 of Protocol~\ref{BB84wsm} in terms of
our framework.
We realize each of the coin flippings in Steps 1, 3, and 5
by a measurement of the observable $\ket{1}\bra{1}$ over an appropriate qubit state.
We then describe all the measurement processes during Steps 1--5
as a unitary interaction between a system and an apparatus,
as in the scenarios regarding the setting of measurements, considered in the preceding sections.
Thus, each of Steps 1--5 is implemented by a unitary time-evolution
in the following manner:

\paragraph{Unitary implementation of Step 1.}

To realize the coin tossing
by Alice
in Step 1 of Protocol~\ref{BB84wsm}
we make use of a measurement over a two qubit system.
Namely, to implement the Step 1
we
introduce
a two qubit system $\mathcal{Q}_1$ with state space $\mathcal{H}_1$,
and perform a measurement over the system $\mathcal{Q}_1$ described
by a unitary time-evolution:
$$U_1\ket{ab}\otimes\ket{\Phi_1^{\mathrm{init}}}=\ket{ab}\otimes\ket{\Phi_1[ab]}$$
for every $a,b\in\{0,1\}$,
where $\ket{ab}:=\ket{a}\otimes\ket{b}\in\mathcal{H}_1$.
The vector $\ket{\Phi_1^{\mathrm{init}}}$ is the initial state of an apparatus
$\mathcal{A}_1$
measuring $\mathcal{Q}_1$,
and $\ket{\Phi_1[ab]}$ is a final state of
the apparatus
$\mathcal{A}_1$
for each $a,b\in\{0,1\}$.%
\footnote{We assume, of course, the orthogonality of the final states $\ket{\Phi_1[ab]}$, i.e.,
the property that
$\braket{\Phi_1[ab]}{\Phi_1[a'b']}=\delta_{a,a'}\delta_{b,b'}$.
Furthermore,
we assume the orthogonality of the finial states for
each of all apparatuses which appear in the rest of this section.}
Prior to the measurement, the system $\mathcal{Q}_1$ is prepared in the state
$\ket{\Psi_{01}}\otimes\ket{\Psi_{01}}$.

\paragraph{Unitary implementation of Step 2.}

For the preparation of the state $\ket{\Psi_{ab}}$ by Alice in Step 2,
we
introduce
a single qubit system $\mathcal{Q}_2$ of state space $\mathcal{H}_2$.
Then the preparation is realized by the following unitary time-evolution
$U_2$
of the composite system consisting of
the system $\mathcal{Q}_2$ and the apparatus
$\mathcal{A}_1$:
$$U_2\ket{0}\otimes\ket{\Phi_1[ab]}=\ket{\Psi_{ab}}\otimes\ket{\Phi_1[ab]}$$
for every $a,b\in\{0,1\}$,
where the system $\mathcal{Q}_2$ is initially prepared in the state $\ket{0}$.
To be
precise, the unitary operator $U_2$ is defined by the equation:
$$U_2\ket{\Psi}\otimes\ket{\Phi_1[ab]}=(W_{ab}\ket{\Psi})\otimes\ket{\Phi_1[ab]}$$
for every $a,b\in\{0,1\}$ and every $\ket{\Psi}\in\mathcal{H}_2$.
Here, $W_{ab}$ is a unitary operator on $\mathcal{H}_2$ defined by
$$W_{ab}:=\ket{\Psi_{ab}}\bra{0}+\ket{\Psi_{\bar{a}b}}\bra{1}$$
for each $a,b\in\{0,1\}$, where $\bar{a}:=1-a$.
Note that the unitarity of $U_2$ is confirmed by Theorem~\ref{unitarity}.%
\footnote{The remark given in Footnote~\ref{footnote-unitarity} also applies here.
The state space of an apparatus commonly has infinite dimension.
Thus, to be precise, the unitarity of $U_2$ is confirmed
by a theorem which is obtained by an immediate
generalization
of Theorem~\ref{unitarity} over a Hilbert space of infinite dimension,
which corresponds to the state space of the apparatus $\mathcal{A}_1$.}

\paragraph{Unitary implementation of Step 3.}

Similarly to Step 1, the coin tossing by Bob in Step 3 is implemented
by
introducing
a single qubit system $\mathcal{Q}_3$ with state space $\mathcal{H}_3$,
and performing a measurement over the system $\mathcal{Q}_3$ described
by a unitary time-evolution:
$$U_3\ket{c}\otimes\ket{\Phi_3^{\mathrm{init}}}=\ket{c}\otimes\ket{\Phi_3[c]}$$
for every $c\in\{0,1\}$.
The vector $\ket{\Phi_3^{\mathrm{init}}}$ is the initial state of an apparatus
$\mathcal{A}_3$
measuring $\mathcal{Q}_3$,
and $\ket{\Phi_3[c]}$ is a final state of the apparatus
$\mathcal{A}_3$
for each $c\in\{0,1\}$.
Prior to the measurement, the system $\mathcal{Q}_3$ is prepared in the state $\ket{\Psi_{01}}$.

\paragraph{Unitary implementation of Step 4.}

The switching of the two types of measurements in Step 4, depending on the outcome $c$ in Step 3,
is realized by
a unitary time-evolution:
\begin{equation}\label{U4ToP3c=VcToP3c}
  U_4\ket{\Theta}\otimes\ket{\Phi_3[c]}=(V_c\ket{\Theta})\otimes\ket{\Phi_3[c]}
\end{equation}
for every $c\in\{0,1\}$
and every state $\ket{\Theta}$ of the composite system
consisting of the system $\mathcal{Q}_2$ and the apparatus $\mathcal{A}_4$ explained below.
The operator $V_c$ appearing in \eqref{U4ToP3c=VcToP3c} describes
a unitary time-evolution of the composite system consisting of
the system $\mathcal{Q}_2$ and an apparatus $\mathcal{A}_4$ measuring $\mathcal{Q}_2$, and is
defined by the equation:
$$V_c\ket{\Psi_{mc}}\otimes\ket{\Phi_4^{\mathrm{init}}}=\ket{\Psi_{mc}}\otimes\ket{\Phi_4[m]}$$
for every $c,m\in\{0,1\}$.
The vector $\ket{\Phi_4^{\mathrm{init}}}$ is the initial state of the apparatus $\mathcal{A}_4$,
and $\ket{\Phi_4[m]}$ is a final state of the apparatus $\mathcal{A}_4$ for each $m\in\{0,1\}$.
Thus,
the operator $V_c$
describes the alternate measurement process of the qubit $\mathcal{Q}_2$ sent from Alice,
depending on the outcome $c$, on Bob's side.

\paragraph{Unitary implementation of Step 5.}

Finally, similarly to Steps 1 and 3, the biased coin tossing by Bob in Step 5 is implemented
by
introducing
a single qubit system $\mathcal{Q}_5$ with state space $\mathcal{H}_5$,
and performing a measurement over the system $\mathcal{Q}_5$ described
by a unitary time-evolution:
$$U_5\ket{d}\otimes\ket{\Phi_5^{\mathrm{init}}}=\ket{d}\otimes\ket{\Phi_5[d]}$$
for every $d\in\{0,1\}$.
The vector $\ket{\Phi_5^{\mathrm{init}}}$ is the initial state of an apparatus $\mathcal{A}_5$
measuring $\mathcal{Q}_5$,
and $\ket{\Phi_5[d]}$ is a final state of the apparatus $\mathcal{A}_5$ for each $d\in\{0,1\}$.
Prior to the measurement, the system $\mathcal{Q}_5$ is prepared in the state
$\sqrt{1-p}\ket{0}+\sqrt{p}\ket{1}$, instead of $\ket{\Psi_{01}}$.

\bigskip

\medskip

Now, the sequential application of $U_1,\dots,U_5$ to the composite system
consisting of the five qubit system $\mathcal{Q}_1, \mathcal{Q}_2, \mathcal{Q}_3$,
and $\mathcal{Q}_5$ and the apparatuses $\mathcal{A}_1,\mathcal{A}_3,\mathcal{A}_4$,
and $\mathcal{A}_5$
results in the following single unitary time-evolution $U$:
\begin{equation}\label{BB84withoutEve-roirm}
  U\ket{\Psi}\otimes\ket{\Phi^{\mathrm{init}}}
  =\sum_{(a,b,c,m,d)\in\{0,1\}^5}
  \left(\left(E^1_{ab}\otimes E^4_{mc}W_{ab} \otimes E^3_{c}\otimes E^5_d\right)\ket{\Psi}\right)\otimes
  \ket{\Phi[(a,b,c,m,d)]}
\end{equation}
for every $\ket{\Psi}\in\mathcal{H}_1\otimes\mathcal{H}_2\otimes\mathcal{H}_3\otimes\mathcal{H}_5$.
Here,
$E^1_{ab}:=\ket{ab}\bra{ab}, E^3_{c}:=\ket{c}\bra{c}, E^4_{mc}:=\ket{\Psi_{mc}}\bra{\Psi_{mc}}$, and
$E^5_{d}:=\ket{d}\bra{d}$,
and moreover
$\ket{\Phi^{\mathrm{init}}}$ denotes
$\ket{\Phi_1^{\mathrm{init}}}\otimes\ket{\Phi_3^{\mathrm{init}}}\otimes\ket{\Phi_4^{\mathrm{init}}}
\otimes\ket{\Phi_5^{\mathrm{init}}}$
and $\ket{\Phi[(a,b,c,m,d)]}$ denotes
$\ket{\Phi_1[ab]}\otimes\ket{\Phi_3[c]}\otimes\ket{\Phi_4[m]}\otimes\ket{\Phi_5[d]}$
for each $(a,b,c,m,d)\in\{0,1\}^5$.
Totally, prior to the application of $U$, the five qubit system
$\mathcal{Q}_1, \mathcal{Q}_2, \mathcal{Q}_3$, and $\mathcal{Q}_5$ is prepared in the state
$$\ket{\Psi^{\mathrm{init}}}:=\sum_{(a,b,c,d)\in\{0,1\}^4}\frac{1}{\sqrt{8}}A_d\ket{ab0cd},$$
where $A_0:=\sqrt{1-p}$ and $A_1:=\sqrt{p}$,
and $\ket{ab0cd}$ denotes the five qubit state
$\ket{a}\otimes\ket{b}\otimes\ket{0}\otimes\ket{c}\otimes\ket{d}$
for each $(a,b,c,d)\in\{0,1\}^4$.

In Protocol~\ref{BB84wsm}, only Steps 1--5 involve quantum mechanics, and
this part of the protocol is repeated infinitely often in any case.
The operator $U$ above describes the repeated once of
this
infinite repetition of the Steps 1-5.
We use $\Omega$ to denote the alphabet $\{0,1\}^5$.
It is
easy to check that a collection
\begin{equation}\label{BB84withoutevemo}
  \left\{E^1_{ab}\otimes E^4_{mc}W_{ab} \otimes E^3_{c}\otimes E^5_d\right\}_{(a,b,c,m,d)\in\Omega}
\end{equation}
forms measurement operators.
Thus,
the sequential application $U$ of $U_1,\dots,U_5$
can be regarded as the \emph{single measurement} which is described by
the measurement operators \eqref{BB84withoutevemo}.
Hence, we can apply Definition~\ref{pmrpwst}
to this scenario
of
the setting of measurements.
Therefore, according to Definition~\ref{pmrpwst},
we can see that a \emph{world} is an infinite sequence over
$\Omega$
and the probability measure induced by
the
\emph{measure representation for the prefixes of
worlds}
is
a Bernoulli measure $\lambda_P$ on
$\Omega^\infty$,
where $P$ is a finite probability space on
$\Omega$
such that
$P(a,b,c,m,d)$ is the square of the norm of
the
state
\begin{equation*}%
  \left(\left(E^1_{ab}\otimes E^4_{mc}W_{ab} \otimes E^3_{c}\otimes E^5_d\right)
  \ket{\Psi^{\mathrm{init}}}\right)\otimes\ket{\Phi[(a,b,c,m,d)]}
\end{equation*}
for every $(a,b,c,m,d)\in\Omega$.
Let us calculate the
explicit
form of $P(a,b,c,m,d)$.
The
direct
application of $U_1,\dots,U_5$ in this order to the state
$\ket{ab0cd}\otimes\ket{\Phi^{\mathrm{init}}}$
leads to the following:
\begin{enumerate}
\item
In the case of $b=c$, we have
$$U\ket{ab0cd}\otimes\ket{\Phi^{\mathrm{init}}}
=\ket{ab}\otimes\ket{\Psi_{ab}}\otimes\ket{cd}\otimes\ket{\Phi[(a,b,c,a,d)]}.$$
\item
In the case of $b\neq c$, we have
\begin{align*}
&U\ket{ab0cd}\otimes\ket{\Phi^{\mathrm{init}}} \\
&=\frac{1}{\sqrt{2}}\ket{ab}\otimes \ket{\Psi_{0c}}\otimes\ket{cd}\otimes\ket{\Phi[(a,b,c,0,d)]}
+\frac{(-1)^a}{\sqrt{2}}\ket{ab}\otimes\ket{\Psi_{1c}}\otimes\ket{cd}\otimes\ket{\Phi[(a,b,c,1,d)]}.
\end{align*}
\end{enumerate}
The comparison of this result with the equation \eqref{BB84withoutEve-roirm},
where $\ket{\Psi}$ is set to $\ket{\Psi^{\mathrm{init}}}$,
leads to the following:
\begin{enumerate}
\item In the case of $b=c$, we have that
  \begin{equation}\label{b=c-a=mPabcmd=Ad2-8}
    P(a,b,c,m,d)=\frac{A_d^2}{8}\delta_{a,m}.
  \end{equation}
\item In the case of $b\neq c$, we have that
  \begin{equation*}
    P(a,b,c,m,d)=\frac{A_d^2}{16}.
  \end{equation*}
\end{enumerate}

Now, let us apply Postulate~\ref{POT}, the principle of typicality, to the setting of measurements
developed above.
Let $\omega$ be our world in the infinite repetition of the measurements in the above setting.
Then $\omega$ is an infinite sequence over $\Omega$.
Since the Bernoulli measure $\lambda_P$ on $\Omega^\infty$ is
the probability measure induced by the
measure representation
for the prefixes of
worlds
in the above setting,
it follows from Postulate~\ref{POT} that
\emph{$\omega$ is Martin-L\"of $P$-random}.

Let $S:=\{(a,b,c,m,d)\in\Omega\mid b=c\;\&\; d=0\}$.
It is
the set of all records of
the apparatuses $\mathcal{A}_1,\mathcal{A}_3,\mathcal{A}_4$, and $\mathcal{A}_5$,
in a repeated once of the procedure in Protocol~\ref{BB84wsm}, where
\emph{Bob measured a qubit in the same basis as Alice prepared it
and moreover the detection of eavesdropping was not
attempted}.
It follows from \eqref{b=c-a=mPabcmd=Ad2-8}
that
\begin{equation}\label{PS=sumPf1-p2}
  P(S)=\sum_{(a,b,m)\in\{0,1\}^3} P(a,b,b,m,0)=\frac{1-p}{2}.
\end{equation}

\emph{Intuitively}, the real $P(S)$ can be interpreted as \emph{the probability that
Alice and Bob complete sharing a single random bit}.
The explicit form $(1-p)/2$ of $P(S)$ given in \eqref{PS=sumPf1-p2}
is just as expected from the point of view of the conventional quantum mechanics.
Recall here that the real $p$
is the security parameter of the protocol
which determines the ability to detect the presence of eavesdropping.
From \eqref{PS=sumPf1-p2} we can \emph{intuitively} see that,
as the security parameter $p$ becomes larger, the ``efficiency'' for sharing
random
bits between Alice and Bob decreases in Protocol~\ref{BB84wsm}.

Actually,
since $\omega$ is Martin-L\"of $P$-random, using Theorem~\ref{charaA} we can prove that
$\chara{S}{\omega}$ is Martin-L\"of $\charaps{P}{S}$-random.
Recall here that
$\chara{S}{\omega}$ is an infinite \emph{binary} sequence obtained from $\omega$
by replacing each element $\omega(n)$ occurring in $\omega$ by $1$ if $\omega(n)\in S$ and by $0$
otherwise.
In other words,
$\chara{S}{\omega}$
is \emph{an infinite
binary
sequence obtained from $\omega$
by replacing each records of the apparatuses
$\mathcal{A}_1,\mathcal{A}_3,\mathcal{A}_4$, and $\mathcal{A}_5$, occurring in $\omega$,
by $1$ if the records
indicate the completion of sharing of a single random bit between Alice and Bob, and by $0$ otherwise}.
Note
that $(\charaps{P}{S})(1)=P(S)=(1-p)/2$ due to \eqref{PS=sumPf1-p2}.
Therefore, it follows from Theorem~\ref{FI} that
\begin{equation}\label{limfNnn=f1-p2}
  \lim_{n\to\infty} \frac{N(n)}{n}=\frac{1-p}{2},
\end{equation}
where $N(n)$ is the number of the occurrences of $1$ in the first $n$ bits of
the infinite binary sequence $\chara{S}{\omega}$.
In other words,
$N(n)$ is the number of random bits shared between Alice and Bob,
as a consequence of
the first $n$ repetitions of the procedure in Protocol~\ref{BB84wsm}
\emph{in our world}.
Thus, the result \eqref{limfNnn=f1-p2} can be interpreted as that \emph{the
``efficiency''
that
Alice and Bob complete sharing a single random bit is equal to $(1-p)/2$ in our world}.

Since $0<p<1$, note from \eqref{PS=sumPf1-p2} that $P(S)>0$, in particular.
Let $\alpha:=\cond{S}{\omega}$.
It is an infinite sequence over $S$ obtained from $\omega$
by eliminating all elements of the form $(a,b,c,m,d)$ with $b\neq c$ or $d=1$.
Since $\omega$ is Martin-L\"of $P$-random,
using Theorem~\ref{conditional_probability} we have that
$\alpha$ is Martin-L\"of $P_S$-random for the finite probability space $P_S$ on $S$.
From \eqref{PS=sumPf1-p2} and \eqref{b=c-a=mPabcmd=Ad2-8}
we see that
\begin{equation}\label{BB84withoutEvePC}
  P_S(a,b,c,m,d)=\frac{P(a,b,b,m,0)}{P(S)}
  =\frac{\delta_{a,m}}{4}
\end{equation}
for each $(a,b,c,m,d)\in S$.
Therefore,
$P_S(a,b,c,m,d)=0$ for every $(a,b,c,m,d)\in S$ with $a\neq m$.
It follows from Theorem~\ref{one_probability} that
$\alpha$ consists only of elements of the form $(a,b,b,a,0)$.
This shows that
\emph{Alice and Bob certainly share an identical bit every time of the case of $b=c$ and $d=0$
in our world},
as expected.
Let $\beta$ be an infinite \emph{binary} sequence obtained from $\alpha$
by replacing each element $(a,b,b,a,0)$ occurring in $\alpha$ by $a$.
Using Theorem~\ref{contraction2} it is easy to show that $\beta$ is Martin-L\"of $U$-random,
where $U$ is a finite probability space on $\{0,1\}$ such that
\[
  U(a):=\sum_{b\in\{0,1\}}P_S(a,b,b,a,0)
\]
for every $a\in\{0,1\}$.
It follows from \eqref{BB84withoutEvePC} that $U(0)=U(1)=1/2$.
Thus, $\beta$ is \emph{Martin-L\"of random}, i.e.,
$\beta$ is Martin-L\"of random with respect to Lebesgue measure $\mathcal{L}$ on $\XI$,
where
\emph{Lebesgue measure $\mathcal{L}$ on $\XI$} satisfies that
$$\mathcal{L}\left(\osg{\sigma}\right)=2^{-\abs{\sigma}}$$
for every $\sigma\in\X$.
Recall that Martin-L\"of randomness is one of the major definitions of the notion of \emph{randomness}
for an infinite binary sequence.
Thus, the Martin-L\"of randomness of
$\beta$
means that \emph{Alice and Bob certainly share a ``random'' infinite binary sequence
in our world},
as desired and expected.

\subsection{Analysis of the protocol under the presence of eavesdropping}

Next, we investigate Protocol~\ref{BB84wsm}
in the case where there is an eavesdropping by Eve throughout the execution of the protocol.
We assume that Eve performs the following eavesdropping
between Step 2 and Step 3 of Protocol~\ref{BB84wsm}.

{
\renewcommand{\labelenumi}{Step E\arabic{enumi}:}
\setlength{\leftmargini}{40pt} 
\begin{enumerate}
\item Eve tosses a fair coin $E$ to get outcome $e\in\{0,1\}$.
\item Eve performs the measurement of the observable $\ket{\Psi_{1e}}\bra{\Psi_{1e}}$
  over the state $\ket{\Psi_{ab}}$ sent from Alice to Bob, and
  then
  obtains outcome $f\in\{0,1\}$.
\end{enumerate}
\renewcommand{\labelenumi}{(\roman{enumi})}
}

We analyze Steps 1--5 of Protocol~\ref{BB84wsm}
added with Steps E1 and E2 above,
in terms of our framework based on the principle of typicality.
As in the preceding subsection, we have to implement everything
in all these steps
by unitary time-evolution.
The unitary implementation of the Steps 1--5 is the same as given in the preceding subsection.
On the other hand,
Steps E1 and E2 are implemented by a unitary time-evolution
as in the same manner as Steps 3 and 4 of Protocol~\ref{BB84wsm}, respectively.
The detail is given as follows:


\paragraph{Unitary implementation of Step E1.}

The coin tossing by Eve in Step E1 is implemented
by
introducing
a single qubit system $\mathcal{Q}_{\mathrm{E1}}$ with state space $\mathcal{H}_{\mathrm{E1}}$,
and performing a measurement over the system $\mathcal{Q}_{\mathrm{E1}}$ described
by a unitary time-evolution:
$$U_{\mathrm{E1}}\ket{e}\otimes\ket{\Phi_{\mathrm{E1}}^{\mathrm{init}}}
=\ket{e}\otimes\ket{\Phi_{\mathrm{E1}}[e]}$$
for every $e\in\{0,1\}$.
The vector $\ket{\Phi_{\mathrm{E1}}^{\mathrm{init}}}$ is the initial state of an apparatus
$\mathcal{A}_{\mathrm{E1}}$
measuring $\mathcal{Q}_{\mathrm{E1}}$,
and $\ket{\Phi_{\mathrm{E1}}[e]}$ is a final state of the apparatus
$\mathcal{A}_{\mathrm{E1}}$
for each $e\in\{0,1\}$.
Prior to the measurement, the system $\mathcal{Q}_{\mathrm{E1}}$ is
prepared in the state $\ket{\Psi_{01}}$.

\paragraph{Unitary implementation of Step E2.}

The switching of the two types of measurements in Step E2, depending on the outcome $e$ in Step E1,
is realized by
a unitary time-evolution:
\begin{equation}\label{UEToPEc=VcToPEc}
  U_{\mathrm{E2}}\ket{\Theta}\otimes\ket{\Phi_{\mathrm{E1}}[e]}=
  \left(\widetilde{V_e}\ket{\Theta}\right)\otimes\ket{\Phi_{\mathrm{E1}}[e]}
\end{equation}
for every $e\in\{0,1\}$
and every state $\ket{\Theta}$ of the composite system
consisting of the system $\mathcal{Q}_2$ and the apparatus $\mathcal{A}_{\mathrm{E2}}$ explained below.
The operator $\widetilde{V_e}$ appearing in \eqref{UEToPEc=VcToPEc} describes
a unitary time-evolution of the composite system consisting of
the system $\mathcal{Q}_2$ and an apparatus $\mathcal{A}_{\mathrm{E2}}$ measuring $\mathcal{Q}_2$,
and is defined by the equation:
$$\widetilde{V_e}\ket{\Psi_{fe}}\otimes\ket{\Phi_{\mathrm{E2}}^{\mathrm{init}}}=
\ket{\Psi_{fe}}\otimes\ket{\Phi_{\mathrm{E2}}[f]}$$
for every $e,f\in\{0,1\}$.
The vector $\ket{\Phi_{\mathrm{E2}}^{\mathrm{init}}}$ is the initial state of
the apparatus $\mathcal{A}_{\mathrm{E2}}$, and $\ket{\Phi_{\mathrm{E2}}[f]}$ is
a final state of the apparatus $\mathcal{A}_{\mathrm{E2}}$ for each $f\in\{0,1\}$.

\bigskip

\medskip

Now, the sequential application of $U_1,U_2,U_{\mathrm{E1}},U_{\mathrm{E2}},U_3,U_4,U_5$
to the composite system consisting of the six qubit system
$\mathcal{Q}_1, \mathcal{Q}_2,  \mathcal{Q}_{\mathrm{E1}}, \mathcal{Q}_3$,
and $\mathcal{Q}_5$ and the apparatuses
$\mathcal{A}_1,\mathcal{A}_{\mathrm{E1}},\mathcal{A}_{\mathrm{E2}},\mathcal{A}_3,\mathcal{A}_4$,
and $\mathcal{A}_5$
results in the following single unitary time-evolution $U_{\mathrm{T}}$:
\begin{equation}\label{BB84withEve-roirm}
\begin{split}
  U_{\mathrm{T}}\ket{\Psi}\otimes\ket{\Phi_{\mathrm{T}}^{\mathrm{init}}}
  =&\sum_{(a,b,e,f,c,m,d)\in\{0,1\}^7}
  \left(\left(E^1_{ab}\otimes E^4_{mc}E^{\mathrm{E2}}_{fe}W_{ab} \otimes E^{\mathrm{E1}}_{e}\otimes
  E^3_{c}\otimes E^5_d\right)\ket{\Psi}\right) \\
  &\hspace*{30mm}\otimes\ket{\Phi_{\mathrm{T}}[(a,b,e,f,c,m,d)]}
\end{split}
\end{equation}
for every
$\ket{\Psi}\in\mathcal{H}_1\otimes\mathcal{H}_2\otimes\mathcal{H}_{\mathrm{E1}}\otimes
\mathcal{H}_3\otimes\mathcal{H}_5$.
Here,
$E^{\mathrm{E1}}_{e}:=\ket{e}\bra{e}$ and $E^{\mathrm{E2}}_{fe}:=\ket{\Psi_{fe}}\bra{\Psi_{fe}}$,
and moreover
$\ket{\Phi_{\mathrm{T}}^{\mathrm{init}}}$ denotes
$$\ket{\Phi_1^{\mathrm{init}}}\otimes\ket{\Phi_{\mathrm{E1}}^{\mathrm{init}}}\otimes
\ket{\Phi_{\mathrm{E2}}^{\mathrm{init}}}\otimes\ket{\Phi_3^{\mathrm{init}}}\otimes
\ket{\Phi_4^{\mathrm{init}}}\otimes\ket{\Phi_5^{\mathrm{init}}}$$
and $\ket{\Phi_{\mathrm{T}}[(a,b,e,f,c,m,d)]}$ denotes
$$\ket{\Phi_1[ab]}\otimes\ket{\Phi_{\mathrm{E1}}[e]}\otimes\ket{\Phi_{\mathrm{E2}}[f]}\otimes
\ket{\Phi_3[c]}\otimes\ket{\Phi_4[m]}\otimes\ket{\Phi_5[d]}$$
for each $(a,b,e,f,c,m,d)\in\{0,1\}^7$.
Note that $E^1_{ab}, W_{ab},E^3_{c}, E^4_{mc}$, and $E^5_{d}$ are the same as before.
Totally, prior to the application of $U_{\mathrm{T}}$, the six qubit system
$\mathcal{Q}_1, \mathcal{Q}_2, \mathcal{Q}_{\mathrm{E1}}, \mathcal{Q}_3$, and $\mathcal{Q}_5$ is
prepared in the state
$$\ket{\Psi_{\mathrm{T}}^{\mathrm{init}}}:=\sum_{(a,b,e,c,d)\in\{0,1\}^5}\frac{1}{4}A_d\ket{ab0ecd},$$
where
$A$'s are the same
as before,
and $\ket{ab0ecd}$ denotes the six qubit state
$\ket{a}\otimes\ket{b}\otimes\ket{0}\otimes\ket{e}\otimes\ket{c}\otimes\ket{d}$
for each $(a,b,e,c,d)\in\{0,1\}^5$.

In the execution of Protocol~\ref{BB84wsm} with the eavesdropping by Eve,
Steps 1--5 of Protocol~\ref{BB84wsm} added with the Steps E1 and E2
only involve quantum mechanics, and this part of the protocol is repeated infinitely often in any case.
The operator $U_{\mathrm{T}}$ above describes the repeated once of
the
infinite repetition.
We use $\Omega_{\mathrm{T}}$ to denote the alphabet $\{0,1\}^7$.
It is
easy to check that a collection
\begin{equation}\label{BB84withevemo}
  \left\{E^1_{ab}\otimes E^4_{mc}E^{\mathrm{E2}}_{fe}W_{ab} \otimes E^{\mathrm{E1}}_{e}\otimes
  E^3_{c}\otimes E^5_d\right\}_{(a,b,e,f,c,m,d)\in\Omega_{\mathrm{T}}}
\end{equation}
forms measurement operators.
Thus,
the sequential application $U_{\mathrm{T}}$ of $U_1,U_2,U_{\mathrm{E1}},U_{\mathrm{E2}},U_3,U_4,U_5$
can be regarded as the \emph{single measurement} which is described by
the measurement operators \eqref{BB84withevemo}.
Hence, we can apply Definition~\ref{pmrpwst}
to this scenario
of
the setting of measurements.
Therefore, according to Definition~\ref{pmrpwst},
we can see that a \emph{world} is an infinite sequence over
$\Omega_{\mathrm{T}}$
and the probability measure induced by
the
\emph{measure representation for the prefixes of
worlds}
is
a Bernoulli measure $\lambda_{R}$ on
$\Omega_{\mathrm{T}}^\infty$,
where $R$ is a finite probability space on
$\Omega_{\mathrm{T}}$
such that
$R(a,b,e,f,c,m,d)$ is the square of the norm of
the
state
\begin{equation*}%
  \left(\left(E^1_{ab}\otimes E^4_{mc}E^{\mathrm{E2}}_{fe}W_{ab} \otimes E^{\mathrm{E1}}_{e}\otimes
  E^3_{c}\otimes E^5_d\right)
  \ket{\Psi_{\mathrm{T}}^{\mathrm{init}}}\right)\otimes\ket{\Phi_{\mathrm{T}}[(a,b,e,f,c,m,d)]}
\end{equation*}
for every $(a,b,e,f,c,m,d)\in\Omega_{\mathrm{T}}$.
Let us calculate the
explicit
form of $R(a,b,e,f,c,m,d)$.
The direct application of $U_1,U_2,U_{\mathrm{E1}},U_{\mathrm{E2}},U_3,U_4,U_5$ in this order to the state
$\ket{ab0ecd}\otimes\ket{\Phi_{\mathrm{T}}^{\mathrm{init}}}$
leads to the following:
\begin{enumerate}
\item
In the case of $b=c=e$, we have
\begin{align*}
U_{\mathrm{T}}\ket{ab0ecd}\otimes\ket{\Phi_{\mathrm{T}}^{\mathrm{init}}}
=\ket{ab}\otimes\ket{\Psi_{ab}}\otimes\ket{ecd}\otimes\ket{\Phi_{\mathrm{T}}[(a,b,e,a,c,a,d)]}.
\end{align*}
\item In the case of $e=b\neq c$, we have
\begin{align*}
U_{\mathrm{T}}\ket{ab0ecd}\otimes\ket{\Phi_{\mathrm{T}}^{\mathrm{init}}}
&=\frac{1}{\sqrt{2}}\ket{ab}\otimes \ket{\Psi_{0c}}\otimes\ket{ecd}\otimes
\ket{\Phi_{\mathrm{T}}[(a,b,e,a,c,0,d)]} \\
&+\frac{(-1)^a}{\sqrt{2}}\ket{ab}\otimes \ket{\Psi_{1c}}\otimes\ket{ecd}\otimes
\ket{\Phi_{\mathrm{T}}[(a,b,e,a,c,1,d)]}.
\end{align*}
\item
In the case of $b\neq c=e$, we have
\begin{align*}
U_{\mathrm{T}}\ket{ab0ecd}\otimes\ket{\Phi_{\mathrm{T}}^{\mathrm{init}}}
&=\frac{1}{\sqrt{2}}\ket{ab}\otimes\ket{\Psi_{0e}}\otimes\ket{ecd}\otimes
\ket{\Phi_{\mathrm{T}}[(a,b,e,0,c,0,d)]} \\
&+\frac{(-1)^a}{\sqrt{2}}\ket{ab}\otimes\ket{\Psi_{1e}}\otimes\ket{ecd}\otimes
\ket{\Phi_{\mathrm{T}}[(a,b,e,1,c,1,d)]}.
\end{align*}
\item
In the case of $b=c\neq e$, we have
\begin{align*}
U_{\mathrm{T}}\ket{ab0ecd}\otimes\ket{\Phi_{\mathrm{T}}^{\mathrm{init}}}
&=\frac{1}{2}\ket{ab}\otimes\ket{\Psi_{0c}}\otimes\ket{ecd}\otimes
\ket{\Phi_{\mathrm{T}}[(a,b,e,0,c,0,d)]} \\
&+\frac{1}{2}\ket{ab}\otimes\ket{\Psi_{1c}}\otimes\ket{ecd}\otimes
\ket{\Phi_{\mathrm{T}}[(a,b,e,0,c,1,d)]} \\
&+\frac{(-1)^a}{2}\ket{ab}\otimes\ket{\Psi_{0c}}\otimes\ket{ecd}\otimes
\ket{\Phi_{\mathrm{T}}[(a,b,e,1,c,0,d)]} \\
&-\frac{(-1)^a}{2}\ket{ab}\otimes\ket{\Psi_{1c}}\otimes\ket{ecd}\otimes
\ket{\Phi_{\mathrm{T}}[(a,b,e,1,c,1,d)]}.
\end{align*}
\end{enumerate}
Here, $\ket{ecd}$ denotes the three qubit state $\ket{e}\otimes\ket{c}\otimes\ket{d}$.
The comparison of
the results (i)--(iv) above
with the equation \eqref{BB84withEve-roirm},
where $\ket{\Psi}$ is set to $\ket{\Psi_{\mathrm{T}}^{\mathrm{init}}}$,
leads to
the following:
\begin{enumerate}
\item In the case of $b=c=e$, we have that
  \begin{equation}\label{b=c=e-f=a=mPabefcmd=Ad2-16}
    R(a,b,e,f,c,m,d)=\frac{A_d^2}{16}\delta_{a,f}\delta_{a,m}.
  \end{equation}
\item In the case of $b=c\neq e$, we have that
  \begin{equation}\label{b=cneqePabefcmd=Ad2-64}
    R(a,b,e,f,c,m,d)=\frac{A_d^2}{64}.
  \end{equation}
\end{enumerate}

Now, let us apply Postulate~\ref{POT}, the principle of typicality, to the setting of measurements
developed above.
Let $\omega_{\mathrm{T}}$ be our world in the infinite repetition of the measurements
in the above setting.
Then $\omega_{\mathrm{T}}$ is an infinite sequence over $\Omega_{\mathrm{T}}$.
Since the Bernoulli measure $\lambda_{R}$ on $\Omega_{\mathrm{T}}^\infty$ is
the probability measure induced by the
measure representation
for the prefixes of
worlds
in the above setting,
it follows from Postulate~\ref{POT} that
\emph{$\omega_{\mathrm{T}}$ is Martin-L\"of $R$-random}.

Let $D_{\mathrm{T}}:=\{(a,b,e,f,c,m,d)\in\Omega_{\mathrm{T}}\mid b=c\;\&\; d=1\}$.
It is
the set of all records of the apparatuses
$\mathcal{A}_1,\mathcal{A}_{\mathrm{E1}},\mathcal{A}_{\mathrm{E2}},\mathcal{A}_3,\mathcal{A}_4$,
and $\mathcal{A}_5$,
in a repeated once of the procedure in Protocol~\ref{BB84wsm} with the eavesdropping by Eve, where
\emph{Bob measured a qubit in the same basis as Alice prepared it
and moreover the detection of eavesdropping was attempted}.
It follows from \eqref{b=c=e-f=a=mPabefcmd=Ad2-16}
and \eqref{b=cneqePabefcmd=Ad2-64} that
\begin{equation}\label{PEveSEve=sumPEvefp2}
  R(D_{\mathrm{T}})=
  \sum_{(a,b,e,f,m)\in\{0,1\}^5} R(a,b,e,f,b,m,1)=\frac{p}{2}.
\end{equation}
\emph{Intuitively},
the real $R(D_{\mathrm{T}})$ can be interpreted as \emph{the probability that
Alice and Bob proceed
beyond
Step 10 in Protocol~\ref{BB84wsm} to check the presence of
eavesdropping}.
The explicit form $p/2$ of $R(D_{\mathrm{T}})$ given in \eqref{PEveSEve=sumPEvefp2}
is just as expected from the point of view of the conventional quantum mechanics.

Since $0<p<1$, note from \eqref{PEveSEve=sumPEvefp2} that
$R(D_{\mathrm{T}})>0$, in particular.
Let $\delta_{\mathrm{T}}:=\cond{D_{\mathrm{T}}}{\omega_{\mathrm{T}}}$.
It is an infinite sequence over $D_{\mathrm{T}}$ obtained from $\omega_{\mathrm{T}}$
by eliminating all elements of the form $(a,b,e,f,c,m,d)$ with $b\neq c$ or $d=0$.
Since $\omega_{\mathrm{T}}$ is Martin-L\"of $R$-random,
using Theorem~\ref{conditional_probability} we have that
$\delta_{\mathrm{T}}$ is Martin-L\"of $R_{D_{\mathrm{T}}}$-random
for the finite probability space $R_{D_{\mathrm{T}}}$ on $D_{\mathrm{T}}$.
Note that
\begin{equation}\label{BB84withEveRD}
  R_{D_{\mathrm{T}}}(a,b,e,f,c,m,d)=\frac{R(a,b,e,f,b,m,1)}{R(D_{\mathrm{T}})}
\end{equation}
for every $(a,b,e,f,c,m,d)\in D_{\mathrm{T}}$.
Thus, for each $(a,b,e,f,c,m,d)\in D_{\mathrm{T}}$,
using \eqref{PEveSEve=sumPEvefp2}, \eqref{b=c=e-f=a=mPabefcmd=Ad2-16},
and \eqref{b=cneqePabefcmd=Ad2-64} we see the following:
\begin{enumerate}
\item In the case of $b=c=e$, we have that
  \begin{equation}\label{b=c=e-f=a=mRDTabefcmd=1-8dd}
    R_{D_{\mathrm{T}}}(a,b,e,f,c,m,d)=\frac{1}{8}\delta_{a,f}\delta_{a,m}.
  \end{equation}
\item In the case of $b=c\neq e$, we have that
  \begin{equation}\label{b=cneqeRDTabefcmd=1-32}
    R_{D_{\mathrm{T}}}(a,b,e,f,c,m,d)=\frac{1}{32}.
  \end{equation}
\end{enumerate}

We consider a subset $E_{\mathrm{T}}$ of $D_{\mathrm{T}}$ defined by
$$E_{\mathrm{T}}:=\{(a,b,e,f,c,m,d)\in D_{\mathrm{T}}\mid a\neq m\}.$$
It is
the set of all records of the apparatuses
$\mathcal{A}_1,\mathcal{A}_{\mathrm{E1}},\mathcal{A}_{\mathrm{E2}},\mathcal{A}_3,\mathcal{A}_4$,
and $\mathcal{A}_5$,
in a repeated once of the procedure in Protocol~\ref{BB84wsm} with the eavesdropping by Eve, where
\emph{Alice and Bob proceeded beyond
Step 10 in Protocol~\ref{BB84wsm} and
moreover
they actually
detected the presence of eavesdropping
by means of the criterion $a\neq m$}.
It follows from \eqref{b=c=e-f=a=mRDTabefcmd=1-8dd} and \eqref{b=cneqeRDTabefcmd=1-32} that
\begin{equation}\label{RDTETf1-4}
  R_{D_{\mathrm{T}}}(E_{\mathrm{T}})=\frac{1}{4}.
\end{equation}
On the other hand, since $\delta_{\mathrm{T}}$ is Martin-L\"of $R_{D_{\mathrm{T}}}$-random, it follows
from Theorem~\ref{charaA} that
$\chara{E_{\mathrm{T}}}{\delta_{\mathrm{T}}}$ is
Martin-L\"of $\charaps{R_{D_{\mathrm{T}}}}{E_{\mathrm{T}}}$-random.
Here, the infinite binary sequence $\chara{E_{\mathrm{T}}}{\delta_{\mathrm{T}}}$ can be phrased as
\emph{an infinite sequence obtained from $\delta_{\mathrm{T}}$
by replacing each records of the apparatuses
$\mathcal{A}_1,\mathcal{A}_{\mathrm{E1}},\mathcal{A}_{\mathrm{E2}},\mathcal{A}_3,\mathcal{A}_4$,
and $\mathcal{A}_5$,
occurring in
$\delta_{\mathrm{T}}$,
by $1$ if the records
indicate the detection of the eavesdropping and by $0$ otherwise}.
Note
that
$(\charaps{R_{D_{\mathrm{T}}}}{E_{\mathrm{T}}})(1)=R_{D_{\mathrm{T}}}(E_{\mathrm{T}})=1/4$
due to \eqref{RDTETf1-4}.
Therefore, using Theorem~\ref{FI} we can show that
\begin{equation}\label{limfNnn=f1-4}
  \lim_{n\to\infty} \frac{M(n)}{n}=\frac{1}{4},
\end{equation}
where $M(n)$ is the number of the occurrences of $1$ in the first $n$ bits of
the infinite binary sequence $\chara{E_{\mathrm{T}}}{\delta_{\mathrm{T}}}$.
In other words,
$M(n)$ is the number of the success of Alice and Bob in the detection of the eavesdropping
in the first $n$ repetitions of the procedure in Protocol~\ref{BB84wsm}
\emph{in our world},
conditioned that Alice and Bob proceed beyond
Step 10 in Protocol~\ref{BB84wsm} to check the presence of eavesdropping.
Thus, the result \eqref{limfNnn=f1-4} can be interpreted as that \emph{the ``frequency'' that
Alice and Bob actually detect the presence of eavesdropping is equal to $1/4$
in our world,
conditioned that they proceed beyond
Step 10 in Protocol~\ref{BB84wsm} to check the presence of eavesdropping}.

Finally, we note that, from the point of view of the conventional quantum mechanics,
the real $R_{D_{\mathrm{T}}}(E_{\mathrm{T}})$ can be interpreted as \emph{the probability that
Alice and Bob actually detect the
eavesdropping,
conditioned that they proceed beyond
Step 10 in Protocol~\ref{BB84wsm} to check the presence of eavesdropping}.
This probability \emph{should} be $1/4$ in the analysis of the original BB84 QKD protocol \cite{BB84}
from the aspect of the conventional quantum mechanics
(and actually we have $R_{D_{\mathrm{T}}}(E_{\mathrm{T}})=1/4$
in
\eqref{RDTETf1-4} in our framework).
Thus,
the limit frequency $1/4$ calculated in \eqref{limfNnn=f1-4}
is just as expected from the point of view of the conventional quantum mechanics.

\section{Concluding remarks}
\label{Concluding}

In this paper we have introduced
an operational refinement of the Born rule for pure states, Postulate~\ref{Tadaki-rule}, and
an operational refinement of the Born rule for mixed states, Postulate~\ref{Tadaki-rule2},
based on the notion of Martin-L\"of randomness with respect to Bernoulli measure,
for specifying the property of the results of quantum measurements \emph{in an operational way}.

Then, in the framework of MWI,
we have introduced
the \emph{principle of typicality}, Postulate~\ref{POT},
in order to overcome the deficiency of
the original MWI \cite{E57}.
The principle of typicality is a postulate naturally formed by applying
the basic idea of Martin-L\"of randomness
into
the framework of MWI.
Based on the principle of typicality,
in this paper we have made the \emph{whole} arguments by Everett~\cite{E57} \emph{clear}
and \emph{feasible}.
Actually, we have shown in the framework of MWI that
the refined rule
of the Born rule for pure states (i.e., Postulate~\ref{Tadaki-rule}),
the refined rule of the Born rule for mixed states (i.e., Postulate~\ref{Tadaki-rule2}), and
other postulates of the conventional quantum mechanics regarding mixed states and density matrices
(i.e., Postulates~\ref{COSMS} and \ref{density-matrices-probability})
can \emph{all} be derived,
in a natural setting of measurements for each of them, from
the principle of typicality
together with Postulates~\ref{state_space}, \ref{composition}, and~\ref{evolution}
\emph{in a unified manner}.

Based on the results above,
we conjecture that
the principle of typicality together with
Postulates~\ref{state_space}, \ref{composition}, and~\ref{evolution}
\emph{forms}
quantum mechanics.
Thus, the principle of typicality
is thought
to be a \emph{unifying principle} which \emph{refines}
the Born rule for pure states (i.e., Postulate~\ref{Born-rule}),
the Born rule for mixed states (i.e., Postulate~\ref{Born-rule2}),
and other postulates of the conventional quantum mechanics regarding mixed states and density matrices
(i.e., Postulates~\ref{COSMS} and \ref{density-matrices-probability})
\emph{in one lump}.

In this paper, for simplicity,
we have considered only the case of finite-dimensional quantum systems and measurements over them.
As the next step of the research,
it is natural to consider the case of infinite-dimensional quantum systems,
and measurements over them where the set of possible measurement outcomes is
\emph{countably infinite}.
Actually,
in this case
we can also develop
a framework for an operational refinement of the Born rule and the principle of typicality 
in \emph{almost the same manner} as the finite case developed in this paper.
A full paper which describes the detail of
the
results is in preparation.

\section*{Acknowledgments}

This work was partially supported by JSPS KAKENHI Grant Numbers
24540142, 15K04981.
This work was partially done while the author was visiting the Institute for Mathematical Sciences,
National University of Singapore in 2014.


\end{document}